\DeclareMathAlphabet\mathbfcal{OMS}{cmsy}{b}{n}
\newcommand{\mbf}{\mathbf}
\newcommand{\mc}{\mathcal}
\newcommand{\bmx}{\begin{bmatrix}}
\newcommand{\emx}{\end{bmatrix}}
\newcommand{\vep}{\varepsilon}
\renewcommand{\l}{\left}
\renewcommand{\r}{\right}
\def\wh{\widehat}
\def\wt{\widetilde}
\newcommand{\E}[0]{\mathsf{E}}
\newcommand{\Var}[0]{\mathsf{Var}}
\newcommand{\Cov}[0]{\mathsf{Cov}}
\newcommand{\tr}[0]{\mathsf{tr}}
\newcommand{\p}{\mathsf{P}}
\newcommand{\R}{\mathbb{R}}
\newcommand{\Z}{\mathbb{Z}}
\newcommand{\N}{\mathbb{N}}
\newcommand{\C}{\mathbb{C}}
\newcommand{\iid}{\text{\upshape iid}}
\newcommand{\nn}{\nonumber}
\newcommand{\static}{\text{\upshape {res}}}
\newcommand{\va}{\text{\upshape{unr}}}
\newcommand{\las}{\text{\upshape{las}}}
\newcommand{\ds}{\text{\upshape{DS}}}
\newcommand{\dir}{\text{\upshape{G}}}
\newcommand{\undir}{\text{\upshape{C}}}
\newcommand{\lr}{\text{\upshape{L}}}
\newcommand{\bbG}{\mathbbm{G}}
\newcommand{\bbg}{\mathbbm{g}}
\theoremstyle{definition}
\newtheorem{thm}{Theorem}[section]
\theoremstyle{definition}
\newtheorem{cor}[thm]{Corollary}
\theoremstyle{definition}
\newtheorem{lem}[thm]{Lemma}
\theoremstyle{definition}
\newtheorem{prop}[thm]{Proposition}
\theoremstyle{definition}
\newtheorem{assum}{Assumption}[section]
\theoremstyle{remark}
\newtheorem{rem}{Remark}[section]
\theoremstyle{definition}
\theoremstyle{definition}
\title{FNETS: Factor-adjusted network estimation and forecasting for high-dimensional time series}
\author{Matteo Barigozzi$^1$ and Haeran Cho$^2$ and Dom Owens$^3$}
\begin{document}

\maketitle

\footnotetext[1]{Department of Economics, Universit\`a di Bologna.
Email: \url{matteo.barigozzi@unibo.it}.
Supported by MIUR (PRIN 2017, Grant 2017TA7TYC).}

\footnotetext[2]{School of Mathematics, University of Bristol.
Email: \url{haeran.cho@bristol.ac.uk}.
Supported by the Leverhulme Trust (RPG-2019-390).}

\footnotetext[3]{School of Mathematics, University of Bristol.
Email: \url{dom.owens@bristol.ac.uk}.}


\begin{abstract}
We propose FNETS, a methodology for network estimation and forecasting of high-dimensional time series exhibiting strong serial- and cross-sectional correlations. We operate under a factor-adjusted vector autoregressive (VAR) model which, after accounting for pervasive co-movements of the variables by {\it common} factors, models the remaining {\it idiosyncratic} dynamic dependence between the variables as a sparse VAR process. Network estimation of FNETS consists of three steps: (i) factor-adjustment via dynamic principal component analysis, (ii) estimation of the latent VAR process via $\ell_1$-regularised Yule-Walker estimator, and (iii) estimation of partial correlation and long-run partial correlation matrices. In doing so, we learn three networks underpinning the VAR process, namely a directed network representing the Granger causal linkages between the variables, an undirected one embedding their contemporaneous relationships and finally, an undirected network that summarises both lead-lag and contemporaneous linkages. In addition, FNETS provides a suite of methods for forecasting the factor-driven and the idiosyncratic VAR processes. Under general conditions permitting tails heavier than the Gaussian one, we derive uniform consistency rates for the estimators in both network estimation and forecasting, which hold as the dimension of the panel and the sample size diverge. Simulation studies and real data application confirm the good performance of FNETS.
\end{abstract}

\noindent%
{\it Keywords:} vector autoregression, network estimation, forecasting, dynamic factor modelling.

\section{Introduction}

Vector autoregressive (VAR) models are popularly adopted for time series analysis in economics and finance.
Fitting a VAR model to the data enables inferring dynamic interdependence between the variables as well as forecasting the future. 
VAR models are particularly appealing for network analysis 
since estimating the non-zero elements of the VAR parameter matrices, a.k.a.\ transition matrices,
recovers directed edges between the components of vector time series
in a Granger causality network. 
In addition, by estimating a precision matrix
(inverse of the covariance matrix) of the VAR innovations, 
we can also define a network capturing contemporaneous linear dependencies. 
For the network interpretation of VAR modelling, see e.g.\ 
\citet{dahlhaus2000graphical, eichler2007granger, billio2012econometric, ahelegbey2016bayesian, barigozzi2019, gudhmundsson2021detecting, uematsu2023discovering}.

Estimation of VAR models quickly becomes a high-dimensional problem
as the number of parameters grows quadratically with the dimensionality.
There is a mature literature on estimation of high-dimensional VAR models under the sparsity 
\citep{hsu2008subset, song2011large, basu2015, han2015direct, kock2015oracle, barigozzi2019, nicholson2020high} and low-rank plus sparsity \citep{basu2019} assumptions,
see also \citet{dgr08} and \cite{banbura2010large} for Bayesian approaches.
In all above, either explicitly or implicitly, the spectral density of the time series is required to have eigenvalues which are uniformly bounded over frequencies. 
Indeed, this condition is crucial for controlling the deviation bounds involved in theoretical investigation of regularised estimators.

\begin{figure}[htbp]
\centering
\begin{tabular}{ccc}
\includegraphics[width = .3\textwidth]{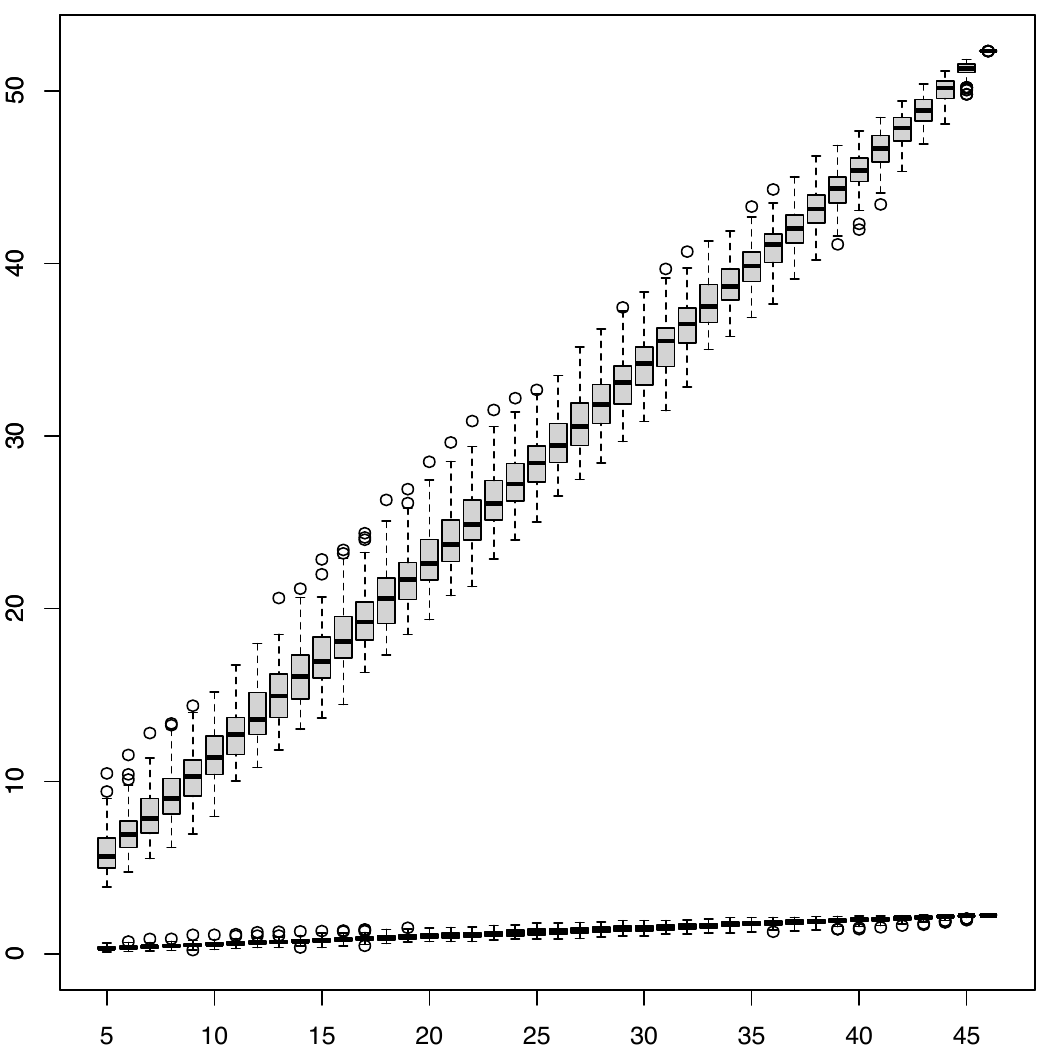} &
\includegraphics[width = .35\textwidth]{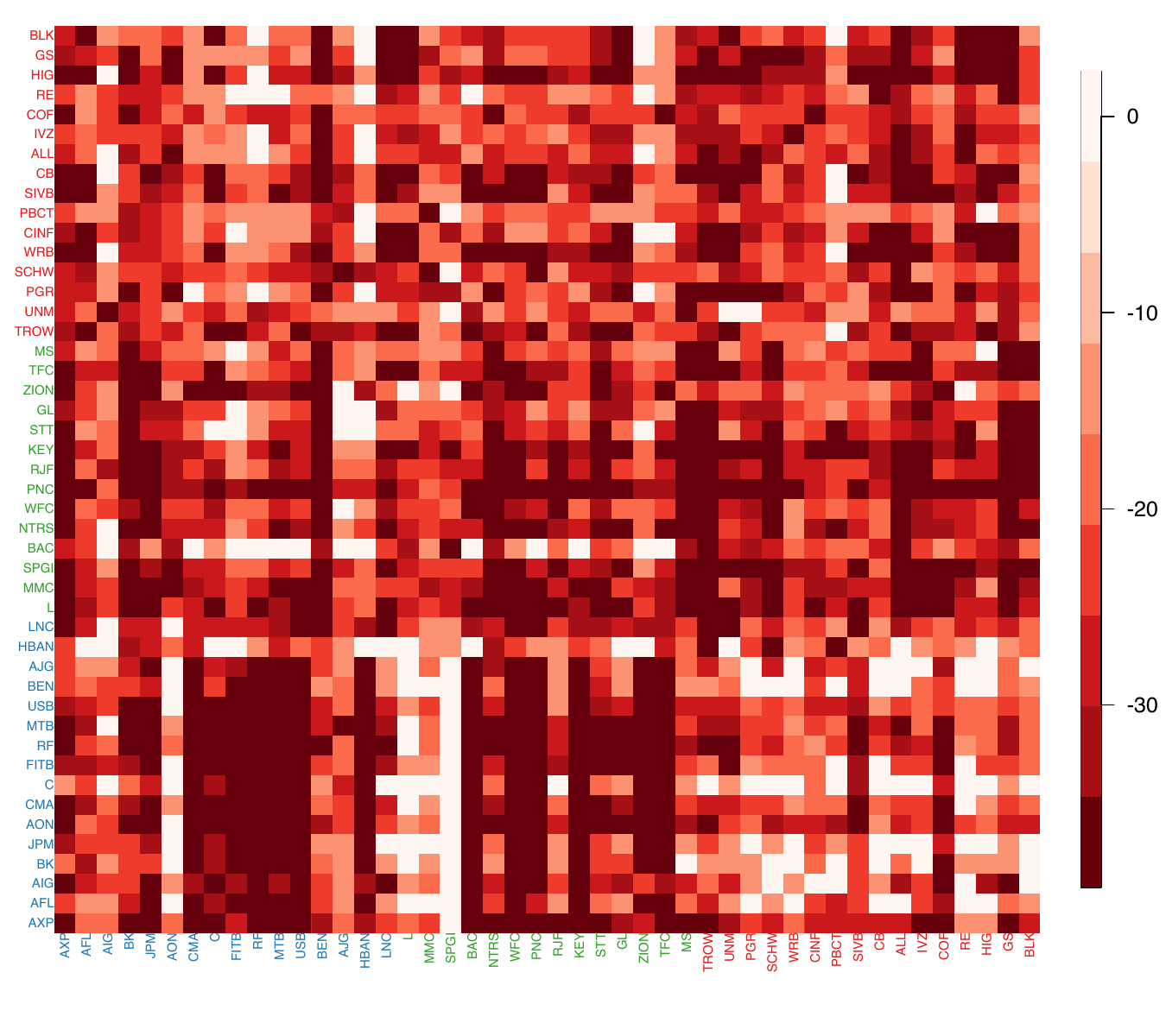} 
\end{tabular}
\caption{\small Left: The two largest eigenvalues ($y$-axis) of the long-run covariance matrix
estimated from the volatility panel analysed in Section~\ref{sec:real}
(March $2008$ to March $2009$, $n = 252$)
with subsets of cross-sections randomly sampled $100$ times
for each given dimension $p \in \{5, \ldots, 46\}$ ($x$-axis).
Right: logged and truncated $p$-values 
(truncation level chosen by Bonferroni correction with the significance level $0.1$)
from fitting a VAR($5$) model to the same dataset using ridge regression and generating $p$-values corresponding to each coefficient as described in \cite{cule2011significance}.
For each pair of variables (corresponding tickers given in $x$- and $y$-axes), the minimum $p$-value over the five lags is reported.}
\label{fig:ex}
\end{figure}

\cite{lin2020} observe that for VAR processes, this assumption restricts the parameters to be either dense but small in their magnitude (which makes their estimation using the shrinkage-based methods challenging) or highly sparse,
while \cite{giannone2021economic} note the difficulty of identifying sparse predictive representations in many economic applications.
Moreover, some datasets typically exhibit strong serial and cross-sectional correlations and violate the bounded spectrum assumption.
The left panel of Figure~\ref{fig:ex} provides an illustration of this phenomenon;
with the increase of dimensionality, a volatility panel dataset (see Section~\ref{sec:real} for its description) exhibits a linear increase in the leading eigenvalue of the estimate of its spectral density matrix at frequency~$0$ (i.e.\ long-run covariance).
The right panel visualises the outcome from fitting a VAR($5$) model to the same dataset
without making any adjustment of the strong correlations (see the caption for further details),
from which we cannot infer meaningful, sparse pairwise relationship.


In this paper, we propose to model high-dimensional time series by means of a factor-adjusted VAR approach, which simultaneously accounts for strong serial and cross-sectional correlations attributed to factors, as well as sparse, idiosyncratic correlations among the variables that remain after factor adjustment. 
We take the most general approach to factor modelling based on the generalised dynamic factor model, where factors are dynamic in the sense that they are allowed to have not only contemporaneous but also lagged effects on the variables \citep{FHLR00}.
We propose FNETS, a suite of tools accompanying the model for estimation and forecasting with a particular focus on network analysis, which addresses the challenges arising from the latency of the VAR process as well as high dimensionality. 

We make the following methodological and theoretical contributions.
\begin{enumerate}[noitemsep, wide, labelwidth=!, labelindent=0pt, label = (\alph*)]
\item We propose an $\ell_1$-regularised Yule-Walker estimation method for estimating the factor-adjusted, idiosyncratic VAR, while permitting the number of non-zero parameters to slowly grow with the dimensionality.
Estimating the VAR parameters and the inverse of the innovation covariance, and then combining them allow us to define three networks underlying the latent VAR process, namely a direct network representing Granger causal linkages, an undirected one underpinning their contemporaneous relationships, as well as an undirected network summarising both. 
Under general conditions permitting weak factors and heavier tails than the sub-Gaussian one, we show the consistency of FNETS in estimating the edge sets of these networks, which holds uniformly over all $p^2$ entries of the networks (Propositions~\ref{prop:idio:est:lasso} and~\ref{prop:idio:delta}).

\item We provide new consistency rates for the estimation and forecasting approaches considered by \citet{forni2005generalized, forni2017dynamic}, which hold uniformly for the entire cross-sections of $p$-dimensional time series (Propositions~\ref{thm:2005} and~\ref{thm:common:var}).
In doing so, we establish uniform consistency of the estimators of high-dimensional spectral density matrices of the factor-driven and the idiosyncratic components, extending the results of \citet{zhang2021} to the presence of latent factors.
\end{enumerate}

Our approach differs from the existing ones for factor-adjusted regression problems \citep{fan2020factor, fan2021bridging, fan2023latent, krampe2021}, 
as: (i)~it allows for the presence of dynamic factors, thus including all possible dynamic linear co-dependencies, and (ii)~it relies only on the estimators of the autocovariances of the latent idiosyncratic process, and avoids estimating the entire latent process and controlling the errors arising from such a step, which increase with the sample size. 
The price to pay for the generality of the factor modelling in~(i), is an extra term appearing in the rate of consistency which represents the bandwidth for spectral density estimation required for factor-adjustment in the frequency domain. 
We make explicit the role played by this bandwidth in the theoretical results, and also present the results under a more restricted static factor model for ease of comparison.
We mention two more differences between this paper and \cite{fan2021bridging, fan2023latent}. 
First, they additionally consider the problem of testing hypotheses on the idiosyncratic covariance and the adequacy of factor/sparse regression, while we focus on network estimation. 
Secondly, their methods accommodate models for the idiosyncratic component other than VAR.

FNETS is another take at the popular low-rank plus sparsity modelling framework 
in the high-dimensional learning literature, 
Also, it is in line with a frequently adopted practice in financial time series analysis
where factor-driven common components representing the systematic sources of risk, 
are removed prior to inferring a network structure via (sparse) regression modelling
and identifying the most central nodes representing the systemic sources of risk \citep{diebold2014network, barigozzi2019}.
We provide a rigorous theoretical treatment of this empirical approach
by accounting for the effect of the factor-adjustment step on the second step regression.



The rest of the paper is organised as follows.
Section~\ref{sec:model} introduces the factor-adjusted VAR model.
Sections~\ref{sec:fnets} and~\ref{sec:forecast} describe 
the network estimation and forecasting methodologies comprising FNETS, respectively,
and provide their theoretical consistency.
In Section~\ref{sec:num}, we demonstrate the good estimation and forecasting performance of 
FNETS on a panel of volatility measures.
Section~\ref{sec:conc} concludes the paper,
and all the proofs and complete simulation results are presented in Supplementary Appendix.
The R software {\tt fnets} implementing FNETS is available from CRAN \citep{fnets}.


\paragraph{Notations.}
By $\mbf I$, $\mbf O$ and $\mbf 0$, we denote an identity matrix, a matrix of zeros and a vector of zeros whose dimensions depend on the context. 
For a matrix $\mbf A = [a_{ii'}, \, 1 \le i \le m, \, 1 \le i' \le n]$, 
we denote by $\mbf A^\top$ its transpose.
The element-wise $\ell_\infty$, $\ell_0$, $\ell_1$ and $\ell_2$-norms are denoted
by $\vert \mbf A \vert_\infty = \max_{1 \le i \le m} \max_{1 \le i' \le n} \vert a_{ii'} \vert$,
$\vert \mbf A \vert_0 = \sum_{i = 1}^m \sum_{i' = 1}^n \mathbb{I}_{\{a_{ii'} \ne 0\}}$,
$\vert \mbf A \vert_1 = \sum_{i = 1}^m \sum_{i' = 1}^n \vert a_{ii'} \vert$
and
$\vert \mbf A \vert_2 = \sqrt{\sum_{i = 1}^m \sum_{i' = 1}^n \vert a_{ii'} \vert^2}$.
The Frobenius, spectral, induced $L_1$ and $L_\infty$-norms are denoted by
$\Vert \mbf A \Vert_F = \vert \mbf A \vert_2$,  
$\Vert \mbf A \Vert = \sqrt{\Lambda_{\max}(\mbf A^\top \mbf A)}$
(with $\Lambda_{\max}(\mbf A)$ and $\Lambda_{\min}(\mbf A)$ denoting its largest and smallest eigenvalues in modulus),
$\Vert \mbf A \Vert_1 = \max_{1\le i' \le n} \sum_{i = 1}^m \vert a_{ii'}\vert$
and $\Vert \mbf A \Vert_\infty = \max_{1 \le i \le n} \sum_{i' = 1}^m \vert a_{ii'}\vert$.
Let $\mbf A_{i \cdot}$ and $\mbf A_{\cdot k}$ denote
the $i$-th row and the $k$-th column of $\mbf A$.
For two real numbers, set $a \vee b = \max(a, b)$ and $a \wedge b = \min(a, b)$.
Given two sequences $\{a_n\}$ and $\{b_n\}$, we write $a_n = O(b_n)$
if, for some finite constant $C > 0$
there exists $N \in \N_0 = \N \cup \{0\}$ such that
$|a_n| |b_n|^{-1} \le C$ for all $n \ge N$; we denote by $O_P$ the stochastic boundedness.
We write $a_n\asymp b_n$ when $a_n=O(b_n)$ and $b_n=O(a_n)$.
Throughout, $L$ denotes the lag operator and $\iota = \sqrt{-1}$. 
Finally, $\mathbb{I}_{\mc A} = 1$ if the event $\mc A$ takes place and $0$ otherwise.

\section{Factor-adjusted vector autoregressive model}
\label{sec:model}

Consider a zero-mean, second-order stationary $p$-variate process $\mbf X_t = (X_{1t}, \ldots, X_{pt})^\top$, $1 \le t \le n$, which is decomposed into the sum of two latent components: 
a factor-driven, {\it common} component $\bm\chi_t = (\chi_{1t}, \ldots, \chi_{pt})^\top$,
and an {\it idiosyncratic} component $\bm\xi_t = (\xi_{1t}, \ldots, \xi_{pt})^\top$ modelled as a VAR process. 
That is, $\mbf X_t = \bm\chi_t + \bm\xi_t$ where
\begin{align}
\bm\chi_t &= \mc B(L) \mbf u_t = \sum_{\ell = 0}^\infty \mbf B_\ell \mbf u_{t - \ell}\;\;
\text{ with } \mbf u_t = (u_{1t}, \ldots, u_{qt})^\top,
\quad \text{and} \label{eq:gdfm}
\\
\mc A(L) \bm\xi_t &= \bm\xi_t - \sum_{\ell = 1}^d \mbf A_{\ell} \bm\xi_{t - \ell} = \bm\Gamma^{1/2} \bm\vep_t\;\;
\text{ with } \bm\vep_t = (\vep_{1t}, \ldots, \vep_{pt})^\top. \label{eq:idio:var}
\end{align}
In~\eqref{eq:gdfm}, the latent random vector $\mbf u_t$, referred to as the vector of {\it common factors} or {\it common shocks}, is assumed to satisfy $\E(\mbf u_t) = \mbf 0$ and $\Cov(\mbf u_t) = \mbf I_q$, and
are loaded on each $\chi_{it}$ via square summable, one-sided filters 
$\mc B_{ij}(L) = \sum_{\ell = 0}^\infty B_{\ell, ij} L^\ell$, where $\mbf B_\ell = [B_{\ell, ij},\, 1\le i\le p,\, 1\le j\le q] \in \R^{p \times q}$.
This defines the generalised dynamic factor model (GDFM) proposed by \cite{FHLR00} and \citet{fornilippi01}, which provides the most general approach to high-dimensional time series factor modelling. 

In~\eqref{eq:idio:var}, the idiosyncratic component $\bm\xi_t$ is modelled as a VAR($d$) process for some finite positive integer $d$, with innovations $\bm\Gamma^{1/2} \bm\vep_t$ 
where $\bm\Gamma \in \R^{p \times p}$ is some positive definite matrix and $\bm\Gamma^{1/2}$ its symmetric square root matrix, and $\E(\bm\vep_t) = \mbf 0$ and $\Cov(\bm\vep_t) = \mbf I_p$. 
We assume that $\bm\xi_t$ is causal (see Assumption~\ref{assum:idio}~\ref{cond:idio:transition} below), i.e.\ it admits the Wold representation:
\begin{align}
\label{eq:idio:wold}
\bm\xi_t = \mc D(L)\bm\Gamma^{1/2} \bm\vep_t = \sum_{\ell = 0}^\infty \mbf D_\ell \bm\Gamma^{1/2} \bm\vep_{t - \ell} \text{ \ with \ } \mc D(L) = \mc A^{-1}(L),
\end{align}
such that $\bm\Gamma^{1/2} \bm\vep_t$ is seen as a vector of {\it idiosyncratic shocks} loaded on each $\xi_{it}$ via square summable, one-sided filters $\mc D_{ik}(L) = \sum_{\ell = 0}^\infty D_{\ell, ik} L^\ell$ where $\mbf D_\ell = [D_{\ell, ik}, \, 1 \le i, k \le p]$.
After accounting for the dominant cross-sectional dependence in the data (both contemporaneous and lagged) by factors,
it is reasonable to assume that the dependences left in $\bm\xi_t$ are weak and, therefore, that the VAR structure is sufficiently sparse.
Discussion on the precise requirement on the sparsity of $\mbf A_\ell, \, 1 \le \ell \le d$, and $\bm\Gamma^{-1}$ is deferred to Section~\ref{sec:fnets}.

\begin{rem}
\label{rem:staticFM}
A special case of the GDFM is the popularly adopted static factor model where the factors are loaded only contemporaneously (see e.g.\ \citealp{stock2002forecasting, bai2003, fan2013large}). 
This is formalised in Assumption~\ref{assum:static} below, where we consider forecasting under a static representation. 
A sufficient condition to obtain a static representation from the GDFM in~\eqref{eq:gdfm}, is to assume $\mc B(L) = \sum_{\ell = 0}^s \mbf B_\ell L^\ell$ for some finite integer $s \ge 0$. 
For example, if $s = 0$, the model reduces to $\bm\chi_t = \mbf B_{0} \mbf u_t$ while if $s > 0$, it can be written as $\bm\chi_t = \bm\Lambda \mbf F_t$ with $\bm\Lambda = [\mbf B_{\ell}, \, 0 \le \ell \le s]$ and $\mbf F_t = (\mbf u_t^\top, \ldots, \mbf u_{t - s}^\top)^\top$. 
Under the static factor model, $\mbf X_t$ admits a factor-augmented VAR representation (see Remark~\ref{rem:static:FVAR} below).
\end{rem}

In the remainder of this section, we list the assumptions required for identification and estimation of~\eqref{eq:gdfm}--\eqref{eq:idio:var}.
Since $\bm\chi_t$ and $\bm\xi_t$ are latent, some assumptions are required to ensure their (asymptotic) identifiability which are made in the frequency domain.
Denote by $\bm\Sigma_x(\omega)$ the spectral density matrix of $\mbf X_t$ at frequency $\omega \in [-\pi, \pi]$, and $\mu_{x, j}(\omega)$ its dynamic eigenvalues which are real-valued and ordered in the decreasing order.
We similarly define $\bm\Sigma_\chi(\omega)$, $\mu_{\chi, j}(\omega)$, $\bm\Sigma_\xi(\omega)$ and $\mu_{\xi, j}(\omega)$.

\begin{assum}
\label{assum:factor}
{\it 
There exist a positive integer $p_0 \ge 1$,
constants $\rho_j \in (3/4, 1]$ with $\rho_1 \ge \ldots \ge \rho_q$,
and pairs of continuous functions
$\omega \mapsto \alpha_{\chi, j}(\omega)$
and $\omega \mapsto \beta_{\chi, j}(\omega)$
for $\omega \in [-\pi, \pi]$ and $1 \le j \le q$, such that
for all $p \ge p_0$,
\begin{align*}
& \beta_{\chi, 1}(\omega) \ge \frac{\mu_{\chi, 1}(\omega)}{p^{\rho_1}} \ge \alpha_{\chi, 1}(\omega) >
\ldots > \beta_{\chi, q}(\omega) \ge 
\frac{\mu_{\chi, q}(\omega)}{p^{\rho_q}} \ge \alpha_{\chi, q}(\omega) > 0.
\end{align*}}
\end{assum}
Under the assumption, if $\rho_j = 1$ for all $1 \le j \le q$, 
then we are in presence of $q$ factors that are equally pervasive for the whole cross-section. The left panel of Figure~\ref{fig:ex} depicts the case when $\rho_1 = 1$.
If $\rho_j < 1$ for some $j$, we permit the presence of `weak' factors
and our theoretical analysis explicitly reflects this;
see \cite{dgr08}, \cite{onatski12}, \cite{freyaldenhoven2021factor} and \cite{uematsu2023estimation} 
for estimation under static factor models permitting weak factors. 
When weak factors are present, the ordering of the variables becomes important as $p \to \infty$,
whereas the case of linearly diverging factor strengths is compatible with completely arbitrary cross-sectional ordering.
The requirement that $\rho_j > 3/4$ is a minimal one, and generally larger values of $\rho_j$ are required as the dimensionality increases and heavier tails are permitted as discussed later.

Assumptions~\ref{assum:common} and~\ref{assum:idio} are made to control the serial dependence in $\mbf X_t$.

\begin{assum}
\label{assum:common}
{\it There exist some constants $\Xi > 0$ and $\varsigma > 2$ such that for all $\ell \ge 0$,
\begin{align*}
\max_{1 \le i \le p} \vert \mbf B_{\ell, i \cdot} \vert_2 \le \Xi (1 + \ell)^{-\varsigma} 
\quad \text{and} \quad
\l( \sum_{j = 1}^q \vert \mbf B_{\ell, \cdot j} \vert_\infty^2 \r)^{1/2} \le \Xi (1 + \ell)^{-\varsigma}.
\end{align*}
}
\end{assum}

\begin{assum}
\label{assum:idio}
{\it
\begin{enumerate}[noitemsep, wide, labelindent = 0pt, label = (\roman*)]
\item \label{cond:idio:transition} 
$d$ is a finite positive integer and $\det(\mc A(z)) \ne 0$ for all $\vert z \vert \le 1$.
\item \label{cond:idio:innov}
There exist some constants $0 < m_\vep \le M_\vep$ such that
$\Vert \bm\Gamma \Vert \le M_\vep$ and $\Lambda_{\min}(\bm\Gamma) \ge m_\vep$. 
\item \label{cond:idio:minspec} 
There exist a constant $m_\xi >0$ such that
$\inf_{\omega \in [-\pi, \pi]} \mu_{\xi, p}(\omega) \ge m_\xi$.
\item \label{cond:idio:coef} 
There exist some constants $\Xi > 0$ and $\varsigma > 2$ such that for all $\ell \ge 0$,
\begin{align*}
\vert D_{\ell, ik} \vert \le C_{ik} (1 + \ell)^{-\varsigma} \text{ with }
\max\l\{ \max_{1 \le k \le p} \sum_{i = 1}^p C_{ik}, \, 
\max_{1 \le i \le p} \sum_{k = 1}^p C_{ik},
\max_{1 \le i \le p} \sqrt{\sum_{k = 1}^p C_{ik}^2} \r\} \le \Xi. 
\end{align*}
\end{enumerate}}
\end{assum}

Assumption~\ref{assum:idio}~\ref{cond:idio:transition} and~\ref{cond:idio:innov} are standard in the literature \citep{lutkepohl2005} and imply that $\bm\xi_t$ is causal and has finite and non-zero covariance.
Under Assumptions~\ref{assum:common} and~\ref{assum:idio}~\ref{cond:idio:coef} (imposed on the Wold decomposition of $\bm\xi_t$ in~\eqref{eq:idio:wold}), the serial dependence in $\mbf X_t$ decays at an algebraic rate.
Further, we obtain a uniform bound for $\mu_{\xi, j}(\omega)$ under Assumption~\ref{assum:idio}~\ref{cond:idio:coef}:
\begin{prop}
\label{prop:idio:eval}
\it{
Under Assumption~\ref{assum:idio}, uniformly over all $\omega \in [-\pi, \pi]$,
there exists some constant $B_\xi >0$ depending only on $M_\vep$,  $\Xi$ and $\varsigma$, 
defined in Assumption~\ref{assum:idio}~\ref{cond:idio:minspec} and~\ref{cond:idio:coef}, such that
$\sup_{\omega \in [-\pi, \pi]} \mu_{\xi, 1}(\omega) \le B_\xi$.
}
\end{prop}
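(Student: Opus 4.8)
The plan is to pass to the frequency domain and bound the largest dynamic eigenvalue of $\bm\Sigma_\xi(\omega)$ directly through the Wold transfer function in~\eqref{eq:idio:wold}. Since $\bm\vep_t$ is white noise with $\Cov(\bm\vep_t) = \mbf I_p$, the spectral density of $\bm\xi_t = \sum_{\ell \ge 0}\mbf D_\ell \bm\Gamma^{1/2}\bm\vep_{t-\ell}$ is
\begin{align*}
\bm\Sigma_\xi(\omega) = \frac{1}{2\pi}\,\mc D(e^{-\iota\omega})\,\bm\Gamma\,\mc D(e^{\iota\omega})^\top,
\quad \text{where} \quad
\mc D(e^{-\iota\omega}) = \sum_{\ell = 0}^\infty \mbf D_\ell e^{-\iota\ell\omega}
\end{align*}
(a different normalisation of the spectral density merely rescales the final constant). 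As $\bm\Sigma_\xi(\omega)$ is Hermitian positive semidefinite, $\mu_{\xi,1}(\omega) = \Lambda_{\max}(\bm\Sigma_\xi(\omega)) = \Vert\bm\Sigma_\xi(\omega)\Vert$, and submultiplicativity of the spectral norm together with $\Vert\mc D(e^{\iota\omega})^\top\Vert = \Vert\mc D(e^{-\iota\omega})\Vert$ (the filter coefficients are real, so one is the entrywise conjugate-transpose of the other) gives
\begin{align*}
\mu_{\xi,1}(\omega) \le \frac{1}{2\pi}\,\Vert\bm\Gamma\Vert\,\Vert\mc D(e^{-\iota\omega})\Vert^2 \le \frac{M_\vep}{2\pi}\,\Vert\mc D(e^{-\iota\omega})\Vert^2,
\end{align*}
using $\Vert\bm\Gamma\Vert \le M_\vep$ from Assumption~\ref{assum:idio}~\ref{cond:idio:innov}. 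It therefore suffices to bound $\Vert\mc D(e^{-\iota\omega})\Vert$ uniformly over $\omega$.

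For this I would use the interpolation bound $\Vert\mbf A\Vert \le \sqrt{\Vert\mbf A\Vert_1\,\Vert\mbf A\Vert_\infty}$ applied to $\mbf A = \mc D(e^{-\iota\omega})$. The crucial observation is that the triangle inequality removes the phases $e^{-\iota\ell\omega}$, so each entry admits the $\omega$-free bound
\begin{align*}
\l\vert [\mc D(e^{-\iota\omega})]_{ik} \r\vert \le \sum_{\ell = 0}^\infty \vert D_{\ell, ik}\vert \le C_{ik}\sum_{\ell = 0}^\infty (1 + \ell)^{-\varsigma} = C_{ik} K_\varsigma,
\end{align*}
where $K_\varsigma = \sum_{\ell \ge 0}(1+\ell)^{-\varsigma} < \infty$ is finite because $\varsigma > 2 > 1$, invoking the decay rate in Assumption~\ref{assum:idio}~\ref{cond:idio:coef}. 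Summing over rows and over columns and using the two bounds $\max_k \sum_i C_{ik} \le \Xi$ and $\max_i \sum_k C_{ik} \le \Xi$ from the same assumption then yields $\Vert\mc D(e^{-\iota\omega})\Vert_1 \le \Xi K_\varsigma$ and $\Vert\mc D(e^{-\iota\omega})\Vert_\infty \le \Xi K_\varsigma$ for every $\omega$, whence $\Vert\mc D(e^{-\iota\omega})\Vert \le \Xi K_\varsigma$.

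Combining the two chains of inequalities gives $\sup_\omega \mu_{\xi,1}(\omega) \le (2\pi)^{-1} M_\vep \Xi^2 K_\varsigma^2 =: B_\xi$, a constant depending only on $M_\vep$, $\Xi$ and $\varsigma$, as claimed. The argument is essentially mechanical; the only point that requires care---and the reason the bound holds \emph{uniformly} in $\omega$---is that the absolute summability of the filter coefficients lets the triangle inequality strip the frequency-dependent phases before any norm is taken, so every bound on $\mc D(e^{-\iota\omega})$ is frequency-free. Note that part~\ref{cond:idio:minspec} of the assumption plays no role here; it supplies the matching lower bound on $\mu_{\xi,p}(\omega)$ rather than the upper bound on $\mu_{\xi,1}(\omega)$ established above.
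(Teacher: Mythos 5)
Your proof is correct and follows essentially the same route as the paper: both pass to the spectral density $\bm\Sigma_\xi(\omega) = (2\pi)^{-1}\mc D(e^{-\iota\omega})\bm\Gamma\mc D^*(e^{-\iota\omega})$, bound $\Vert\bm\Gamma\Vert$ by $M_\vep$, and control $\Vert\mc D(e^{-\iota\omega})\Vert^2$ by $\Vert\mc D(e^{-\iota\omega})\Vert_1\Vert\mc D(e^{-\iota\omega})\Vert_\infty$ via the triangle inequality and the row/column sum conditions on $C_{ik}$ in Assumption~\ref{assum:idio}~\ref{cond:idio:coef}, arriving at the same constant $\Xi^2 M_\vep(2\pi)^{-1}\bigl(\sum_{\ell\ge 0}(1+\ell)^{-\varsigma}\bigr)^2$. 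No gaps.
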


\begin{rem}
\label{rem:stability}
Proposition~\ref{prop:idio:eval} and Assumption~\ref{assum:idio}~\ref{cond:idio:minspec} jointly establish the uniform boundedness of $\mu_{\xi, 1}(\omega)$ and $\mu_{\xi, p}(\omega)$, which is commonly assumed in the literature on high-dimensional VAR estimation via $\ell_1$-regularisation.
A sufficient condition for Assumption~\ref{assum:idio}~\ref{cond:idio:minspec} is that $\max\l\{\max_{1 \le i \le p} \sum_{\ell = 1}^d \vert \mbf A_{\ell, i \cdot} \vert_1,
\max_{1 \le j \le p} \sum_{\ell = 1}^d \vert \mbf A_{\ell, \cdot j} \vert_1 \r\}~\le~\Xi$
for some constant $\Xi > 0$ \citep{basu2015}.
Further, when e.g.\ $d = 1$, Assumption~\ref{assum:idio}~\ref{cond:idio:coef} follows if $\vert \mbf A_1 \vert_\infty \le \gamma < 1$ since $\max(\Vert \mbf D_\ell \Vert_1, \Vert \mbf D_\ell \Vert_\infty) \le \Xi \gamma^\ell$ with $\mbf D_\ell = \mbf A_1^\ell$.
\end{rem}

The two latent components $\bm\chi_t$ and $\bm\xi_t$, and the number of factors $q$, are identified thanks to the large gap between the eigenvalues of their spectral density matrices, which follows from Assumption~\ref{assum:factor} and Proposition~\ref{prop:idio:eval}. 
Then by Weyl's inequality, the $q$-th dynamic eigenvalue $\mu_{x, q}(\omega)$ 
diverges almost everywhere in $[-\pi, \pi]$ as $p \to \infty$, whereas $\mu_{x, q + 1}(\omega)$ is uniformly bounded for any $p \in \N$ and $\omega$.
This property is exploited in the FNETS methodology as later described in Section~\ref{sec:dpca}.
It is worth stressing that Assumption~\ref{assum:factor} and Proposition~\ref{prop:idio:eval} jointly constitute both a necessary and sufficient condition for the process $\mbf X_t$ to admit the dynamic factor representation in~\eqref{eq:gdfm}, see \cite{fornilippi01}.

Finally, we characterise the common and idiosyncratic innovations.

\begin{assum} 
\label{assum:innov:new}
\it{
\begin{enumerate}[noitemsep, wide, labelindent = 0pt, label = (\roman*)] 
\item \label{cond:iid:new}  $\{\mbf u_t\}_{t \in \Z}$ is a sequence of zero-mean, $q$-dimensional martingale difference vectors with $\Cov(\mbf u_t) = \mbf I_q$, and $u_{it}$ and $u_{jt}$ are independent for all $1 \le i, j \le q$ with $i \ne j$ and all $t\in \Z$.

\item \label{cond:iid:new_eps} $\{\bm\vep_t\}_{t \in \Z}$ is a sequence of zero-mean, $p$-dimensional martingale difference vectors with $\Cov(\bm\vep_t) = \mbf I_p$, and $\vep_{it}$ and $\vep_{jt}$ are independent for all $1 \le i, j \le p$ with $i \ne j$ and all $t\in \Z$.

\item \label{cond:uncor:new} 
$\E(u_{jt} \vep_{it'})~=~0$ for all $1 \le j \le q$, $1 \le i \le p$ and $t, t' \in \Z$.

\item \label{cond:dist:new}  There exist some constants $\nu > 4$ and $\mu_\nu>0$ such that 
\begin{center}
$\max\l\{ \max_{1 \le j \le q} \E(\vert u_{jt} \vert^\nu), \max_{1 \le i \le p} \E(\vert \vep_{it} \vert^\nu) \r\} \le \mu_\nu$.
\end{center}
\end{enumerate}
}
\end{assum}

Assumption~\ref{assum:innov:new}~\ref{cond:iid:new} and~\ref{cond:iid:new_eps} allow the common and idiosyncratic innovations to be sequences of martingale differences, relaxing the assumption of serial independence found in \cite{forni2017dynamic}.
Condition~\ref{cond:uncor:new} is standard in the factor modelling literature. 
Under~\ref{cond:dist:new}, we require that the innovations have $\nu > 4$ moments, which is considerably weaker than the Gaussianity assumed in the literature on VAR modelling of high-dimensional time series \citep{basu2015, han2015direct}.
In Appendix~\ref{sec:proof}, we separately consider the case when $\mbf u_t$ and $\bm\vep_t$ are Gaussian for for the sake of comparison.

\section{Network estimation via FNETS}
\label{sec:fnets}

\subsection{Networks underpinning factor-adjusted VAR processes}
\label{sec:networks}

Under the latent VAR model in~\eqref{eq:idio:var}, we can define three types of networks
underpinning the interconnectedness of $\mbf X_t$ after factor adjustment  \citep{barigozzi2019}.

Let $\mc V = \{1, \ldots, p\}$ denote the set of vertices representing the $p$ time series.
Firstly, the transition matrices $\mbf A_\ell = [A_{\ell, ii'}, \, 1 \le i, i' \le p]$, encode
the directed network $\mc N^{\dir} = (\mc V, \mc E^{\dir})$ representing Granger causal linkages,~with
\begin{align}\label{eq:net:dir}
\mc E^{\dir} = \l\{(i, i') \in \mc V \times \mc V: \, A_{\ell, ii'} \ne 0 \text{ for some } 1 \le \ell \le d \r\}
\end{align}
as the set of edges.
Here, the presence of an edge $(i, i') \in \mc E^{\dir}$
indicates that $\xi_{i', t - \ell}$ Granger causes $\xi_{it}$ at some lag $1 \le \ell \le d$.

The second network contains undirected edges representing 
contemporaneous dependence between VAR innovations $\bm\Gamma^{1/2} \bm\vep_t$,
denoted by $\mc N^{\undir} = (\mc V, \mc E^{\undir})$;
we have $(i, i') \in \mc E^{\undir}$ iff the partial correlation between 
the $i$-th and $i'$-th elements of $\bm\Gamma^{1/2} \bm\vep_t$ is non-zero. 
Specifically, letting $\bm\Gamma^{-1} = \bm\Delta = [\delta_{ii'}, \, 1 \le i, i' \le p]$, the set of edges is given by
\begin{align}
\label{eq:net:undir}
\mc E^{\undir} = \l\{ (i, i') \in \mc V \times \mc V: \, i \ne i' \text{ and }
- \frac{\delta_{ii'}}{\sqrt{\delta_{ii} \cdot \delta_{i'i'}}} \ne 0 \r\}.
\end{align}
Finally, we summarise the aforementioned 
lead-lag and contemporaneous relations between the variables
in a single, undirected network $\mc N^{\lr} = (\mc V, \mc E^{\lr})$
by means of the long-run partial correlations of $\bm\xi_t$. 
Let $\bm\Omega = [\omega_{ii'}, \, 1 \le i, i' \le p]$ denote 
the long-run partial covariance matrix of $\bm\xi_t$,
i.e.\ $\bm\Omega = (\bm\Sigma_\xi(0))^{-1} = 2\pi \mc A^\top(1) \bm\Delta \mc A(1)$ under~\eqref{eq:idio:var}.
Then, the set of edges of $\mc N^{\lr}$ is 
\begin{align}
\label{eq:net:lr}
\mc E^{\lr} = \l\{ (i, i') \in \mc V \times \mc V: \, i \ne i' \text{ and }
- \frac{\omega_{ii'}}{\sqrt{\omega_{ii} \cdot \omega_{i'i'}}} \ne 0 \r\}.
\end{align}
Generally, $\mc E^{\lr}$ is greater than $\mc E^{\dir} \cup \mc E^{\undir}$, see Appendix~\ref{app:lr} for a sufficient condition for the absence of an edge $(i, i')$ from $\mc N^{\lr}$.
In the remainder of Section~\ref{sec:fnets}, we describe the network estimation methodology of FNETS which, consisting of three steps, estimates the three networks while fully accounting for the challenges arising from not directly observing the VAR process $\bm\xi_t$, and investigate its theoretical properties.

\subsection{Step~1: Factor adjustment via dynamic PCA}
\label{sec:dpca}

As described in Section~\ref{sec:model}, under our model~\eqref{eq:gdfm}--\eqref{eq:idio:var}, there exists a large gap in $\mu_{x, j}(\omega)$, the dynamic eigenvalues of the spectral density matrix of $\mbf X_t$, between those attributed to the factors ($ j \le q$) and those which are not ($j \ge q + 1$). 
With the goal of estimating the autocovariance (ACV) matrix of the latent VAR process $\bm\xi_t$, we exploit this gap in the factor-adjustment step based on dynamic principal component analysis (PCA);
see Chapter~9 of \cite{brillinger1981} for the definition of dynamic PCA and \cite{FHLR00} for its use in the estimation of GDFM. 
Throughout, we treat~$q$ as known and refer to \cite{hallin2007} for its consistent estimation under~\eqref{eq:gdfm}.

Denote the ACV matrices of $\mbf X_t$ by
$\bm\Gamma_x(\ell) = \E(\mbf X_{t - \ell}\mbf X_t^\top)$ for $\ell \ge 0$
and $\bm\Gamma_x(\ell) = \bm\Gamma_x^\top(-\ell)$ for $\ell \le -1$,
and analogously define $\bm\Gamma_\chi(\ell)$ and $\bm\Gamma_\xi(\ell)$
with $\bm\chi_t$ and $\bm\xi_t$ replacing $\mbf X_t$, respectively. 
Then, $\bm\Sigma_x(\omega)$ and $\bm\Gamma_x(\ell)$ satisfy
$\bm\Sigma_x(\omega) = (2\pi)^{-1} \sum_{\ell=-\infty}^\infty \bm\Gamma_x(\ell) \exp(-\iota\ell\omega)$
for all $\omega\in[-\pi,\pi]$. 
Motivated by this, we estimate $\bm\Sigma_x(\omega)$ by
\begin{align}
\wh{\bm\Sigma}_x(\omega) = \frac{1}{2\pi} \sum_{\ell = -m}^m K\l(\frac{\ell}{m}\r)
\wh{\bm\Gamma}_x(\ell) \exp(-\iota \ell \omega), \label{eq:periodogram}
\end{align}
with the sample ACV $\wh{\bm\Gamma}_x(\ell) = n^{-1} \sum_{t = \ell + 1}^n \mbf X_{t - \ell} \mbf X_t^\top$ when $\ell \ge 0$, and $\wh{\bm\Gamma}_x(\ell) = \wh{\bm\Gamma}_x(-\ell)^\top$ for $\ell < 0$, and the kernel bandwidth $m = \lfloor n^\beta \rfloor$ for some $\beta \in (0, 1)$. 
We adopt the Bartlett kernel as $K(\cdot)$
which ensures positive semi-definiteness of $\wh{\bm\Sigma}_x(\omega)$
(see Appendix~\ref{pf:prop:idio:est:lasso}).
Then, we evaluate $\wh{\bm\Sigma}_x(\omega)$ at the $2m + 1$ Fourier frequencies $\omega_k, \, -m \le k \le m$ ($\omega_k = 2\pi k / (2m + 1)$ for $0 \le k \le m$, and $\omega_k = - \omega_{\vert k \vert}$ for $-m \le k \le -1$),
and estimate $\bm\Sigma_\chi(\omega_k)$ by retaining the contribution from the $q$ largest eigenvalues and eigenvectors only. 
That is, we obtain $\wh{\bm\Sigma}_\chi(\omega_k) = \sum_{j = 1}^q \wh{\mu}_{x, j}(\omega_k) \wh{\mbf e}_{x, j}(\omega_k) (\wh{\mbf e}_{x, j}(\omega_k))^*$ (with $*$ denoting the transposed complex conjugate),
where $\wh\mu_{x, 1}(\omega) \ge \ldots \ge \wh\mu_{x, q}(\omega)$, denote the $q$ leading eigenvalues of $\wh{\bm\Sigma}_x(\omega)$ and $\wh{\mbf e}_{x, j}(\omega)$ the associated (normalised) eigenvectors.
From this, an estimator of $\bm\Gamma_\chi(\ell)$ at a given lag $\ell \in \mathbb{N}$, is obtained via inverse Fourier transform as
$\wh{\bm\Gamma}_\chi(\ell) = 2\pi (2m + 1)^{-1} \sum_{k = -m}^m \wh{\bm\Sigma}_\chi(\omega_k) \exp(\iota \ell \omega_k)$
and finally, we estimate the ACV matrices of $\bm\xi_t$ with $\wh{\bm\Gamma}_\xi(\ell) = \wh{\bm\Gamma}_x(\ell) - \wh{\bm\Gamma}_\chi(\ell)$, by virtue of Assumption~\ref{assum:innov:new}~\ref{cond:uncor:new}.

\subsection{Step~2: Estimation of VAR parameters and $\mc N^{\dir}$}
\label{sec:idio:beta}

Recalling the VAR($d$) model in~\eqref{eq:idio:var}, let $\bm\beta = [\mbf A_\ell, \, 1 \le \ell \le d]^\top \in \R^{(pd) \times p}$ denote the matrix collecting all the VAR parameters.
When $\bm\xi_t$ is directly observable, $\ell_1$-regularised least squares or maximum likelihood estimators have been proposed for $\bm\beta$, see the references given in Introduction.
In the context of factor-adjusted regression modelling where the aim is to estimate the regression structure in the latent idiosyncratic process, it has been proposed to apply the $\ell_1$-regularisation methods after estimating the entire latent process by, say,~$\wh{\bm\xi}_t$ \citep{fan2020factor, fan2021bridging, fan2023latent, krampe2021}.
However, such an approach possibly suffers from the lack of statistical efficiency due to having to control the estimation errors in $\wh{\bm\xi}_t$ uniformly for all $1 \le t \le n$.
 Instead, we make use of the Yule-Walker (YW) equation $\bm\beta = \bbG^{-1} \bbg$, where
\begin{align*}
\bbG = \bmx 
\bm\Gamma_\xi(0) & \bm\Gamma_\xi(-1) & \ldots & \bm\Gamma_\xi(-d + 1) 
\\
\vdots& \vdots& \ddots &\vdots 
\\
\bm\Gamma_\xi(d - 1) & \bm\Gamma_\xi(d - 2) & \ldots & \bm\Gamma_\xi(0)
\emx 
\quad \text{and} \quad
\bbg = \bmx
\bm\Gamma_\xi(1)
\\
\vdots
\\
\bm\Gamma_\xi(d)
\emx,
\end{align*}
with $\bbG$ being always invertible since $\Lambda_{\min}(\bbG) \ge 2\pi m_\xi > 0$
by Assumption~\ref{assum:idio}~\ref{cond:idio:minspec}. 
We propose to estimate $\bm\beta$ as a regularised YW estimator based on $\wh{\bbG}$ and $\wh{\bbg}$, which are obtained by replacing $\bm\Gamma_\xi(\ell)$ with $\wh{\bm\Gamma}_\xi(\ell)$ derived in Step~1 of FNETS via dynamic PCA, in the definitions of $\bbG$ and $\bbg$, respectively.

To handle the high dimensionality, we consider an $\ell_1$-regularised estimator for $\bm\beta$ 
which solves the following $\ell_1$-penalised $M$-estimation problem
\begin{align}
\label{eq:lasso}
\wh{\bm\beta} = {\arg\min}_{\mbf M \in \R^{pd \times p}} \
\tr\l( \mbf M^\top \wh{\bbG} \mbf M - 2 \mbf M^\top\wh{\bbg} \r)
+ \lambda \vert \mbf M \vert_1
\end{align}
with a tuning parameter $\lambda > 0$.
Note that the matrix $\wh{\bbG}$ is guaranteed to be positive semi-definite
(see Appendix~\ref{pf:prop:idio:est:lasso}),
thus the problem in~\eqref{eq:lasso} is convex with a global minimiser.
We note the similarity between~\eqref{eq:lasso} and the Lasso estimator, but our estimator is specifically tailored for the problem of estimating the parameters for the latent VAR process $\bm\xi_t$ by means of second-order moments only, and thus differs fundamentally from the Lasso-type estimators proposed for high-dimensional VAR estimation.
In Appendix~\ref{sec:beta:ds}, we propose an alternative estimator based on a constrained $\ell_1$-minimisation approach closely related to the Dantzig selector \citep{candes2007}.

Once the VAR parameters are estimated, we propose to estimate the edge set of $\mc N^{\dir}$ in~\eqref{eq:net:dir} by the set of indices of the non-zero elements of a thresholded version of $\wh{\bm\beta}$, denoted by $\wh{\bm\beta}(\mathfrak{t}) = [\wh{\beta}_{ij} \cdot \mathbb{I}_{\{\vert \wh{\beta}_{ij} \vert > \mathfrak{t} \}} ]$, with some threshold $\mathfrak{t} > 0$.

\subsection{Step~3: Estimation of $\mc N^{\undir}$ and $\mc N^{\lr}$}
\label{sec:idio:lrpc}

Recall that the edge sets of $\mc N^{\undir}$ and $\mc N^{\lr}$ defined in~\eqref{eq:net:undir}--\eqref{eq:net:lr}, are given by the supports of $\bm\Delta$ and $\bm\Omega$.
Given $\wh{\bm\beta}$ in~\eqref{eq:lasso} which estimates $\bm\beta$, a natural estimator of $\bm\Gamma$ arises from the YW equation
$\bm\Gamma = \bm\Gamma_\xi(0) - \sum_{\ell = 1}^d \mbf A_{\ell} \bm\Gamma_\xi(\ell) = \bm\Gamma_\xi(0) - \bm\beta^\top \bbg$,
as $\wh{\bm\Gamma} = \wh{\bm\Gamma}_\xi(0) - \wh{\bm\beta}^\top \wh{\bbg}$.
Then, we propose to estimate $\bm\Delta = \bm\Gamma^{-1}$ via constrained $\ell_1$-minimisation~as
\begin{align}
\label{eq:inv:delta}
\check{\bm\Delta} & = {\arg\min}_{\mbf M \in \R^{p \times p}} \vert \mbf M \vert_1 
\quad \text{subject to} \quad
\l\vert \wh{\bm\Gamma} \mbf M - \mbf I \r\vert_\infty \le \eta,
\end{align}
where $\eta > 0$ is a tuning parameter.
This approach has originally been proposed for estimating the precision matrix of independent data \citep{cai2011}, which we extend to time series settings.
Since $\check{\bm\Delta} = [\check\delta_{ii'}, \, 1 \le i, j \le p]$ is not guaranteed to be symmetric, a symmetrisation step is performed to obtain
$\wh{\bm\Delta} = [\wh\delta_{ii'}, \, 1 \le i, i' \le p]$
with
$\wh\delta_{ii'} = \check\delta_{ii'} \cdot \mathbb{I}_{\{\vert \check\delta_{ii'} \vert
\le \vert \check\delta_{i'i} \vert \}}
+ \check\delta_{i'i} \cdot \mathbb{I}_{\{\vert \check\delta_{i'i} \vert
< \vert \check\delta_{ii'} \vert \}}$.
Then, the edge set of $\mc N^{\undir}$ in~\eqref{eq:net:undir} is estimated by the support of the thresholded estimator $\wh{\bm\Delta}(\mathfrak{t}_\delta) = [\wh\delta_{ii'} \cdot \mathbb{I}_{\{ \vert \wh\delta_{ii'} \vert > \mathfrak{t}_\delta\}}, \, 1 \le i, i' \le p]$ with some threshold $\mathfrak{t}_\delta > 0$.

Finally, we estimate $\bm\Omega = 2\pi (\mc A(1))^\top \bm\Delta \mc A(1)$ by replacing $\mc A(1)$ and $\bm\Delta$ with their estimators.
We adopt the thresholded estimator $\wh{\bm\beta}(\mathfrak{t}) = [\wh{\mbf A}_1(\mathfrak{t}), \ldots, \wh{\mbf A}_d(\mathfrak{t})]^\top$, to obtain $\wh{\mc A}(1) = \mbf I - \sum_{\ell = 1}^d \wh{\mbf A}_\ell(\mathfrak{t})$ and set $\wh{\bm\Omega} = 2\pi (\wh{\mc A}(1))^\top \wh{\bm\Delta} \wh{\mc A}(1)$.
Analogously, the edge set of $\mc N^{\lr}$ in~\eqref{eq:net:lr} is obtained by thresholding $\wh{\bm\Omega} = [\wh{\omega}_{ii'}, \, 1 \le i, i' \le p]$ with some threshold $\mathfrak{t}_\omega > 0$, as the support of $\wh{\bm\Omega}(\mathfrak{t}_\omega) = [\wh{\omega}_{ii'} \cdot \mathbb{I}_{\{ \vert \wh\omega_{ii'} \vert > \mathfrak{t}_\omega\}}, \, 1 \le i, i' \le p]$.

\subsection{Theoretical properties}
\label{sec:fnets:network:theory}

We prove the consistency of FNETS in network estimation by establishing the theoretical properties of each of its three steps in Sections~\ref{sec:theor:fnets:one}--\ref{sec:theor:fnets:three}. Then in Section~\ref{sec:theory:static}, we present the results for a special case where $\bm\chi_t$ admits a static representation, $n \asymp p$ and $\E(\vert X_{it} \vert^\nu) < \infty$ for $\nu > 8$, for ease of comparing our results to the existing ones.

Hereafter, we define
\begin{align}
\label{eq:rates:one}
\psi_n = \l( \frac{m}{n^{1 - 2/\nu}} \vee \sqrt{\frac{m\log(m)}{n}} \r) 
\text{ and }
\vartheta_{n, p} = \l(\frac{m (np)^{2/\nu} \log^{7/2}(p)}{n} \vee \sqrt{\frac{m \log(mp)}{n}} \r),
\end{align}
where the dependence of these quantities on $\nu$ is omitted for simplicity.

\subsubsection{Factor adjustment via dynamic PCA}
\label{sec:theor:fnets:one}

We first establish the consistency of the dynamic PCA-based estimator of $\bm\Gamma_\chi(\ell)$.
\begin{thm}
\label{thm:common:spec}
\it{Suppose that Assumptions~\ref{assum:factor},~\ref{assum:common},~\ref{assum:idio} and~\ref{assum:innov:new} are met. Then, for any finite positive integer $s\le d$, as $n, p \to \infty$,
\begin{align*}
\max_{\ell: \, \vert \ell \vert \le s} \frac{1}{p} 
\l\Vert \wh{\bm\Gamma}_\chi(\ell) - \bm\Gamma_\chi(\ell) \r\Vert_F
=&\, O_P\l(q  p^{2(1 - \rho_q)} \l(\psi_n \vee \frac{1}{m} \vee \frac{1}{\sqrt p}\r)\r), \\
\max_{\ell: \, \vert \ell \vert \le s} 
\l\vert \wh{\bm\Gamma}_\chi(\ell) - \bm\Gamma_\chi(\ell) \r\vert_\infty
=&\, O_P\l(q p^{2(1 - \rho_q)} \l(\vartheta_{n, p} \vee \frac{1}{m} \vee \frac{1}{\sqrt p}\r)\r).
\end{align*}
}
\end{thm}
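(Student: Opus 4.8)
The plan is to reduce the claim to a uniform-over-frequency deviation bound for the kernel-smoothed spectral density estimator of $\mbf X_t$, and then transfer it to $\wh{\bm\Gamma}_\chi(\ell)$ through the inverse discrete Fourier transform together with an eigen-perturbation argument. Since $\wh{\bm\Gamma}_\chi(\ell) = 2\pi (2m+1)^{-1}\sum_{k=-m}^m \wh{\bm\Sigma}_\chi(\omega_k) e^{\iota\ell\omega_k}$ is an average of the $\wh{\bm\Sigma}_\chi(\omega_k)$ over the $2m+1$ Fourier frequencies, I would first write
\begin{align*}
\wh{\bm\Gamma}_\chi(\ell) - \bm\Gamma_\chi(\ell) = \frac{2\pi}{2m+1}\sum_{k=-m}^m \l(\wh{\bm\Sigma}_\chi(\omega_k) - \bm\Sigma_\chi(\omega_k)\r) e^{\iota\ell\omega_k} + \mbf R_\ell,
\end{align*}
where $\mbf R_\ell$ is the deterministic error of approximating the inverse Fourier integral of $\bm\Sigma_\chi$ by the Riemann sum over $2m+1$ points. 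This remainder is controlled using the algebraic decay of $\bm\Gamma_\chi(\ell)$ guaranteed by Assumption~\ref{assum:common}, and is dominated by the $1/m$ bias term. Because the averaging over $k$ is bounded by the maximum, for both norms it then suffices to bound $\max_{\vert k\vert\le m}\Vert\wh{\bm\Sigma}_\chi(\omega_k) - \bm\Sigma_\chi(\omega_k)\Vert_F / p$ in the normalised Frobenius sense and $\max_{\vert k\vert\le m}\vert\wh{\bm\Sigma}_\chi(\omega_k) - \bm\Sigma_\chi(\omega_k)\vert_\infty$ element-wise.

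The first main step is a uniform deviation bound for $\wh{\bm\Sigma}_x(\omega_k)$, decomposed into a kernel bias and a stochastic fluctuation. The bias is of order $1/m$ by the first-order property of the Bartlett kernel together with the decay in Assumption~\ref{assum:common}. For the stochastic part I would adapt the concentration results of \citet{zhang2021} for kernel-smoothed periodograms to our setting, which only assumes $\nu>4$ moments (Assumption~\ref{assum:innov:new}~\ref{cond:dist:new}) and permits martingale-difference innovations. Controlling $\Vert\wh{\bm\Sigma}_x(\omega_k)-\bm\Sigma_x(\omega_k)\Vert_F/p$ produces the $\psi_n$ rate, whereas controlling the element-wise maximum requires a union bound over the $2m+1$ frequencies and all $p^2$ entries; under heavy tails this is handled by a truncation/Nagaev-type argument, producing the $(np)^{2/\nu}$ factor and the additional logarithmic terms that distinguish $\vartheta_{n,p}$ from $\psi_n$.

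The second main step transfers these bounds from $\bm\Sigma_x$ to $\bm\Sigma_\chi$. Here $\wh{\bm\Sigma}_\chi(\omega_k)$ is the rank-$q$ eigen-truncation of $\wh{\bm\Sigma}_x(\omega_k)$, while $\bm\Sigma_\chi(\omega_k)$ is, up to the bounded idiosyncratic contribution, the leading rank-$q$ part of $\bm\Sigma_x(\omega_k)$. By Assumption~\ref{assum:factor} and Proposition~\ref{prop:idio:eval}, the $q$-th dynamic eigenvalue of $\bm\Sigma_x(\omega_k)$ is of order $p^{\rho_q}$ while the $(q+1)$-th is $O(1)$, giving an eigengap $\asymp p^{\rho_q}$. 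Weyl's inequality bounds the eigenvalue errors and the Davis--Kahan theorem bounds the eigenprojector errors by (deviation)$/p^{\rho_q}$; reconstructing $\sum_{j\le q}\wh\mu_{x,j}\wh{\mbf e}_{x,j}\wh{\mbf e}_{x,j}^*$ then multiplies these by the retained eigenvalues, of order up to $p^{\rho_1}\le p$. Summing over the $q$ components and carrying the weak-factor normalisation through the outer-product structure is what yields the amplification factor $q\,p^{2(1-\rho_q)}$ multiplying the step-one rates, while the leakage of the bounded idiosyncratic spectrum into the leading eigenspace supplies the deterministic $1/\sqrt p$ bias. Since this amplification is common to both norms, combining the two steps gives the stated rates with $\psi_n$ and $\vartheta_{n,p}$ respectively.

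I expect the main obstacle to be the first step: uniform-over-frequency concentration of the high-dimensional spectral density estimator under only $\nu>4$ moments rather than Gaussianity. The element-wise bound is especially delicate, as the union bound over $p^2$ entries and $2m+1$ frequencies must be combined with a truncation argument trading the tail level against $\nu$, and the serial dependence (martingale-difference innovations with algebraically decaying filters) must be accommodated within the concentration inequality. A secondary difficulty is the bookkeeping in the perturbation step, where the divergent eigenvalues amplify the eigenvector errors and weak factors ($\rho_q<1$) shrink the eigengap, so the exact power $p^{2(1-\rho_q)}$ must be tracked rather than crudely bounded.
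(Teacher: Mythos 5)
Your proposal follows essentially the same route as the paper's proof: a Zhang--Wu-type uniform (over frequency) concentration bound for the Bartlett-smoothed spectral estimator of $\mbf X_t$ (mean-square for the Frobenius rate $\psi_n$, Nagaev-plus-union-bound for the entrywise rate $\vartheta_{n,p}$, with $O(1/m)$ kernel bias), then Weyl and a Davis--Kahan argument exploiting the eigengap $\asymp p^{\rho_q}$ to get the $q\,p^{2(1-\rho_q)}$ amplification and the $p^{-1/2}$ term from the bounded idiosyncratic spectrum, and finally the inverse discrete Fourier transform whose Riemann-sum error is absorbed into the $1/m$ term via bounded variation of $\sigma_{\chi,ii'}(\omega)$. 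This matches the paper's chain (Propositions~\ref{prop:spec:x}, \ref{prop:spec:x:max}, \ref{prop:spec:common} and~\ref{prop:acv:common} with Lemmas~\ref{lem:dk}--\ref{lem:evec:size}), so no further comparison is needed.
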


\begin{rem}
\label{rem:thm:dpca}
\begin{enumerate}[noitemsep, wide, labelwidth=!, labelindent=5pt, label = (\alph*)]

\item Theorem~\ref{thm:common:spec} is complemented by Proposition~\ref{prop:spec:common} in Appendix which establishes the consistency of the spectral density matrix estimator $\wh{\bm\Sigma}_\chi(\omega)$ uniformly over $\omega \in [-\pi, \pi]$ in both Frobenius and $\ell_\infty$-norms.

\item \label{rem:bandwidth} Both $\psi_n$ and $\vartheta_{n, p}$ in~\eqref{eq:rates:one} increase with the bandwidth $m$.  
It is possible to find $m$ that minimises e.g.\ $(\vartheta_{n, p} \vee m^{-1})$ which, roughly speaking, represents the bias-variance trade-off in the estimation of the spectral density matrix $\bm\Sigma_x(\omega)$.
For example, in light-tailed settings with large enough $\nu$, 
the choice $m \asymp (n\log^{-1}(np))^{1/3}$ 
leads to the minimal rate in $\ell_\infty$-norm $(\vartheta_{n, p} \vee m^{-1}) \asymp (\log(np)/n)^{1/3}$ 
which nearly matches the optimal non-parametric rate when using the Bartlett kernel as in~\eqref{eq:periodogram} \citep[p.\ 463]{priestley1982spectral}.

\item \label{rem:tails} 
Consistency in Frobenius norm depends on $\psi_n$ which tends to zero as $n \to \infty$ without placing any constraint on the relative rate of divergence between $n$ and~$p$.
Consistency in $\ell_\infty$-norm
is determined by $\vartheta_{n, p}$ which depends on the interplay between the dimensionality and the tail behaviour.
Generally, the estimation error worsens as weaker factors are permitted ($\rho_q < 1$ in Assumption~\ref{assum:factor}) and as $p$ grows, and also when $\nu$ is small such that heavier tails are permitted.
Consider the case when all factors are strong (i.e.\ $\rho_j = 1$).
If $p \asymp n$, then $\ell_\infty$-consistency holds with an appropriately chosen $m = n^\beta, \, \beta \in (0, 1)$, that leads to $\vartheta_{n, p} = o(1)$, provided that $\nu > 4$.
When all moments of $u_{jt}$ and $\vep_{it}$ exist, we achieve $\ell_\infty$-consistency even in the ultra high-dimensional case where $\log(p) = o(n)$.
\end{enumerate}
\end{rem}

From Theorem~\ref{thm:common:spec}, the following proposition immediately follows.
\begin{prop}
\label{prop:acv:idio}
\textit{Suppose that the conditions in Theorem~\ref{thm:common:spec} are met and let Assumption~\ref{assum:factor} hold with $\rho_j = 1, \, 1 \le j \le q$. 
Then, $\p(\mc E_{n, p}) \to 1$ as $n, p \to \infty$, where
\begin{align}
\mc E_{n, p} &= \l\{
\max_{-d \le \ell \le d} \l\vert \wh{\bm\Gamma}_\xi(\ell) - \bm\Gamma_\xi(\ell) \r\vert_\infty
\le C_\xi \l(\vartheta_{n, p} \vee \frac{1}{m} \vee \frac{1}{\sqrt p}\r) \r\}.
\label{eq:idio:set}
\end{align}
for some constant $C_\xi > 0$. 
}
\end{prop}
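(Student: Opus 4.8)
The plan is to reduce the claim directly to Theorem~\ref{thm:common:spec} via the identity $\wh{\bm\Gamma}_\xi(\ell) = \wh{\bm\Gamma}_x(\ell) - \wh{\bm\Gamma}_\chi(\ell)$ from Step~1 of FNETS, together with the population decomposition $\bm\Gamma_\xi(\ell) = \bm\Gamma_x(\ell) - \bm\Gamma_\chi(\ell)$, which holds because the common and idiosyncratic innovations are orthogonal by Assumption~\ref{assum:innov:new}~\ref{cond:uncor:new}. Subtracting these two identities and applying the triangle inequality in the element-wise $\ell_\infty$-norm yields, for every $\ell$,
\[
\l\vert \wh{\bm\Gamma}_\xi(\ell) - \bm\Gamma_\xi(\ell) \r\vert_\infty
\le \l\vert \wh{\bm\Gamma}_x(\ell) - \bm\Gamma_x(\ell) \r\vert_\infty
+ \l\vert \wh{\bm\Gamma}_\chi(\ell) - \bm\Gamma_\chi(\ell) \r\vert_\infty,
\]
so it suffices to control the two terms on the right uniformly over $-d \le \ell \le d$.

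For the common-component term I would invoke the second display of Theorem~\ref{thm:common:spec} with $s = d$. Since Assumption~\ref{assum:factor} is here imposed with $\rho_j = 1$ for all $1 \le j \le q$, the prefactor $p^{2(1 - \rho_q)}$ equals $1$, and because $q$ is a fixed finite integer, the bound collapses to $O_P(\vartheta_{n, p} \vee m^{-1} \vee p^{-1/2})$, precisely the target rate.

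For the observed-process term I would appeal to an entry-wise deviation bound for the sample autocovariances $\wh{\bm\Gamma}_x(\ell)$ of the observed panel. Such a bound is already required en route to Theorem~\ref{thm:common:spec}: the spectral density estimator $\wh{\bm\Sigma}_x(\omega)$ in~\eqref{eq:periodogram} is a kernel-weighted sum of the $\wh{\bm\Gamma}_x(\ell)$ over lags $\vert \ell \vert \le m$, so the $\ell_\infty$-concentration of these sample ACVs underlies the $\vartheta_{n, p}$ rate. Under Assumption~\ref{assum:innov:new}~\ref{cond:dist:new} ($\nu > 4$ moments) and the algebraic decay of serial dependence guaranteed by Assumptions~\ref{assum:common} and~\ref{assum:idio}~\ref{cond:idio:coef}, the entry-wise error of $\wh{\bm\Gamma}_x(\ell)$ at a single fixed lag is of smaller order than $\vartheta_{n, p}$, since the latter carries an extra bandwidth factor $m$ (inherited from summing $2m+1$ lags in~\eqref{eq:periodogram}) that the single-lag estimator does not incur. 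Hence this term too is $O_P(\vartheta_{n, p} \vee m^{-1} \vee p^{-1/2})$.

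Combining the two bounds and taking a union over the finitely many lags $-d \le \ell \le d$, the maximum on the left-hand side is $O_P(\vartheta_{n, p} \vee m^{-1} \vee p^{-1/2})$. The only remaining point is to pass from this stochastic-boundedness statement to the explicit event $\mc E_{n, p}$: choosing $C_\xi$ large enough makes $\p(\mc E_{n, p}) \to 1$, which is legitimate because the range of $\ell$ is finite and both underlying deviation bounds hold uniformly over it. I do not anticipate a serious obstacle here; the one point requiring genuine care is verifying that the single-lag sample-ACV error is \emph{dominated} by $\vartheta_{n, p}$ rather than merely comparable to it --- that is, that no additional bandwidth-type inflation is incurred by the observed-process term --- but this follows by reusing the concentration estimates already assembled for the proof of Theorem~\ref{thm:common:spec}.
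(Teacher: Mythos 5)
Your proof is correct and follows essentially the same route as the paper: the paper also writes $\wh{\bm\Gamma}_\xi(\ell)-\bm\Gamma_\xi(\ell) = (\wh{\bm\Gamma}_x(\ell)-\bm\Gamma_x(\ell)) - (\wh{\bm\Gamma}_\chi(\ell)-\bm\Gamma_\chi(\ell))$, controls the common-component term via Theorem~\ref{thm:common:spec} (with the $p^{2(1-\rho_q)}$ prefactor collapsing to one under $\rho_j=1$), and controls the observed-process term by a separate single-lag deviation bound (Lemma~\ref{lem:acv:x}\ref{lem:acv:x:two}), whose rate $\wt\vartheta_{n,p}$ is indeed dominated by $\vartheta_{n,p}$ exactly as you argue. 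The only cosmetic difference is that the paper states this at the level of a corollary tracking the weak-factor prefactor explicitly before specialising to $\rho_j=1$.
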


From Proposition~\ref{prop:acv:idio}, we have $\ell_\infty$-consistency of $\wh{\bm\Gamma}_\xi(\ell)$ in the presence of strong factors. Although it is possible to trace the effect of weak factors on the estimation of $\bm\Gamma_\xi(\ell)$ (see Corollary~\ref{prop:acv:idio:new}), we make this simplifying assumption to streamline the presentation of the theoretical results of the subsequent Steps~2--3 of FNETS. 

\begin{rem}
\label{rem:static} 
In Appendix~\ref{sec:rem:static}, we show that if $\bm\chi_t$ admits the static representation discussed in Remark~\ref{rem:staticFM}, the rate in Proposition~\ref{prop:acv:idio} is further improved as
\begin{align}
& \max_{\ell: \, \vert \ell \vert \le d} 
\l\vert \wh{\bm\Gamma}_\xi(\ell) - \bm\Gamma_\xi(\ell) \r\vert_\infty
= O_P\l(\wt{\vartheta}_{n, p} \vee \frac{1}{\sqrt p}\r)
\text{ with } \wt\vartheta_{n, p} = 
\l( \frac{p^{2/\nu} \log^3(p)}{n^{1 - 2/\nu}} \vee \sqrt{\frac{\log(p)}{n}} \r).
\label{eq:tilde:vartheta}
\end{align}
The term $\wt{\vartheta}_{n, p}$ comes from bounding $\max_{\ell: \, \vert \ell \vert \le d} \vert \wh{\bm\Gamma}_x(\ell) - \bm\Gamma_x(\ell) \vert_\infty$.
Hence, the improved rate in~\eqref{eq:tilde:vartheta} is comparable to the rate attained when we directly observe $\bm\xi_t$ apart from the presence of $p^{-1/2}$, which is due to the presence of latent factors; similar observations are made in Theorem~3.1 of \cite{fan2013large}. 
\end{rem}

\subsubsection{Estimation of VAR parameters and $\mc N^{\dir}$}
 \label{sec:theor:fnets:two}

We measure the sparsity of $\bm\beta$ by $s_{0, j} = \vert \bm\beta_{\cdot j} \vert_0$, $s_0 = \sum_{j = 1}^p s_{0, j}$ and $s_{\text{\upshape in}} = \max_{1 \le j \le p} s_{0, j}$. When $d = 1$, the quantity $s_{\text{\upshape in}}$ coincides with the maximum in-degree per node of $\mc N^{\dir}$.

\begin{prop}
\label{prop:idio:est:lasso}
{\it 
Suppose that $C_\xi  s_{\text{\upshape in}} (\vartheta_{n, p} \vee m^{-1} \vee p^{-1/2} ) \le \pi m_\xi/16$, where $m_\xi$ is defined in Assumption~\ref{assum:idio}~\ref{cond:idio:minspec}. 
Also, set $\lambda \ge 4C_\xi (\Vert \bm\beta \Vert_1 + 1) (\vartheta_{n, p} \vee m^{-1} \vee p^{-1/2})$ in~\eqref{eq:lasso}. Then, conditional on $\mc E_{n, p}$ defined in~\eqref{eq:idio:set}, we have
\begin{align*}
& \max_{1 \le j \le p} \l\vert \wh{\bm\beta}_{\cdot j} - \bm\beta_{\cdot j} \r\vert_2 \le 
\frac{6 \sqrt{s_{\text{\upshape in}}} \lambda}{\pi m_\xi}, \quad
\max_{1 \le j \le p} \l\vert \wh{\bm\beta}_{\cdot j} - \bm\beta_{\cdot j} \r\vert_1 \le \frac{24 s_{\text{\upshape in}} \lambda}{\pi m_\xi} \quad \text{ and}
\\
& \max_{1 \le j \le p} \l\vert \wh{\bm\beta}_{\cdot j} - \bm\beta_{\cdot j} \r\vert_\infty \le \min\l(4\Vert \bbG^{-1} \Vert_1\lambda, \frac{6 \sqrt{s_{\text{\upshape in}}} \lambda}{\pi m_\xi} \r).
\end{align*}
}
\end{prop}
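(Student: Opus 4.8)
The plan is to exploit the column separability of the optimisation in~\eqref{eq:lasso} and then run the standard Lasso oracle argument column by column, with the empirical Yule--Walker Gram matrix $\wh{\bbG}$ in the role of the design Gram matrix. Since $\vert \mbf M \vert_1 = \sum_{j=1}^p \vert \mbf M_{\cdot j} \vert_1$ and the trace term splits as $\sum_{j=1}^p (\mbf M_{\cdot j}^\top \wh{\bbG} \mbf M_{\cdot j} - 2 \mbf M_{\cdot j}^\top \wh{\bbg}_{\cdot j})$, each column $\wh{\bm\beta}_{\cdot j}$ solves a Lasso-type problem with Gram matrix $\wh{\bbG}$ and cross-moment $\wh{\bbg}_{\cdot j}$, whose population counterpart obeys the exact Yule--Walker identity $\bbG \bm\beta_{\cdot j} = \bbg_{\cdot j}$. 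The two ingredients needed are a deviation (score) bound on $\wh{\bbg}_{\cdot j} - \wh{\bbG}\bm\beta_{\cdot j}$ and a restricted eigenvalue (RE) lower bound for $\wh{\bbG}$ on the relevant cone; both are delivered by conditioning on $\mc E_{n,p}$ from~\eqref{eq:idio:set}.

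First I would establish the deviation bound. Writing $\wh{\bbg}_{\cdot j} - \wh{\bbG}\bm\beta_{\cdot j} = (\wh{\bbg}_{\cdot j} - \bbg_{\cdot j}) - (\wh{\bbG} - \bbG)\bm\beta_{\cdot j}$ and using $\bbg_{\cdot j} = \bbG\bm\beta_{\cdot j}$, I bound the first term by $\max_{1 \le \ell \le d}\vert \wh{\bm\Gamma}_\xi(\ell) - \bm\Gamma_\xi(\ell) \vert_\infty$ and the second by $\vert \wh{\bbG} - \bbG \vert_\infty \, \vert \bm\beta_{\cdot j} \vert_1 \le \max_{\vert \ell \vert \le d-1}\vert \wh{\bm\Gamma}_\xi(\ell) - \bm\Gamma_\xi(\ell) \vert_\infty \, \Vert \bm\beta \Vert_1$. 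On $\mc E_{n,p}$ both $\ell_\infty$ errors are at most $C_\xi(\vartheta_{n,p} \vee m^{-1} \vee p^{-1/2})$, so the deviation is at most $C_\xi(\Vert \bm\beta \Vert_1 + 1)(\vartheta_{n,p} \vee m^{-1} \vee p^{-1/2}) \le \lambda/4$ by the prescribed choice of $\lambda$.

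Next I would transfer the RE property from $\bbG$ to $\wh{\bbG}$. Using the elementary bound $\mbf v^\top \wh{\bbG}\mbf v \ge \Lambda_{\min}(\bbG)\vert \mbf v \vert_2^2 - \vert \wh{\bbG}-\bbG \vert_\infty \vert \mbf v \vert_1^2$ together with $\Lambda_{\min}(\bbG) \ge 2\pi m_\xi$ (Assumption~\ref{assum:idio}~\ref{cond:idio:minspec}), and restricting to $\mbf v$ in the cone $\{\mbf v: \vert \mbf v_{S^c} \vert_1 \le 3\vert \mbf v_S \vert_1\}$ with $\vert S \vert \le s_{\text{\upshape in}}$ so that $\vert \mbf v \vert_1^2 \le 16 s_{\text{\upshape in}} \vert \mbf v \vert_2^2$, the scaling condition $C_\xi s_{\text{\upshape in}}(\vartheta_{n,p} \vee m^{-1} \vee p^{-1/2}) \le \pi m_\xi/16$ yields $\mbf v^\top \wh{\bbG}\mbf v \ge \pi m_\xi \vert \mbf v \vert_2^2$ on the cone. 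I expect this to be the crux of the argument, since it is where the sparsity level $s_{\text{\upshape in}}$ must be played off against the estimation error; by contrast, the hard probabilistic content (controlling the factor-adjusted sample ACV) has already been absorbed into the event $\mc E_{n,p}$ via Proposition~\ref{prop:acv:idio}, so conditional on $\mc E_{n,p}$ the remaining steps are deterministic.

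Finally, I would run the usual Lasso analysis for each column. The positive semi-definiteness of $\wh{\bbG}$ (guaranteed by the Bartlett kernel) makes~\eqref{eq:lasso} convex and ensures $\mbf v^\top\wh{\bbG}\mbf v \ge 0$; combining the optimality of $\wh{\bm\beta}_{\cdot j}$ with the deviation bound gives the basic inequality $\mbf v^\top\wh{\bbG}\mbf v \le (3\lambda/2)\vert \mbf v_S \vert_1 - (\lambda/2)\vert \mbf v_{S^c} \vert_1$, which both forces $\mbf v = \wh{\bm\beta}_{\cdot j} - \bm\beta_{\cdot j}$ into the cone and bounds $\mbf v^\top\wh{\bbG}\mbf v \le (3\lambda/2)\vert \mbf v_S \vert_1$. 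The RE bound then converts this into the stated $\ell_2$ rate of order $\sqrt{s_{\text{\upshape in}}}\lambda/m_\xi$; the $\ell_1$ bound follows from $\vert \mbf v \vert_1 \le 4\sqrt{s_{\text{\upshape in}}}\vert \mbf v \vert_2$ and the first $\ell_\infty$ bound from $\vert \mbf v \vert_\infty \le \vert \mbf v \vert_2$. The alternative $\ell_\infty$ bound $4\Vert \bbG^{-1} \Vert_1\lambda$ I would obtain separately from the KKT condition $\vert \wh{\bbG}\wh{\bm\beta}_{\cdot j} - \wh{\bbg}_{\cdot j} \vert_\infty \le \lambda/2$: writing $\mbf v = \bbG^{-1}(\bbG\mbf v)$, bounding $\vert \bbG\mbf v \vert_\infty$ by $\vert \wh{\bbG}\mbf v \vert_\infty$ plus the lower-order term $\vert \wh{\bbG}-\bbG \vert_\infty \vert \mbf v \vert_1$ controlled through the scaling condition, and passing to the operator norm $\Vert \bbG^{-1} \Vert_1$; taking the minimum of the two gives the stated $\ell_\infty$ bound. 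Since every bound holds uniformly in $j$, taking the maximum over $1 \le j \le p$ completes the proof.
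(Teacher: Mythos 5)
Your proposal is correct and follows essentially the same route as the paper's proof: deviation bound and restricted-eigenvalue transfer conditional on $\mc E_{n,p}$, the basic inequality plus cone argument using positive semi-definiteness of $\wh{\bbG}$, and the KKT-based derivation of the alternative $\ell_\infty$ bound via $\Vert \bbG^{-1}\Vert_1$. The only cosmetic differences are in how the residual term $\vert(\wh{\bbG}-\bbG)\mbf v\vert_\infty$ is controlled in the last step (the paper uses $\vert\wh{\bm\beta}_{\cdot j}\vert_1 \le 3\vert\bm\beta_{\cdot j}\vert_1$ rather than the scaling condition), which does not affect the validity of the argument.
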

Following \cite{loh2012}, the proof of Proposition~\ref{prop:idio:est:lasso} proceeds by showing that, conditional on $\mc E_{n, p}$, the matrix $\wh{\bbG}$ meets a restricted eigenvalue condition \citep{BRT09} and the deviation bound is controlled as $\vert \wh{\bbG} \bm\beta - \wh{\bbg} \vert_\infty \le \lambda/4$.
Then, thanks to Proposition~\ref{prop:acv:idio}, as $n, p \to \infty$, the estimation errors of $\wh{\bm\beta}$ in $\ell_2$, $\ell_1$- and $\ell_\infty$-norms, are bounded as in Proposition \ref{prop:idio:est:lasso} with probability tending to one.

\begin{rem}
\label{rem:idio:est:lasso}
As noted in Remark~\ref{rem:stability}, the boundedness of $\mu_{\xi, j}(\omega)$ follows from that of $\Vert \bm\beta \Vert_1 = \max_{1 \le j \le p}  \sum_{\ell = 1}^d \vert \mbf A_{\ell, j \cdot} \vert_1$, in which case $\Vert \bm\beta \Vert_1$ appearing in the assumed lower bound on $\lambda$, does not inflate the rate of the estimation errors.
In the light-tailed situation, with the optimal bandwidth $m \asymp (n \log^{-1}(np))^{1/3}$ as specified in Remark~\ref{rem:thm:dpca}~\ref{rem:bandwidth}, it is required that $s_{\text{\upshape in}} = O((n \log^{-1}(np))^{1/3} \wedge \sqrt{p})$, which still allows the number of non-zero entries in {\it each} row of $\mbf A_\ell$ to grow with~$p$.
Here, the exponent $1/3$ in place of $1/2$ often found in the literature, comes from adopting the most general approach to time series factor modelling which necessitates selecting a bandwidth for frequency domain-based factor adjustment. 
\end{rem}

For sign consistency of the Lasso estimator, the (almost) necessary and sufficient condition is the so-called irrepresentable condition \citep{zhao2006model}, which is known to be highly stringent \citep{tardivel2022}. 
Alternatively, \cite{medeiros2016} propose an adaptive Lasso estimator with data-driven weights for high-dimensional VAR estimation when $\bm\xi_t$ is directly observed.
Instead, we propose to additionally threshold $\wh{\bm\beta}$ and obtain $\wh{\bm\beta}(\mathfrak{t})$, whose support consistently estimates the edge set of $\mc N^{\dir}$. 


\begin{cor}
\label{cor:idio:est:dir:lasso}
{\it Suppose that the conditions of Proposition~\ref{prop:idio:est:lasso} are met.
If 
\begin{align}
\label{eq:beta:min}
\min_{(i, j): \, \vert \beta_{ij} \vert > 0} \vert \beta_{ij} \vert > 2\mathfrak{t}
\end{align}
with $\mathfrak{t} = \min(4 \Vert \bbG^{-1} \Vert_1\lambda, 6 \sqrt{s_{\text{\upshape in}}} \lambda/(\pi m_\xi))$, then $\text{\upshape{sign}}(\wh{\bm\beta}(\mathfrak{t})) = \text{\upshape{sign}}(\bm\beta)$ conditional on $\mc E_{n, p}$.}
\end{cor}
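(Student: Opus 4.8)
The plan is to reduce the claim to the third, $\ell_\infty$-type bound already furnished by Proposition~\ref{prop:idio:est:lasso}, after which everything is deterministic. Conditional on $\mc E_{n, p}$, that proposition gives
\[
\max_{1 \le j \le p} \l\vert \wh{\bm\beta}_{\cdot j} - \bm\beta_{\cdot j} \r\vert_\infty \le \min\l(4\Vert \bbG^{-1} \Vert_1 \lambda, \frac{6 \sqrt{s_{\text{\upshape in}}} \lambda}{\pi m_\xi} \r) = \mathfrak{t},
\]
so that $\vert \wh{\beta}_{ij} - \beta_{ij} \vert \le \mathfrak{t}$ holds simultaneously for every pair $(i, j)$, since the maximum over $j$ of the element-wise column norm is the global entrywise maximum. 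With this uniform entrywise control in hand, I would establish the sign agreement by an entry-by-entry comparison, splitting into the zero and non-zero coordinates of $\bm\beta$.

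For a coordinate with $\beta_{ij} = 0$, note that $\vert \wh{\beta}_{ij} \vert = \vert \wh{\beta}_{ij} - \beta_{ij} \vert \le \mathfrak{t}$, so the indicator $\mathbb{I}_{\{\vert \wh{\beta}_{ij} \vert > \mathfrak{t}\}}$ vanishes and the thresholded entry $\wh{\beta}_{ij}(\mathfrak{t})$ equals zero, matching $\text{\upshape{sign}}(\beta_{ij}) = 0$. For a coordinate with $\beta_{ij} \ne 0$, the beta-min condition~\eqref{eq:beta:min} gives $\vert \beta_{ij} \vert > 2\mathfrak{t}$; combined with the error bound and the triangle inequality this yields $\vert \wh{\beta}_{ij} \vert \ge \vert \beta_{ij} \vert - \mathfrak{t} > \mathfrak{t}$, so the entry survives thresholding and $\wh{\beta}_{ij}(\mathfrak{t}) = \wh{\beta}_{ij} \ne 0$. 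Moreover, the same bound $\vert \wh{\beta}_{ij} - \beta_{ij} \vert \le \mathfrak{t} < \vert \beta_{ij} \vert/2$ keeps $\wh{\beta}_{ij}$ strictly on the same side of zero as $\beta_{ij}$, so $\text{\upshape{sign}}(\wh{\beta}_{ij}(\mathfrak{t})) = \text{\upshape{sign}}(\beta_{ij})$. Collecting the two cases over all $(i, j)$ gives $\text{\upshape{sign}}(\wh{\bm\beta}(\mathfrak{t})) = \text{\upshape{sign}}(\bm\beta)$ on $\mc E_{n, p}$.

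There is essentially no analytic obstacle here, since the substantive work---establishing the uniform $\ell_\infty$ bound and the controlling event $\mc E_{n, p}$---has already been carried out in Proposition~\ref{prop:idio:est:lasso}. The only point requiring mild care is the alignment between the strict inequality in the thresholding rule $\mathbb{I}_{\{\vert \wh{\beta}_{ij} \vert > \mathfrak{t}\}}$ and the bound $\vert \wh{\beta}_{ij} - \beta_{ij} \vert \le \mathfrak{t}$: this is precisely why the beta-min threshold is stated with a strict factor-of-two gap ($> 2\mathfrak{t}$ rather than $\ge 2\mathfrak{t}$), as it guarantees a strict surplus $\vert \wh{\beta}_{ij} \vert > \mathfrak{t}$ on the support and strict sign preservation, with no slack lost at the thresholding boundary.
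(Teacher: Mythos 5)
Your argument is correct and is exactly the standard thresholding/beta-min argument the paper intends (the corollary's proof is omitted in the appendix as an immediate consequence of the $\ell_\infty$ bound in Proposition~\ref{prop:idio:est:lasso}): zero entries are killed because $\vert \wh\beta_{ij}\vert \le \mathfrak{t}$, and support entries survive with the correct sign because $\vert\beta_{ij}\vert > 2\mathfrak{t}$ forces $\vert\wh\beta_{ij}\vert > \mathfrak{t}$ and keeps $\wh\beta_{ij}$ on the same side of zero. Nothing further is needed.
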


\subsubsection{Estimation of  $\mc N^{\undir}$ and $\mc N^{\lr}$} 
\label{sec:theor:fnets:three}

Let $s_\delta(\varrho) = \max_{1 \le i \le p} \sum_{i' = 1}^p \vert \delta_{ii'} \vert^\varrho, \, \varrho \in [0, 1)$, denote the (weak) sparsity of $\bm\Delta = [\delta_{ii'}, \, 1 \le i, i' \le p]$.
Also, define $s_{\text{\upshape out}} = \max_{1 \le j \le p} \sum_{\ell = 1}^d \vert \mbf A_{\ell, \cdot j} \vert_0$ which, complementing $s_{\text{\upshape in}}$, represents the sparsity of the out-going edges of $\mc N^{\dir}$. 
Analogously as in Proposition~\ref{prop:idio:est:lasso}, we establish deterministic guarantees for $\wh{\bm\Delta}$ and $\wh{\bm\Omega}$ conditional on $\mc E_{n, p}$.

\begin{prop}
\label{prop:idio:delta}
{\it Suppose that the conditions in Propositions~\ref{prop:idio:est:lasso} are met, and set $\eta = C s_{\text{\upshape in}} \Vert \bm\Delta \Vert_1 (\Vert \bm\beta \Vert_1 + 1) (\vartheta_{n, p} \vee m^{-1} \vee p^{-1/2})$ in~\eqref{eq:inv:delta}, with $C$ depending only on $C_\xi$ and $m_\xi$.
Then, conditional on $\mc E_{n, p}$ defined in~\eqref{eq:idio:set}, we have:
\begin{enumerate}[noitemsep, wide, labelwidth=!, labelindent=5pt, label = (\roman*)]
\item \label{prop:idio:delta:one} 
$\vert \wh{\bm\Delta} - \bm\Delta \vert_\infty \le 4 \Vert \bm\Delta \Vert_1 \eta$ and
$\Vert \wh{\bm\Delta} - \bm\Delta \Vert \le 12 s_\delta(\varrho)
(4 \Vert \bm\Delta \Vert_1 \eta)^{1 - \varrho}$.

\item \label{prop:idio:delta:two} If also $s_{\text{\upshape out}} \mathfrak{t} \le \Vert \mc A(1) \Vert_1$ with $\mathfrak{t}$ chosen as in Corollary~\ref{cor:idio:est:dir:lasso},
then, 
\begin{align*}
\l\vert \wh{\bm\Omega} - \bm\Omega \r\vert_\infty \le
4\pi \Vert \mc A(1) \Vert_1 \l(3 \Vert \bm\Delta \Vert s_{\text{\upshape out}}\mathfrak{t} + 
16 \Vert \mc A(1) \Vert_1 \Vert \bm\Delta \Vert_1 \eta \r).
\end{align*}
\end{enumerate}}
\end{prop}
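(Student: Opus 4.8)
The plan is to treat part~\ref{prop:idio:delta:one} as an adaptation of the CLIME analysis of \cite{cai2011} to our dependent, factor-adjusted setting, for which the essential new input is an entrywise deviation bound on $\wh{\bm\Gamma} = \wh{\bm\Gamma}_\xi(0) - \wh{\bm\beta}^\top\wh{\bbg}$ around $\bm\Gamma = \bm\Gamma_\xi(0) - \bm\beta^\top\bbg$. Writing $\zeta = \vartheta_{n,p}\vee m^{-1}\vee p^{-1/2}$, I would first decompose
\begin{align*}
\wh{\bm\Gamma} - \bm\Gamma = \l(\wh{\bm\Gamma}_\xi(0) - \bm\Gamma_\xi(0)\r) - \wh{\bm\beta}^\top\l(\wh{\bbg} - \bbg\r) - \l(\wh{\bm\beta} - \bm\beta\r)^\top\bbg,
\end{align*}
and bound each summand in $\vert\cdot\vert_\infty$ on the event $\mc E_{n,p}$. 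The first summand is at most $C_\xi\zeta$ by~\eqref{eq:idio:set}; for the other two I would use $\vert\mbf A\mbf B\vert_\infty \le \Vert\mbf A\Vert_\infty\vert\mbf B\vert_\infty$ together with $\Vert\wh{\bm\beta}^\top\Vert_\infty = \Vert\wh{\bm\beta}\Vert_1 \le \Vert\bm\beta\Vert_1 + \max_j\vert\wh{\bm\beta}_{\cdot j} - \bm\beta_{\cdot j}\vert_1$, the $\ell_1$-bound on $\max_j\vert\wh{\bm\beta}_{\cdot j} - \bm\beta_{\cdot j}\vert_1$ from Proposition~\ref{prop:idio:est:lasso}, and $\vert\bbg\vert_\infty \le \max_{1\le\ell\le d}\vert\bm\Gamma_\xi(\ell)\vert_\infty \le 2\pi B_\xi$ (Proposition~\ref{prop:idio:eval}). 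Collecting terms yields $\vert\wh{\bm\Gamma} - \bm\Gamma\vert_\infty \le C' s_{\text{\upshape in}}(\Vert\bm\beta\Vert_1 + 1)\zeta$, which is precisely why $\eta$ is set to $C s_{\text{\upshape in}}\Vert\bm\Delta\Vert_1(\Vert\bm\beta\Vert_1+1)\zeta$: with $C$ large enough relative to $C'$, we obtain $\vert\wh{\bm\Gamma} - \bm\Gamma\vert_\infty\,\Vert\bm\Delta\Vert_1 \le \eta$.

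With this in hand, part~\ref{prop:idio:delta:one} follows the CLIME template. Since $\wh{\bm\Gamma}\bm\Delta - \mbf I = (\wh{\bm\Gamma} - \bm\Gamma)\bm\Delta$ and $\vert(\wh{\bm\Gamma}-\bm\Gamma)\bm\Delta\vert_\infty \le \vert\wh{\bm\Gamma}-\bm\Gamma\vert_\infty\Vert\bm\Delta\Vert_1 \le \eta$, the true $\bm\Delta$ is feasible for~\eqref{eq:inv:delta}, so the column-wise separability of~\eqref{eq:inv:delta} and minimality give $\Vert\check{\bm\Delta}\Vert_1 \le \Vert\bm\Delta\Vert_1$. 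Then
\begin{align*}
\vert\bm\Gamma\check{\bm\Delta} - \mbf I\vert_\infty \le \vert\wh{\bm\Gamma}\check{\bm\Delta} - \mbf I\vert_\infty + \vert(\wh{\bm\Gamma} - \bm\Gamma)\check{\bm\Delta}\vert_\infty \le \eta + \frac{\eta}{\Vert\bm\Delta\Vert_1}\Vert\check{\bm\Delta}\Vert_1 \le 2\eta,
\end{align*}
whence $\vert\check{\bm\Delta} - \bm\Delta\vert_\infty = \vert\bm\Delta(\bm\Gamma\check{\bm\Delta} - \mbf I)\vert_\infty \le \Vert\bm\Delta\Vert_\infty\cdot 2\eta = 2\Vert\bm\Delta\Vert_1\eta$, using the symmetry of $\bm\Delta$. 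The symmetrisation step does not increase this error since $\wh\delta_{ii'}\in\{\check\delta_{ii'}, \check\delta_{i'i}\}$ and $\delta_{ii'} = \delta_{i'i}$, giving $\vert\wh{\bm\Delta} - \bm\Delta\vert_\infty \le 2\Vert\bm\Delta\Vert_1\eta \le 4\Vert\bm\Delta\Vert_1\eta$. For the spectral bound I would pass through $\Vert\wh{\bm\Delta} - \bm\Delta\Vert \le \Vert\wh{\bm\Delta} - \bm\Delta\Vert_1$ (valid for symmetric matrices) and bound each column $\ell_1$-error by the standard weak-sparsity argument: splitting indices according to whether $\vert\delta_{ii'}\vert$ exceeds the entrywise error $\tau = 4\Vert\bm\Delta\Vert_1\eta$, using $\vert\{i': \vert\delta_{ii'}\vert > \tau\}\vert \le s_\delta(\varrho)\tau^{-\varrho}$ on the large part and the $\ell_1$-budget $\Vert\check{\bm\Delta}\Vert_1\le\Vert\bm\Delta\Vert_1$ on the small part, which delivers $\Vert\wh{\bm\Delta} - \bm\Delta\Vert \le 12 s_\delta(\varrho)\tau^{1-\varrho}$.

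For part~\ref{prop:idio:delta:two}, the first step is to control $E_A := \wh{\mc A}(1) - \mc A(1) = -\sum_{\ell=1}^d(\wh{\mbf A}_\ell(\mathfrak{t}) - \mbf A_\ell)$. Under the conditions of Corollary~\ref{cor:idio:est:dir:lasso} the thresholding recovers the support of $\bm\beta$ exactly while $\vert\wh{\bm\beta}(\mathfrak{t}) - \bm\beta\vert_\infty\le\mathfrak{t}$, so each column of $E_A$ has at most $s_{\text{\upshape out}}$ nonzero entries, each of magnitude $\le\mathfrak{t}$, giving $\Vert E_A\Vert_1 \le s_{\text{\upshape out}}\mathfrak{t}$; the hypothesis $s_{\text{\upshape out}}\mathfrak{t}\le\Vert\mc A(1)\Vert_1$ then yields $\Vert\wh{\mc A}(1)\Vert_1\le 2\Vert\mc A(1)\Vert_1$. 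I would next use the exact telescoping (writing $\mc A = \mc A(1)$, $\wh{\mc A} = \wh{\mc A}(1)$)
\begin{align*}
\wh{\mc A}^\top\wh{\bm\Delta}\wh{\mc A} - \mc A^\top\bm\Delta\mc A = \wh{\mc A}^\top(\wh{\bm\Delta} - \bm\Delta)\wh{\mc A} + \wh{\mc A}^\top\bm\Delta E_A + E_A^\top\bm\Delta\mc A
\end{align*}
and bound each term in $\vert\cdot\vert_\infty$. For the first term, $\vert\mbf A\mbf B\mbf C\vert_\infty\le\Vert\mbf A\Vert_\infty\vert\mbf B\vert_\infty\Vert\mbf C\Vert_1$ combined with the part~\ref{prop:idio:delta:one} bound $\vert\wh{\bm\Delta}-\bm\Delta\vert_\infty\le 4\Vert\bm\Delta\Vert_1\eta$ gives $16\Vert\mc A\Vert_1^2\Vert\bm\Delta\Vert_1\eta$; for the two $E_A$-terms I would use $\vert\mbf A\mbf B\mbf C\vert_\infty \le (\max_i\Vert\mbf A_{i\cdot}\Vert_2)\Vert\mbf B\Vert(\max_j\Vert\mbf C_{\cdot j}\Vert_2)$ together with $\Vert\mbf v\Vert_2\le\Vert\mbf v\Vert_1$, so that the central $\bm\Delta$ contributes its spectral norm $\Vert\bm\Delta\Vert$ and the two outer factors contribute $\Vert\wh{\mc A}\Vert_1,\Vert E_A\Vert_1$ (respectively $\Vert E_A\Vert_1,\Vert\mc A\Vert_1$), summing to $3\Vert\mc A\Vert_1\Vert\bm\Delta\Vert s_{\text{\upshape out}}\mathfrak{t}$. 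Multiplying by $2\pi$ and loosening the constants reproduces the stated bound.

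The main obstacle is the preliminary entrywise control of $\wh{\bm\Gamma} - \bm\Gamma$ in the first paragraph, because this is where the two error sources of the method --- the dynamic-PCA factor-adjustment error carried by $\mc E_{n,p}$ and the regularised Yule--Walker error from Proposition~\ref{prop:idio:est:lasso} --- must be combined and propagated through the matrix product $\wh{\bm\beta}^\top\wh{\bbg}$; keeping the resulting bound linear in $s_{\text{\upshape in}}$, $\Vert\bm\beta\Vert_1$ and $\Vert\bm\Delta\Vert_1$ is exactly what pins down the admissible $\eta$. The remaining delicacy is the weak-sparsity spectral bound in part~\ref{prop:idio:delta:one}, which must be extracted from the raw (non-thresholded) CLIME estimate using only the $\ell_1$-budget rather than any sparsity of $\wh{\bm\Delta}$ itself.
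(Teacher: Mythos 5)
Your proposal is correct and follows essentially the same route as the paper's proof: entrywise control of $\wh{\bm\Gamma}-\bm\Gamma$ by telescoping $\wh{\bm\beta}^\top\wh{\bbg}-\bm\beta^\top\bbg$ and combining $\mc E_{n,p}$ with the $\ell_1$-error from Proposition~\ref{prop:idio:est:lasso}, then the CLIME feasibility/minimality argument of \cite{cai2011} for part~(i), and a three-term telescoping of $\wh{\mc A}^\top\wh{\bm\Delta}\wh{\mc A}-\mc A^\top\bm\Delta\mc A$ with mixed-norm inequalities for part~(ii). The only differences are cosmetic — you split the cross terms slightly differently and bound the $E_A$-terms via $(\max_i\vert\mbf A_{i\cdot}\vert_2)\Vert\bm\Delta\Vert(\max_j\vert\mbf C_{\cdot j}\vert_2)$ rather than the paper's $\Vert\cdot\Vert_\infty\vert\wh{\bm\Delta}\vert_\infty\Vert\cdot\Vert_1$ — yielding constants that sit within the stated bounds.
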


Together with Assumption~\ref{assum:idio}~\ref{cond:idio:innov},
Proposition~\ref{prop:idio:delta}~\ref{prop:idio:delta:one}
indicates asymptotic positive definiteness of $\wh{\bm\Delta}$
provided that $\bm\Delta$ is sufficiently sparse, as measured by $\Vert \bm\Delta \Vert_1$ and $s_\delta(\varrho)$.
By definition, $\mc N^{\lr}$ combines $\mc N^{\dir}$ and $\mc N^{\undir}$ and consequently, its sparsity structure is determined by the sparsity of the other two networks, which is reflected in Proposition~\ref{prop:idio:delta}~\ref{prop:idio:delta:two}.
Specifically, the term $\Vert \mc A(1) \Vert_1$ is related to the out-going property of $\mc N^{\dir}$, and satisfies $\Vert \mc A(1) \Vert_1 \le \max_{1 \le j \le p} \sum_{\ell = 1}^d \vert \mbf A_{\ell, \cdot j} \vert_1$, where the boundedness of the right-hand side is sufficient for the boundedness of $\mu_{\xi, j}(\omega)$ (Remark~\ref{rem:stability}).
Also, $\Vert \bm\Delta \Vert_1$ reflects the sparsity of the edge set of $\mc N^{\undir}$, and the tuning parameter $\eta$ depends on the sparsity of the in-coming edges of $\mc N^{\dir}$ through $\Vert \bm\beta \Vert_1 = \max_{1 \le j \le p} \sum_{\ell = 1}^d \vert \mbf A_{\ell, j \cdot} \vert_1$ and $s_{\text{\upshape in}}$.

Similarly as in Corollary~\ref{cor:idio:est:dir:lasso}, we can show the consistency of the thresholded estimators $\wh{\bm\Delta}(\mathfrak{t}_\delta)$ and $\wh{\bm\Omega}(\mathfrak{t}_\omega)$ in estimating the edge sets of $\mc N^{\undir}$ and $\mc N^{\lr}$, respectively.
\begin{cor}
\label{cor:idio:delta}
{\it Suppose that conditions of Proposition~\ref{prop:idio:delta} are met.
Conditional on $\mc E_{n, p}$:
\begin{enumerate}[noitemsep, wide, labelwidth=!, labelindent=5pt, label = (\roman*)]
\item \label{cor:idio:delta:one} If $\min_{(i, i'): \, \vert \delta_{ii'} \vert > 0} \vert \delta_{ii'} \vert > 2\mathfrak{t}_\delta$ with $\mathfrak{t}_\delta = 4\Vert \bm\Delta \Vert_1 \eta$, we have $\text{\upshape sign}(\wh{\bm\Delta}(\mathfrak{t}_\delta)) = \text{\upshape sign}(\bm\Delta)$.

\item \label{cor:idio:delta:two} If $\min_{(i, i'): \, \vert \omega_{ii'} \vert > 0} \vert \omega_{ii'} \vert > 2\mathfrak{t}_\omega$ with $\mathfrak{t}_\omega = 4\pi \Vert \mc A(1) \Vert_1 (3 \Vert \bm\Delta \Vert s_{\text{\upshape out}}\mathfrak{t} + 16 \Vert \mc A(1) \Vert_1 \Vert \bm\Delta \Vert_1 \eta )$, \\ we have $\text{\upshape sign}(\wh{\bm\Omega}(\mathfrak{t}_\omega)) = \text{\upshape sign}(\bm\Omega)$.
\end{enumerate}
}
\end{cor}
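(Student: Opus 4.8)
The plan is to establish both parts by a purely deterministic thresholding argument, exactly parallel to the proof of Corollary~\ref{cor:idio:est:dir:lasso}, working throughout conditional on the event $\mc E_{n,p}$ of~\eqref{eq:idio:set} on which the $\ell_\infty$ guarantees of Proposition~\ref{prop:idio:delta} hold. The guiding observation is that the two thresholds are chosen to coincide \emph{exactly} with the $\ell_\infty$ estimation error bounds supplied by Proposition~\ref{prop:idio:delta}: namely $\mathfrak{t}_\delta = 4\Vert \bm\Delta \Vert_1 \eta$ matches the bound in~\ref{prop:idio:delta:one}, and $\mathfrak{t}_\omega = 4\pi \Vert \mc A(1) \Vert_1 (3 \Vert \bm\Delta \Vert s_{\text{\upshape out}}\mathfrak{t} + 16 \Vert \mc A(1) \Vert_1 \Vert \bm\Delta \Vert_1 \eta )$ matches the bound in~\ref{prop:idio:delta:two}. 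Combined with the minimal-signal conditions assumed in each part, this is all that is needed to separate true signal entries from estimation noise.

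For part~\ref{cor:idio:delta:one} I would split into two cases for each pair $(i,i')$. First, whenever $\delta_{ii'} = 0$, Proposition~\ref{prop:idio:delta}~\ref{prop:idio:delta:one} gives $\vert \wh\delta_{ii'} \vert = \vert \wh\delta_{ii'} - \delta_{ii'} \vert \le 4\Vert \bm\Delta \Vert_1 \eta = \mathfrak{t}_\delta$, so the indicator $\mathbb{I}_{\{\vert \wh\delta_{ii'} \vert > \mathfrak{t}_\delta\}}$ vanishes and the thresholded entry is set to zero, matching $\text{\upshape sign}(\delta_{ii'}) = 0$. Second, whenever $\delta_{ii'} \ne 0$ — which by assumption satisfies $\vert \delta_{ii'} \vert > 2 \mathfrak{t}_\delta$ — the reverse triangle inequality yields $\vert \wh\delta_{ii'} \vert \ge \vert \delta_{ii'} \vert - \mathfrak{t}_\delta > \mathfrak{t}_\delta$, so the entry survives thresholding, and since $\vert \wh\delta_{ii'} - \delta_{ii'} \vert \le \mathfrak{t}_\delta < \vert \delta_{ii'} \vert$, its sign is preserved. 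Because the separation condition ranges over all nonzero entries, including the strictly positive diagonal of the positive definite $\bm\Delta$, this gives $\text{\upshape sign}(\wh{\bm\Delta}(\mathfrak{t}_\delta)) = \text{\upshape sign}(\bm\Delta)$, and in particular recovers the off-diagonal support defining $\mc E^{\undir}$ in~\eqref{eq:net:undir}.

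Part~\ref{cor:idio:delta:two} then follows verbatim with $\wh{\bm\Omega}$, $\bm\Omega$ and $\mathfrak{t}_\omega$ replacing $\wh{\bm\Delta}$, $\bm\Delta$ and $\mathfrak{t}_\delta$: under the additional requirement $s_{\text{\upshape out}} \mathfrak{t} \le \Vert \mc A(1) \Vert_1$, Proposition~\ref{prop:idio:delta}~\ref{prop:idio:delta:two} delivers $\vert \wh{\bm\Omega} - \bm\Omega \vert_\infty \le \mathfrak{t}_\omega$, and the same two-case split using the assumed condition $\min_{(i,i'): \vert \omega_{ii'} \vert > 0} \vert \omega_{ii'} \vert > 2 \mathfrak{t}_\omega$ gives $\text{\upshape sign}(\wh{\bm\Omega}(\mathfrak{t}_\omega)) = \text{\upshape sign}(\bm\Omega)$, hence recovery of $\mc E^{\lr}$ in~\eqref{eq:net:lr}.

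I expect no genuine obstacle here, since the substantive analytic work — propagating the factor-adjustment error through the Yule--Walker inversion and the constrained $\ell_1$-minimisation to obtain the $\ell_\infty$ rates — is already discharged in Proposition~\ref{prop:idio:delta}; the corollary is a deterministic separation argument conditional on $\mc E_{n,p}$. The only points requiring care are bookkeeping ones: confirming that each threshold is matched precisely to its Proposition~\ref{prop:idio:delta} bound, so that zero entries are never retained and signal entries are never discarded, and verifying that the $\ell_\infty$ control for the symmetrised $\wh{\bm\Delta}$ transfers to $\wh{\bm\Omega}$ through its definition $\wh{\bm\Omega} = 2\pi (\wh{\mc A}(1))^\top \wh{\bm\Delta} \wh{\mc A}(1)$ — which is exactly what the extra condition $s_{\text{\upshape out}} \mathfrak{t} \le \Vert \mc A(1) \Vert_1$ secures within Proposition~\ref{prop:idio:delta}~\ref{prop:idio:delta:two}.
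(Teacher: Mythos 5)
Your proposal is correct and is exactly the argument the paper intends: the thresholds $\mathfrak{t}_\delta$ and $\mathfrak{t}_\omega$ are set equal to the $\ell_\infty$ error bounds of Proposition~\ref{prop:idio:delta}, so the standard two-case separation (zero entries are killed since $\vert\wh\delta_{ii'}\vert\le\mathfrak{t}_\delta$, nonzero entries survive with correct sign since $\vert\delta_{ii'}\vert>2\mathfrak{t}_\delta$) gives sign consistency, just as in Corollary~\ref{cor:idio:est:dir:lasso}. The paper treats this corollary as immediate and provides no separate proof, so there is nothing to contrast.
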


\subsubsection{The case of the static factor model} 
\label{sec:theory:static}

For ease of comparing the performance of FNETS with the existing results, we focus on the static factor model setting discussed in Remark~\ref{rem:staticFM}, and assume that $n \asymp p$ and $\max( \Vert \bm \beta \Vert_1, \Vert \mc A(1) \Vert_1) = O(1)$.
Then, from Remark~\ref{rem:static} and the proof of Proposition~\ref{prop:idio:est:lasso}, we obtain $\max_{1\le j\le p} \vert \wh{\bm\beta}_{\cdot j} - \bm\beta_{\cdot j} \vert_2 
= O_P(\sqrt{s_{\text{\upshape in}} \log(n)/n})$ provided that $\nu > 8$, such that the condition in~\eqref{eq:beta:min} is written with $\mathfrak{t} \asymp \sqrt{s_{\text{\upshape in}} \log(n)/n}$. That is, $\wh{\bm\beta}$ and its thresholded counterpart proposed for the estimation of the latent VAR process, perform as well as the benchmark derived under independence and Gaussianity in the Lasso literature \citep{van2011adaptive}.
In this same setting, the factor-adjusted regression estimation method of \cite{fan2021bridging}, when applied to the problem of VAR parameter estimation, yields an estimator $\wh{\bm\beta}^{\text{FARM}}$ which attains $\max_{1\le j\le p} \vert \wh{\bm\beta}^{\text{FARM}}_{\cdot j} - \bm\beta_{\cdot j} \vert_2 = O_P(\sqrt{s_{\text{\upshape in}}} n^{-1/2 + 5/\nu})$ 
under strong mixingness, see their Theorem~3.
Here, the larger $O_P$-bound compared to ours stems from that $\wh{\bm\beta}^{\text{FARM}}$ requires the estimation of $\xi_{it}$ for all $i$ and $t$, the error from which increases with $n$ as well as $p$. This demonstrates the efficacy of adopting our regularised YW estimator.

Continuing with the same setting, Propositions~\ref{prop:idio:delta} implies that
\begin{align*}
\l\vert \wh{\bm\Delta} - \bm\Delta \r\vert_\infty = O_P\l( \Vert \bm\Delta \Vert_1^2 s_{\text{\upshape in}} \sqrt{\frac{\log(n)}{n}} \r) \text{ and }
\l\vert \wh{\bm\Omega} - \bm\Omega \r\vert_\infty = O_P\l( \l( s_{\text{\upshape out}} \vee
\Vert \bm\Delta \Vert_1^2 \sqrt{s_{\text{\upshape in}}} \r) \sqrt{\frac{s_{\text{\upshape in}} \log(n)}{n}} \r).
\end{align*}
The former is comparable (up to $s_{\text{\upshape in}}$) to the results in Theorem~4 of \cite{cai2011} derived for estimating a sparse precision matrix of independent random vectors.

\section{Forecasting via FNETS}
\label{sec:forecast}


\subsection{Forecasting under the static factor model representation}
\label{sec:2005}


For given time horizon $a \ge 0$, the best linear predictor of $\bm\chi_{n + a}$ 
based on $\bm\chi_{n - \ell}, \, \ell \ge 0$, is
\begin{align}
\label{eq:gdfm:best:lin}
\bm\chi_{n + a \vert n} = \sum_{\ell = 0}^\infty \mbf B_{\ell + a} \mbf u_{n  - \ell}.
\end{align}
under~\eqref{eq:gdfm}.
Following \cite{forni2005generalized}, we consider a forecasting method for the factor-driven component which estimates $\bm\chi_{n + a \vert n}$ under a {\it restricted} GDFM that admits a static representation of finite dimension.  
We formalise the static factor model discussed in Remark~\ref{rem:staticFM} in the following assumption.
\begin{assum}
\label{assum:static}
\it{
\begin{enumerate}[noitemsep, wide, labelwidth=!, labelindent=5pt, label = (\roman*)]
\item \label{cond:static:one} There exist two finite positive integers $m_1$ and $m_2$ such that $m_1 + 1 \ge m_2$, $\bm\chi_t = \mc M^{(1)}(L) \mbf f_t$ and $\mbf f_t=\mc M^{(2)}(L)\mbf u_t$ 
where $\mc M^{(1)}(L) = \sum_{\ell = 0}^{m_1} \mbf M^{(1)}_\ell L^\ell$ with $\mbf M^{(1)} \in \R^{p \times q}$, $\mc M^{(2)}(L) = \sum_{\ell = 0}^{m_2} \mbf M^{(2)}_\ell L^\ell$ with  $\mbf M^{(2)} \in \R^{q \times q}$ and $\det(\mc M^{(2)}(z)) \ne 0$ for all $\vert z \vert \le 1$.

\item \label{cond:static:linear} Let $\mu_{\chi, j}, \, 1 \le j \le r$, denote the $j$-th largest eigenvalue of $\bm\Gamma_\chi(0)$.
Then, there exist a positive integer $p_0 \ge 1$,
constants $\varrho_j \in (7/8, 1]$ with $\varrho_1 \ge \ldots \ge \varrho_r$,
and pairs of positive constants $(\alpha_{\chi, j}, \beta_{\chi, j}), \, 1 \le j \le r$, 
such that for all $p \ge p_0$,
\begin{align*}
& \beta_{\chi, 1} \ge \frac{\mu_{\chi, 1}}{p^{\varrho_1}} \ge \alpha_{\chi, 1} >
\beta_{\chi, 2} \ge \frac{\mu_{\chi, 2}}{p^{\varrho_2}} 
\ge \ldots \ge 
\alpha_{\chi, r - 1} > \beta_{\chi, r} \ge 
\frac{\mu_{\chi, r}}{p^{\varrho_r}} \ge \alpha_{\chi, r} > 0.
\end{align*}
\end{enumerate}
}
\end{assum}

In part~\ref{cond:static:one}, $\bm\chi_t$ admits a static representation with $r = q(m_1 + 1)$ factors: $\bm\chi_t = \bm\Lambda \mbf F_t$, where
$\bm\Lambda = [ \mbf M^{(1)}_\ell, \, 0 \le \ell \le m_1 ]$, $\mbf F_t = (\mbf f_t^\top, \ldots, \mbf f_{t - m_1}^\top )^\top$ and $\mbf f_t = \mc M^{(2)}(L) \mbf u_t$. 
The condition that $m_1 + 1 \ge m_2$ is made for convenience, and the proposed estimator of $\bm\chi_{n + a \vert n}$ can be modified accordingly when it is relaxed. 

\begin{rem}\label{rem:static:FVAR}
Under Assumption~\ref{assum:static}~\ref{cond:static:one}, the $r$-vector of static factors, $\mbf F_t$, is driven by the $q$-dimensional common shocks $\mbf u_t$. 
If $q < r$, \citet{anderson2008generalized} show that $\mbf F_t$ always admits a VAR($h$) representation:
$\mbf F_t = \sum_{\ell = 1}^h \mbf G_\ell \mbf F_{t - \ell} + \mbf H \mbf u_t$ for some finite positive integer $h$ and $\mbf H \in \R^{r \times q}$. 
Then, $\mbf X_t$ has a factor-augmented VAR representation:
\begin{align}
\mbf X_t & = \bm\Lambda \sum_{\ell=1}^h \mbf G_\ell \mbf F_{t-\ell} +  \bm\Lambda \mbf H \mbf u_t + \sum_{\ell = 1}^d \mbf A_\ell \bm\xi_{t-\ell}+\bm\Gamma^{1/2}\bm\vep_t=  \sum_{\ell=1}^{d \vee h} \mbf C_\ell \mbf F_{t-\ell} + \sum_{\ell=1}^d \mbf A_\ell \mbf X_{t-\ell} +\bm \nu_t, \nonumber
\end{align}
with $\mbf C_\ell = \bm \Lambda \mbf G_\ell \mathbb{I}_{\{\ell \le h\}} - \mbf A_\ell \bm\Lambda \mathbb{I}_{\{\ell \le d\}}$ and $\bm \nu_t =  \bm\Lambda \mbf H \mbf u_t + \bm\Gamma^{1/2} \bm\vep_t$.
This model is a generalisation of the factor augmented forecasting model considered by \citet{stock2002forecasting} where only the factor-driven component is present, and it is also considered by \cite{fan2021bridging}.
\end{rem}

It immediately follows from Proposition~\ref{prop:idio:eval} that $\Vert \bm\Gamma_\xi(0) \Vert \le 2\pi B_\xi$.
This, combined with Assumption~\ref{assum:static}~\ref{cond:static:linear},
indicates the presence of a large gap in the eigenvalues of $\bm\Gamma_x(0)$,
which allows the asymptotic identification of $\bm\chi_t$ and $\bm\xi_t$ in the time domain, as well as that of the number of static factors $r$.
Throughout, we treat $r$ as known, and refer to e.g.\ \cite{bai2002, onatski10, ahn2013, trapani2018randomized}, for its estimation.


Let $(\mu_{\chi, j}, \mbf e_{\chi, j}), \, 1 \le j \le r$, denote the pairs of eigenvalues and eigenvectors of $\bm\Gamma_\chi(0)$ ordered such that $\mu_{\chi, 1} \ge \ldots \ge \mu_{\chi, r}$.
Then, $\bm\Gamma_\chi(0) = \mbf E_\chi \bm{\mc M}_\chi \mbf E_\chi^\top$ with $\bm{\mc M}_\chi = \text{diag}(\mu_{\chi, j}, \, 1 \le j \le r)$ and $\mbf E_\chi = [\mbf e_{\chi, j}, \, 1 \le j \le r]$.
Under Assumption~\ref{assum:static}~\ref{cond:static:one}, we have
$\bm\chi_{n + a \vert n}$ in~\eqref{eq:gdfm:best:lin} satisfy
$\bm\chi_{n + a \vert n} = \text{Proj}\l( \bm\chi_{n + a} \vert \mbf F_{n - \ell}, \ell \ge 0\r)
=  \text{Proj}\l( \bm\chi_{n + a} \vert \mbf F_n\r)
= \bm\Gamma_\chi(-a) \mbf E_\chi \bm{\mc M}^{-1}_\chi \mbf E_\chi^\top \bm\chi_n$,
where $\text{Proj}(\cdot \vert \mbf z)$ denotes the linear projection operator
onto the linear space spanned by $\mbf z$.
When $a = 0$, we trivially have $\bm\chi_{t \vert n} = 
\bm\chi_t$ for $t \le n$.
Then, a natural estimator of $\bm\chi_{n + a \vert n}$ is
\begin{align}
\label{eq:sta:chi}
\wh{\bm\chi}^{\static}_{n + a \vert n} = \wh{\bm\Gamma}_\chi(-a) \wh{\mbf E}_\chi \wh{\bm{\mc M}}_\chi^{-1} \wh{\mbf E}_\chi^\top \mbf X_n,
\end{align}
where $(\wh\mu_{\chi, j}, \wh{\mbf e}_{\chi, j}), \, 1 \le j \le r$, denote the pairs of eigenvalues and eigenvectors of $\wh{\bm\Gamma}_\chi(0)$, and $\wh{\bm\Gamma}_\chi(\ell), \, \ell \in \{0, a\}$, are estimated as described in Section~\ref{sec:dpca}.
As a by-product, we obtain the in-sample estimator by setting $a = 0$, as
$\wh{\bm\chi}^{\static}_t = \wh{\mbf E}_\chi \wh{\mbf E}_\chi^\top \mbf X_t$ for $1 \le t \le n$.

\begin{rem}
Our proposed estimator $\wh{\bm\chi}^{\static}_{n + a \vert n}$ differs from that of \cite{forni2005generalized}, as they estimate the factor space via generalised PCA on $\wh{\bm\Gamma}_\chi(0)$.
This in effect replaces $\wh{\mbf E}_\chi$ in~\eqref{eq:sta:chi} with the eigenvectors of $\mbf W^{-1} \wh{\bm\Gamma}_\chi(0)$ where $\mbf W$ is a diagonal matrix containing the estimators of the sample variance of $\bm\xi_t$. 
Such an approach may gain in efficiency compared to ours in the same way a weighted least squares estimator is more efficient than the ordinary one in the presence of heteroscedasticity.
However, since we investigate the consistency of $\wh{\bm\chi}^{\static}_{n + a \vert n}$
without deriving its asymptotic distribution, we do not explore such approach in this paper.
\end{rem}

In Appendix~\ref{sec:2017}, we present an alternative forecasting method that operates under an {\it unrestricted} GDFM, i.e.\ it does not require Assumption~\ref{assum:static}. 
Referred to as $\wh{\bm\chi}^{\va}_{n + a \vert n}$, we compare its performance with that of
$\wh{\bm\chi}^{\static}_{n + a \vert n}$ in numerical studies. 


Once VAR parameters are estimated by $\wh{\bm\beta} = [\wh{\mbf A}_1, \ldots, \wh{\mbf A}_d]^\top$ as in~\eqref{eq:lasso}, we produce a forecast of $\bm\xi_{n + a}$ given $\mbf X_t, \, t \le n$, by estimating the best linear predictor $\bm\xi_{n + a \vert n} = \sum_{\ell = 1}^d \mbf A_{\ell} \bm\xi_{n + 1 - \ell \vert n}$ (with $\bm\xi_{t \vert n} = \bm\xi_t$ for $t \le n$), as
\begin{align}
\label{eq:idio:best:lin:pred}
\wh{\bm\xi}_{n + a \vert n} = \sum_{\ell = 1}^{\max(1, a) - 1} \wh{\mbf A}_{\ell} \wh{\bm\xi}_{n + a - \ell \vert n} + \sum_{\ell = \max(1, a)}^{d} \wh{\mbf A}_{\ell} \wh{\bm\xi}_{n + a - \ell}.
\end{align}
When $a \le d$, the in-sample estimators appearing in~\eqref{eq:idio:best:lin:pred} are obtained as $\wh{\bm\xi}_t = \mbf X_t - \wh{\bm\chi}_t, \,  n + a - d \le t \le n$, with either $\wh{\bm\chi}_t^{\static}$ or $\wh{\bm\chi}_t^{\va}$ as $\wh{\bm\chi}_t$.

\subsection{Theoretical properties}
\label{sec:fnets:forecast:theory}

Proposition~\ref{thm:2005} establishes the consistency of $\wh{\bm\chi}^{\static}_{n + a \vert n}$ in estimating the best linear predictor of $\bm\chi_{n + a}$, where we make it explicit the effects of the presence of weak factors, both dynamic (as measured by $\mu_{\chi, j}(\omega)$ in Assumption~\ref{assum:factor}) and static (as measured by $\mu_{\chi, j}$ in Assumption~\ref{assum:static}~\ref{cond:static:linear}),
and the tail behaviour (through $\psi_n$ and $\vartheta_{n, p}$ defined in~\eqref{eq:rates:one}).

\begin{prop}
\label{thm:2005}
\it{Suppose that the conditions in Theorem~\ref{thm:common:spec} are met
and, in addition, we assume that Assumption~\ref{assum:static} holds.
Then, for any  finite $a \ge 0$, we have
\begin{align*}
\l\vert \wh{\bm\chi}^{\static}_{n + a \vert n}  - \bm\chi_{n + a \vert n} \r\vert_\infty
= O_P\l(  p^{4 - 2 \rho_q - 2\varrho_r}
\l(\psi_n \vee  p^{\varrho_r - 1} \vartheta_{n, p} 
\vee \frac{1}{m} \vee \frac{1}{\sqrt p}\r) \r).
\end{align*}}
\end{prop}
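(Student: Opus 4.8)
My plan is to express the predictor as $\wh{\bm\Gamma}_\chi(-a) \wh{\bm P}_\chi \mbf X_n$ with $\wh{\bm P}_\chi = \wh{\mbf E}_\chi \wh{\bm{\mc M}}_\chi^{-1} \wh{\mbf E}_\chi^\top$, set $\bm P_\chi = \mbf E_\chi \bm{\mc M}_\chi^{-1} \mbf E_\chi^\top$, and telescope against the population identity $\bm\chi_{n + a \vert n} = \bm\Gamma_\chi(-a) \bm P_\chi \bm\chi_n$. Using $\mbf X_n = \bm\chi_n + \bm\xi_n$, this gives
\begin{align*}
\wh{\bm\chi}^{\static}_{n + a \vert n} - \bm\chi_{n + a \vert n}
= \underbrace{\l( \wh{\bm\Gamma}_\chi(-a) - \bm\Gamma_\chi(-a) \r) \wh{\bm P}_\chi \mbf X_n}_{\mathrm{(I)}}
+ \underbrace{\bm\Gamma_\chi(-a) \l( \wh{\bm P}_\chi - \bm P_\chi \r) \mbf X_n}_{\mathrm{(II)}}
+ \underbrace{\bm\Gamma_\chi(-a) \bm P_\chi \bm\xi_n}_{\mathrm{(III)}}.
\end{align*}
Throughout I would use that, by Assumption~\ref{assum:static}~\ref{cond:static:linear}, $\mu_{\chi, j} \asymp p^{\varrho_j}$, so $\Vert \bm P_\chi \Vert \asymp p^{-\varrho_r}$ and the gap separating the $r$ factor eigenvalues of $\bm\Gamma_\chi(0)$ from the bounded idiosyncratic ones is of order $p^{\varrho_r}$; that the static structure $\bm\Gamma_\chi(-a) = \bm\Lambda \bm\Gamma_F(-a) \bm\Lambda^\top$ (with $\bm\Gamma_F$ the autocovariance of $\mbf F_t$) gives $\vert \bm\Gamma_\chi(-a) \vert_\infty = O(1)$; and that $\Vert \bm\Gamma_\xi(0) \Vert \le 2\pi B_\xi$ by Proposition~\ref{prop:idio:eval}. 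I also record $\vert \mbf X_n \vert_2 = O_P(\sqrt p)$, since $\varrho_1 \le 1$.

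For term (I) I would use $\vert (\wh{\bm\Gamma}_\chi(-a) - \bm\Gamma_\chi(-a)) \wh{\bm P}_\chi \mbf X_n \vert_\infty \le \vert \wh{\bm\Gamma}_\chi(-a) - \bm\Gamma_\chi(-a) \vert_\infty \cdot \vert \wh{\bm P}_\chi \mbf X_n \vert_1$, bounding $\vert \wh{\bm P}_\chi \mbf X_n \vert_1 \le \sqrt p \Vert \wh{\bm P}_\chi \Vert \vert \mbf X_n \vert_2 = O_P(p^{1 - \varrho_r})$ via $\Vert \wh{\bm P}_\chi \Vert = O_P(p^{-\varrho_r})$ (Weyl's inequality on $\wh\mu_{\chi, r}$), and invoking the $\ell_\infty$-rate of Theorem~\ref{thm:common:spec}. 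This yields order $q p^{2(1 - \rho_q)} \cdot p^{1 - \varrho_r} (\vartheta_{n, p} \vee m^{-1} \vee p^{-1/2})$, i.e.\ the $p^{\varrho_r - 1} \vartheta_{n, p}$ piece inside the parenthesis once $p^{4 - 2\rho_q - 2\varrho_r}$ is factored out. Term (III) is random only through $\bm\xi_n$; since $\E(\vert (\bm\Gamma_\chi(-a) \bm P_\chi)_{i \cdot} \bm\xi_n \vert^2) \le \Vert \bm\Gamma_\xi(0) \Vert \cdot \vert (\bm\Gamma_\chi(-a) \bm P_\chi)_{i \cdot} \vert_2^2$, I would bound the maximal row $\ell_2$-norm of $\bm\Gamma_\chi(-a) \bm P_\chi$ through the loading/eigenvector structure (for $a = 0$ this matrix is $\mbf E_\chi \mbf E_\chi^\top$, whose rows have $\ell_2$-norm $O(p^{-1/2})$). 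This delivers the $p^{-1/2}$ scale, which is dominated by the estimation terms once the factors are weak.

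The bulk of the work is term (II), which I would bound by $\vert \bm\Gamma_\chi(-a) \vert_\infty \cdot \vert (\wh{\bm P}_\chi - \bm P_\chi) \mbf X_n \vert_1 \le O(1) \cdot \sqrt p \, \Vert \wh{\bm P}_\chi - \bm P_\chi \Vert \, \vert \mbf X_n \vert_2 = O_P(p \Vert \wh{\bm P}_\chi - \bm P_\chi \Vert)$, so everything reduces to a spectral-norm perturbation bound for the estimated projection. I would split $\wh{\bm P}_\chi - \bm P_\chi$ into an eigenvalue part, controlled by Weyl's inequality through $\vert \wh\mu_{\chi, j}^{-1} - \mu_{\chi, j}^{-1} \vert \lesssim \Vert \wh{\bm\Gamma}_\chi(0) - \bm\Gamma_\chi(0) \Vert / p^{2\varrho_r}$, and an eigenvector part, controlled by a Davis--Kahan bound $\Vert \wh{\mbf E}_\chi - \mbf E_\chi \mbf R \Vert = O_P(\Vert \wh{\bm\Gamma}_\chi(0) - \bm\Gamma_\chi(0) \Vert / p^{\varrho_r})$ for some orthogonal $\mbf R$, so that $\Vert \wh{\bm P}_\chi - \bm P_\chi \Vert = O_P(\Vert \wh{\bm\Gamma}_\chi(0) - \bm\Gamma_\chi(0) \Vert / p^{2\varrho_r})$. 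Feeding in the Frobenius rate of Theorem~\ref{thm:common:spec}, $\Vert \wh{\bm\Gamma}_\chi(0) - \bm\Gamma_\chi(0) \Vert \le \Vert \cdot \Vert_F = O_P(q p^{3 - 2\rho_q}(\psi_n \vee m^{-1} \vee p^{-1/2}))$, gives $p \Vert \wh{\bm P}_\chi - \bm P_\chi \Vert = O_P(q p^{4 - 2\rho_q - 2\varrho_r}(\psi_n \vee m^{-1} \vee p^{-1/2}))$, which is exactly the $\psi_n$ piece of the claimed rate.

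The hard part is controlling the projection perturbation with the correct weak-factor prefactor. One must first verify the precondition $\Vert \wh{\bm\Gamma}_\chi(0) - \bm\Gamma_\chi(0) \Vert = o(p^{\varrho_r})$ (so that $\wh\mu_{\chi, r} \asymp p^{\varrho_r}$ and Davis--Kahan applies with the population gap), which is precisely where $\varrho_r > 7/8$ and the moment condition $\nu > 4$ enter; and one must carefully track how the \emph{dynamic} weak-factor penalty $p^{2(1 - \rho_q)}$ carried by the dynamic-PCA estimate in Theorem~\ref{thm:common:spec} compounds with the \emph{static} eigengap penalty $p^{-2\varrho_r}$ to produce $p^{4 - 2\rho_q - 2\varrho_r} = p^{2(1 - \rho_q)} \cdot p^{2(1 - \varrho_r)}$. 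Managing the rotation ambiguity $\mbf R$ so that it cancels between the paired $\wh{\mbf E}_\chi$ factors in $\wh{\bm P}_\chi$ is the other delicate point. Collecting the three contributions and simplifying against the definitions of $\psi_n$ and $\vartheta_{n, p}$ then yields $\vert \wh{\bm\chi}^{\static}_{n + a \vert n} - \bm\chi_{n + a \vert n} \vert_\infty = O_P(p^{4 - 2\rho_q - 2\varrho_r}(\psi_n \vee p^{\varrho_r - 1}\vartheta_{n, p} \vee m^{-1} \vee p^{-1/2}))$, as claimed.
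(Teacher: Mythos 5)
Your proposal is correct and follows essentially the same route as the paper's proof: the same split into the idiosyncratic leakage term $\bm\Gamma_\chi(-a)\mbf E_\chi\bm{\mc M}_\chi^{-1}\mbf E_\chi^\top\bm\xi_n$, the autocovariance-estimation term, and the eigenstructure perturbation (which the paper merely subdivides further into eigenvector and eigenvalue pieces with an explicit rotation $\bm{\mc O}$, rather than treating $\wh{\mbf E}_\chi\wh{\bm{\mc M}}_\chi^{-1}\wh{\mbf E}_\chi^\top$ as a single projection), using the same Weyl/Davis--Kahan control under the gap condition $p^{3-2\rho_q-\varrho_r}(\psi_n\vee m^{-1}\vee p^{-1/2})\to 0$ and the same rate bookkeeping (Frobenius rate of Theorem~\ref{thm:common:spec} for the $\psi_n$ piece, $\ell_\infty$ rate for the $p^{\varrho_r-1}\vartheta_{n,p}$ piece). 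The only cosmetic difference is in the leakage term, which the paper bounds uniformly in $i$ via $\max_i\vert\bm\varphi_i^\top\bm\Gamma_\chi(-a)\vert_2=O(\sqrt p)$ together with $\Vert\mbf E_\chi^\top\bm\xi_n\Vert_2=O_P(1)$, obtaining $O_P(\sqrt p/\mu_{\chi,r})$ rather than your per-coordinate second-moment bound; both versions are dominated by the stated rate.
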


As noted in Remark~\ref{rem:thm:dpca}~\ref{rem:tails}, weaker factors and heavier tails impose a stronger requirement on the dimensionality $p$.
If all factors are strong ($\varrho_r = 1$), the rate becomes $(\vartheta_{n, p} \vee m^{-1} \vee p^{-1/2})$.
When $a = 0$, Proposition~\ref{thm:2005} provides in-sample estimation consistency for any given $t \le n$. 
The next proposition accounts for the irreducible error in $\bm\chi_{n + a \vert n}$, with which we conclude the analysis of the forecasting error $\vert \wh{\bm\chi}_{n + a \vert n}^{\static} - \bm\chi_{n + a} \vert_\infty$ when $a \ge 1$.
\begin{prop}
\label{prop:common:irreduce}
{\it Suppose that Assumptions~\ref{assum:common} and~\ref{assum:innov:new} hold.
Then for any finite $a \ge 1$, $\vert \bm\chi_{n + a \vert n} - \bm\chi_{n + a} \vert_\infty = 
O_P(q^{1/\nu} \mu_\nu^{1/\nu} \log^{1/2}(p))$.}
\end{prop}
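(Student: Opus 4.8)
The plan is to reduce the irreducible error to an explicit finite sum of \emph{future} shocks and then bound its $\ell_\infty$-norm by a truncation argument that isolates the heavy-tail effect (responsible for the factor $q^{1/\nu}\mu_\nu^{1/\nu}$) from the maximum over the $p$ coordinates (responsible for the $\log^{1/2}(p)$).

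First I would make the error explicit. Inserting the expansion~\eqref{eq:gdfm} of $\bm\chi_{n+a}$ and the best linear predictor~\eqref{eq:gdfm:best:lin}, and re-indexing the latter via $\ell \mapsto \ell + a$, all shocks dated at most $n$ cancel and we are left with
\[
\bm\chi_{n + a} - \bm\chi_{n + a \vert n} = \sum_{\ell = 0}^{a - 1} \mbf B_\ell\, \mbf u_{n + a - \ell},
\]
the contribution of the unpredictable future innovations $\mbf u_{n + 1}, \ldots, \mbf u_{n + a}$. Since $a$ is finite, by the triangle inequality it suffices to bound $\max_{1 \le i \le p} \vert \mbf B_{\ell, i \cdot} \mbf u_t \vert$ for each fixed $\ell \in \{0, \ldots, a - 1\}$ with $t = n + a - \ell$, and to take the worst of these finitely many terms. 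Working at a single fixed $t$ also sidesteps any need for cross-time independence, as only the across-coordinate independence of $\mbf u_t$ will be used.

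Fixing such an $\ell$ and $t$, I would truncate the shocks at level $\tau \asymp (q \mu_\nu)^{1/\nu}$. By Assumption~\ref{assum:innov:new}~\ref{cond:dist:new} and a union bound over the $q$ coordinates, $\p(\max_{1 \le j \le q} \vert u_{jt} \vert > \tau) \le q \mu_\nu \tau^{-\nu}$, which is made arbitrarily small by the constant absorbed into $\tau$; this is where the factor $q^{1/\nu} \mu_\nu^{1/\nu}$ originates. On the complementary event no truncation occurs, so $\mbf B_{\ell, i \cdot} \mbf u_t = \sum_j B_{\ell, ij} \bar u_{jt} + \sum_j B_{\ell, ij} \E[u_{jt} \mathbb{I}_{\{\vert u_{jt} \vert \le \tau\}}]$, where the centred truncated shocks $\bar u_{jt} = u_{jt} \mathbb{I}_{\{\vert u_{jt} \vert \le \tau\}} - \E[u_{jt} \mathbb{I}_{\{\vert u_{jt} \vert \le \tau\}}]$ are, by Assumption~\ref{assum:innov:new}~\ref{cond:iid:new}, independent across $j$, mean zero, and bounded in modulus by $2 \tau \vert B_{\ell, ij} \vert$ in the $j$-th summand. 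Applying Hoeffding's inequality, whose variance proxy is $\sum_j (2 \tau \vert B_{\ell, ij} \vert)^2 = 4 \tau^2 \vert \mbf B_{\ell, i \cdot} \vert_2^2 \le 4 \tau^2 \Xi^2 (1 + \ell)^{-2 \varsigma}$ by Assumption~\ref{assum:common}, together with a union bound over the $p$ coordinates yields $\max_{1 \le i \le p} \vert \sum_j B_{\ell, ij} \bar u_{jt} \vert = O_P(\tau \log^{1/2}(p)) = O_P(q^{1/\nu} \mu_\nu^{1/\nu} \log^{1/2}(p))$. It then remains to check that the deterministic bias is negligible: bounding $\vert \E[u_{jt} \mathbb{I}_{\{\vert u_{jt} \vert \le \tau\}}] \vert \le \mu_\nu \tau^{-(\nu - 1)}$ and $\sum_j \vert B_{\ell, ij} \vert \le \sqrt{q}\, \Xi$ gives a contribution of order $q^{1/\nu - 1/2} \mu_\nu^{1/\nu}$, dominated by the main rate since $\nu > 2$.

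The main obstacle is obtaining the sharp combination $q^{1/\nu} \mu_\nu^{1/\nu} \log^{1/2}(p)$ rather than a cruder product. The crucial point is that, once the shocks are bounded by $\tau$, one should invoke a \emph{range}-based (Hoeffding) inequality rather than a variance-based (Bernstein) one: Bernstein's bound is governed by the maximal summand $\tau \max_j \vert B_{\ell, ij} \vert$ and would produce a spurious extra $\log^{1/2}(p)$ in the heavy-tailed regime, whereas Hoeffding's proxy is exactly $\tau^2$ times the bounded quantity $\vert \mbf B_{\ell, i \cdot} \vert_2^2$, delivering the clean $\tau \log^{1/2}(p)$ scaling. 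Calibrating the truncation level $\tau \asymp (q \mu_\nu)^{1/\nu}$ so as to simultaneously control the tail event and set this proxy is the step that requires care; the remaining arguments are a finite sum over $\ell$ and routine bookkeeping.
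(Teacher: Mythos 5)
Your proposal is correct, and it reaches the stated rate by a genuinely different route from the paper. Both arguments start from the same identity $\bm\chi_{n+a} - \bm\chi_{n+a\vert n} = \sum_{\ell=0}^{a-1}\mbf B_\ell \mbf u_{n+a-\ell}$, but the paper then invokes Lemma~D.3 of Zhang and Wu (2021) — a Nemirovski/Rosenthal-type moment inequality for the $\ell_\infty$-norm of a sum of independent mean-zero vectors — to get $\Vert \vert \sum_{\ell}\sum_j \mbf B_{\ell,\cdot j} u_{j,n+a-\ell}\vert_\infty\Vert_\nu^2 \le C_\nu \log(p)\sum_{\ell}\sum_j \vert \mbf B_{\ell,\cdot j}\vert_\infty^2\, q^{2/\nu}\mu_\nu^{2/\nu}$ and concludes by Chebyshev; note that this uses the \emph{column}-wise bound $(\sum_j\vert\mbf B_{\ell,\cdot j}\vert_\infty^2)^{1/2}\le\Xi(1+\ell)^{-\varsigma}$ from Assumption~\ref{assum:common}, whereas your Hoeffding variance proxy uses the \emph{row}-wise bound $\max_i\vert\mbf B_{\ell,i\cdot}\vert_2\le\Xi(1+\ell)^{-\varsigma}$ — both are assumed, so either works. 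Your truncation-plus-Hoeffding argument is more elementary and self-contained, the bias and tail bookkeeping checks out (the bias term $q^{1/\nu-1/2}\mu_\nu^{1/\nu}$ is indeed dominated), and your observation about treating each lag $\ell$ separately is a small but real advantage: the paper's single application of the maximal moment inequality across both $j$ and $\ell$ implicitly leans on independence (or at least a martingale-difference version of the inequality) across time, whereas your argument only ever uses the cross-sectional independence of $u_{jt}$ at a fixed $t$, which is exactly what Assumption~\ref{assum:innov:new}~\ref{cond:iid:new} provides. What the paper's route buys in exchange is brevity and a unified treatment of the Gaussian special case (where the same lemma yields $O_P(\log^{1/2}(p))$ without the $q^{1/\nu}\mu_\nu^{1/\nu}$ factor), which your Hoeffding step would also recover but only after redoing the calculation without truncation.
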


Recall the definition of $s_{\text{\upshape in}}$ given in Section~\ref{sec:fnets:network:theory}.
The next proposition investigates the performance of $\wh{\bm\xi}_{n + a \vert n}$ when $a = 1$, which can easily be extended to any finite $a \ge 2$.

\begin{prop}
\label{prop:idio:pred}
\textit{
Suppose that the in-sample estimator of $\bm\xi_t$ and $\wh{\bm\beta}$ satisfy
\begin{align}
\l\vert \wh{\bm\xi}_{n + 1 - \ell} - \bm\xi_{n + 1 - \ell} \r\vert_\infty = O_P\l(\bar{\zeta}_{n, p}\r) 
\text{ \ for $1 \le \ell \le d$}
\text{ \ and \ }
\l\Vert \wh{\bm\beta} - \bm\beta \r\Vert_1 = O_P(s_{\text{\upshape in}} \zeta_{n, p}).
\label{cond:idio:pred:two}
\end{align}
Also, 
let Assumptions~\ref{assum:idio} and~\ref{assum:innov:new} hold.
Then, 
\begin{align*}
\l\vert \wh{\bm\xi}_{n + 1 \vert n} - \bm\xi_{n + 1} \r\vert_\infty
&= O_P\l( s_{\text{\upshape in}} \zeta_{n, p} \l( \log^{1/2}(p) p^{1/\nu} \mu_\nu^{1/\nu} + \bar{\zeta}_{n, p} \r) + \Vert \bm\beta \Vert_1 \bar{\zeta}_{n, p}
+ p^{1/\nu} \mu_\nu^{1/\nu}\r).
\end{align*}
}
\end{prop}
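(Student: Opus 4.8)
The plan is to linearise the forecast error around the true VAR recursion and reduce it to a handful of $\ell_1$/$\ell_\infty$ matrix--vector inequalities together with two maximal inequalities for the heavy-tailed inputs. Specialising~\eqref{eq:idio:best:lin:pred} to $a = 1$ gives $\wh{\bm\xi}_{n + 1 \vert n} = \sum_{\ell = 1}^d \wh{\mbf A}_\ell \wh{\bm\xi}_{n + 1 - \ell}$, while model~\eqref{eq:idio:var} gives $\bm\xi_{n + 1} = \sum_{\ell = 1}^d \mbf A_\ell \bm\xi_{n + 1 - \ell} + \bm\Gamma^{1/2} \bm\vep_{n + 1}$. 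Writing $\mbf Z_n = (\bm\xi_n^\top, \ldots, \bm\xi_{n + 1 - d}^\top)^\top$, $\wh{\mbf Z}_n = (\wh{\bm\xi}_n^\top, \ldots, \wh{\bm\xi}_{n + 1 - d}^\top)^\top$ and $\mbf V = \wh{\mbf Z}_n - \mbf Z_n$, so that $\sum_{\ell = 1}^d \mbf A_\ell \bm\xi_{n + 1 - \ell} = \bm\beta^\top \mbf Z_n$, I would decompose
\begin{align*}
\wh{\bm\xi}_{n + 1 \vert n} - \bm\xi_{n + 1} = \underbrace{\bm\beta^\top \mbf V}_{(Ia)} + \underbrace{(\wh{\bm\beta} - \bm\beta)^\top \mbf V}_{(Ib)} + \underbrace{(\wh{\bm\beta} - \bm\beta)^\top \mbf Z_n}_{(II)} - \underbrace{\bm\Gamma^{1/2} \bm\vep_{n + 1}}_{(III)}
\end{align*}
and bound each of the four pieces in $\ell_\infty$-norm.

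For $(Ia)$, $(Ib)$ and $(II)$ I would use the elementary inequality $\vert \mbf M^\top \mbf w \vert_\infty \le \Vert \mbf M \Vert_1 \vert \mbf w \vert_\infty$, the identity $\Vert \bm\beta \Vert_1 = \max_{1 \le j \le p} \sum_{\ell = 1}^d \vert \mbf A_{\ell, j \cdot} \vert_1$ from Remark~\ref{rem:idio:est:lasso}, and the analogous $\Vert \wh{\bm\beta} - \bm\beta \Vert_1 = \max_{1 \le j \le p} \vert \wh{\bm\beta}_{\cdot j} - \bm\beta_{\cdot j} \vert_1$. Since $d$ is finite, the first hypothesis in~\eqref{cond:idio:pred:two} gives $\vert \mbf V \vert_\infty = \max_{1 \le \ell \le d} \vert \wh{\bm\xi}_{n + 1 - \ell} - \bm\xi_{n + 1 - \ell} \vert_\infty = O_P(\bar{\zeta}_{n, p})$, while the second gives $\Vert \wh{\bm\beta} - \bm\beta \Vert_1 = O_P(s_{\text{in}} \zeta_{n, p})$. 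Hence $\vert (Ia) \vert_\infty = O_P(\Vert \bm\beta \Vert_1 \bar{\zeta}_{n, p})$, $\vert (Ib) \vert_\infty = O_P(s_{\text{in}} \zeta_{n, p} \bar{\zeta}_{n, p})$ and $\vert (II) \vert_\infty \le \Vert \wh{\bm\beta} - \bm\beta \Vert_1 \vert \mbf Z_n \vert_\infty = O_P(s_{\text{in}} \zeta_{n, p}) \cdot \vert \mbf Z_n \vert_\infty$, which already yields the $\Vert \bm\beta \Vert_1 \bar{\zeta}_{n, p}$ and $s_{\text{in}} \zeta_{n, p} \bar{\zeta}_{n, p}$ contributions and reduces the remaining work to controlling $\vert \mbf Z_n \vert_\infty$ and $\vert \bm\Gamma^{1/2} \bm\vep_{n + 1} \vert_\infty$.

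The remaining and most delicate step is to establish the two maximal inequalities under only $\nu > 4$ moments (Assumption~\ref{assum:innov:new}~\ref{cond:dist:new}) and martingale-difference rather than independent innovations. For $(III)$, each coordinate $(\bm\Gamma^{1/2} \bm\vep_{n + 1})_i$ is a weighted sum of the coordinatewise-independent $\vep_{k, n + 1}$; a Rosenthal-type moment bound together with $\Vert \bm\Gamma \Vert \le M_\vep$ (Assumption~\ref{assum:idio}~\ref{cond:idio:innov}) gives $\E \vert (\bm\Gamma^{1/2} \bm\vep_{n + 1})_i \vert^\nu \le C \mu_\nu$ uniformly in $i$, and Markov's inequality with a union bound over the $p$ coordinates yields $\vert \bm\Gamma^{1/2} \bm\vep_{n + 1} \vert_\infty = O_P(p^{1/\nu} \mu_\nu^{1/\nu})$, the irreducible-error term. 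For $\mbf Z_n$ I would pass to the Wold representation~\eqref{eq:idio:wold}, use the summability in Assumption~\ref{assum:idio}~\ref{cond:idio:coef} to bound $\E \vert \xi_{it} \vert^\nu$ uniformly, and then invoke the maximal inequality for heavy-tailed, serially dependent sequences (the truncation-based device used throughout the appendix) to obtain $\max_{1 \le i \le p} \vert \xi_{it} \vert = O_P(\log^{1/2}(p) \, p^{1/\nu} \mu_\nu^{1/\nu})$; the extra $\log^{1/2}(p)$ is what that lemma pays to turn a uniform moment bound into a tail bound for the cross-sectional maximum of the dependent MA$(\infty)$ sequence. As $d$ is finite, this same bound controls $\vert \mbf Z_n \vert_\infty$.

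Collecting the four bounds via $\vert \mbf a + \mbf b \vert_\infty \le \vert \mbf a \vert_\infty + \vert \mbf b \vert_\infty$ reproduces the stated rate exactly, with $(II)$ contributing $s_{\text{in}} \zeta_{n, p} \log^{1/2}(p) p^{1/\nu} \mu_\nu^{1/\nu}$ and $(III)$ the standalone $p^{1/\nu} \mu_\nu^{1/\nu}$. I expect the main obstacle to be the maximal inequality for $\max_i \vert \xi_{it} \vert$: unlike $(III)$, the idiosyncratic component is an infinite moving average with only algebraically decaying coefficients driven by martingale-difference inputs, so securing a uniform $\nu$-th moment bound and then a sharp bound on the cross-sectional maximum requires carefully balancing the truncation level against the variance proxy. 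By contrast, the three estimation-transfer terms are routine once~\eqref{cond:idio:pred:two} is granted, since a product of two $O_P$ rates is $O_P$ of their product irrespective of any dependence between $\wh{\bm\beta}$ and $\mbf Z_n$.
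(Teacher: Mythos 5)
Your proposal is correct and follows essentially the same route as the paper: the four-term decomposition $\bm\beta^\top\mbf V + (\wh{\bm\beta}-\bm\beta)^\top\mbf V + (\wh{\bm\beta}-\bm\beta)^\top\mbf Z_n - \bm\Gamma^{1/2}\bm\vep_{n+1}$ is algebraically identical to the paper's split of $\wh{\bm\xi}_{n+1\vert n}-\bm\xi_{n+1\vert n}$ into $\sum_\ell(\wh{\mbf A}_\ell-\mbf A_\ell)\wh{\bm\xi}_{n+1-\ell}$ and $\sum_\ell\mbf A_\ell(\wh{\bm\xi}_{n+1-\ell}-\bm\xi_{n+1-\ell})$, and the two maximal inequalities $\vert\bm\xi_t\vert_\infty = O_P(\log^{1/2}(p)\,p^{1/\nu}\mu_\nu^{1/\nu})$ and $\vert\bm\Gamma^{1/2}\bm\vep_{n+1}\vert_\infty = O_P(p^{1/\nu}\mu_\nu^{1/\nu})$ are exactly what the paper establishes (via Lemma~D.3 of \citealp{zhang2021} applied to the Wold representation, plus Chebyshev, rather than an explicit truncation argument). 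The step you flag as delicate is indeed where the paper invests its effort, but it is handled by the cited Rosenthal/Burkholder-type bound for the $\ell_\infty$-norm of the MA$(\infty)$ sum, so no genuine gap remains.
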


Either of the in-sample estimators $\wh{\bm\chi}^{\static}_t$ (described in Section~\ref{sec:2005}) or $\wh{\bm\chi}^{\va}_t$ (Appendix~\ref{sec:2017}), can be used in place of $\wh{\bm\chi}_t$.
Accordingly, the rate $\bar{\zeta}_{n, p}$ in~\eqref{cond:idio:pred:two} is inherited by that of $\wh{\bm\chi}^{\static}_t$ (given in Proposition~\ref{thm:2005}) or $\wh{\bm\chi}^{\va}_t$ (Proposition~\ref{thm:common:var}~\ref{thm:common:var:three}).
From the proof of Proposition~\ref{prop:idio:est:lasso}, we have $\zeta_{n, p} \asymp (\Vert \bm\beta \Vert_1 + 1)(\vartheta_{n, p} \vee m^{-1} \vee p^{-1/2})$ in~\eqref{cond:idio:pred:two}.

\section{Numerical studies}
\label{sec:num}

\subsection{Tuning parameter selection}
\label{sec:tuning}


We briefly discuss the choice of the tuning parameters for FNETS.
For full details, see \cite{owens2023fnets}. 

\paragraph{Related to $\bm\chi_t$.}
We set the kernel bandwidth at $m = \lfloor 4 (n/\log(n))^{1/3} \rfloor$ based on the case when sufficiently large number of moments exist and $n \asymp p$ (Remark~\ref{rem:thm:dpca}~\ref{rem:bandwidth}).
In simulation studies reported in Appendix~\ref{sec:complete:sim}, we treat the number of factors $q$ (required for Step~1 of FNETS) known, and also treat the number of static factors $r$ (for generating the forecast) as known if it is finite;
when $\bm\chi_t$ does not admit a static factor model (i.e.\ $r = \infty$), we use the value returned by the ratio-based estimator of \cite{ahn2013}.
In real data analysis reported in Section~\ref{sec:real}, we estimate both $q$ and $r$, the former with the estimator proposed in \cite{hallin2007}, the latter as in \cite{ahn2013}.

\paragraph{Related to $\bm\xi_t$.}
We select the tuning parameter $\lambda$ in~\eqref{eq:lasso} 
jointly with the VAR order $d$, by adopting cross validation (CV); in time series settings, a similar approach is explored in \cite{wang2021robust}.
For this, the data is partitioned into $M$ consecutive folds with indices $\mc I_l = \{n_l + 1, \ldots, n_{l + 1}\}$ where $n_l = \min(l\lceil n/M \rceil, n), \, 0 \le l \le M$, and each fold is split into $\mc I^{\text{train}}_l = \{n_l + 1, \ldots, \lceil (n_l + n_{l + 1})/2 \rceil\}$ and $\mc I^{\text{test}}_l = \mc I_l \setminus \mc I^{\text{train}}_l$.
Then with $\wh{\bm\beta}^{\text{train}}_l(\mu, b)$ obtained from $\{\mbf X_t, \, t \in \mc I^{\text{train}}_l\}$ with the tuning parameter $\mu$ and the VAR order $b$, we evaluate 
\begin{align*}
\text{CV}(\mu, b) = \sum_{l = 1}^M & \,
\tr\l( \wh{\bm\Gamma}^{\text{test}}_{\xi, l}(0) - (\wh{\bm\beta}_l^{\text{train}}(\mu, b))^\top \wh{\bbg}_l^{\text{test}}(b) - 
(\wh{\bbg}_l^{\text{test}}(b))^\top \wh{\bm\beta}_l^{\text{train}}(\mu, b)
\r.
\\
& \l. +  (\wh{\bm\beta}_l^{\text{train}}(\mu, b))^\top \wh{\bbG}_l^{\text{test}}(b)
\wh{\bm\beta}_l^{\text{train}}(\mu, b) \r),
\end{align*}
where $\wh{\bm\Gamma}^{\text{test}}_{\xi, l}(\ell)$, $\wh{\bbG}_l^{\text{test}}(b)$ and $\wh{\bbg}_l^{\text{test}}(b)$ are generated analogously as $\wh{\bm\Gamma}_\xi(\ell)$, $\wh{\bbG}$ and $\wh{\bbg}$, respectively, using the test set $\{\mbf X_t, \, t \in \mc I^{\text{test}}_l\}$.
The measure $\text{CV}(\mu, b)$ approximates the prediction error while accounting for that we do not directly observe $\bm\xi_t$.
Minimising it over varying $\mu$ and $b$, we select $\lambda$ and $d$.
In simulation studies, we treat $d$ as known
while in real data analysis, we select it from the set $\{1, \ldots, 5\}$ via CV.
For selecting $\eta$ in~\eqref{eq:inv:delta}, we adopt the Burg matrix divergence-based CV measure:
\begin{align*}
\text{CV}(\mu) = \sum_{l = 1}^M \tr\l( \wh{\bm\Delta}_l^{\text{train}}(\mu) \wh{\bm\Gamma}_l^{\text{test}} \r) - \log \l\vert \wh{\bm\Delta}_l^{\text{train}}(\mu) \wh{\bm\Gamma}_l^{\text{test}} \r\vert - p.
\end{align*}
For both CV procedures, we set $M = 1$ in the numerical results reported below.
In simulation studies, we compare the estimators with their thresholded counterparts in estimating the network edge sets with the thresholds $\mathfrak{t}$, $\mathfrak{t}_\delta$ and $\mathfrak{t}_\omega$ selected according to a data-driven approach motivated by \cite{liu2021simultaneous}.
Details are in Appendix~\ref{sec:thresh}.

\subsection{Simulations}
\label{sec:sim}

In Appendix~\ref{sec:complete:sim}, we investigate the estimation and forecasting performance of FNETS ondatasets simulated under a variety of settings, from Gaussian innovations $\mbf u_t$ and $\bm\vep_t$ with~\ref{e:one}~$\bm\Delta = \mbf I$ and~\ref{e:two}~$\bm\Delta \ne \mbf I$, to~\ref{e:three}~heavy-tailed ($t_5$) innovations with $\bm\Delta = \mbf I$,
and when $\bm\chi_t$ is generated from~\ref{m:ar}~fully dynamic or~\ref{m:ma}~static factor models.
In addition, we consider the `oracle' setting~\ref{m:oracle} $\bm\chi_t = \mbf 0$ where, in the absence of the factor-driven component, the results obtained can serve as a benchmark. 
For comparison, we consider the factor-adjusted regression method of \cite{fan2021bridging} and present the performance of their estimator of VAR parameters and forecasts.

\subsection{Application to a panel of volatility measures}
\label{sec:real}

We investigate the interconnectedness in a panel of volatility measures 
and evaluate its out-of-sample forecasting performance using FNETS.
For this purpose, we consider a panel of $p = 46$ stock prices
retrieved from the Wharton Research Data Service,
of US companies which are all classified as `financials' according 
to the Global Industry Classification Standard;
a list of company names and industry groups are found in Appendix~\ref{app:data}.
The dataset spans the period between January 3, 2000 and December 31, 2012 ($3267$ trading days).
Following \cite{diebold2014network}, we measure the volatility using the high-low range as
$\sigma_{it}^2 = 0.361 (p^{\text{high}}_{it} - p^{\text{low}}_{it})^2$
where $p^{\text{high}}_{it}$ and $p^{\text{low}}_{it}$ denote, respectively,
the maximum and the minimum log-price of stock~$i$ on day~$t$,
and set $X_{it} = \log(\sigma_{it}^2)$; \citet{brownlees2010comparison} support this choice of volatility measure over more sophisticated alternatives.

\subsubsection{Network analysis}
\label{sec:real:network}

We focus on the period 03/2006--02/2010 corresponding to the Great Financial Crisis. 
We partition the data into four segments of length $n = 252$ each
(corresponding to the number of trading days in a single year)
and on each segment, we apply FNETS to estimate the three networks
$\mc N^{\dir}$, $\mc N^{\undir}$ and $\mc N^{\lr}$ described in Section~\ref{sec:networks}. 

Each row of Figure~\ref{fig:real:lasso} plots the heat maps of the matrices
underlying the three networks of interest.
From all four segments, the CV-based approach described in Section~\ref{sec:tuning} returns $d = 1$ from the candidate VAR order set $\{1, \ldots, 5\}$.
Hence in each row, the left panel represents the estimator $\wh{\mbf A}_1 = \wh{\bm\beta}^\top$,
and the middle and the right show the (long-run) partial correlations from the corresponding $\wh{\bm\Delta}$ and $\wh{\bm\Omega}$ (with their diagonals set to be zero).
Locations of the non-zero elements estimate the edge sets of the corresponding networks, and the hues represent the (signed) edge weights.

Prior to March 2007, all networks exhibit a low degree of interconnectedness
but the number of edges increases considerably in 03/2007--02/2008 
due mainly to an overall increase in dynamic co-dependencies
and a prominent role of banks (blue group)
not only in $\mc N^{\dir}$ but also in $\mc N^{\undir}$. 
In 03/2008--02/2009, the companies belonging to the insurance sector (red group) 
play a central role and in 03/2009--02/2010,
the companies become highly interconnected 
with two particular firms having many outgoing edges in $\mc N^{\dir}$.
Also, while most edges in $\mc N^{\lr}$, which captures the overall long-run dependence,
have positive weights across time and companies,
their weights become negative in this last segment.
We highlight that FNETS is able to capture the aforementioned group-specific activities although this information is not supplied to the estimation method.

\begin{figure}[htb!]
\centering
\begin{tabular}{ccc}
\footnotesize $\mc N^{\dir}$ & \footnotesize $\mc N^{\undir}$ & \footnotesize $\mc N^{\lr}$
\\
\includegraphics[width = .21\textwidth]{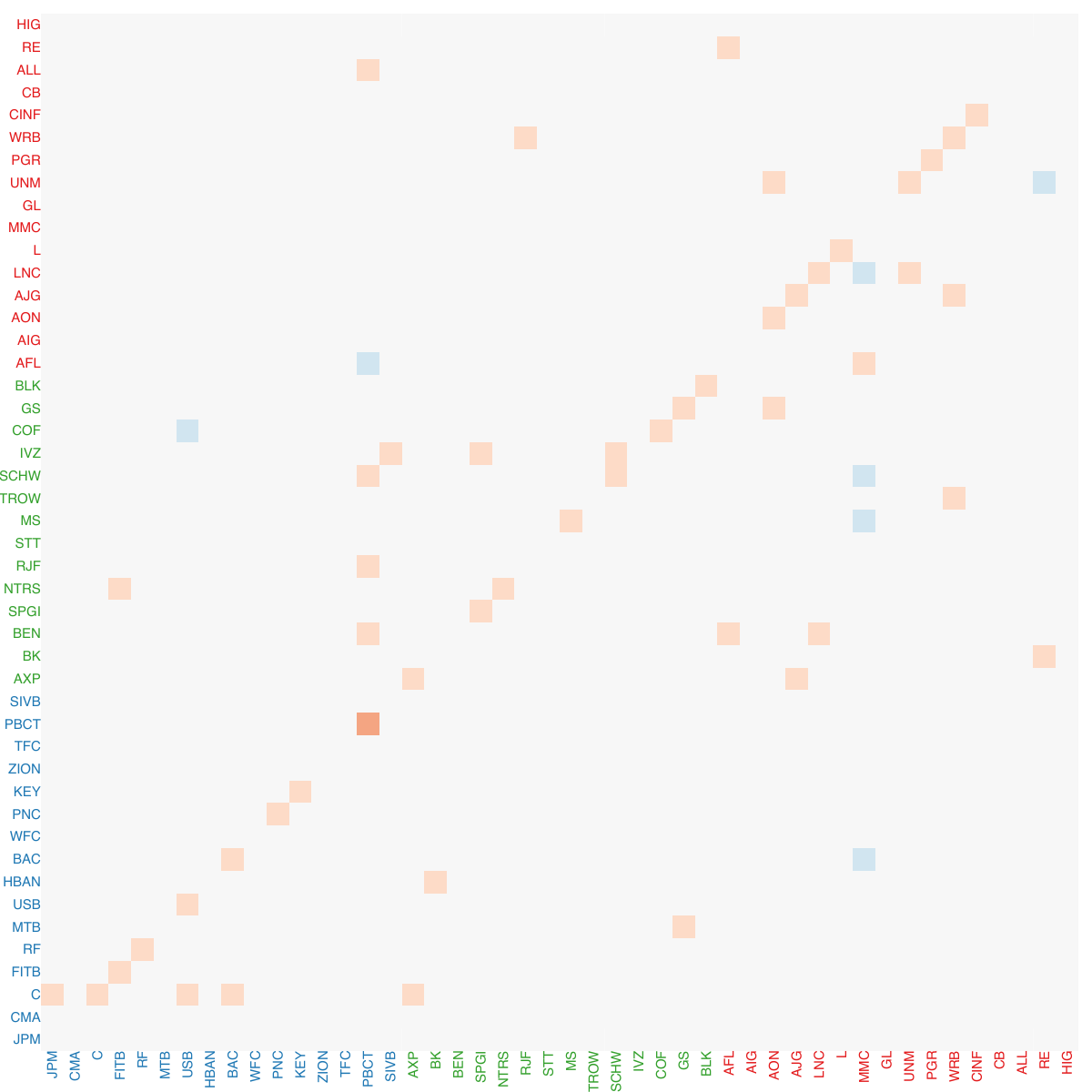} 
&\includegraphics[width = .21\textwidth]{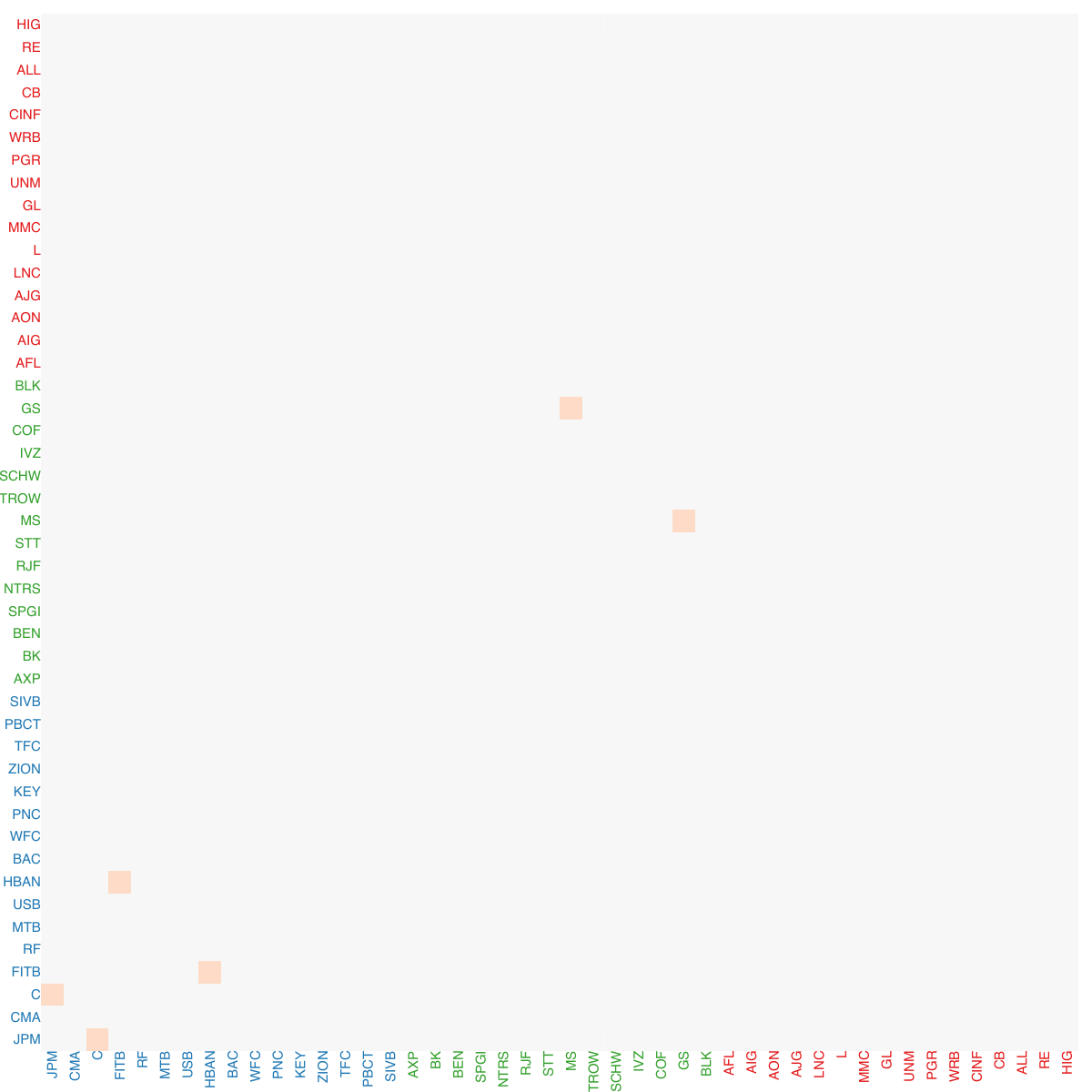} 
&\includegraphics[width = .21\textwidth]{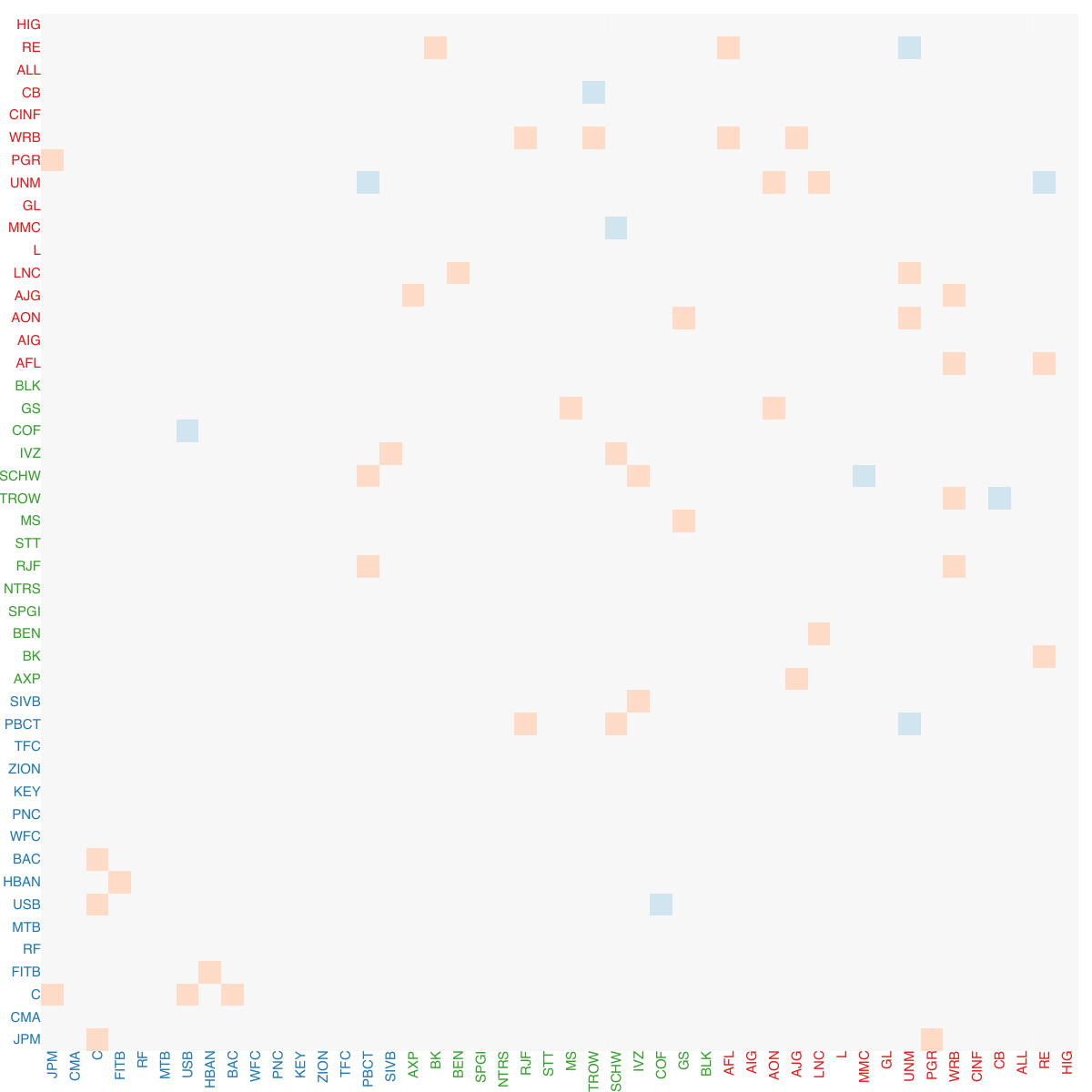} \\[-.1pt]
\multicolumn{3}{c}{\footnotesize 03/2006--02/2007}
\\
\includegraphics[width = .21\textwidth]{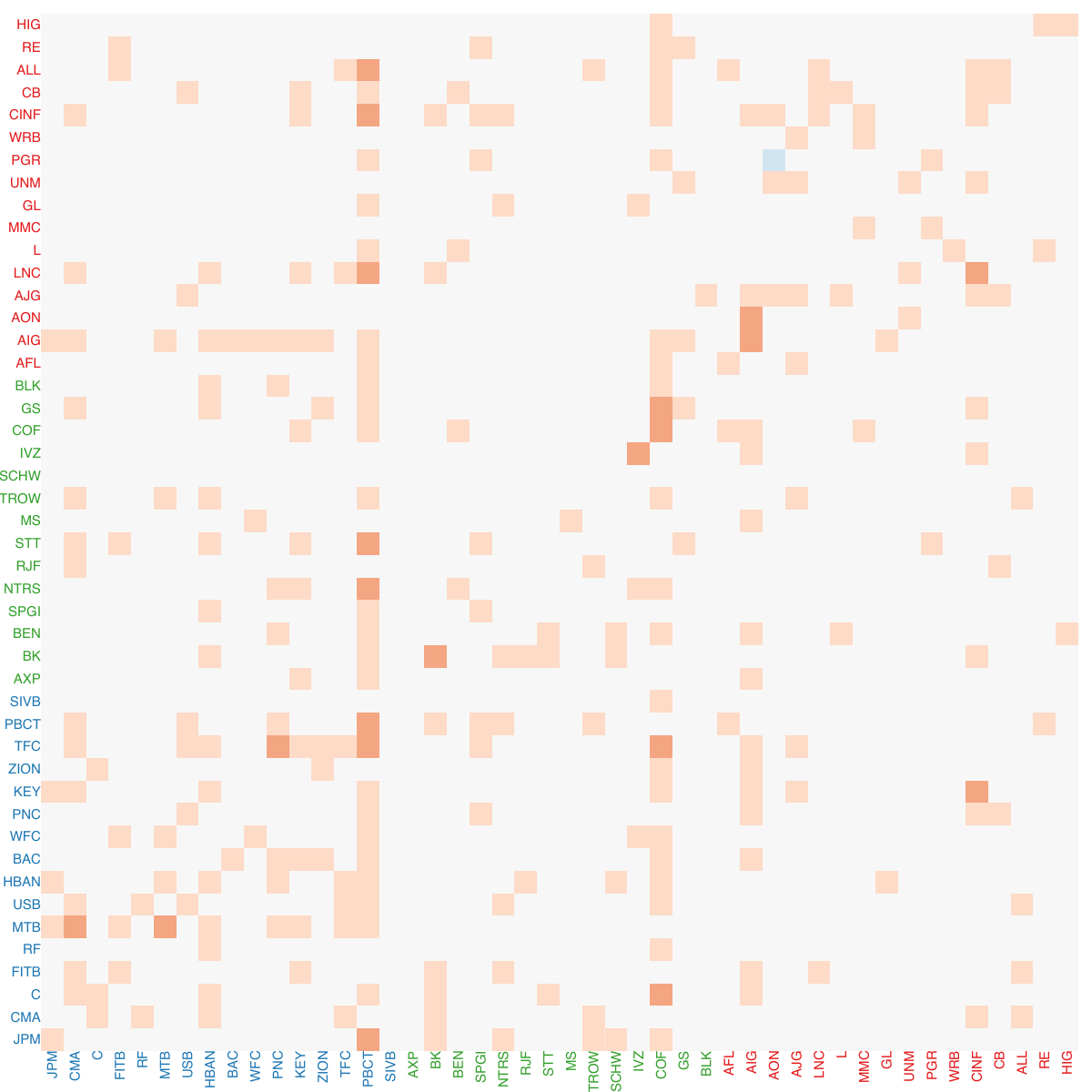} 
&\includegraphics[width = .21\textwidth]{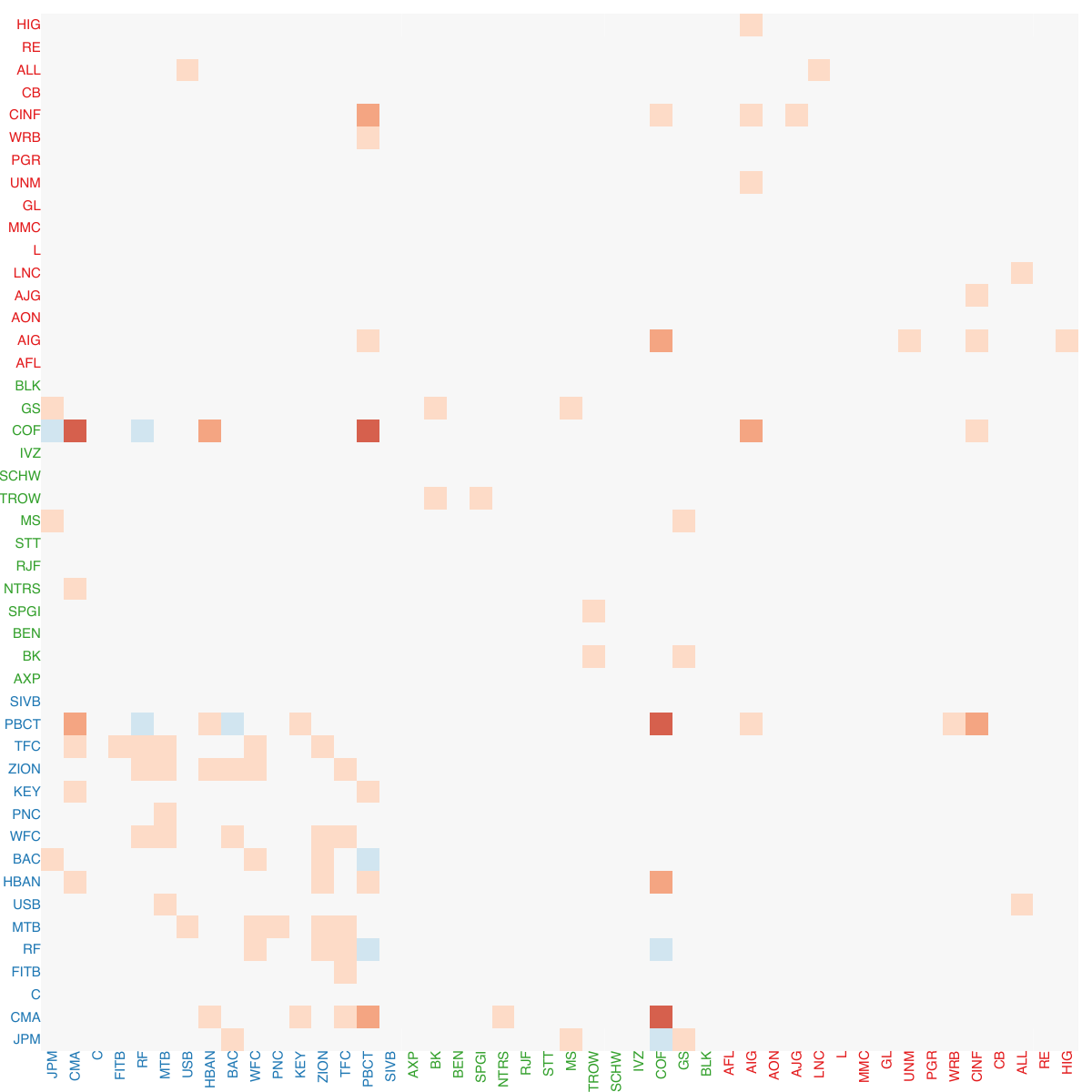} 
&\includegraphics[width = .21\textwidth]{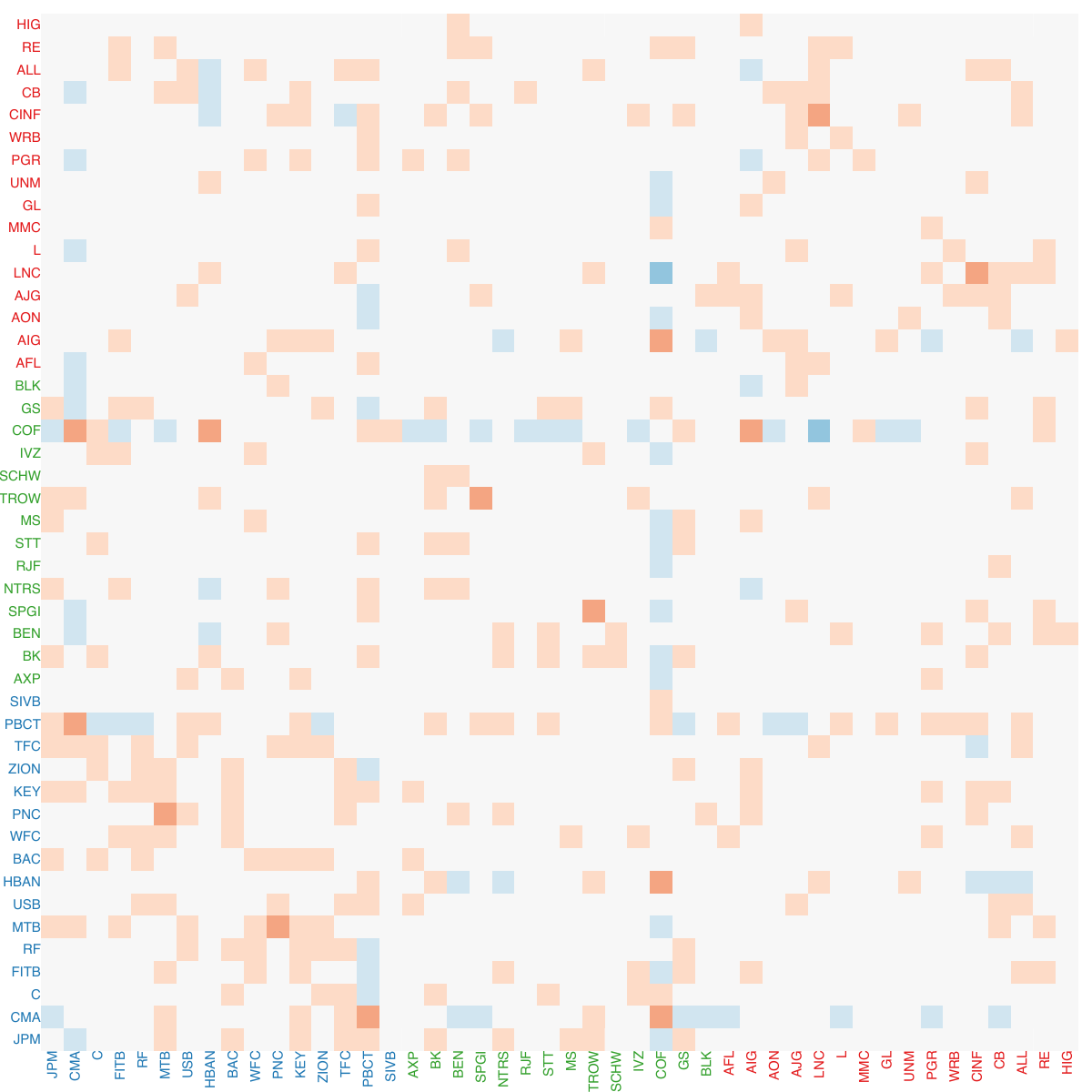} \\[-.1pt]
\multicolumn{3}{c}{\footnotesize 03/2007--02/2008}
\\
\includegraphics[width = .21\textwidth]{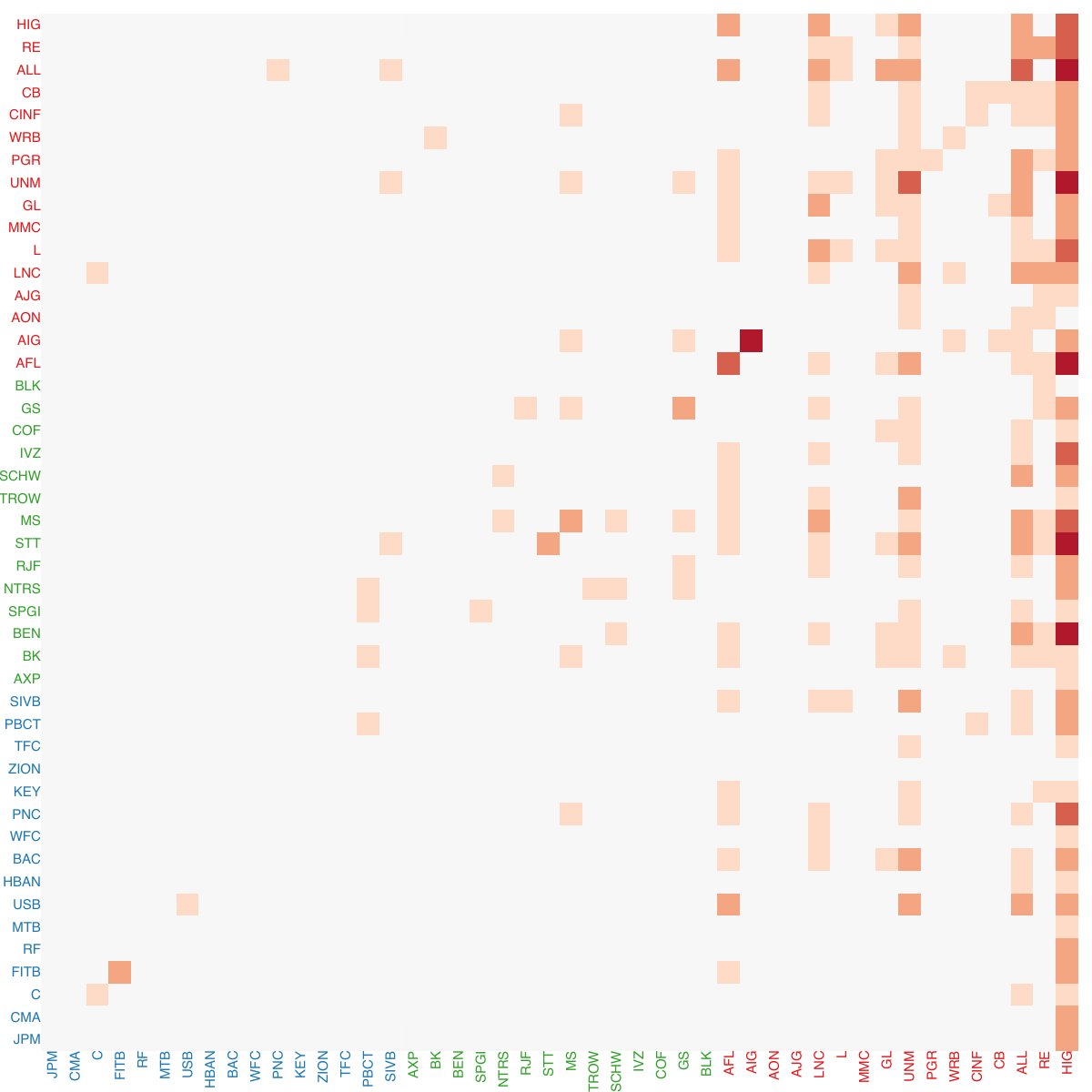} 
&\includegraphics[width = .21\textwidth]{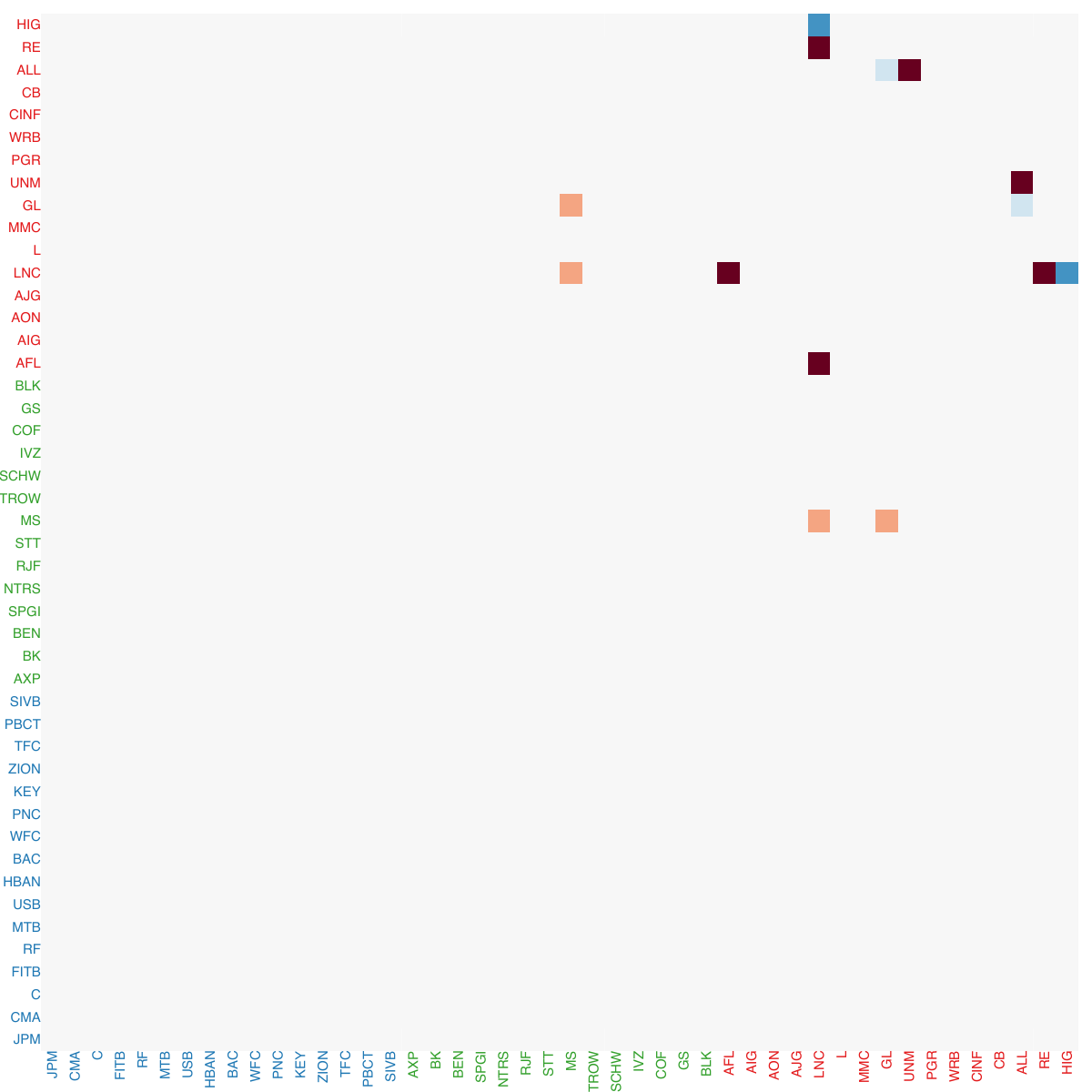} 
&\includegraphics[width = .21\textwidth]{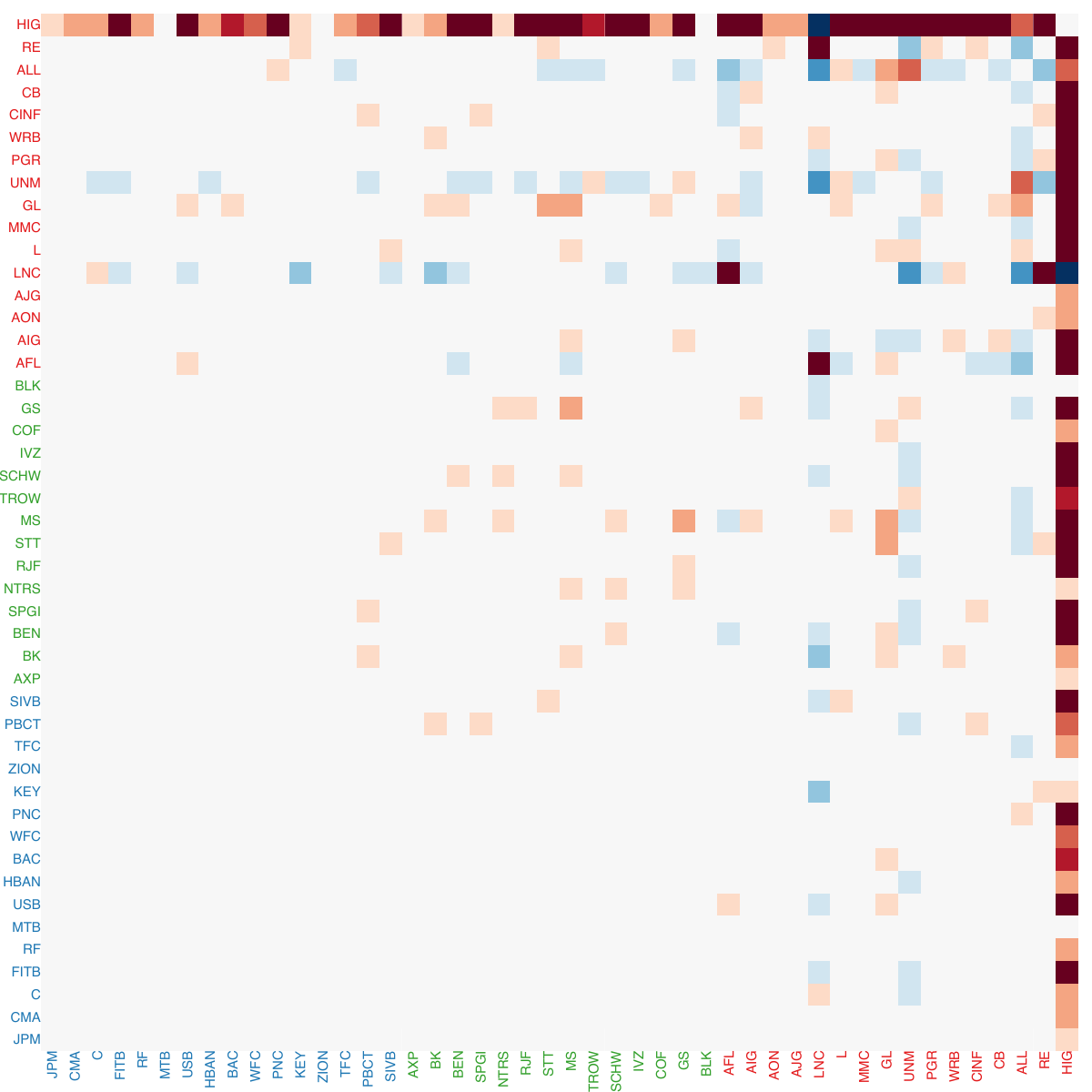} \\[-.1pt]
\multicolumn{3}{c}{\footnotesize 03/2008--02/2009}\\
\\
\includegraphics[width = .21\textwidth]{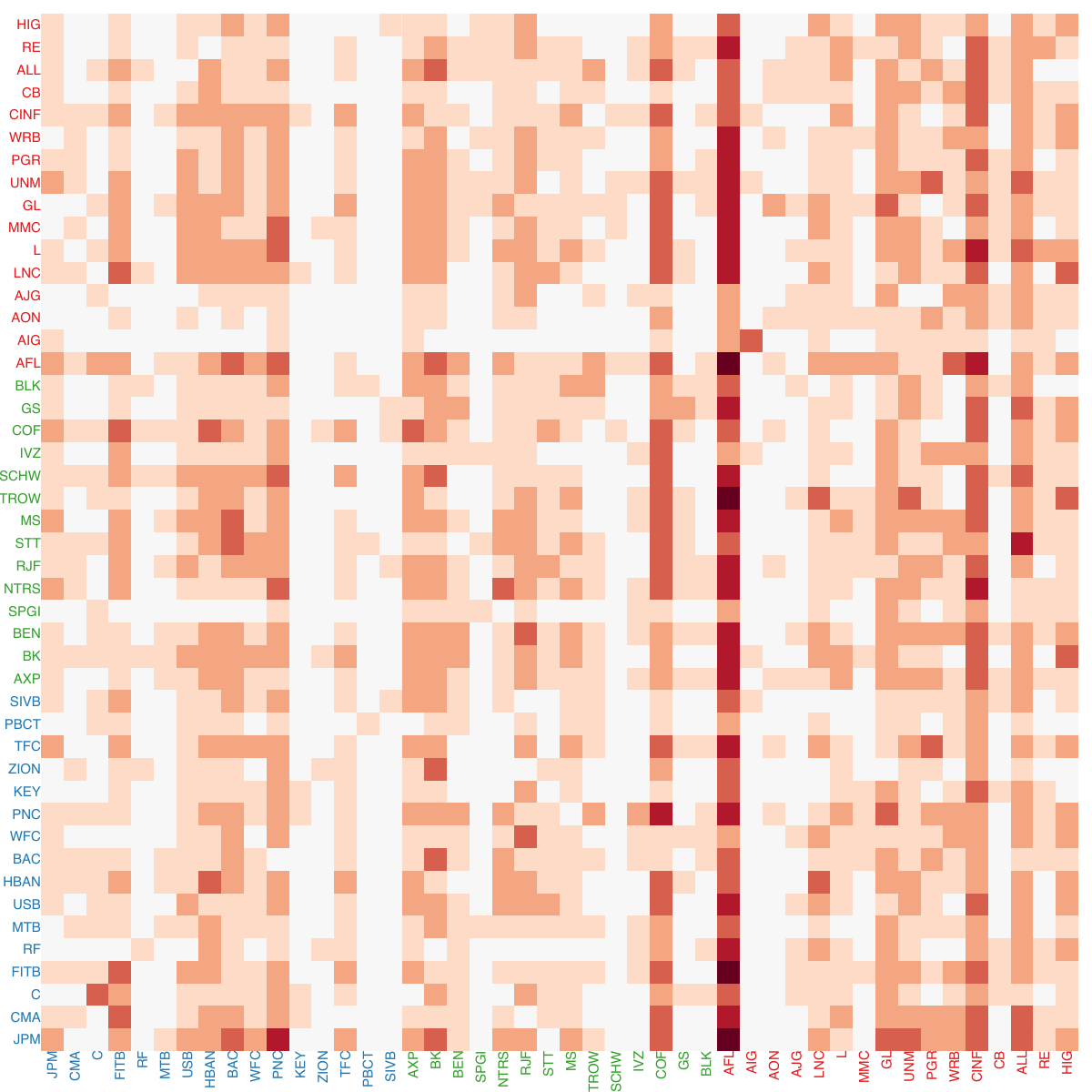} 
&\includegraphics[width = .21\textwidth]{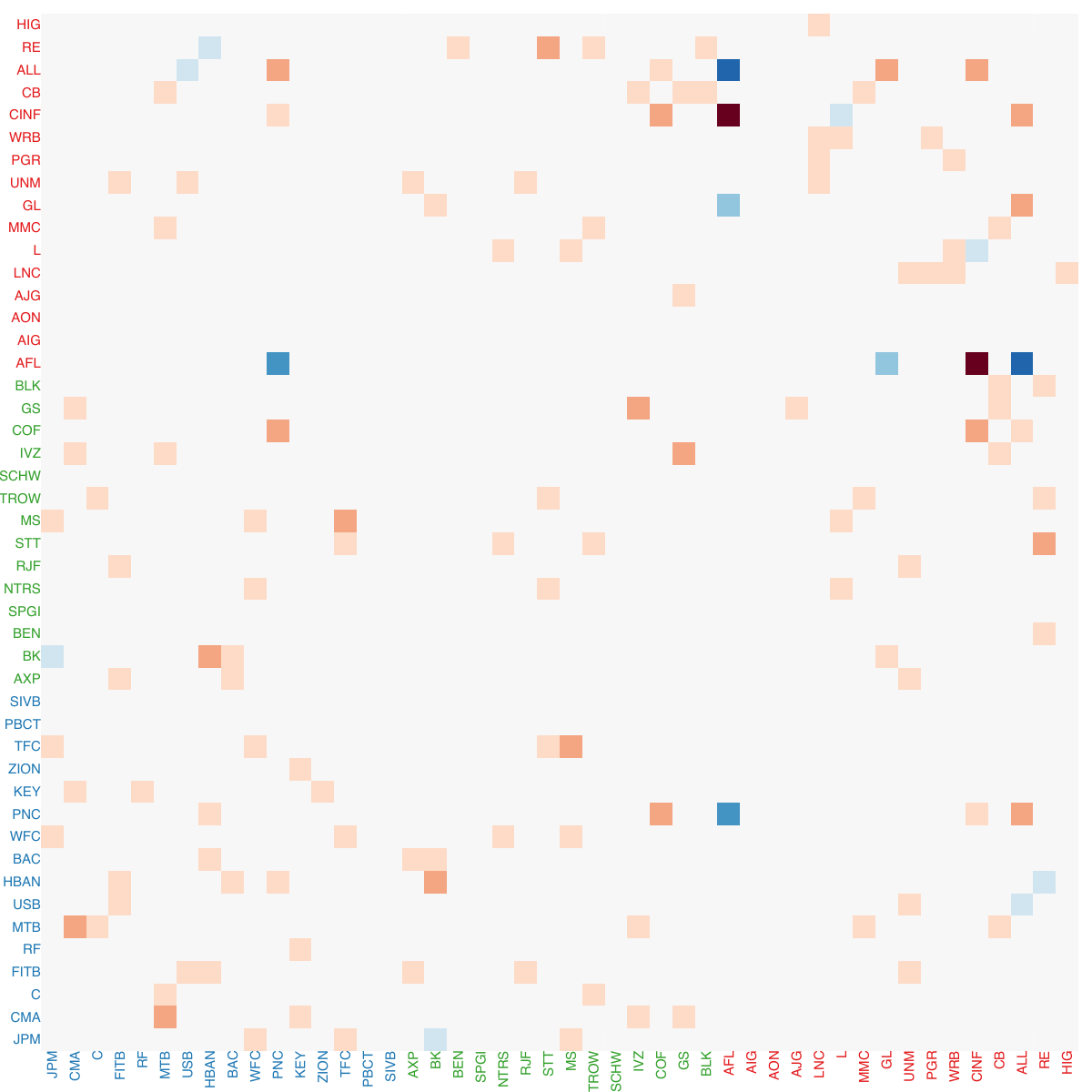} 
&\includegraphics[width = .21\textwidth]{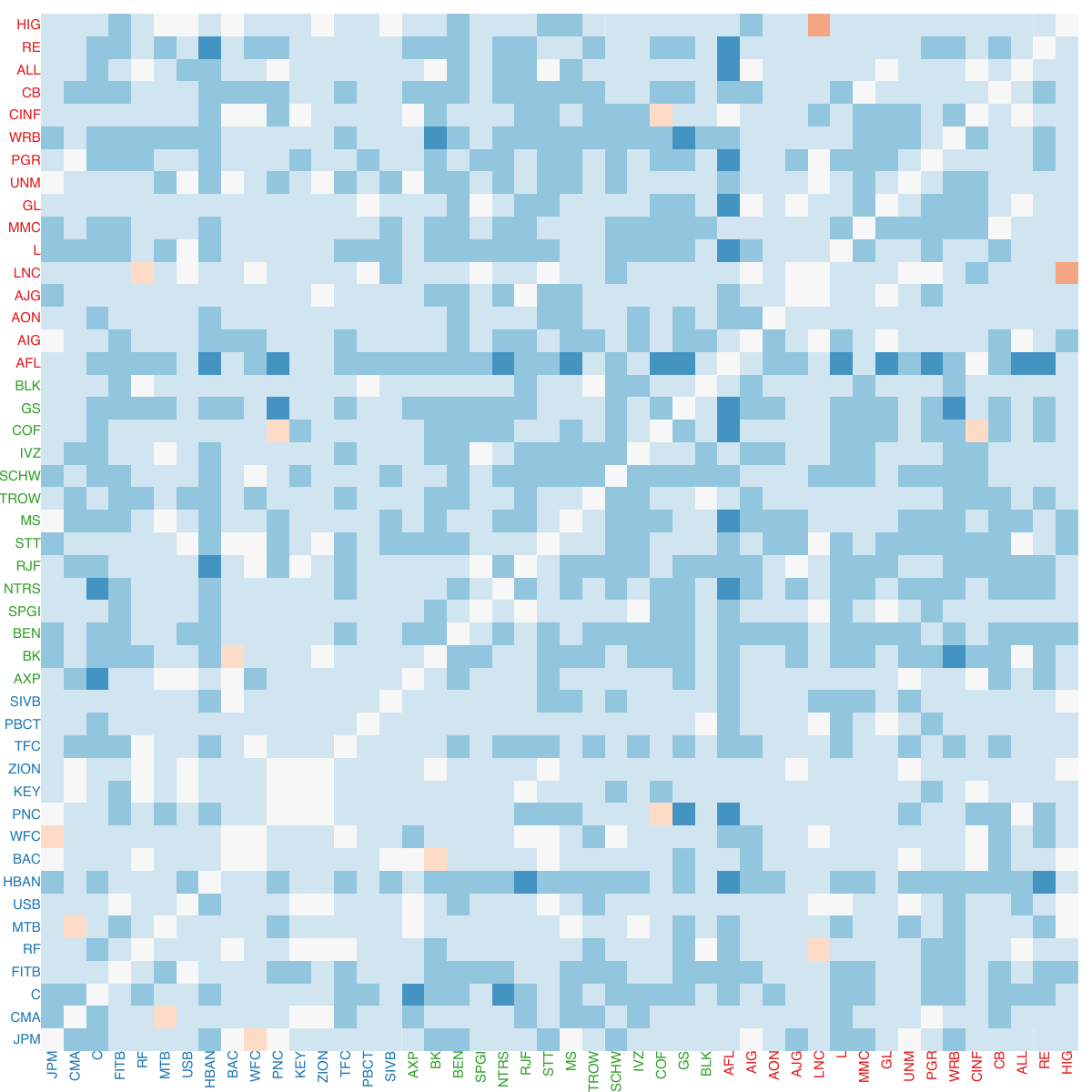}\\ [-.1pt]
\multicolumn{3}{c}{\footnotesize 03/2009--02/2010}\\ [1pt]
\end{tabular}

\caption{\small Heat maps of the estimators of the VAR transition matrices $\wh{\mbf A}_1$, partial correlations from $\wh{\bm\Delta}$ and long-run partial correlations from $\wh{\bm\Omega}$ (left to right),
which in turn estimate the networks $\mc N^{\dir}$, $\mc N^{\undir}$ and $\mc N^{\lr}$,
respectively,
over three selected periods.
The grouping of the companies according to their industry classifications 
are indicated by the axis label colours.
The heat maps in the left column are in the scale of $[-0.81, 0.81]$
while the others are in the scale of $[-1, 1]$, 
with red hues denoting large positive values and blue hues large negative values.}
\label{fig:real:lasso}
\end{figure}


\subsubsection{Forecasting}
\label{sec:real:forecast}

\begin{table}[htb!]
\caption{\small Mean, median and standard errors of $\text{FE}^{\text{avg}}_{T + 1}$ and 
$\text{FE}^{\text{max}}_{T + 1}$ on the trading days in $2012$
for $\wh{\mbf X}_{T + 1 \vert T}(n)$
in comparison with AR and FarmPredict \citep{fan2021bridging} forecasts.
The smallest entry in each row is highlighted in bold.}
\label{table:real:forecast}
\centering
{\footnotesize
\begin{tabular}{cc cc cc}
\toprule										
&	&	\multicolumn{2}{c}{FNETS} &				&			\\	
&	&	$\wh{\bm\chi}^{\static}_{T + 1 \vert T}(n)$ &	 $\wh{\bm\chi}^{\va}_{T + 1 \vert T}(n)$ & AR &	 	FarmPredict 	\\	\cmidrule(lr){1-2} \cmidrule(lr){3-4} \cmidrule(lr){5-5} \cmidrule(lr){6-6}
$\text{FE}^{\text{avg}}$ &	Mean &	 {\bf 0.7258} &	0.7466 &	0.7572 &	 	0.7616	\\	
&	Median &	{\bf 0.6029} &	0.6412 &	0.6511 &	 	0.6243	\\	
&	SE &	0.4929 &	{\bf 0.3748} &	0.4162 &	 	0.4946	\\	
\cmidrule(lr){1-2} \cmidrule(lr){3-4} \cmidrule(lr){5-5} \cmidrule(lr){6-6}
$\text{FE}^{\text{max}}$ &	Mean & {\bf 	0.8433} &	0.8729 &	0.879 &	 	0.8745	\\	
&	Median &	{\bf 0.7925} &	0.8088 &	0.8437 &	 	0.8259	\\	
&	SE &	0.2331 &	 0.2246 &		{\bf 0.2169} &	 	0.2337	\\	\bottomrule
\end{tabular}
}
\end{table}

We perform a rolling window-based forecasting exercise
on the trading days in $2012$.
Starting from $T = 3016$ (the first trading day in $2012$), we forecast $\mbf X_{T + 1}$ as
$\wh{\mbf X}_{T + 1 \vert T}(n) = \wh{\bm\chi}_{T + 1 \vert T}(n) + \wh{\bm\xi}_{T + 1 \vert T}(n)$,
where $\wh{\bm\chi}_{T + 1 \vert T}(n)$ (resp. $\wh{\bm\xi}_{T + 1 \vert T}(n)$)
denotes the forecast of $\bm\chi_{T + 1}$ (resp. $\bm\xi_{T + 1}$) using 
the preceding $n$ data points $\{\mbf X_t, \, T - n + 1 \le t \le T\}$. We set $n = 252$.
After the forecast $\wh{\mbf X}_{T + 1 \vert T}(n)$ is generated, 
we update $T \leftarrow T + 1$ and repeat the above procedure until $T = 3267$ 
(the last trading day in $2012$) is reached.

For $\wh{\bm\chi}_{T + 1 \vert T}(n)$, we consider the forecasting methods
derived under the static factor model (Section~\ref{sec:2005}, denoted by $\wh{\bm\chi}^{\static}_{T + 1 \vert T}(n)$) 
and unrestricted GDFM (Appendix~\ref{sec:2017}, $\wh{\bm\chi}^{\va}_{T + 1 \vert T}(n)$).
Following the analysis in Section~\ref{sec:real:network}, we set $d = 1$ when producing $\wh{\bm\xi}_{T + 1 \vert T}(n)$.
Additionally, we report the forecasting performance of FarmPredict \citep{fan2021bridging}, which first fits an AR model to each of the $p$ series (`AR'), projects the residuals on their principal components, and then fits VAR models to what remains via Lasso. 
Combining the three steps gives the final forecast $\wh{\mbf X}^{\text{FARM}}_{T + 1 \vert T}(n)$.
The forecast produced by the first step univariate AR modelling,
denoted by $\wh{\mbf X}^{\text{AR}}_{T + 1 \vert T}(n)$, is also included for comparison.

We evaluate the performance of $\wh{\mbf X}_{T + 1 \vert T}$ 
using two measures of errors
$\text{FE}^{\text{avg}}_{T + 1} = 
\vert \mbf X_{T + 1} \vert_2^{-2} \cdot \vert \mbf X_{T + 1} - \wh{\mbf X}_{T + 1 \vert T} \vert_2^2$ and
$\text{FE}^{\text{max}}_{T + 1} = 
\vert \mbf X_{T + 1} \vert_\infty^{-1} \cdot \vert \mbf X_{T + 1} - \wh{\mbf X}_{T + 1 \vert T} \vert_\infty$,
see Table~\ref{table:real:forecast} for the summary of the forecasting results.
Among the forecasts generated by FNETS, the one based on $\wh{\bm\chi}^{\static}_{T + 1 \vert T}(n)$ performs the best in this exercise,
which outperforms $\wh{\mbf X}^{\text{AR}}_{T + 1 \vert T}(n)$
and $\wh{\mbf X}^{\text{FARM}}_{T + 1 \vert T}(n)$ according 
to both $\text{FE}^{\text{avg}}$ and $\text{FE}^{\text{max}}$ on average. 
As noted in Appendix~\ref{app:sim:forecast}, the forecast based on $\wh{\bm\chi}^{\va}_{T + 1 \vert T}$ shows instabilities and generally is outperformed by the one based on $\wh{\bm\chi}^{\static}_{T + 1 \vert T}$, but nonetheless performs reasonably well.
Given the high level of co-movements and persistence in the data, the good performance of FNETS is mainly attributed to the way we forecast the factor-driven component, which is based on the estimators derived under GFDM that fully exploit all the dynamic co-dependencies (see also the results obtained by \citealp{barigozzi2017generalized} on a similar dataset).


\section{Conclusions}
\label{sec:conc}

We propose and study the asymptotic properties of FNETS, a network estimation and forecasting methodology
for high-dimensional time series under a dynamic factor-adjusted VAR model.
Our estimation strategy fully takes into account the latency of the VAR process of interest 
via regularised YW estimation which, distinguished from the existing approaches, brings in methodological simplicity as well as theoretical benefits.
We investigate the theoretical properties of FNETS under general conditions permitting weak factors and heavier tails than sub-Gaussianity commonly imposed in the high-dimensional VAR literature, and provide new insights into the interplay between various quantities determining the sparsity of the networks underpinning VAR processes, factor strength and tail behaviour, on the estimation of those networks.
Simulation studies and an application to a panel of financial time series show that FNETS is particularly useful for network analysis as it is able to discover  group structures as well as producing accurate forecasts for highly co-moving and persistent time series such as log-volatilities. 
The R software {\tt fnets} implementing FNETS is available from CRAN \citep{fnets}.

\bibliographystyle{apalike}
\bibliography{fbib}

\clearpage

\appendix

\numberwithin{equation}{section}
\numberwithin{figure}{section}
\numberwithin{table}{section}

\section{Estimation of VAR parameters and $\mc N^{\dir}$ via Dantzig selector estimator}
\label{sec:beta:ds}

Recalling the notations in Section~\ref{sec:idio:beta}, we consider the following constrained $\ell_1$-minimisation approach closely related to the Dantzig selector proposed for high-dimensional linear regression \citep{candes2007}, for the estimation of $\bm\beta = [\mbf A_1, \ldots, \mbf A_d]^\top$:
\begin{align}
\label{eq:ds}
\wh{\bm\beta}^{\ds} = {\arg\min}_{\mbf M \in \R^{pd \times p}} \ \vert \mbf M \vert_1
\quad \text{subject to} \quad
\l\vert \wh{\bbG} \mbf M - \wh{\bbg} \r\vert_\infty \le \lambda^{\ds},
\end{align}
where $\lambda^{\ds} > 0$ is a tuning parameter.

To investigate the theoretical properties of $\wh{\bm\beta}^{\ds}$, we measure the (weak) sparsity of $\bm\beta$ by $s_0(\varrho) = \sum_{j = 1}^p s_{0, j}$ with $s_{0, j}(\varrho) = \sum_{\ell = 1}^d \sum_{k = 1}^p \vert A_{\ell, jk} \vert^\varrho$ for some $\varrho \in [0, 1)$. In particular, when $\varrho = 0$, they coincide with the sparsity measures defined in the main text, as $s_{0, j} = s_{0, j}(0) = \vert \bm\beta_{\cdot j} \vert_0$, $s_0 = \sum_{j = 1}^p s_{0, j}$ and $s_{\text{\upshape in}} = \max_{1 \le j \le p} s_{0, j}$.

\begin{prop}
\label{prop:idio:est:dantzig}
{\it Set $\lambda^{\ds} \ge C_\xi (\Vert \bm\beta \Vert_1 + 1) (\vartheta_{n, p} \vee m^{-1} \vee p^{-1/2})$ in~\eqref{eq:ds}. Then on $\mc E_{n, p}$ defined in~\eqref{eq:idio:set},
for any $\varrho \in [0, 1)$,
conditional on $\mc E_{n, p}$ defined in~\eqref{eq:idio:set}, we have
\begin{align*}
& \max_{1 \le j \le p} \l\vert \wh{\bm\beta}^{\ds}_{\cdot j} - \bm\beta_{\cdot j} \r\vert_2 \lesssim
\frac{\sqrt{s_{\text{\upshape in}}} \lambda^{\ds}}{\pi m_\xi}, \quad
\max_{1 \le j \le p} \l\vert \wh{\bm\beta}^{\ds}_{\cdot j} - \bm\beta_{\cdot j} \r\vert_1 \lesssim \frac{s_{\text{\upshape in}} \lambda^{\ds}}{\pi m_\xi},
\\
& \max_{1 \le j \le p} \l\vert \wh{\bm\beta}^{\ds}_{\cdot j} - \bm\beta_{\cdot j} \r\vert_\infty \lesssim \min\l(\Vert \bbG^{-1} \Vert_1\lambda^{\ds}, \frac{\sqrt{s_{\text{\upshape in}}} \lambda^{\ds}}{\pi m_\xi} \r) \quad \text{ and}
\\
& \l\vert \wh{\bm\beta}^{\ds} - \bm\beta \r\vert_1 
\le 6 s_0(\varrho) \l(\Vert \bbG^{-1} \Vert_1 \lambda^{\ds}\r)^{1 - \varrho}.
\end{align*}}
\end{prop}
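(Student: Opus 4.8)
The plan is to follow the standard Dantzig-selector analysis \citep{candes2007, BRT09}, adapted to the latent time-series setting by reusing the deviation control and the restricted eigenvalue property already established for the Lasso-type estimator in Proposition~\ref{prop:idio:est:lasso}. Everything is a deterministic guarantee conditional on the event $\mc E_{n, p}$, so the randomness enters only through Proposition~\ref{prop:acv:idio}.

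First I would show that the true parameter $\bm\beta$ is feasible for the program~\eqref{eq:ds} on $\mc E_{n, p}$. Since the population Yule--Walker identity gives $\bbG\bm\beta = \bbg$ exactly, I write $\wh{\bbG}\bm\beta - \wh{\bbg} = (\wh{\bbG} - \bbG)\bm\beta - (\wh{\bbg} - \bbg)$ and bound it by $\vert\wh{\bbG} - \bbG\vert_\infty \Vert\bm\beta\Vert_1 + \vert\wh{\bbg} - \bbg\vert_\infty$. Both perturbation terms are controlled on $\mc E_{n, p}$ by $C_\xi(\vartheta_{n, p}\vee m^{-1}\vee p^{-1/2})$ via Proposition~\ref{prop:acv:idio}, so the chosen $\lambda^{\ds} \ge C_\xi(\Vert\bm\beta\Vert_1 + 1)(\vartheta_{n, p}\vee m^{-1}\vee p^{-1/2})$ renders $\bm\beta$ feasible. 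Minimality of the objective then yields $\vert\wh{\bm\beta}^{\ds}\vert_1 \le \vert\bm\beta\vert_1$, while both iterates being feasible gives the key constraint $\vert\wh{\bbG}(\wh{\bm\beta}^{\ds} - \bm\beta)\vert_\infty \le 2\lambda^{\ds}$.

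Next I note that both objective and constraint in~\eqref{eq:ds} decouple across the $p$ columns, reducing the problem to $p$ vector Dantzig problems for $\wh{\bm\beta}^{\ds}_{\cdot j}$ with truth $\bm\beta_{\cdot j}$ supported on a set of size $s_{0, j} \le s_{\text{\upshape in}}$. Writing $\mbf h_j = \wh{\bm\beta}^{\ds}_{\cdot j} - \bm\beta_{\cdot j}$, the $\ell_1$-minimality produces the cone condition $\vert\mbf h_{j, S^c}\vert_1 \le \vert\mbf h_{j, S}\vert_1$, on which I invoke the restricted eigenvalue condition for $\wh{\bbG}$ verified in the proof of Proposition~\ref{prop:idio:est:lasso} (with constant $\asymp \pi m_\xi$, following from $\Lambda_{\min}(\bbG) \ge 2\pi m_\xi$ and the smallness hypothesis $C_\xi s_{\text{\upshape in}}(\vartheta_{n, p}\vee m^{-1}\vee p^{-1/2}) \le \pi m_\xi/16$). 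The basic inequality $\vert\mbf h_j^\top\wh{\bbG}\mbf h_j\vert \le \vert\mbf h_j\vert_1\vert\wh{\bbG}\mbf h_j\vert_\infty \le 2\lambda^{\ds}\vert\mbf h_j\vert_1$, combined with the cone bound $\vert\mbf h_j\vert_1 \le 2\sqrt{s_{0, j}}\vert\mbf h_j\vert_2$ and the restricted eigenvalue lower bound, delivers the stated $\ell_2$ and $\ell_1$ rates. For the $\ell_\infty$ rate, one branch follows from the $\ell_2$ bound; the alternative $\Vert\bbG^{-1}\Vert_1\lambda^{\ds}$ comes from $\vert\mbf h_j\vert_\infty = \vert\bbG^{-1}\wh{\bbG}\mbf h_j - \bbG^{-1}(\wh{\bbG} - \bbG)\mbf h_j\vert_\infty$, using $\Vert\bbG^{-1}\Vert_\infty = \Vert\bbG^{-1}\Vert_1$ (symmetry of $\bbG$), $\vert\wh{\bbG}\mbf h_j\vert_\infty \le 2\lambda^{\ds}$, and absorbing the lower-order term $\vert(\wh{\bbG} - \bbG)\mbf h_j\vert_\infty$.

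Finally, for the weak-sparsity bound on the full matrix I would combine the element-wise bound $\vert\wh{\bm\beta}^{\ds} - \bm\beta\vert_\infty \le 2\Vert\bbG^{-1}\Vert_1\lambda^{\ds} =: \tau$ with a Bickel--Ritov--Tsybakov-type thresholding over the weak-$\ell_\varrho$ ball: splitting each column's coordinates into those exceeding $\tau$ and the remainder, the count of large coordinates and the $\ell_1$ mass of the small ones are both controlled through $s_0(\varrho) = \sum_j s_{0, j}(\varrho)$, producing $\vert\wh{\bm\beta}^{\ds} - \bm\beta\vert_1 \le 6 s_0(\varrho)(\Vert\bbG^{-1}\Vert_1\lambda^{\ds})^{1 - \varrho}$. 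I expect the main obstacle to be this last step, where the clean exact-sparse cone is unavailable (the support is no longer well defined) and must be replaced by the thresholding argument; the secondary technical point is controlling the perturbed inverse $\Vert\wh{\bbG}^{-1}\Vert$ entering the $\ell_\infty$ bound, which I would handle by a Neumann-series/perturbation estimate on the restricted set rather than assuming global invertibility of the merely positive semi-definite $\wh{\bbG}$.
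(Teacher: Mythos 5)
Your proposal is correct and follows essentially the same route as the paper's proof: feasibility of $\bm\beta$ via the deviation bound, the basic inequality $\mbf v^\top\wh{\bbG}\mbf v \le 2\lambda^{\ds}\vert\mbf v\vert_1$ combined with the cone and restricted-eigenvalue conditions, the $\bbG^{-1}$ decomposition for the $\ell_\infty$ bound, and the thresholding-over-a-weak-$\ell_\varrho$-ball argument for the final $\ell_1$ bound. The only superfluous worry is your closing remark about controlling $\Vert\wh{\bbG}^{-1}\Vert$ — neither your sketch nor the paper ever inverts $\wh{\bbG}$; only $\bbG^{-1}$ is used, which exists since $\Lambda_{\min}(\bbG)\ge 2\pi m_\xi>0$.
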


Proposition~\ref{prop:idio:est:dantzig} shows that $\wh{\bm\beta}^{\ds}$ achieves consistency when $\bm\beta$ is only weakly sparse with $\varrho > 0$, but the estimation error involves a multiplicative factor $\Vert \bbG^{-1} \Vert_1$.
This is linked to the sparsity of the ACV matrices of $\bm\xi_t$
which is related to, but is not fully captured by, the sparsity of $\bm\beta$.
For example, \cite{han2015direct} show that when $\bbG$ is strictly diagonally dominant \citep[Definition~6.1.9]{horn1985} with $\min_{1 \le i \le p} \min_{1 \le k \le d} ( 2 \vert \gamma_{\xi, ii}(0) \vert 
- \sum_{\ell = -d + k}^{k - 1} \sum_{i' = 1}^p \vert \gamma_{\xi, ii'}(\ell) \vert ) \ge \gamma_\circ > 0$ (where $\bm\Gamma_\xi(\ell) = [\gamma_{\xi, ii'}(\ell)]_{i, i' = 1}^p$),
we have $\Vert \bbG^{-1} \Vert_1 \le \gamma_\circ^{-1}$.

\begin{cor}
\label{cor:idio:est:dir:ds}
{\it Suppose that the conditions of Proposition~\ref{prop:idio:est:dantzig} are met.
If 
\begin{align*}
\min_{(i, j) \in \text{\upshape supp}(\bm\beta)} \vert \beta_{ij} \vert > 2\mathfrak{t}
\end{align*}
with $\mathfrak{t} = 2 \Vert \bbG^{-1} \Vert_1 \lambda^{\ds}$,
we have $\text{\upshape{sign}}(\wh{\bm\beta}^{\ds}(\mathfrak{t})) = \text{\upshape{sign}}(\bm\beta)$ on $\mc E_{n, p}$.}
\end{cor}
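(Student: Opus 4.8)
The plan is to mimic, essentially verbatim, the thresholding argument behind Corollary~\ref{cor:idio:est:dir:lasso}, since the Dantzig selector enters only through the entrywise estimation-error bound supplied by Proposition~\ref{prop:idio:est:dantzig}. The starting point is the $\ell_\infty$ guarantee in that proposition, which, conditional on $\mc E_{n, p}$, controls $\max_{1 \le j \le p} \vert \wh{\bm\beta}^{\ds}_{\cdot j} - \bm\beta_{\cdot j} \vert_\infty$ by $\min(\Vert \bbG^{-1} \Vert_1 \lambda^{\ds}, \sqrt{s_{\text{\upshape in}}} \lambda^{\ds}/(\pi m_\xi))$ up to an absolute constant. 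First I would record that, taking the first argument of the minimum and absorbing the implicit constant into the factor $2$, this yields $\vert \wh{\bm\beta}^{\ds} - \bm\beta \vert_\infty \le \mathfrak{t}$ on $\mc E_{n, p}$ with $\mathfrak{t} = 2 \Vert \bbG^{-1} \Vert_1 \lambda^{\ds}$. This is the only probabilistic input; everything downstream is deterministic on $\mc E_{n, p}$.

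With the uniform bound $\vert \wh{\bm\beta}^{\ds} - \bm\beta \vert_\infty \le \mathfrak{t}$ in hand, I would split into two cases over the index pairs $(i, j)$. For $(i, j)$ with $\beta_{ij} = 0$, we have $\vert \wh{\beta}^{\ds}_{ij} \vert = \vert \wh{\beta}^{\ds}_{ij} - \beta_{ij} \vert \le \mathfrak{t}$, so the thresholded estimator sets $\wh{\beta}^{\ds}_{ij}(\mathfrak{t}) = 0$ and the (zero) sign is recovered. For $(i, j) \in \text{\upshape supp}(\bm\beta)$, the beta-min condition $\vert \beta_{ij} \vert > 2 \mathfrak{t}$ combined with the error bound gives $\vert \wh{\beta}^{\ds}_{ij} \vert \ge \vert \beta_{ij} \vert - \mathfrak{t} > \mathfrak{t}$, so the entry survives thresholding; moreover $\vert \wh{\beta}^{\ds}_{ij} - \beta_{ij} \vert \le \mathfrak{t} < \vert \beta_{ij} \vert$ forces $\text{\upshape{sign}}(\wh{\beta}^{\ds}_{ij}) = \text{\upshape{sign}}(\beta_{ij})$, and since thresholding leaves this entry unchanged, $\text{\upshape{sign}}(\wh{\beta}^{\ds}_{ij}(\mathfrak{t})) = \text{\upshape{sign}}(\beta_{ij})$. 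Collecting the two cases over all $(i, j)$ then gives $\text{\upshape{sign}}(\wh{\bm\beta}^{\ds}(\mathfrak{t})) = \text{\upshape{sign}}(\bm\beta)$ on $\mc E_{n, p}$.

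The argument is routine and I do not expect a genuine obstacle: the substance is entirely carried by Proposition~\ref{prop:idio:est:dantzig}. The only point requiring care is the constant bookkeeping in the first step, namely verifying that the factor $2$ in the threshold $\mathfrak{t}$ indeed dominates the implicit constant in the $\ell_\infty$ display of Proposition~\ref{prop:idio:est:dantzig} (which itself traces back to the Dantzig feasibility of $\bm\beta$ on $\mc E_{n, p}$, giving $\vert \wh{\bbG}(\wh{\bm\beta}^{\ds} - \bm\beta) \vert_\infty \le 2 \lambda^{\ds}$), so that $\mathfrak{t}$ is a true upper bound for the entrywise error rather than merely of matching order. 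Once that is confirmed, the sign-recovery conclusion is immediate.
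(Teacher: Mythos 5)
Your proposal is correct and is exactly the argument the paper intends: the corollary is stated without a separate proof because it follows immediately from the entrywise bound, and your case split (zero entries are killed by the threshold, support entries survive and keep their sign under the beta-min condition) is the standard route. Your one flagged concern is also resolved correctly — the display in Proposition~\ref{prop:idio:est:dantzig} uses $\lesssim$, but its proof (equation~\eqref{eq:prop:idio:est:dantzig:one}) gives the explicit bound $\vert \wh{\bm\beta}^{\ds}_{\cdot j} - \bm\beta_{\cdot j} \vert_\infty \le 2 \Vert \bbG^{-1} \Vert_1 \lambda^{\ds}$ on $\mc E_{n,p}$, so $\mathfrak{t}$ is a genuine upper bound on the entrywise error, not merely of matching order.
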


\section{Forecasting the factor-driven component under unrestricted GDFM}
\label{sec:2017}

In this section, we present an alternative method for estimating the best linear predictor~\eqref{eq:gdfm:best:lin} of the common component without the restrictive assumption made in Section~\ref{sec:2005}.
The estimator, denoted by $\bm\chi^{\va}_{t + a \vert t}$, has been proposed by \cite{forni2017dynamic}, and we provide a new theoretical result that establishes the $\ell_\infty$-norm consistency of this estimator.

We first make a mild assumption that the filter $\mc B(L)$ in~\eqref{eq:gdfm} is rational:
\begin{assum}
\label{cond:ration}
{\it  Each filter $\mc B_{ij}(L) = \sum_{\ell = 0}^\infty B_{\ell, ij} L^\ell$ is a ratio of finite-order polynomials in $L$, i.e.\
$\mc B_{ij}(L) = (\mc B^{(2)}_{ij}(L))^{-1}\mc B^{(1)}_{ij}(L)$ with
$\mc B^{(k)}_{ij}(L) = \sum_{\ell = 0}^{s^{(k)}} B^{(k)}_{\ell, ij} L^\ell$, $k = 1, 2$,
for all $1 \le i \le p$ and $1 \le j \le q$
with $s^{(1)}, s^{(2)} \in \N$ not dependent on $i$ and $j$. 
Moreover,
\begin{enumerate}[wide, label = (\roman*)]
\item $\max_{1 \le i \le p} \max_{1 \le j \le q} \max_{0 \le \ell \le s^{(1)}} \vert B^{(1)}_{\ell, ij} \vert \le B^{(1)}_\chi$ for some constant $B^{(1)}_\chi >0$, and

\item for all $1 \le i \le p$ and $1 \le j \le q$, we have
$\mc B^{(2)}_{ij}(z) \ne 0$ for all $\vert z \vert \le 1$.
\end{enumerate}}    
\end{assum}

\cite{forni2015} establish that for generic values of the parameters $B^{(1)}_{\ell, ij}$ and $B^{(2)}_{\ell, ij}$ defined in Assumption~\ref{cond:ration} (i.e.\ outside a countable union of nowhere dense subsets),
$\bm\chi_t$ admits a blockwise VAR representation.
Supposing that $p = N(q + 1)$ with some integer $N$ for convenience,
each $(q+1)$-dimensional block  
$\bm\chi_t^{(h)} = (\chi_{(q + 1)(h - 1) + i, t}, \, 1 \le i \le q + 1)^\top$, $1 \le h \le N$,
admits a singular, {\it finite-order} VAR representation 
with $\mbf u_t$ as the $q$-dimensional innovations. 
We formally impose this genericity result as an assumption.

\begin{assum} [Blockwise VAR representation]
\label{assum:common:var} \hfill
\it{
\begin{enumerate}[wide, label = (\roman*)]
\item \label{cond:block:var} 
$\bm\chi_t$ admits a blockwise VAR representation
\begin{align}
\label{eq:gdfm:var}
\mc A_\chi(L) \bm\chi_t = \bmx
\mc A_\chi^{(1)}(L) & \mbf O & \ldots & \mbf O \\
\mbf O & \mc A_\chi^{(2)}(L) & \ldots & \mbf O \\
\vdots&\vdots & \ddots &\vdots \\
\mbf O & \mbf O & \ldots & \mc A_\chi^{(N)}(L) 
\emx \; 
\bmx \bm\chi^{(1)}_t \\ \bm\chi^{(2)}_t \\ \vdots \\ \bm\chi^{(N)}_t \emx
= \bmx \mbf R^{(1)} \\ \mbf R^{(2)} \\ \vdots \\ \mbf R^{(N)} \emx \mbf u_t =: \mbf R \mbf u_t
\end{align}
where, for each $h = 1, \ldots, N$:
\begin{enumerate}[wide, label = (\alph*)]
\item $\mc A_\chi^{(h)}(L) = \mbf I - \sum_{\ell = 1}^{s_h} \mbf A_{\chi, \ell}^{(h)} L^\ell$
with $\mbf A_{\chi, \ell}^{(h)} \in \R^{(q + 1) \times (q + 1)}$
is of degree $s_h \le q s^{(1)} + q^2 s^{(2)}$
and $\det(\mc A_\chi^{(h)}(z)) \ne 0$ for all $\vert z \vert \le 1$.
Then, we define $\mc A_\chi(L) = \mbf I - \sum_{\ell = 1}^s \mbf A_{\chi, \ell} L^\ell$
with $s = \max_{1 \le h \le N} s_h$ where $\mbf A^{(h)}_{\chi, \ell} = \mbf O$ for $\ell \ge s_h + 1$.

\item $\mbf R^{(h)} \in \R^{(q + 1) \times q}$ is of rank $q$
with $[\mbf R^{(h)}]_{ij} = B^{(1)}_{0, ij}$. 

\item The VAR representation in~\eqref{eq:gdfm:var} is unique, i.e.\ if 
$\wt{\mc A}_\chi^{(h)}(L) \bm\chi_t = \wt{\mbf R}^{(h)} \wt{\mbf u}_t$,
the degree of $\wt{\mc A}_\chi^{(h)}(L)$ does not exceed $s$ and
$\wt{\mbf u}_t$ is $q$-dimensional white noise, such that
$\wt{\mc A}_\chi^{(h)}(L) = \mc A_\chi^{(h)}(L)$,
$\wt{\mbf R}^{(h)} = \mbf R^{(h)}\bm{\mc O}^\top$ and $\wt{\mbf u}_t = \bm{\mc O} \mbf u_t$
for some orthogonal matrix $\bm{\mc O}\in \R^{q \times q}$.
\end{enumerate}

\item \label{cond:det} For $h = 1, \ldots, N$, define
$\bm\Gamma^{(h)}_\chi(\ell) = \E(\bm\chi^{(h)}_{t - \ell} (\bm\chi^{(h)}_t)^\top)$,
$\mbf B^{(h)}_\chi = [(\bm\Gamma^{(h)}_\chi(\ell))^\top, \, 1 \le \ell \le s_h]$ and
\begin{align*}
\mbf C^{(h)}_\chi = \bmx \bm\Gamma^{(h)}_\chi(0) & \bm\Gamma^{(h)}_\chi(- 1) & \ldots & \bm\Gamma^{(h)}_\chi(-s_h + 1) \\
\vdots&\vdots & \ddots & \vdots\\
\bm\Gamma^{(h)}_\chi(s_h - 1) & \bm\Gamma^{(h)}_\chi(s_h - 2) & \ldots & \bm\Gamma^{(h)}_\chi(0)
\emx,
\end{align*}
such that $\mbf A^{(h)}_\chi = [\mbf A^{(h)}_{\chi, \ell}, \, 1 \le \ell \le s_h]
= \mbf B^{(h)}_\chi (\mbf C^{(h)}_\chi)^{-1}$.
Then, there exists a constant $c_0>0$ such that 
$\min_{1 \le h \le N} \det(\mbf C^{(h)}_\chi) > c_0$.
\end{enumerate}
}
\end{assum}

Non-singularity of $\mbf C^{(h)}_\chi$ is implied by 
Assumption~\ref{assum:common:var}~\ref{cond:block:var}
but the condition~\ref{cond:det} is imposed
to ensure the boundedness of $\Vert \mbf A^{(h)}_\chi \Vert$. 
Let $\mbf Z_t = \mc A_\chi(L) \mbf X_t$ such that under~\eqref{eq:gdfm:var},
\begin{align}
\label{eq:static:two}
\mbf Z_t = \mbf R \mbf u_t + \mc A_\chi(L) \bm\xi_t =: \mbf V_t + \mbf W_t.
\end{align}
Then, $\mbf Z_t$ admits a static factor representation with $\mbf u_t$ as (static) factors.
Let $\bm\Gamma_z$ denote the covariance matrix of $\mbf Z_t$
and $\mu_{z, j}$ its $j$-th largest eigenvalue,
and similarly define $\bm\Gamma_v$, $\mu_{v, j}$, $\bm\Gamma_w$ and $\mu_{w, j}$.
As a consequence of Assumption~\ref{assum:idio},
the eigenvalues of $\bm\Gamma_w$ are bounded as below.
\begin{prop}
\label{prop:idio:eval:two}
{\it Under Assumptions~\ref{assum:common}, \ref{assum:idio}, \ref{assum:innov} and~\ref{assum:common:var},
there exists some constant $B_w~>~0$ such that $\mu_{w, 1} \le B_w$.}
\end{prop}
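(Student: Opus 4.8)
The plan is to pass to the frequency domain and exploit that $\mbf W_t = \mc A_\chi(L)\bm\xi_t$ is a finite-order linear filter of the idiosyncratic process, whose spectral density is already controlled by Proposition~\ref{prop:idio:eval}. Since the transfer function of $\mc A_\chi(L) = \mbf I - \sum_{\ell=1}^s \mbf A_{\chi, \ell} L^\ell$ is $\mc A_\chi(e^{-\iota\omega})$, the spectral density of $\mbf W_t$ is $\bm\Sigma_w(\omega) = \mc A_\chi(e^{-\iota\omega}) \bm\Sigma_\xi(\omega) (\mc A_\chi(e^{-\iota\omega}))^*$. Because $\bm\Gamma_w = \int_{-\pi}^\pi \bm\Sigma_w(\omega)\,d\omega$ is an integral of Hermitian positive semi-definite matrices, the map $\mbf A \mapsto \Lambda_{\max}(\mbf A)$ is subadditive on such matrices, so that
\begin{align*}
\mu_{w, 1} = \Lambda_{\max}(\bm\Gamma_w) \le 2\pi \sup_{\omega \in [-\pi, \pi]} \Lambda_{\max}(\bm\Sigma_w(\omega))
\le 2\pi B_\xi \sup_{\omega \in [-\pi, \pi]} \Vert \mc A_\chi(e^{-\iota\omega}) \Vert^2,
\end{align*}
where the last inequality uses $\Lambda_{\max}(\bm\Sigma_w(\omega)) \le \Vert \mc A_\chi(e^{-\iota\omega}) \Vert^2 \mu_{\xi, 1}(\omega)$ and the uniform bound $\mu_{\xi, 1}(\omega) \le B_\xi$ from Proposition~\ref{prop:idio:eval}.

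Thus it suffices to bound $\sup_\omega \Vert \mc A_\chi(e^{-\iota\omega}) \Vert$ uniformly in $p$. Here I would invoke the block-diagonal structure in Assumption~\ref{assum:common:var}~\ref{cond:block:var}: since $\mc A_\chi(e^{-\iota\omega})$ is block diagonal with blocks $\mc A_\chi^{(h)}(e^{-\iota\omega})$, its spectral norm equals $\max_{1 \le h \le N} \Vert \mc A_\chi^{(h)}(e^{-\iota\omega}) \Vert$, and by the triangle inequality each block satisfies $\Vert \mc A_\chi^{(h)}(e^{-\iota\omega}) \Vert \le 1 + \sum_{\ell=1}^{s_h} \Vert \mbf A_{\chi, \ell}^{(h)} \Vert$. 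The problem therefore reduces to establishing a uniform (in $h$, and hence in $p = N(q+1)$ as $N \to \infty$) bound on the VAR coefficient blocks $\mbf A_\chi^{(h)} = [\mbf A_{\chi, \ell}^{(h)}, \, 1 \le \ell \le s_h]$.

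The main obstacle is precisely this uniform boundedness of $\Vert \mbf A_\chi^{(h)} \Vert$, and I would obtain it from the Yule--Walker identity $\mbf A_\chi^{(h)} = \mbf B_\chi^{(h)} (\mbf C_\chi^{(h)})^{-1}$ supplied in Assumption~\ref{assum:common:var}~\ref{cond:det}, controlling the two factors separately. For the numerator, note that under Assumption~\ref{assum:common} each $\chi_{it}$ has variance bounded by $\Xi^2 \sum_{\ell \ge 0}(1+\ell)^{-2\varsigma} < \infty$, so all entries of the block autocovariances $\bm\Gamma_\chi^{(h)}(\ell)$ are uniformly bounded by Cauchy--Schwarz; since the block dimension $q+1$ and the order $s_h \le q s^{(1)} + q^2 s^{(2)}$ are fixed and independent of $h$ and $p$, this gives $\Vert \mbf B_\chi^{(h)} \Vert = O(1)$ and likewise $\Vert \mbf C_\chi^{(h)} \Vert = O(1)$. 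For the inverse, observe that $\mbf C_\chi^{(h)}$ is a symmetric positive semi-definite block-Toeplitz matrix of fixed dimension whose eigenvalues are therefore bounded above; combined with the determinant lower bound $\det(\mbf C_\chi^{(h)}) > c_0$ from Assumption~\ref{assum:common:var}~\ref{cond:det}, this forces $\Lambda_{\min}(\mbf C_\chi^{(h)})$ to be bounded below uniformly in $h$, whence $\Vert (\mbf C_\chi^{(h)})^{-1} \Vert = O(1)$. Multiplying the two bounds yields $\max_h \Vert \mbf A_\chi^{(h)} \Vert = O(1)$, which feeds back through the display above to give the claimed constant $B_w$. The delicate point worth emphasising is the uniformity over the diverging number of blocks $N$: it is exactly the genericity-driven condition~\ref{cond:det} that rules out ill-conditioned blocks and makes this uniform control possible.
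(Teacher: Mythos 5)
Your proof is correct and follows essentially the same route as the paper: pass to the spectral density $\bm\Sigma_w(\omega) = \mc A_\chi(e^{-\iota\omega})\bm\Sigma_\xi(\omega)\mc A_\chi^\top(e^{\iota\omega})$, bound it by $B_\xi$ times the squared norm of the (block-diagonal) transfer function using Proposition~\ref{prop:idio:eval}, and control $\max_h \Vert \mc A_\chi^{(h)}\Vert$ via the Yule--Walker identity together with Assumption~\ref{assum:common:var}~\ref{cond:det} and the uniform autocovariance bounds. The only difference is that you spell out the determinant-based lower bound on $\Lambda_{\min}(\mbf C_\chi^{(h)})$, which the paper asserts more tersely; both arguments are the same in substance.
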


The following assumption,
imposed on the strength of the factors in~\eqref{eq:static:two}
similarly as in Assumptions~\ref{assum:factor} and~\ref{assum:static}~\ref{cond:static:linear},
enables asymptotic identification of $\mbf V_t$ and $\mbf W_t$.
Due to the complexity associated with theoretical analysis under the model~\eqref{eq:gdfm:var},
we consider the case of strong factors only.
\begin{assum}
\label{assum:static:two}
{\it There exist a positive integer $p_0 \ge 1$
and pairs of positive constants $(\alpha_{v, j}, \beta_{v, j}), \, 1 \le j \le q$, such that
for all $p \ge p_0$,
\begin{align*}
\beta_{v, 1} \ge \frac{\mu_{v, 1}}{p} \ge \alpha_{v, 1} >
\beta_{v, 2} \ge \frac{\mu_{v, 2}}{p} \ge \ldots \ge 
\alpha_{v, q - 1} > \beta_{v, q} \ge \frac{\mu_{v, q}}{p} \ge \alpha_{v, q} > 0.
\end{align*}
}
\end{assum}

We propose to perform in-sample estimation and forecasting of the common component 
by directly utilising the expression of $\bm\chi_{t + a \vert t}$ in~\eqref{eq:gdfm:best:lin},
following the method proposed in \cite{forni2017dynamic}
that makes use of the outcome of the dynamic PCA step outlined in Section~\ref{sec:dpca}.

\begin{enumerate}[wide, label = {\bf Step~\arabic*}:]
\item Estimate $\mc A_\chi^{(h)}(L)$ 
with $\wh{\mc A}_\chi^{(h)}(L) = \mbf I - \sum_{\ell = 1}^{s_h} \wh{\mbf A}^{(h)}_{\chi, \ell} L^\ell$
via Yule-Walker estimator 
$\wh{\mbf A}^{(h)}_\chi = [\wh{\mbf A}^{(h)}_{\chi, \ell}, \, 1 \le \ell \le s_h]
= \wh{\mbf B}^{(h)}_\chi (\wh{\mbf C}^{(h)}_\chi)^{-1}$,
where $\wh{\mbf B}^{(h)}_\chi$ and $\wh{\mbf C}^{(h)}_\chi$ are defined analogously as
$\mbf B^{(h)}_\chi$ and $\mbf C^{(h)}_\chi$,
with $\wh{\bm\Gamma}_\chi(\ell)$ defined in Section~\ref{sec:dpca} replacing $\bm\Gamma_\chi(\ell)$.
Let $\wh{\mc A}_\chi(L) = \text{diag}(\wh{\mc A}_\chi^{(h)}(L), \, 1 \le h \le N)$.

\item Obtain the filtered process
$\wh{\mbf Z}_t = \wh{\mc A}_\chi(L) \mbf X_t$ for $s + 1 \le t \le n$,
and let $\wh{\bm\Gamma}_z = (n - s)^{-1} \sum_{t = s + 1}^n \wh{\mbf Z}_t\wh{\mbf Z}_t^\top$,
and denote by $\wh{\bm\Gamma}_z = \sum_{j = 1}^{p \wedge (n - s)} \wh{\mu}_{z, j} \wh{\mbf e}_{z, j} \wh{\mbf e}_{z, j}^\top$ its eigendecomposition.
Then, set
$\wh{\mbf R} = [\wh{\mbf e}_{z, j} \sqrt{\wh{\mu}_{z, j}}, \, 1 \le j \le q]
= \wh{\mbf E}_{z, 1:q} \wh{\bm{\mc M}}_{z, 1:q}^{1/2}$
and $\wh{\mbf u}_t = \wh{\bm{\mc M}}_{z, 1:q}^{-1/2} \wh{\mbf E}_{z, 1:q}^\top \wh{\mbf Z}_t$
where $\wh{\bm{\mc M}}_{z, 1:q} = \text{diag}(\wh\mu_{z, j}, \, 1 \le j \le q)$
and $\wh{\mbf E}_{z, 1:q} = [\wh{\mbf e}_{z, j}, \, 1 \le j \le q]$.

\item Denoting by $\wh{\mbf B}_\ell$ the coefficient matrix multiplied to $L^\ell$
in expanding $\wh{\mc A}_\chi^{-1}(L) \wh{\mbf R}$, 
with some truncation lag $K$, we estimate $\bm\chi_{t + a \vert t}$ by
\begin{align}
\label{eq:common:forecast:infinite}
\wh{\bm\chi}^{\va}_{t + a \vert t} = \sum_{\ell = 0}^K \wh{\mbf B}_{\ell + a} \wh{\mbf u}_{t  - \ell}.
\end{align}
\end{enumerate}

\begin{rem}
\label{rem:2017}
\begin{enumerate}[wide, label = (\alph*)]
\item In our numerical studies reported in Section~\ref{sec:sim} and Appendix~\ref{app:sim}, we apply the Schwarz criterion \citep[Chapter~4]{lutkepohl2005} to each block for selecting the order $s_h$ of the blockwise VAR model (see Assumption~\ref{assum:common:var}).
Also, we set the truncation lag at $K = 20$ in~\eqref{eq:common:forecast:infinite} and the number of cross-sectional permutations to be $30$ by default.

\item \label{rem:2017:perm} The cross-sectional ordering of the panel has an impact on the selection of the diagonal blocks when estimating $\mathcal A_\chi(L)$. Each cross-sectional permutation of the panel leads to distinct estimators, all sharing the same asymptotic properties. Using a Rao-Blackwell-type argument, \citet{forni2017dynamic} advocates the aggregation of these estimators into a unique one by simple averaging (after obvious reordering of the cross-sections). 
Although averaging over all $p!$ permutations is infeasible, as argued by \citet{forni2017dynamic} and empirically verified by \citet{forni2018dynamic}, a few of them are enough in practice to deliver stable averages.

\item When $p$ is not an integer multiple of $(q + 1)$,
we can consider $\lfloor p/(q + 1)\rfloor - 1$ blocks of size $(q + 1)$ 
along with a block of the remaining $(q+1)+p-\lfloor p/(q+1)\rfloor(q+1)$ variables. 
All the theoretical arguments used in \citet{forni2017dynamic} and in this paper apply to any partition of the cross-section into blocks of size $(q + 1)$ or larger (but finite).

\item \label{rem:2017:hormann} It is known that as the VAR order $s_h$ increases, 
the estimation of a singular VAR via Yule-Walker methods 
might become unstable since it requires inverting $\wh{\mbf C}^{(h)}_\chi$,
a Toeplitz matrix of dimension $s_h(q+1) \times s_h(q+1)$. 
To address potential issues arising from this, 
\citet{hormann2021prediction} propose a regularised approach 
aimed at stabilising the estimate of $\mc A_\chi^{(h)}(L)$. 
Empirically, such an approach leads to better performance
and can be taken when $s_h$ is large.
\end{enumerate}
\end{rem}

Let us denote the matrix collecting all the transition matrices involved
in the blockwise VAR model in~\eqref{eq:gdfm:var} by
$\mbf A_\chi = [ \mbf A_{\chi, 1}, \ldots, \mbf A_{\chi, s} ] \in \R^{p \times ps}$
and its estimated counterpart by $\wh{\mbf A}_\chi$. 
Then, the estimators from the above procedure satisfy the following.

\begin{prop}
\label{thm:common:var}
{\it Suppose that the conditions in Theorem~\ref{thm:common:spec} are met,
including Assumption~\ref{assum:factor} which is satisfied with
$\rho_j = 1, \, 1 \le j \le q$.
Further let Assumptions~\ref{assum:common:var}--\ref{assum:static:two} hold.

\begin{enumerate}[wide, label = (\roman*)]
\item \label{thm:common:var:one}
Denoting by $\bm\varphi_i$ a vector of zeros
except for its $i$-th element being set to one, we have
\begin{align*}
\frac{1}{\sqrt p} \l\Vert \wh{\mbf A}_\chi - \mbf A_\chi \r\Vert_F &= O_P
\l(\vartheta_{n, p} \vee \frac{1}{m} \vee \frac{1}{\sqrt p}\r) \quad \text{and} 
\\
\max_{1 \le i \le p}
\l\vert \bm\varphi_i^\top \l(\wh{\mbf A}_\chi - \mbf A_\chi \r) \r\vert_2 &= O_P
\l( \vartheta_{n, p} \vee \frac{1}{m} \vee \frac{1}{\sqrt p}\r),
\end{align*}
where $\vartheta_{n, p}$ is defined in~\eqref{eq:rates:two}.

\item \label{thm:common:var:two} For any given $t$, we have
\begin{align*}
\l\vert \wh{\mbf R}\wh{\mbf u}_t - \mbf R \mbf u_t \r\vert_\infty &= 
O_P\l( \vartheta_{n, p} \vee \frac{1}{m} \vee \frac{1}{\sqrt p} \r).
\end{align*}

\item \label{thm:common:var:three} 
For any fixed $a \ge 0$ and $t$, we have
\begin{align*}
\l\vert \wh{\bm\chi}^{\va}_{t + a \vert t} - {\bm\chi}_{t + a \vert t} \r\vert_\infty
= O_P\l( K M_\chi^K \l(\vartheta_{n, p} \vee \frac{1}{m} \vee \frac{1}{\sqrt p}\r) + 
\log^{1/2}(p) K^{-\varsigma + 1} \r),
\end{align*}
where $M_\chi > 0$ is a constant. 
Further, when $p = O(n^\kappa)$ for some $\kappa > 0$,
we can select $K = K_n = \lfloor c \log(n) \rfloor$ with a small enough $c > 0$
such that 
$\l\vert \wh{\bm\chi}^{\va}_{t + a \vert t} - {\bm\chi}_{t + a \vert t} \r\vert_\infty = o_P(1)$.
\end{enumerate}}
\end{prop}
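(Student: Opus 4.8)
The plan is to establish the three parts in sequence, since each is an input to the next, working throughout on the high-probability event from Theorem~\ref{thm:common:spec} (with $\rho_j = 1$) on which $\max_{\vert \ell\vert \le s}\vert \wh{\bm\Gamma}_\chi(\ell)-\bm\Gamma_\chi(\ell)\vert_\infty = O_P(\vartheta_{n,p}\vee m^{-1}\vee p^{-1/2})$ and $\max_{\vert \ell\vert \le s}p^{-1}\Vert \wh{\bm\Gamma}_\chi(\ell)-\bm\Gamma_\chi(\ell)\Vert_F = O_P(\psi_n\vee m^{-1}\vee p^{-1/2})$. For part~\ref{thm:common:var:one} I would proceed block by block. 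Writing the Yule-Walker error through the resolvent identity $(\wh{\mbf C}^{(h)}_\chi)^{-1}-(\mbf C^{(h)}_\chi)^{-1} = -(\wh{\mbf C}^{(h)}_\chi)^{-1}(\wh{\mbf C}^{(h)}_\chi-\mbf C^{(h)}_\chi)(\mbf C^{(h)}_\chi)^{-1}$, the block error $\wh{\mbf A}^{(h)}_\chi-\mbf A^{(h)}_\chi$ becomes a bounded-coefficient combination of $\wh{\mbf B}^{(h)}_\chi-\mbf B^{(h)}_\chi$ and $\wh{\mbf C}^{(h)}_\chi-\mbf C^{(h)}_\chi$; here Assumption~\ref{assum:common:var}~\ref{cond:det} ($\det(\mbf C^{(h)}_\chi)>c_0$) together with the boundedness of the individual ACV entries guarantees that $(\mbf C^{(h)}_\chi)^{-1}$ and, on the good event, $(\wh{\mbf C}^{(h)}_\chi)^{-1}$ have spectral norm bounded uniformly over $h$, each being finite-dimensional. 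Since every row of the block-diagonal $\wh{\mbf A}_\chi-\mbf A_\chi$ lies in a single block, the uniform bound follows at once from $\max_{\vert \ell\vert \le s}\vert \wh{\bm\Gamma}_\chi(\ell)-\bm\Gamma_\chi(\ell)\vert_\infty$, delivering the $\vartheta_{n,p}$ rate.

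For the Frobenius bound I would sum the squared block errors over $h$, which aggregate into $\sum_h \Vert \mbf G_\ell^{(h)}\Vert_F^2$, where $\mbf G_\ell^{(h)}$ denotes the $(q+1)\times(q+1)$ diagonal block of $\wh{\bm\Gamma}_\chi(\ell)-\bm\Gamma_\chi(\ell)$ belonging to block $h$. The delicate point is that this block-diagonal restriction keeps only $O(p)$ of the $p^2$ entries and should therefore inherit the \emph{averaged} rate $\psi_n$ rather than the uniform rate $\vartheta_{n,p}$; establishing this requires the per-frequency and per-row Frobenius bounds from the interior of the proof of Theorem~\ref{thm:common:spec}, not merely its statement, and I expect this to be the main obstacle. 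Dividing by $p$ then yields the first display of part~\ref{thm:common:var:one}.

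For part~\ref{thm:common:var:two}, I would read $\wh{\mbf R}\wh{\mbf u}_t = \wh{\mbf E}_{z,1:q}\wh{\mbf E}_{z,1:q}^\top\wh{\mbf Z}_t$ as the static-PCA estimate of the common part $\mbf V_t=\mbf R\mbf u_t$ of $\mbf Z_t=\mbf V_t+\mbf W_t$ in~\eqref{eq:static:two}. First control $\wh{\mbf Z}_t-\mbf Z_t=(\wh{\mc A}_\chi(L)-\mc A_\chi(L))\mbf X_t$ using part~\ref{thm:common:var:one} and the moment bound on $\mbf X_t$, and then bound $\wh{\bm\Gamma}_z-\bm\Gamma_z$ in $\ell_\infty$. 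Because $\bm\Gamma_v=\mbf R\mbf R^\top$ has its $q$ nonzero eigenvalues of order $p$ (Assumption~\ref{assum:static:two}) while $\mu_{w,1}\le B_w$ is bounded (Proposition~\ref{prop:idio:eval:two}), the eigengap is of order $p$, so a Davis-Kahan argument controls $\wh{\mbf E}_{z,1:q}$ and $\wh{\bm{\mc M}}_{z,1:q}$; the remaining algebra mirrors the proof of Proposition~\ref{thm:2005} with $\mbf Z_t$ in place of $\mbf X_t$, producing the stated rate.

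For part~\ref{thm:common:var:three}, decompose the error into an estimation part $\sum_{\ell=0}^K(\wh{\mbf B}_{\ell+a}\wh{\mbf u}_{t-\ell}-\mbf B_{\ell+a}\mbf u_{t-\ell})$ and a truncation tail $-\sum_{\ell>K}\mbf B_{\ell+a}\mbf u_{t-\ell}$. The tail is bounded in $\ell_\infty$ via the algebraic decay $\vert \mbf B_{\ell,i\cdot}\vert_2\le\Xi(1+\ell)^{-\varsigma}$ (Assumption~\ref{assum:common}), the moment bound on $\mbf u_t$ and a maximal inequality over the $p$ coordinates, giving $O_P(\log^{1/2}(p)K^{-\varsigma+1})$. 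For the estimation part, express $\wh{\mbf B}_\ell-\mbf B_\ell$ through the difference of the MA expansions of $\wh{\mc A}_\chi^{-1}(L)\wh{\mbf R}$ and $\mc A_\chi^{-1}(L)\mbf R$; a Neumann-series perturbation argument shows that the per-lag coefficient error is amplified by at most $M_\chi^\ell$, reflecting the propagation of the filter error through the power series, so summing $K+1$ lags and invoking parts~\ref{thm:common:var:one}--\ref{thm:common:var:two} (the latter for $\wh{\mbf u}_t-\mbf u_t$) yields $O_P(KM_\chi^K(\vartheta_{n,p}\vee m^{-1}\vee p^{-1/2}))$. Finally, under $p=O(n^\kappa)$, choosing $K=\lfloor c\log n\rfloor$ with $c>0$ small makes $M_\chi^K=n^{c\log M_\chi}$ grow slower than the inverse rate while $\log^{1/2}(p)K^{-\varsigma+1}\to0$ since $\varsigma>2$, so both terms are $o_P(1)$. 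Besides the block-diagonal Frobenius step in part~\ref{thm:common:var:one}, the careful tracking of this $M_\chi^K$ amplification is the other point that needs care.
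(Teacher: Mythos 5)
Your proposal is correct and follows essentially the same route as the paper: part~\ref{thm:common:var:one} is proved blockwise from the Yule--Walker representation with $(\mbf C^{(h)}_\chi)^{-1}$ bounded via Assumption~\ref{assum:common:var}, part~\ref{thm:common:var:two} via a Davis--Kahan argument on $\wh{\bm\Gamma}_z$ exploiting the order-$p$ eigengap between $\bm\Gamma_v$ and $\bm\Gamma_w$, and part~\ref{thm:common:var:three} via the companion-form power expansion with the $M_\chi^K$ amplification plus a tail bound from Assumption~\ref{assum:common}. The one obstacle you flag — that the Frobenius rate $\psi_n$ for the block-diagonal restriction needs an entrywise mean-square bound rather than the aggregate statement of Theorem~\ref{thm:common:spec} — is real and is resolved exactly as you anticipate, via the uniform-in-$(i,i')$ per-entry bound of Proposition~\ref{prop:acv:common}~\ref{prop:acv:common:one} (in the sense of Remark~\ref{rem:unif:op}).
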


Proposition~\ref{thm:common:var}~\ref{thm:common:var:three},
combined with Proposition~\ref{prop:common:irreduce},
concludes the analysis of the forecasting error $\vert \wh{\bm\chi}^{\va}_{t + a \vert t} - \bm\chi_{t + a} \vert_\infty$.
\begin{rem}
\label{rem:var}
\begin{enumerate}[wide, label = (\alph*)]
\item \label{rem:var:one} In Proposition~\ref{thm:common:var}, 
we assume a stronger factor structure 
given by Assumption~\ref{assum:factor} with $\rho_j = 1$ for all $1 \le j \le q$,
and Assumption~\ref{assum:static:two} for ease of presentation;
as noted in Remark~\ref{rem:thm:dpca}~\ref{rem:tails}, 
weaker factor strength would imply worse estimation and forecasting performance.

\item Proposition~\ref{thm:common:var}~\ref{thm:common:var:one} extends
Proposition~9 of \cite{forni2017dynamic} that considers the consistency of $\wh{\mbf A}_\chi$
for a given single block.
The result in~\ref{thm:common:var:two} indicates that
we can consistently recover $\mbf V_t = \mbf R\mbf u_t$
and hence the static factor space. 
Moreover, parts~\ref{thm:common:var:one} and~\ref{thm:common:var:two} imply that we can consistently recover the space spanned by $\mbf u_t$ and their associated impulse response functions ${\mc B}(L) = {\mc A}^{-1}_\chi(L)\mbf R$
for finitely many lags up to a linear transformation.
This is particularly useful in empirical macroeconomic literature where typically,
$\mbf u_t$ carries a specific economic implication 
(e.g.\ shocks related to monetary policy, fiscal policy, 
demand, supply, technology and oil, to name a few). 
Indeed, it is a common practice to use the above estimation method to 
recover the space spanned by these shocks and then to identify their dynamic effect $\mc B(L)$ 
by imposing {\it ad hoc} economic based restrictions;
see, e.g.\ \cite{stock2016dynamic}, for a review of this approach.

\item In Proposition~\ref{thm:common:var}~\ref{thm:common:var:three},
the constant $M_\chi$ is associated with the operator norm
of the transition matrix involved in the VAR($1$) representation of~\eqref{eq:gdfm:var}
and, provided that the maximum VAR order $s = \max_{1 \le h \le N} s_h \ge 2$, 
we have $M_\chi \ge 1$ (see Appendix~E of \citealp{basu2015}).
The $O_P$ bound in~\ref{thm:common:var:three} 
can be made to converge to zero as $n, p \to \infty$ by selecting an appropriate truncation lag $K$, 
but it is at a slower rate compared to the $O_P$ bound derived for $\wh{\bm\chi}^{\static}_t$ 
studied in Section~\ref{sec:2005}
when comparable strong factors are assumed ($\rho_j = 1, \, 1 \le j \le q$, in Assumption~\ref{assum:factor} 
and $\varrho_j = 1, \, 1 \le j \le r$, in Assumption~\ref{assum:static}~\ref{cond:static:linear}).
This shows the advantage of working under a more restrictive factor model
in Section~\ref{sec:2005} when it comes to forecasting.
\end{enumerate}
\end{rem}

\clearpage

\section{Sparsity structure of $\mc N^{\lr}$}
\label{app:lr}

Recall that $\bm\Omega = [\omega_{ii'}]$ and $\bm\Delta = [\delta_{ii'}]$,
and let $\mc A(1) = [a_{ii'}]$. Also, define $g_{i, i'} = 1$ if $i = i'$ and $g_{i, i'} = 0$ otherwise. 
Recall that $a_{ii'} = g_{i, i'} - \sum_{\ell = 1}^d A_{\ell, ii'}$.
Then,
\begin{align}
\frac{1}{2\pi} \omega_{ii'} &= \sum_{h = 1}^p \sum_{k = 1}^p a_{hi} \delta_{hk} a_{ki'}
= a_{ii} \sum_k \delta_{ik} a_{ki'}
+ \sum_h \sum_k a_{hi} \delta_{hk} a_{ki'} (1 - g_{h, i})
\nn \\
&= a_{ii} \sum_\ell \delta_{i \ell} a_{\ell i'}
+ a_{i'i'} \sum_\ell a_{\ell i} \delta_{\ell i'} (1 - g_{\ell, i})
+ \sum_h \sum_k a_{hi} \delta_{hk} a_{ki'} (1 - g_{h, i}) (1 - g_{k, i'})
\nn \\
&= a_{ii} \delta_{ii'} a_{i' i'} + a_{i'i'} \delta_{i'i'} a_{i' i} + a_{ii} \delta_{ii} a_{i i'} 
+ a_{ii} \sum_\ell \delta_{i \ell} a_{\ell i'} (1 - g_{\ell, i}) (1 - g_{\ell, i'}) +
\nn \\
& \qquad a_{i'i'} \sum_\ell a_{\ell i} \delta_{\ell i'} (1 - g_{\ell, i}) (1 - g_{\ell, i'})
+ \sum_{\ell} a_{\ell i} \delta_{\ell \ell} a_{\ell i'} (1 - g_{\ell, i}) (1 - g_{\ell, i'}) +
\nn \\
& \qquad \sum_h \sum_k a_{hi} \delta_{hk} a_{ki'} (1 - g_{h, i}) (1 - g_{k, i'})(1 - g_{h, k}).
\label{eq:nlr}
\end{align}
We conclude that $\omega_{ii'} = 0$ if none of the following holds:
(i) variables $i$ and $i'$ are partially correlated, i.e.\ $\delta_{ii'} \ne 0$ (from the first term in~\eqref{eq:nlr});
(ii) $i$ Granger causes $i'$ in the long run, i.e.\ $\sum_{\ell = 1}^d A_{\ell, i'i} \ne 0$ (from the second term);
(iii) $i'$ Granger causes $i$ in the long run, i.e.\ $\sum_{\ell = 1}^d A_{\ell, ii'} \ne 0$ (from the third term);
(iv) there exists a variable $j \in \mc V \setminus \{i, i'\}$ such that 
$\sum_{\ell = 1}^d A_{\ell, ji} \ne 0$ and $\delta_{i'j} \ne 0$, or
$\sum_{\ell = 1}^d A_{\ell, ji'} \ne 0$ and $\delta_{ij} \ne 0$, or
$\sum_{\ell = 1}^d A_{\ell, ji} \ne 0$ and $\sum_{\ell = 1}^d A_{\ell, ji'} \ne 0$ (from the fourth, the fifth and the sixth terms); or
(v) there exist a pair of variables $j, j' \in \mc V \setminus \{i, i'\}$ such that
$\sum_{\ell = 1}^d A_{\ell, ji} \ne 0$,  
$\sum_{\ell = 1}^d A_{\ell, j'i'} \ne 0$ and $\delta_{jj'} \ne 0$ (from the last term).
As such, the edge set $\mc E^{\lr}$ is typically larger than $\mc E^{\dir} \cup \mc E^{\undir}$.

\clearpage

\section{Data-driven choice of the thresholds}
\label{sec:thresh}

Motivated by \cite{liu2021simultaneous}, we propose a method for data-driven selection of the threshold~$\mathfrak{t}$,
which is applied to the estimators of $\bm\beta = [\mbf A_\ell, \, 1 \le \ell \le d]^\top$, $\bm\Delta$ or $\bm\Omega$ for estimating the edge sets of $\mc N^{\dir}$, $\mc N^{\undir}$ or $\mc N^{\lr}$, respectively.

Let $\mbf B = [b_{ij}] \in \R^{m \times n}$ denote a matrix for which a threshold is to be selected, i.e.\ $\mbf B$ may be either $\wh{\bm\beta}$, $\wh{\bm\Delta}_0$ ($\wh{\bm\Delta}$ with diagonals set to zero) or $\wh{\bm\Omega}_0$ ($\wh{\bm\Omega}$ with diagonals set to zero) obtained from Steps~2 and~3 of FNETS.
We work with $\wh{\bm\Delta}_0$ and $\wh{\bm\Omega}_0$ since we do not threshold the diagonal entries of $\wh{\bm\Delta}$ and $\wh{\bm\Omega}$.
As such estimators have been shown to achieve consistency in $\ell_\infty$-norm (Propositions~\ref{prop:idio:est:lasso} and~\ref{prop:idio:delta}), we expect there exists a large gap between the entries of $\mbf B$ corresponding to true positives and false positives.
Further, it is expected that the number of edges reduces at a faster rate when increasing the threshold from $0$ towards this (unknown) gap, compared to when increasing the threshold from the gap to $\vert \mbf B \vert_\infty$.
Therefore, we propose to identify this gap by casting the problem as that of locating a single change point in the trend of the ratio of edges to non-edges,
\begin{align*}
\text{Ratio}_k = \frac{\vert \mbf B(\mathfrak{t}_k) \vert_0}{ \max( N - \vert \mbf B(\mathfrak{t}_k) \vert_0, 1) }, 
\quad k = 1, \dots, M.
\end{align*}
Here, $\mbf B(\mathfrak{t}) = [b_{ij} \cdot \mathbb{I}_{\{\vert b_{ij} \vert > \mathfrak{t} \}}]$, 
$\vert \mbf B(\mathfrak{t}) \vert_0 = \sum_{i = 1}^{m_1} \sum_{j = 1}^{m_2} \mathbb{I}_{\{\vert b_{ij} \vert > \mathfrak{t} \}}$
and $\{\mathfrak{t}_k, \, 1 \le k \le M: \, 0 = \mathfrak{t}_1 < \mathfrak{t}_2 < \dots < \mathfrak{t}_M = \vert \mbf B \vert_\infty\}$ denotes a sequence of candidate threshold values.  
We recommend using an exponentially growing sequence for $\{\mathfrak{t}_k\}_{k = 1}^M$ since the size of the false positive entries tends to be very small.
The quantity $N$ in the denominator of Ratio$_k$ is set as $N = p^2d$ when $\mbf B = \wh{\bm\beta}$, and $N = p(p - 1)$ when $\mbf B = \wh{\bm\Delta}_0$ or $\mbf B = \wh{\bm\Omega}_0$.
Then, from the difference quotient
\begin{align*} 
\text{Diff}_k = \frac{\text{Ratio}_k - \text{Ratio}_{k - 1}}{\mathfrak{t}_k - \mathfrak{t}_{k - 1}},
\quad k = 2, \ldots, M,
\end{align*}
we compute the cumulative sum (CUSUM) statistic
\begin{align*}
\text{CUSUM}_k = \sqrt{\frac{k (M - k)}{M}} \l\vert \frac{1}{k} \sum_{l = 2}^k \text{Diff}_l - \frac{1}{M - k} \sum_{l = k + 1}^M \text{Diff}_l \r\vert, \quad k = 2, \ldots, M-1,
\end{align*}
and select $\mathfrak{t} = \mathfrak{t}_{k^*}$ with $k^* = {\arg\max}_{2 \le k \le M - 1} \text{CUSUM}_k$.

We investigate the performance of the thus-chosen thresholds on simulated datasets in Appendix~\ref{sec:complete:sim}, see also \cite{owens2023fnets}.

\clearpage

\section{Simulation studies}
\label{sec:complete:sim}

\subsection{Set-up}
\label{sec:sim:model}

We apply FNETS to datasets simulated under a variety of settings, from Gaussian innovations $\mbf u_t$ and $\bm\vep_t$ with~\ref{e:one}~$\bm\Delta = \mbf I$ and~\ref{e:two}~$\bm\Delta \ne \mbf I$, to~\ref{e:three}~heavy-tailed ($t_5$) innovations with $\bm\Delta = \mbf I$,
and when $\bm\chi_t$ is generated from~\ref{m:ar}~fully dynamic or~\ref{m:ma}~static factor models.
In addition, we consider the `oracle' setting~\ref{m:oracle} $\bm\chi_t = \mbf 0$ where, in the absence of the factor-driven component, the results obtained can serve as a benchmark. 
We also include the factor-adjusted regression method of \cite{fan2021bridging} which is referred to as FARM, and present the performance of their estimator $\wh{\bm\beta}^{\text{FARM}}$ of VAR parameters and forecasts (see Appendix~\ref{sec:sim:model} for full descriptions).
For each setting, $100$ realisations are generated.

The idiosyncratic component is generated as a VAR($1$) process.
Let $\mc N^{\dir}$ denote a directed Erd\H{o}s-R\'{e}nyi random graph 
on $\mc V = \{1, \ldots, p\}$ with the link probability $1/p$.
Then, the entries of $\mbf A_1$ are $A_{1, ii'} = 0.275$ when $(i, i') \in \mc E^{\dir}$
and $A_{1, ii'} = 0$ otherwise.
The innovations are generated according to the following three scenarios:
\begin{enumerate}[wide, label = (E\arabic*)]
\item \label{e:one} Gaussian with $\bm\Gamma = \mbf I$.
\item \label{e:two} Gaussian with $\bm\Gamma = \bm\Delta^{-1}$, where
$\delta_{ii} = 1.5$ for $1 \le i \le p$, $\delta_{ii'} = -1/\sqrt{d_i d_{i'}}$ if $(i, i') \in \mc E^{\undir}$
and $\delta_{ii'} = 0$ otherwise.
Here, $\mc N^{\undir}$ is an undirected Erd\H{o}s-R\'{e}nyi random graph on $\mc V$
with the link probability $1/p$, and $d_i$ denotes the degree of the $i$-th node in $\mc E^{\undir}$.
This model is taken from \cite{barigozzi2019}.
\item \label{e:three} Heavy-tailed with 
$\sqrt{5/3} \cdot \vep_{it} \sim_{\iid} t_5$ (such that $\Var(\vep_{it}) = 1$)
and $\bm\Gamma = \mbf I$.
\end{enumerate}

We consider two models for the generation of factor-driven common component:
\begin{enumerate}[wide, label = (C\arabic*)]
\item \label{m:ar} Taken from \cite{forni2017dynamic},
$\chi_{it}$ is generated as sum of $q$ AR processes
$\chi_{it} = \sum_{j = 1}^q a_{ij} (1 - \alpha_{ij} L)^{-1} u_{jt}$,
where $a_{ij} \sim_{\iid} \mc U[-1, 1]$ and $\alpha_{ij} \sim_{\iid} \mc U[-0.8, 0.8]$
with $\mc U[a, b]$ denoting a uniform distribution.
This model does not admit a static factor model representation,
and we consider $q = 2$.

\item \label{m:ma} $\chi_{it}$ admits a static factor model representation as
$\chi_{it} = a_i \sum_{\ell = 1}^2 \bm\lambda_{i \ell}^\top \mbf f_{t - \ell + 1}$
with $\mbf f_t = \mbf D \mbf f_{t - 1} + \mbf u_t$;
here, $\mbf F_t = (\mbf f_t^\top, \mbf f_{t - 1}^\top)^\top \in \R^r$
denotes the static factor with $r = 2q$,
$\mbf f_t \in \R^q$ the dynamic factor and
and $\mbf u_t = (u_{1t}, \ldots, u_{qt})^\top$ the common shocks.
The entries of the loadings $\bm\lambda_{i \ell} \in \R^q$ 
are generated i.i.d.\ from $\mc N(0, 1)$,
and $\mbf D = 0.7 \cdot \mbf D_0 / \Lambda_{\max}(\mbf D_0)$
where the off-diagonal entries of $\mbf D_0 \in \R^{q \times q}$ 
are generated i.i.d.\ from $\mc U[0, 0.3]$
and its diagonal entries from $\mc U[0.5, 0.8]$.
The multiplicative factor $a_i$ is chosen for each realisation to keep
sample estimate of $\Var(\chi_{it})/\Var(\xi_{it})$ at one.
We fix $q = 2$ (such that $r = 4$).
\end{enumerate}
Additionally, we consider the following `oracle' setting:
\begin{enumerate}[wide, label = (C\arabic*)]
\setcounter{enumi}{-1}
\item \label{m:oracle} $\bm\chi_t = 0$,
i.e.\ the idiosyncratic VAR process is directly observed as $\mbf X = \bm\xi_t$.
\end{enumerate}
We vary $(n, p) \in \{(100, 50), (100, 100), (200, 50), (200, 100), (500, 100), (500, 200)\}$.
According to the distribution of $\bm\vep_t$, we also vary the distribution of $\mbf u_t$;
under~\ref{e:one} or~\ref{e:two}, $u_{jt} \sim_{\iid} \mc N(0, 1)$ while
under~\ref{e:three}, $\sqrt{5/3} \cdot u_{jt} \sim_{\iid} t_5$.

\subsection{Results}
\label{app:sim}

\subsubsection{Network estimation}
\label{app:sim:est}

Throughout, we refer to the estimator in~\eqref{eq:lasso} by $\wh{\bm\beta}^{\las}$, and report the results from both $\wh{\bm\beta}^{\las}$ and $\wh{\bm\beta}^{\ds}$ obtained as in~\eqref{eq:ds}.
Also, $\wh{\bm\Omega}^{\las}$ and $\wh{\bm\Omega}^{\ds}$ denote the estimators of $\bm\Omega$ obtained with $\wh{\bm\beta}^{\las}$ and $\wh{\bm\beta}^{\ds}$, respectively see Section~\ref{sec:idio:lrpc}.
For comparison, we consider the FARM methodology of \cite{fan2021bridging}:
we implement their factor-adjustment step under a static factor model with the information criterion-based factor number estimator of \cite{ABC10} and, to the residuals from removing factors, we apply the Lasso to estimate the VAR parameters using the R package {\tt glmnet} \citep{glmnet}.
The resultant estimator is referred to as $\wh{\bm\beta}^{\text{FARM}}$.

In Tables~\ref{table:est:one:beta}--\ref{table:est:two:beta}, we report the estimation errors
of $\wh{\bm\beta}^{\las}$, $\wh{\bm\beta}^{\ds}$ and $\wh{\bm\beta}^{\text{FARM}}$ in estimating $\bm\beta$,
and in Tables~\ref{table:est:one:omega}--\ref{table:est:two:omega},
those of $\wh{\bm\Omega}^{\las}$ and $\wh{\bm\Omega}^{\ds}$ 
in estimating $\bm\Omega$ averaged over $100$ realisations, and the corresponding standard errors.
We also report the results from estimating $\bm\Delta$ by $\wh{\bm\Delta}^{\las}$ and $\wh{\bm\Delta}^{\ds}$ in Table~\ref{table:est:delta}.

With a matrix $\bm\gamma$ as an estimand
we measure the estimation error of its estimator $\wh{\bm\gamma}$
using the (scaled) matrix norms:
\begin{align*}
L_F = \frac{\Vert \wh{\bm\gamma} - \bm\gamma \Vert_F}{\Vert \bm\gamma \Vert_F} \quad \text{and} \quad
L_2 = \frac{\Vert \wh{\bm\gamma} - \bm\gamma \Vert}{\Vert \bm\gamma \Vert}.
\end{align*}
To assess the performance of $\wh{\bm\gamma}$
in recovering of the support of $\bm\gamma = [\gamma_{ii'}]$,  
i.e.\ $\{(i, i'): \, \gamma_{ii'} \ne 0 \}$,
we generate receiver operating characteristic (ROC) curves of
true positive rate (TPR) against false positive rate (FPR), 
averaged over $100$ realisations for each setting:
\begin{align}
\label{eq:tpfp}
\text{TPR} = \frac{ \vert \{ (i, i'): \, \wh\gamma_{ii'} \ne 0 \text{ and } \gamma_{ii'} \ne 0 \} \vert }{\vert \{ (i, i'): \, \gamma_{ii'} \ne 0 \} \vert}
\quad \text{and} \quad
\text{FPR} = \frac{ \vert \{ (i, i'): \, \wh\gamma_{ii'} \ne 0 \text{ and } \gamma_{ii'} = 0 \} \vert }{\vert \{ (i, i'): \, \gamma_{ii'} = 0 \} \vert},
\end{align}
see Figures~\ref{fig:roc:beta:two}--\ref{fig:roc:omega:one}.
We additionally report the results of the TPR value when FPR is set at $0.05$, with and without thresholding the estimators as described in Section~\ref{sec:tuning},
in Tables~\ref{table:est:one:beta}--\ref{table:est:delta}.

Overall, we observe that with increasing $n$, the performance of all estimators improve according to all metrics regardless of the data generating processes while increasing $p$ has an adverse effect.
Generally, whether the factor-driven component admits a static representation as in~\ref{m:ma} or not as in~\ref{m:ar}, FNETS produces estimators of $\bm\beta$ that perform as well as those applied under the oracle setting of~\ref{m:oracle} with $\bm\chi_t = \mbf 0$.
Also $\wh{\bm\beta}$ outperforms $\wh{\bm\beta}^{\text{FARM}}$ in all settings, particularly in estimating the support of $\bm\beta$ (i.e.\ the edge set of $\mc N^{\dir}$) even without any thresholding in all scenarios.
FARM tends to produce highly sparse estimators with low TPR, see Figure~\ref{fig:roc:beta:one} (averaged ROC curves are not necessarily monotonic as it contains pointwise average TPR at given FPR).
This is attributed to the accumulation of errors from estimating $\bm\xi_t, \, 1 \le t \le n$, which possibly leads to low signal-to-noise ratio when estimating the VAR parameters via Lasso. 
This difference vanishes as both $n$ and $p$ increase. 
As noted in Introduction, FNETS and FARM have distinctive objectives, and FNETS is specifically proposed for network estimation under the proposed factor-adjusted VAR model.
When $\bm\Delta = \mbf I$ (as in~\ref{e:one} and~\ref{e:three}), FNETS estimates $\bm\Omega$ with accuracy regardless of the tail behaviour of $\bm\vep_t$ and $\mbf u_t$.
When $\bm\Delta \ne \mbf I$, it tends to incur larger errors in estimating $\bm\Omega$ compared to when $\bm\Delta = \mbf I$, which is more noticeable in terms of support recovery (see Figure~\ref{fig:roc:omega:two}).
This possibly stems from the performance of $\wh{\bm\Delta}$ (see Table~\ref{table:est:delta}) rather than $\wh{\bm\beta}$, which becomes worse when $\bm\Delta \ne \mbf I$.
Since the support of $\bm\Omega$ depends on those of $\bm\beta$ and $\bm\Delta$ in a complex way (see Proposition~\ref{prop:idio:delta}~\ref{prop:idio:delta:two} and Appendix~\ref{app:lr}), its estimation tends to be more challenging.

\begin{figure}[htb!]
\centering
\includegraphics[width = .8\textwidth]{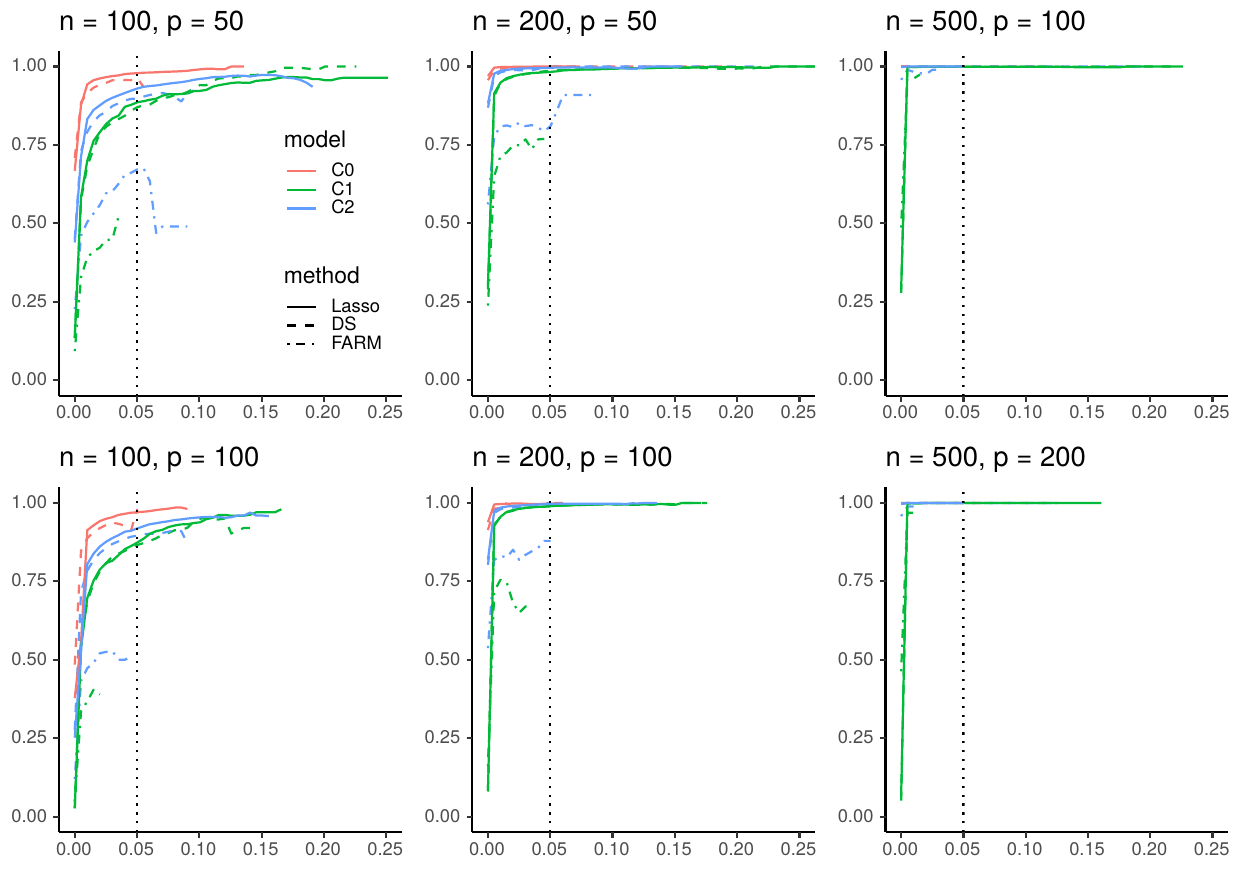}
\caption{\small ROC curves of TPR against FPR for
$\wh{\bm\beta}^{\las}$, $\wh{\bm\beta}^{\ds}$ and $\wh{\bm\beta}^{\text{FARM}}$
in recovering the support of $\bm\beta$
when $\bm\chi_t$ is generated under~\ref{m:ar}--\ref{m:ma}
and $\bm\xi_t$ is generated under~\ref{e:one} with varying $n$ and $p$, 
averaged over $100$ realisations.
Vertical lines indicate FPR $= 0.05$.
For comparison, we also plot the corresponding curves (from $\wh{\bm\beta}^{\las}$ and $\wh{\bm\beta}^{\ds}$) obtained under~\ref{m:oracle} 
i.e.\ when $\bm\chi_t = \mbf 0$.}
\label{fig:roc:beta:one}
\end{figure}

\begin{figure}[htb!]
\centering
\includegraphics[width = .8\textwidth]{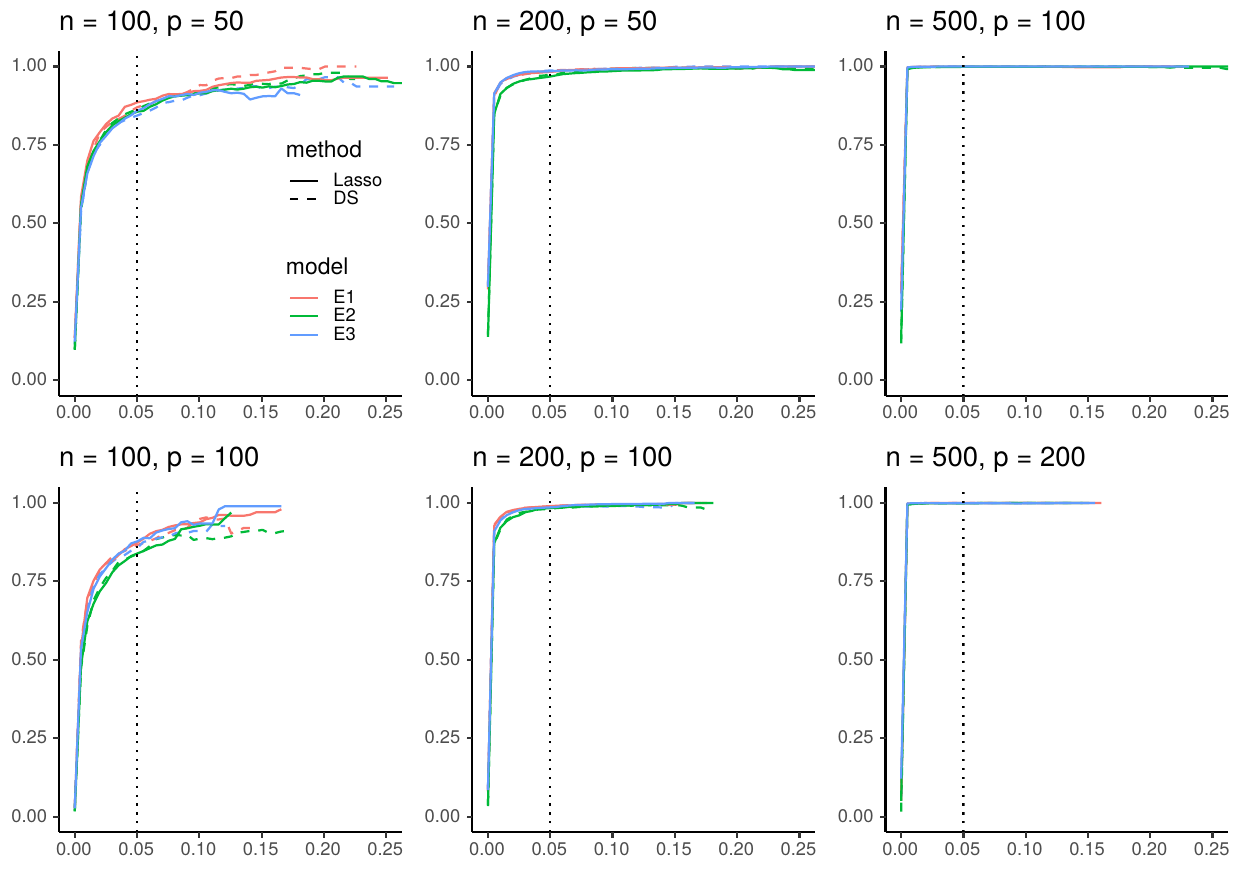}
\caption{\small ROC curves of TPR against FPR for
$\wh{\bm\beta}^{\las}$ and $\wh{\bm\beta}^{\ds}$ in recovering the support of $\bm\beta$
when $\bm\chi_t$ is generated under~\ref{m:ar}
and $\bm\xi_t$ is generated under~\ref{e:one}--\ref{e:three} with varying $n$ and $p$, 
averaged over $100$ realisations.
Vertical lines indicate FPR $= 0.05$.}
\label{fig:roc:beta:two}
\end{figure}

\begin{figure}[htb!]
\centering
\includegraphics[width = .8\textwidth]{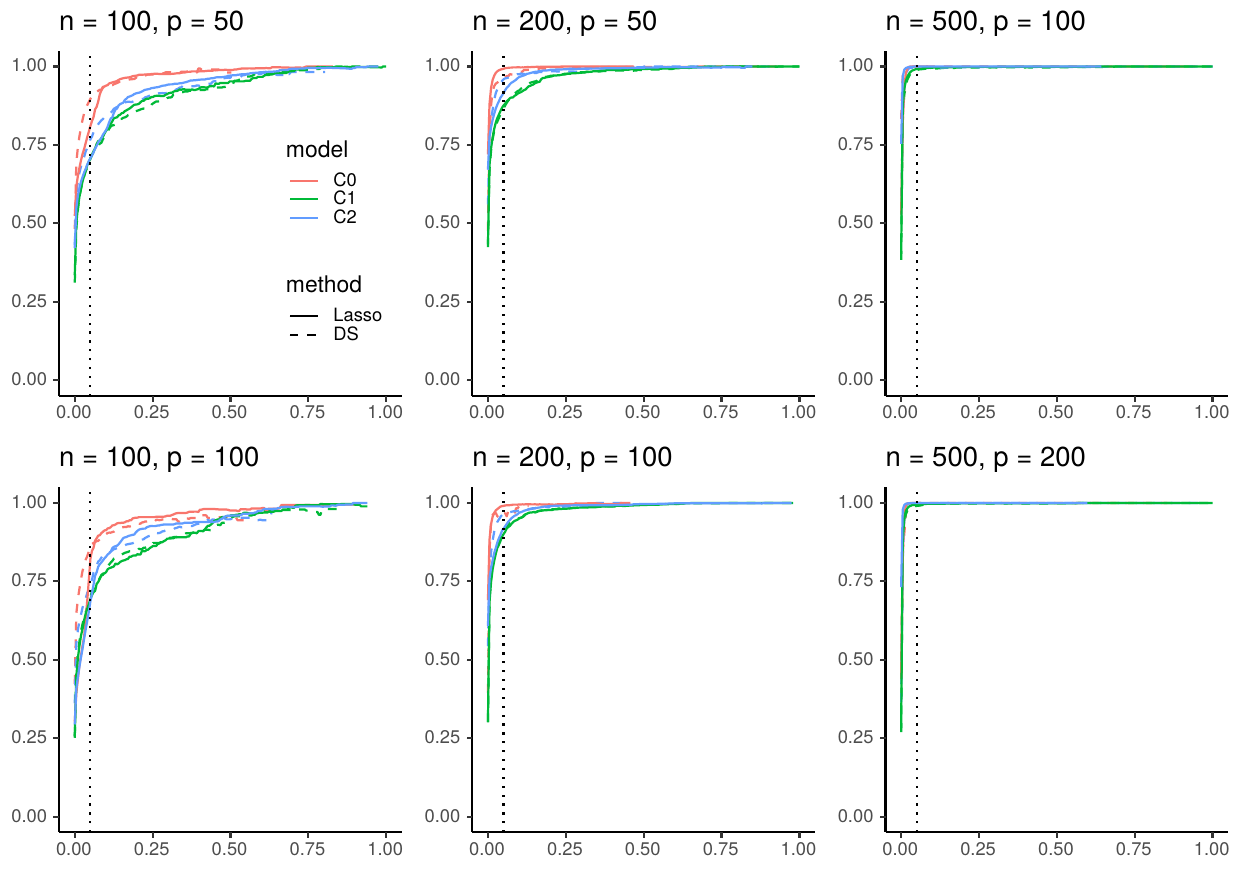}
\caption{\small ROC curves of TPR against FPR for
$\wh{\bm\Omega}^{\las}$ and $\wh{\bm\Omega}^{\ds}$
in recovering the support of $\bm\Omega$
when $\bm\chi_t$ is generated under~\ref{m:ar}--\ref{m:ma}
and $\bm\xi_t$ is generated under~\ref{e:one} with varying $n$ and $p$, 
averaged over $100$ realisations.
Vertical lines indicate FPR $= 0.05$.
For comparison, we also plot the corresponding curves obtained under~\ref{m:oracle} 
i.e.\ when $\bm\chi_t = \mbf 0$.}
\label{fig:roc:omega:one}
\end{figure}

\begin{figure}[htb!]
\centering
\includegraphics[width = .8\textwidth]{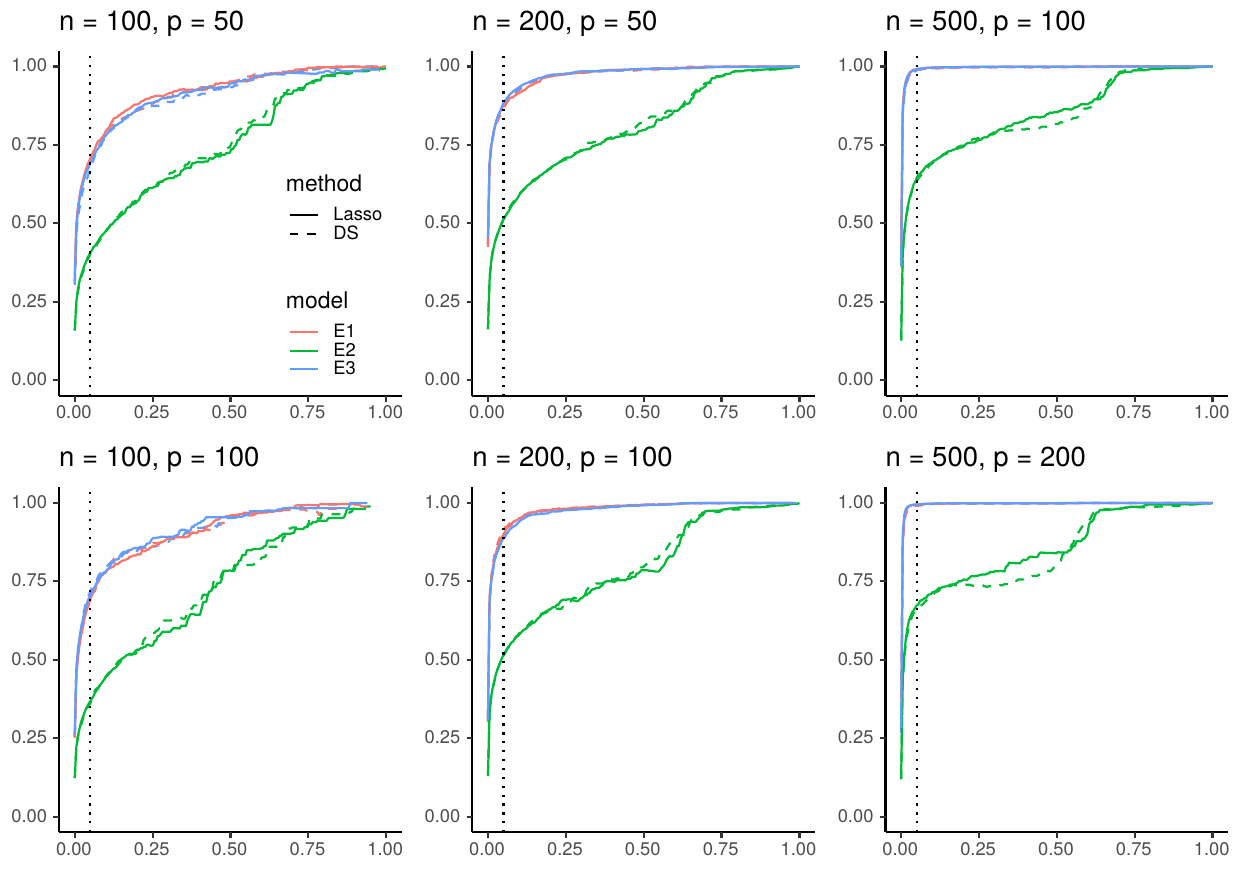}
\caption{\small ROC curves of TPR against FPR for
$\wh{\bm\Omega}^{\las}$ and $\wh{\bm\Omega}^{\ds}$
in recovering the support of $\bm\Omega$
when $\bm\chi_t$ is generated under~\ref{m:ar}
and $\bm\xi_t$ is generated under~\ref{e:one}--\ref{e:three} with varying $n$ and $p$, 
averaged over $100$ realisations.
Vertical lines indicate FPR $= 0.05$.}
\label{fig:roc:omega:two}
\end{figure}

\begin{table}[htb!]
\caption{\small 
Errors of $\wh{\bm\beta}^{\las}$, $\wh{\bm\beta}^{\ds}$ and $\wh{\bm\beta}^{\text{FARM}}$ in estimating $\bm\beta$ measured by $L_F$ and $L_2$ 
averaged over $100$ realisations (also reported are the standard errors)
under the model~\ref{e:one} for the generation of $\bm\xi_t$
and~\ref{m:oracle}--\ref{m:ma} for $\bm\chi_t$ with varying $n$ and $p$.
We also report the TPR when FPR $= 0.05$ without and with thresholding for $\wh{\bm\beta}^{\las}$ and $\wh{\bm\beta}^{\ds}$.}
\label{table:est:one:beta}
\centering
{\scriptsize
\begin{tabular}{cccr cc cc cc cc}
\toprule													
&	&	&	&	&	&	&	&	\multicolumn{4}{c}{TPR ($5$\%)} 				\\	
&	&	&	&	\multicolumn{2}{c}{$L_F$} &		\multicolumn{2}{c}{$L_2$} &		\multicolumn{2}{c}{Without} &		\multicolumn{2}{c}{With} 		\\	
&	$n$ &	$p$ &	Method &	Mean &	SD &	Mean &	SD &	Mean &	SD &	Mean &	SD 	\\	\cmidrule(lr){1-4} \cmidrule(lr){5-6} \cmidrule(lr){7-8} \cmidrule(lr){9-10} \cmidrule(lr){11-12}
\ref{m:oracle} &	100 &	50 &	Lasso &	0.633 &	0.177 &	0.68 &	0.163 &	0.836 &	0.243 &	0.815 &	0.262	\\	
&	&	&	DS &	0.613 &	0.072 &	0.679 &	0.082 &	0.94 &	0.105 &	0.893 &	0.125	\\	\cmidrule(lr){2-4} \cmidrule(lr){5-6} \cmidrule(lr){7-8} \cmidrule(lr){9-10} \cmidrule(lr){11-12}
&	100 &	100 &	Lasso &	0.854 &	0.142 &	0.88 &	0.129 &	0.517 &	0.27 &	0.466 &	0.283	\\	
&	&	&	DS &	0.669 &	0.076 &	0.738 &	0.079 &	0.892 &	0.16 &	0.839 &	0.159	\\	\cmidrule(lr){2-4} \cmidrule(lr){5-6} \cmidrule(lr){7-8} \cmidrule(lr){9-10} \cmidrule(lr){11-12}
&	200 &	50 &	Lasso &	0.421 &	0.048 &	0.473 &	0.059 &	0.999 &	0.004 &	0.997 &	0.009	\\	
&	&	&	DS &	0.532 &	0.071 &	0.589 &	0.084 &	0.989 &	0.019 &	0.982 &	0.024	\\	\cmidrule(lr){2-4} \cmidrule(lr){5-6} \cmidrule(lr){7-8} \cmidrule(lr){9-10} \cmidrule(lr){11-12}
&	200 &	100 &	Lasso &	0.454 &	0.034 &	0.52 &	0.056 &	0.999 &	0.004 &	0.996 &	0.008	\\	
&	&	&	DS &	0.58 &	0.044 &	0.654 &	0.066 &	0.982 &	0.018 &	0.966 &	0.031	\\	\cmidrule(lr){2-4} \cmidrule(lr){5-6} \cmidrule(lr){7-8} \cmidrule(lr){9-10} \cmidrule(lr){11-12}
&	500 &	100 &	Lasso &	0.402 &	0.034 &	0.441 &	0.041 &	1 &	0 &	0.987 &	0.020	\\	
&	&	&	DS &	0.29 &	0.074 &	0.308 &	0.085 &	1 &	0.001 &	0.998 &	0.008	\\	\cmidrule(lr){2-4} \cmidrule(lr){5-6} \cmidrule(lr){7-8} \cmidrule(lr){9-10} \cmidrule(lr){11-12}
&	500 &	200 &	Lasso &	0.425 &	0.034 &	0.47 &	0.048 &	1 &	0.001 &	0.986 &	0.024	\\	
&	&	&	DS &	0.46 &	0.128 &	0.493 &	0.15 &	0.999 &	0.002 &	0.98 &	0.021	\\	\cmidrule(lr){1-4} \cmidrule(lr){5-6} \cmidrule(lr){7-8} \cmidrule(lr){9-10} \cmidrule(lr){11-12}
\ref{m:ar} &	100 &	50 &	Lasso &	0.805 &	0.094 &	0.875 &	0.111 &	0.757 &	0.216 &	0.681 &	0.252	\\	
&	&	&	DS &	0.815 &	0.084 &	0.883 &	0.107 &	0.748 &	0.19 &	0.684 &	0.209	\\	
&	&	&	FARM &	0.914 &	0.047 &	0.954 &	0.088 &	0.404 &	0.127 &	- &	-	\\	\cmidrule(lr){2-4} \cmidrule(lr){5-6} \cmidrule(lr){7-8} \cmidrule(lr){9-10} \cmidrule(lr){11-12}
&	100 &	100 &	Lasso &	0.863 &	0.077 &	0.925 &	0.098 &	0.66 &	0.228 &	0.561 &	0.257	\\	
&	&	&	DS &	0.848 &	0.071 &	0.924 &	0.09 &	0.701 &	0.209 &	0.608 &	0.223	\\	
&	&	&	FARM &	0.927 &	0.026 &	0.96 &	0.086 &	0.361 &	0.086 &	- &	-	\\	\cmidrule(lr){2-4} \cmidrule(lr){5-6} \cmidrule(lr){7-8} \cmidrule(lr){9-10} \cmidrule(lr){11-12}
&	200 &	50 &	Lasso &	0.613 &	0.075 &	0.708 &	0.111 &	0.973 &	0.038 &	0.951 &	0.089	\\	
&	&	&	DS &	0.617 &	0.083 &	0.715 &	0.119 &	0.969 &	0.052 &	0.951 &	0.070	\\	
&	&	&	FARM &	0.804 &	0.057 &	0.871 &	0.135 &	0.726 &	0.106 &	- &	-	\\	\cmidrule(lr){2-4} \cmidrule(lr){5-6} \cmidrule(lr){7-8} \cmidrule(lr){9-10} \cmidrule(lr){11-12}
&	200 &	100 &	Lasso &	0.647 &	0.08 &	0.794 &	0.094 &	0.963 &	0.062 &	0.936 &	0.094	\\	
&	&	&	DS &	0.643 &	0.072 &	0.776 &	0.102 &	0.971 &	0.039 &	0.941 &	0.079	\\	
&	&	&	FARM &	0.794 &	0.045 &	0.841 &	0.095 &	0.733 &	0.098 &	- &	-	\\	\cmidrule(lr){2-4} \cmidrule(lr){5-6} \cmidrule(lr){7-8} \cmidrule(lr){9-10} \cmidrule(lr){11-12}
&	500 &	100 &	Lasso &	0.461 &	0.054 &	0.657 &	0.094 &	0.999 &	0.003 &	0.996 &	0.015	\\	
&	&	&	DS &	0.48 &	0.057 &	0.665 &	0.107 &	0.999 &	0.004 &	0.998 &	0.006	\\	
&	&	&	FARM &	0.625 &	0.037 &	0.725 &	0.124 &	0.961 &	0.03 &	- &	-	\\	\cmidrule(lr){2-4} \cmidrule(lr){5-6} \cmidrule(lr){7-8} \cmidrule(lr){9-10} \cmidrule(lr){11-12}
&	500 &	200 &	Lasso &	0.501 &	0.058 &	0.763 &	0.083 &	0.999 &	0.003 &	0.996 &	0.008	\\	
&	&	&	DS &	0.518 &	0.066 &	0.813 &	0.107 &	0.999 &	0.003 &	0.961 &	0.176	\\	
&	&	&	FARM &	0.611 &	0.035 &	0.704 &	0.122 &	0.969 &	0.021 &	- &	-	\\	\cmidrule(lr){1-4} \cmidrule(lr){5-6} \cmidrule(lr){7-8} \cmidrule(lr){9-10} \cmidrule(lr){11-12}
\ref{m:ma} &	100 &	50 &	Lasso &	0.721 &	0.118 &	0.756 &	0.116 &	0.819 &	0.236 &	0.805 &	0.246	\\	
&	&	&	DS &	0.704 &	0.057 &	0.759 &	0.074 &	0.888 &	0.08 &	0.837 &	0.094	\\	
&	&	&	FARM &	0.857 &	0.046 &	0.888 &	0.071 &	0.534 &	0.137 &	- &	-	\\	\cmidrule(lr){2-4} \cmidrule(lr){5-6} \cmidrule(lr){7-8} \cmidrule(lr){9-10} \cmidrule(lr){11-12}
&	100 &	100 &	Lasso &	0.868 &	0.084 &	0.886 &	0.089 &	0.572 &	0.251 &	0.517 &	0.274	\\	
&	&	&	DS &	0.749 &	0.08 &	0.786 &	0.077 &	0.826 &	0.216 &	0.766 &	0.214	\\	
&	&	&	FARM &	0.882 &	0.031 &	0.894 &	0.071 &	0.483 &	0.085 &	- &	-	\\	\cmidrule(lr){2-4} \cmidrule(lr){5-6} \cmidrule(lr){7-8} \cmidrule(lr){9-10} \cmidrule(lr){11-12}
&	200 &	50 &	Lasso &	0.503 &	0.04 &	0.551 &	0.065 &	0.996 &	0.01 &	0.994 &	0.012	\\	
&	&	&	DS &	0.575 &	0.051 &	0.635 &	0.077 &	0.988 &	0.02 &	0.971 &	0.034	\\	
&	&	&	FARM &	0.737 &	0.057 &	0.774 &	0.093 &	0.821 &	0.089 &	- &	-	\\	\cmidrule(lr){2-4} \cmidrule(lr){5-6} \cmidrule(lr){7-8} \cmidrule(lr){9-10} \cmidrule(lr){11-12}
&	200 &	100 &	Lasso &	0.53 &	0.05 &	0.559 &	0.057 &	0.995 &	0.019 &	0.99 &	0.026	\\	
&	&	&	DS &	0.568 &	0.042 &	0.625 &	0.062 &	0.987 &	0.015 &	0.973 &	0.023	\\	
&	&	&	FARM &	0.726 &	0.046 &	0.722 &	0.064 &	0.83 &	0.078 &	- &	-	\\	\cmidrule(lr){2-4} \cmidrule(lr){5-6} \cmidrule(lr){7-8} \cmidrule(lr){9-10} \cmidrule(lr){11-12}
&	500 &	100 &	Lasso &	0.374 &	0.026 &	0.417 &	0.042 &	1 &	0 &	1 &	0.000	\\	
&	&	&	DS &	0.448 &	0.05 &	0.494 &	0.064 &	1 &	0.001 &	0.994 &	0.016	\\	
&	&	&	FARM &	0.551 &	0.043 &	0.566 &	0.089 &	0.99 &	0.018 &	- &	-	\\	\cmidrule(lr){2-4} \cmidrule(lr){5-6} \cmidrule(lr){7-8} \cmidrule(lr){9-10} \cmidrule(lr){11-12}
&	500 &	200 &	Lasso &	0.383 &	0.023 &	0.425 &	0.035 &	1 &	0 &	1 &	0.000	\\	
&	&	&	DS &	0.478 &	0.033 &	0.528 &	0.045 &	1 &	0.001 &	0.995 &	0.018	\\	
&	&	&	FARM &	0.559 &	0.033 &	0.551 &	0.046 &	0.988 &	0.012 &	- &	-	\\	\bottomrule
\end{tabular}}
\end{table}

\begin{table}[htb!]
\caption{\small 
Errors of $\wh{\bm\Omega}^{\las}$ and $\wh{\bm\Omega}^{\ds}$ in estimating $\bm\Omega$ measured by $L_F$ and $L_2$,
averaged over $100$ realisations (also reported are the standard errors)
under the model~\ref{e:one} for the generation of $\bm\xi_t$
and~\ref{m:oracle}--\ref{m:ma} for $\bm\chi_t$ with varying $n$ and $p$.
We also report the TPR when FPR $= 0.05$ without and with thresholding.}
\label{table:est:one:omega}
\centering
{\scriptsize
\begin{tabular}{cccr cc cc cc cc}
\toprule													
&	&	&	&	&	&	&	&	\multicolumn{4}{c}{TPR ($5$\%)} 				\\	
&	&	&	&	\multicolumn{2}{c}{$L_F$} &		\multicolumn{2}{c}{$L_2$} &		\multicolumn{2}{c}{Without} &		\multicolumn{2}{c}{With} 		\\	
&	$n$ &	$p$ &	Method &	Mean &	SD &	Mean &	SD &	Mean &	SD &	Mean &	SD 	\\	\cmidrule(lr){1-4} \cmidrule(lr){5-6} \cmidrule(lr){7-8} \cmidrule(lr){9-10} \cmidrule(lr){11-12}
\ref{m:oracle} &	100 &	50 &	Lasso &	0.452 &	0.087 &	0.587 &	0.105 &	0.789 &	0.18 &	0.704 &	0.236	\\	
&	&	&	DS &	0.456 &	0.042 &	0.593 &	0.048 &	0.896 &	0.092 &	0.721 &	0.111	\\	\cmidrule(lr){2-4} \cmidrule(lr){5-6} \cmidrule(lr){7-8} \cmidrule(lr){9-10} \cmidrule(lr){11-12}
&	100 &	100 &	Lasso &	0.587 &	0.095 &	0.738 &	0.1 &	0.583 &	0.17 &	0.411 &	0.195	\\	
&	&	&	DS &	0.467 &	0.054 &	0.624 &	0.056 &	0.846 &	0.113 &	0.658 &	0.106	\\	\cmidrule(lr){2-4} \cmidrule(lr){5-6} \cmidrule(lr){7-8} \cmidrule(lr){9-10} \cmidrule(lr){11-12}
&	200 &	50 &	Lasso &	0.373 &	0.026 &	0.488 &	0.038 &	0.993 &	0.016 &	0.854 &	0.090	\\	
&	&	&	DS &	0.423 &	0.043 &	0.553 &	0.058 &	0.979 &	0.042 &	0.755 &	0.134	\\	\cmidrule(lr){2-4} \cmidrule(lr){5-6} \cmidrule(lr){7-8} \cmidrule(lr){9-10} \cmidrule(lr){11-12}
&	200 &	100 &	Lasso &	0.376 &	0.022 &	0.507 &	0.03 &	0.991 &	0.016 &	0.839 &	0.064	\\	
&	&	&	DS &	0.444 &	0.024 &	0.593 &	0.03 &	0.98 &	0.021 &	0.727 &	0.069	\\	\cmidrule(lr){2-4} \cmidrule(lr){5-6} \cmidrule(lr){7-8} \cmidrule(lr){9-10} \cmidrule(lr){11-12}
&	500 &	100 &	Lasso &	0.327 &	0.02 &	0.453 &	0.029 &	1 &	0.001 &	0.784 &	0.040	\\	
&	&	&	DS &	0.236 &	0.045 &	0.328 &	0.064 &	1 &	0.002 &	0.975 &	0.067	\\	\cmidrule(lr){2-4} \cmidrule(lr){5-6} \cmidrule(lr){7-8} \cmidrule(lr){9-10} \cmidrule(lr){11-12}
&	500 &	200 &	Lasso &	0.328 &	0.017 &	0.466 &	0.029 &	1 &	0.001 &	0.77 &	0.029	\\	
&	&	&	DS &	0.336 &	0.059 &	0.473 &	0.09 &	0.999 &	0.003 &	0.845 &	0.123	\\	\cmidrule(lr){1-4} \cmidrule(lr){5-6} \cmidrule(lr){7-8} \cmidrule(lr){9-10} \cmidrule(lr){11-12}
\ref{m:ar} &	100 &	50 &	Lasso &	0.486 &	0.057 &	0.652 &	0.154 &	0.697 &	0.148 &	0.578 &	0.169	\\	
&	&	&	DS &	0.488 &	0.064 &	0.662 &	0.18 &	0.691 &	0.129 &	0.545 &	0.162	\\	\cmidrule(lr){2-4} \cmidrule(lr){5-6} \cmidrule(lr){7-8} \cmidrule(lr){9-10} \cmidrule(lr){11-12}
&	100 &	100 &	Lasso &	0.515 &	0.069 &	0.696 &	0.099 &	0.641 &	0.127 &	0.475 &	0.153	\\	
&	&	&	DS &	0.503 &	0.062 &	0.687 &	0.137 &	0.662 &	0.118 &	0.498 &	0.155	\\	\cmidrule(lr){2-4} \cmidrule(lr){5-6} \cmidrule(lr){7-8} \cmidrule(lr){9-10} \cmidrule(lr){11-12}
&	200 &	50 &	Lasso &	0.474 &	0.723 &	0.812 &	2.66 &	0.876 &	0.106 &	0.769 &	0.145	\\	
&	&	&	DS &	0.403 &	0.052 &	0.563 &	0.123 &	0.872 &	0.089 &	0.769 &	0.131	\\	\cmidrule(lr){2-4} \cmidrule(lr){5-6} \cmidrule(lr){7-8} \cmidrule(lr){9-10} \cmidrule(lr){11-12}
&	200 &	100 &	Lasso &	0.416 &	0.046 &	0.573 &	0.071 &	0.898 &	0.071 &	0.728 &	0.149	\\	
&	&	&	DS &	0.417 &	0.048 &	0.572 &	0.066 &	0.91 &	0.059 &	0.737 &	0.130	\\	\cmidrule(lr){2-4} \cmidrule(lr){5-6} \cmidrule(lr){7-8} \cmidrule(lr){9-10} \cmidrule(lr){11-12}
&	500 &	100 &	Lasso &	0.33 &	0.033 &	0.488 &	0.068 &	0.992 &	0.014 &	0.881 &	0.089	\\	
&	&	&	DS &	0.337 &	0.037 &	0.495 &	0.065 &	0.989 &	0.019 &	0.864 &	0.096	\\	\cmidrule(lr){2-4} \cmidrule(lr){5-6} \cmidrule(lr){7-8} \cmidrule(lr){9-10} \cmidrule(lr){11-12}
&	500 &	200 &	Lasso &	0.348 &	0.04 &	0.523 &	0.055 &	0.995 &	0.008 &	0.841 &	0.088	\\	
&	&	&	DS &	0.35 &	0.046 &	0.535 &	0.062 &	0.992 &	0.018 &	0.828 &	0.103	\\	\cmidrule(lr){1-4} \cmidrule(lr){5-6} \cmidrule(lr){7-8} \cmidrule(lr){9-10} \cmidrule(lr){11-12}
\ref{m:ma} &	100 &	50 &	Lasso &	0.433 &	0.067 &	0.576 &	0.093 &	0.696 &	0.159 &	0.666 &	0.195	\\	
&	&	&	DS &	0.433 &	0.033 &	0.584 &	0.044 &	0.768 &	0.098 &	0.668 &	0.108	\\	\cmidrule(lr){2-4} \cmidrule(lr){5-6} \cmidrule(lr){7-8} \cmidrule(lr){9-10} \cmidrule(lr){11-12}
&	100 &	100 &	Lasso &	0.526 &	0.084 &	0.68 &	0.102 &	0.595 &	0.133 &	0.446 &	0.199	\\	
&	&	&	DS &	0.458 &	0.06 &	0.617 &	0.065 &	0.727 &	0.138 &	0.617 &	0.130	\\	\cmidrule(lr){2-4} \cmidrule(lr){5-6} \cmidrule(lr){7-8} \cmidrule(lr){9-10} \cmidrule(lr){11-12}
&	200 &	50 &	Lasso &	0.349 &	0.03 &	0.48 &	0.046 &	0.915 &	0.068 &	0.843 &	0.077	\\	
&	&	&	DS &	0.399 &	0.034 &	0.541 &	0.043 &	0.96 &	0.047 &	0.769 &	0.097	\\	\cmidrule(lr){2-4} \cmidrule(lr){5-6} \cmidrule(lr){7-8} \cmidrule(lr){9-10} \cmidrule(lr){11-12}
&	200 &	100 &	Lasso &	0.334 &	0.027 &	0.471 &	0.042 &	0.917 &	0.054 &	0.838 &	0.078	\\	
&	&	&	DS &	0.391 &	0.03 &	0.544 &	0.038 &	0.966 &	0.03 &	0.76 &	0.066	\\	\cmidrule(lr){2-4} \cmidrule(lr){5-6} \cmidrule(lr){7-8} \cmidrule(lr){9-10} \cmidrule(lr){11-12}
&	500 &	100 &	Lasso &	0.287 &	0.019 &	0.413 &	0.032 &	1 &	0.001 &	0.884 &	0.080	\\	
&	&	&	DS &	0.321 &	0.043 &	0.456 &	0.058 &	0.998 &	0.008 &	0.819 &	0.086	\\	\cmidrule(lr){2-4} \cmidrule(lr){5-6} \cmidrule(lr){7-8} \cmidrule(lr){9-10} \cmidrule(lr){11-12}
&	500 &	200 &	Lasso &	0.292 &	0.022 &	0.428 &	0.028 &	1 &	0.001 &	0.889 &	0.067	\\	
&	&	&	DS &	0.34 &	0.021 &	0.491 &	0.034 &	1 &	0.002 &	0.775 &	0.050	\\	\bottomrule
\end{tabular}}
\end{table}

\begin{table}[htb!]
\caption{\small 
Errors of $\wh{\bm\beta}^{\las}$ and $\wh{\bm\beta}^{\ds}$ in estimating $\bm\beta$ measured by $L_F$ and $L_2$ 
averaged over $100$ realisations (also reported are the standard errors)
under the models~\ref{e:two}--\ref{e:three} for the generation of $\bm\xi_t$
and~\ref{m:ar} for $\bm\chi_t$ with varying $n$ and $p$.
We also report the TPR when FPR $= 0.05$ without and with thresholding.}
\label{table:est:two:beta}
\centering
{\scriptsize
\begin{tabular}{cccr cc cc cc cc}	
\toprule													
&	&	&	&	&	&	&	&	\multicolumn{4}{c}{TPR ($5$\%)} 				\\	
&	&	&	&	\multicolumn{2}{c}{$L_F$} &		\multicolumn{2}{c}{$L_2$} &		\multicolumn{2}{c}{Without} &		\multicolumn{2}{c}{With} 		\\	
&	$n$ &	$p$ &	Method &	Mean &	SD &	Mean &	SD &	Mean &	SD &	Mean &	SD 	\\	\cmidrule(lr){1-4} \cmidrule(lr){5-6} \cmidrule(lr){7-8} \cmidrule(lr){9-10} \cmidrule(lr){11-12}
\ref{e:two} &	100 &	50 &	Lasso &	0.814 &	0.088 &	0.898 &	0.229 &	0.787 &	0.161 &	0.727 &	0.188	\\	
&	&	&	DS &	0.82 &	0.092 &	0.908 &	0.199 &	0.753 &	0.185 &	0.677 &	0.230	\\	\cmidrule(lr){2-4} \cmidrule(lr){5-6} \cmidrule(lr){7-8} \cmidrule(lr){9-10} \cmidrule(lr){11-12}
&	100 &	100 &	Lasso &	0.883 &	0.063 &	0.963 &	0.099 &	0.636 &	0.216 &	0.536 &	0.234	\\	
&	&	&	DS &	0.889 &	0.074 &	0.983 &	0.143 &	0.662 &	0.228 &	0.552 &	0.244	\\	\cmidrule(lr){2-4} \cmidrule(lr){5-6} \cmidrule(lr){7-8} \cmidrule(lr){9-10} \cmidrule(lr){11-12}
&	200 &	50 &	Lasso &	0.655 &	0.068 &	0.753 &	0.105 &	0.959 &	0.053 &	0.941 &	0.079	\\	
&	&	&	DS &	0.651 &	0.082 &	0.743 &	0.111 &	0.959 &	0.052 &	0.932 &	0.086	\\	\cmidrule(lr){2-4} \cmidrule(lr){5-6} \cmidrule(lr){7-8} \cmidrule(lr){9-10} \cmidrule(lr){11-12}
&	200 &	100 &	Lasso &	0.694 &	0.07 &	0.842 &	0.086 &	0.948 &	0.07 &	0.904 &	0.116	\\	
&	&	&	DS &	0.697 &	0.081 &	0.849 &	0.108 &	0.941 &	0.08 &	0.893 &	0.143	\\	\cmidrule(lr){2-4} \cmidrule(lr){5-6} \cmidrule(lr){7-8} \cmidrule(lr){9-10} \cmidrule(lr){11-12}
&	500 &	100 &	Lasso &	0.519 &	0.062 &	0.73 &	0.109 &	0.998 &	0.004 &	0.996 &	0.009	\\	
&	&	&	DS &	0.524 &	0.06 &	0.755 &	0.111 &	0.999 &	0.004 &	0.997 &	0.007	\\	\cmidrule(lr){2-4} \cmidrule(lr){5-6} \cmidrule(lr){7-8} \cmidrule(lr){9-10} \cmidrule(lr){11-12}
&	500 &	200 &	Lasso &	0.549 &	0.055 &	0.83 &	0.088 &	0.997 &	0.004 &	0.993 &	0.010	\\	
&	&	&	DS &	0.557 &	0.05 &	0.907 &	0.092 &	0.997 &	0.006 &	0.993 &	0.014	\\	\cmidrule(lr){1-4} \cmidrule(lr){5-6} \cmidrule(lr){7-8} \cmidrule(lr){9-10} \cmidrule(lr){11-12}
\ref{e:three} &	100 &	50 &	Lasso &	0.813 &	0.092 &	0.867 &	0.111 &	0.745 &	0.183 &	0.676 &	0.218	\\	
&	&	&	DS &	0.829 &	0.09 &	0.893 &	0.111 &	0.709 &	0.225 &	0.649 &	0.247	\\	\cmidrule(lr){2-4} \cmidrule(lr){5-6} \cmidrule(lr){7-8} \cmidrule(lr){9-10} \cmidrule(lr){11-12}
&	100 &	100 &	Lasso &	0.857 &	0.078 &	0.936 &	0.104 &	0.654 &	0.201 &	0.558 &	0.223	\\	
&	&	&	DS &	0.864 &	0.08 &	0.946 &	0.126 &	0.635 &	0.234 &	0.538 &	0.270	\\	\cmidrule(lr){2-4} \cmidrule(lr){5-6} \cmidrule(lr){7-8} \cmidrule(lr){9-10} \cmidrule(lr){11-12}
&	200 &	50 &	Lasso &	0.617 &	0.07 &	0.701 &	0.095 &	0.972 &	0.048 &	0.95 &	0.086	\\	
&	&	&	DS &	0.617 &	0.075 &	0.699 &	0.094 &	0.97 &	0.037 &	0.949 &	0.060	\\	\cmidrule(lr){2-4} \cmidrule(lr){5-6} \cmidrule(lr){7-8} \cmidrule(lr){9-10} \cmidrule(lr){11-12}
&	200 &	100 &	Lasso &	0.668 &	0.078 &	0.808 &	0.1 &	0.948 &	0.066 &	0.909 &	0.122	\\	
&	&	&	DS &	0.655 &	0.087 &	0.796 &	0.11 &	0.953 &	0.07 &	0.918 &	0.122	\\	\cmidrule(lr){2-4} \cmidrule(lr){5-6} \cmidrule(lr){7-8} \cmidrule(lr){9-10} \cmidrule(lr){11-12}
&	500 &	100 &	Lasso &	0.474 &	0.055 &	0.648 &	0.095 &	0.999 &	0.004 &	0.998 &	0.007	\\	
&	&	&	DS &	0.474 &	0.062 &	0.653 &	0.118 &	0.999 &	0.003 &	0.998 &	0.006	\\	\cmidrule(lr){2-4} \cmidrule(lr){5-6} \cmidrule(lr){7-8} \cmidrule(lr){9-10} \cmidrule(lr){11-12}
&	500 &	200 &	Lasso &	0.489 &	0.055 &	0.766 &	0.085 &	0.999 &	0.002 &	0.998 &	0.005	\\	
&	&	&	DS &	0.516 &	0.053 &	0.811 &	0.11 &	0.999 &	0.003 &	0.988 &	0.082	\\	\bottomrule								
\end{tabular}}
\end{table}

\begin{table}[htb!]
\caption{\small 
Errors of $\wh{\bm\Omega}^{\las}$ and $\wh{\bm\Omega}^{\ds}$ in estimating $\bm\Omega$ measured by $L_F$ and $L_2$,
averaged over $100$ realisations (also reported are the standard errors)
under the models~\ref{e:two}--\ref{e:three} for the generation of $\bm\xi_t$
and~\ref{m:ar} for $\bm\chi_t$ with varying $n$ and $p$.
We also report the TPR when FPR $= 0.05$ without and with thresholding.}
\label{table:est:two:omega}
\centering
{\scriptsize
\begin{tabular}{cccr cc cc cc cc}
\toprule													
&	&	&	&	&	&	&	&	\multicolumn{4}{c}{TPR ($5$\%)} 				\\	
&	&	&	&	\multicolumn{2}{c}{$L_F$} &		\multicolumn{2}{c}{$L_2$} &		\multicolumn{2}{c}{Without} &		\multicolumn{2}{c}{With} 		\\	
&	$n$ &	$p$ &	Method &	Mean &	SD &	Mean &	SD &	Mean &	SD &	Mean &	SD 	\\	\cmidrule(lr){1-4} \cmidrule(lr){5-6} \cmidrule(lr){7-8} \cmidrule(lr){9-10} \cmidrule(lr){11-12}
\ref{e:two} &	100 &	50 &	Lasso &	0.582 &	0.055 &	0.732 &	0.155 &	0.407 &	0.071 &	0.329 &	0.086	\\	
&	&	&	DS &	0.582 &	0.054 &	0.726 &	0.105 &	0.396 &	0.076 &	0.323 &	0.099	\\	\cmidrule(lr){2-4} \cmidrule(lr){5-6} \cmidrule(lr){7-8} \cmidrule(lr){9-10} \cmidrule(lr){11-12}
&	100 &	100 &	Lasso &	0.615 &	0.061 &	0.756 &	0.059 &	0.352 &	0.065 &	0.25 &	0.071	\\	
&	&	&	DS &	0.621 &	0.059 &	0.765 &	0.064 &	0.35 &	0.066 &	0.243 &	0.078	\\	\cmidrule(lr){2-4} \cmidrule(lr){5-6} \cmidrule(lr){7-8} \cmidrule(lr){9-10} \cmidrule(lr){11-12}
&	200 &	50 &	Lasso &	0.504 &	0.066 &	0.649 &	0.157 &	0.513 &	0.072 &	0.454 &	0.080	\\	
&	&	&	DS &	0.502 &	0.038 &	0.638 &	0.06 &	0.514 &	0.065 &	0.452 &	0.087	\\	\cmidrule(lr){2-4} \cmidrule(lr){5-6} \cmidrule(lr){7-8} \cmidrule(lr){9-10} \cmidrule(lr){11-12}
&	200 &	100 &	Lasso &	0.522 &	0.047 &	0.658 &	0.067 &	0.515 &	0.06 &	0.392 &	0.080	\\	
&	&	&	DS &	0.525 &	0.048 &	0.669 &	0.064 &	0.518 &	0.063 &	0.392 &	0.089	\\	\cmidrule(lr){2-4} \cmidrule(lr){5-6} \cmidrule(lr){7-8} \cmidrule(lr){9-10} \cmidrule(lr){11-12}
&	500 &	100 &	Lasso &	0.442 &	0.042 &	0.61 &	0.149 &	0.646 &	0.06 &	0.524 &	0.079	\\	
&	&	&	DS &	0.436 &	0.042 &	0.594 &	0.135 &	0.635 &	0.058 &	0.53 &	0.085	\\	\cmidrule(lr){2-4} \cmidrule(lr){5-6} \cmidrule(lr){7-8} \cmidrule(lr){9-10} \cmidrule(lr){11-12}
&	500 &	200 &	Lasso &	0.457 &	0.039 &	0.608 &	0.066 &	0.674 &	0.043 &	0.484 &	0.059	\\	
&	&	&	DS &	0.447 &	0.038 &	0.598 &	0.063 &	0.659 &	0.041 &	0.493 &	0.064	\\	\cmidrule(lr){1-4} \cmidrule(lr){5-6} \cmidrule(lr){7-8} \cmidrule(lr){9-10} \cmidrule(lr){11-12}
\ref{e:three} &	100 &	50 &	Lasso &	0.495 &	0.057 &	0.652 &	0.113 &	0.684 &	0.125 &	0.546 &	0.171	\\	
&	&	&	DS &	0.502 &	0.056 &	0.655 &	0.097 &	0.661 &	0.147 &	0.525 &	0.168	\\	\cmidrule(lr){2-4} \cmidrule(lr){5-6} \cmidrule(lr){7-8} \cmidrule(lr){9-10} \cmidrule(lr){11-12}
&	100 &	100 &	Lasso &	0.519 &	0.055 &	0.696 &	0.082 &	0.645 &	0.125 &	0.46 &	0.148	\\	
&	&	&	DS &	0.521 &	0.058 &	0.696 &	0.081 &	0.628 &	0.145 &	0.47 &	0.161	\\	\cmidrule(lr){2-4} \cmidrule(lr){5-6} \cmidrule(lr){7-8} \cmidrule(lr){9-10} \cmidrule(lr){11-12}
&	200 &	50 &	Lasso &	0.4 &	0.048 &	0.541 &	0.073 &	0.882 &	0.071 &	0.763 &	0.122	\\	
&	&	&	DS &	0.403 &	0.045 &	0.547 &	0.067 &	0.885 &	0.066 &	0.763 &	0.132	\\	\cmidrule(lr){2-4} \cmidrule(lr){5-6} \cmidrule(lr){7-8} \cmidrule(lr){9-10} \cmidrule(lr){11-12}
&	200 &	100 &	Lasso &	0.429 &	0.05 &	0.586 &	0.074 &	0.893 &	0.073 &	0.69 &	0.160	\\	
&	&	&	DS &	0.423 &	0.049 &	0.571 &	0.071 &	0.895 &	0.074 &	0.72 &	0.154	\\	\cmidrule(lr){2-4} \cmidrule(lr){5-6} \cmidrule(lr){7-8} \cmidrule(lr){9-10} \cmidrule(lr){11-12}
&	500 &	100 &	Lasso &	0.338 &	0.039 &	0.499 &	0.064 &	0.992 &	0.016 &	0.856 &	0.109	\\	
&	&	&	DS &	0.336 &	0.039 &	0.499 &	0.067 &	0.989 &	0.019 &	0.867 &	0.094	\\	\cmidrule(lr){2-4} \cmidrule(lr){5-6} \cmidrule(lr){7-8} \cmidrule(lr){9-10} \cmidrule(lr){11-12}
&	500 &	200 &	Lasso &	0.349 &	0.04 &	0.522 &	0.056 &	0.996 &	0.009 &	0.849 &	0.087	\\	
&	&	&	DS &	0.357 &	0.042 &	0.542 &	0.061 &	0.995 &	0.009 &	0.825 &	0.082	\\	\bottomrule
\end{tabular}}
\end{table}

\begin{table}[htb!]
\caption{\small 
Errors of $\wh{\bm\Delta}^{\las}$ and $\wh{\bm\Delta}^{\ds}$ in estimating $\bm\Delta$ measured by $L_F$ and $L_2$,
averaged over $100$ realisations (also reported are the standard errors)
under the models~\ref{e:one}--\ref{e:two} for the generation of $\bm\xi_t$
and~\ref{m:ar} for $\bm\chi_t$ with varying $n$ and $p$.
We also report the TPR when FPR $= 0.05$ without and with thresholding.}
\label{table:est:delta}
\centering
{\scriptsize
\begin{tabular}{cccr cc cc cc cc}
\toprule													
&	&	&	&	&	&	&	&	\multicolumn{4}{c}{TPR ($5$\%)} 				\\	
&	&	&	&	\multicolumn{2}{c}{$L_F$} &		\multicolumn{2}{c}{$L_2$} &		\multicolumn{2}{c}{Without} &		\multicolumn{2}{c}{With} 		\\	
&	$n$ &	$p$ &	Method &	Mean &	SD &	Mean &	SD &	Mean &	SD &	Mean &	SD 	\\	\cmidrule(lr){1-4} \cmidrule(lr){5-6} \cmidrule(lr){7-8} \cmidrule(lr){9-10} \cmidrule(lr){11-12}
\ref{e:one} &	100 &	50 &	Lasso &	0.244 &	0.052 &	0.585 &	0.217 &	1 &	0 &	0.998 &	0.007	\\	
&	&	&	DS &	0.243 &	0.052 &	0.556 &	0.162 &	1 &	0 &	0.999 &	0.006	\\	\cmidrule(lr){2-4} \cmidrule(lr){5-6} \cmidrule(lr){7-8} \cmidrule(lr){9-10} \cmidrule(lr){11-12}
&	100 &	100 &	Lasso &	0.286 &	0.085 &	0.692 &	0.255 &	1 &	0 &	0.998 &	0.006	\\	
&	&	&	DS &	0.274 &	0.072 &	0.646 &	0.219 &	1 &	0 &	0.997 &	0.007	\\	\cmidrule(lr){2-4} \cmidrule(lr){5-6} \cmidrule(lr){7-8} \cmidrule(lr){9-10} \cmidrule(lr){11-12}
&	200 &	50 &	Lasso &	0.241 &	0.413 &	0.749 &	3.01 &	1 &	0 &	0.988 &	0.092	\\	
&	&	&	DS &	0.201 &	0.04 &	0.471 &	0.192 &	1 &	0 &	0.999 &	0.006	\\	\cmidrule(lr){2-4} \cmidrule(lr){5-6} \cmidrule(lr){7-8} \cmidrule(lr){9-10} \cmidrule(lr){11-12}
&	200 &	100 &	Lasso &	0.222 &	0.037 &	0.499 &	0.095 &	1 &	0 &	0.996 &	0.008	\\	
&	&	&	DS &	0.224 &	0.04 &	0.517 &	0.13 &	1 &	0 &	0.996 &	0.009	\\	\cmidrule(lr){2-4} \cmidrule(lr){5-6} \cmidrule(lr){7-8} \cmidrule(lr){9-10} \cmidrule(lr){11-12}
&	500 &	100 &	Lasso &	0.179 &	0.023 &	0.455 &	0.057 &	1 &	0 &	1 &	0.004	\\	
&	&	&	DS &	0.182 &	0.029 &	0.462 &	0.087 &	1 &	0 &	1 &	0.003	\\	\cmidrule(lr){2-4} \cmidrule(lr){5-6} \cmidrule(lr){7-8} \cmidrule(lr){9-10} \cmidrule(lr){11-12}
&	500 &	200 &	Lasso &	0.194 &	0.03 &	0.512 &	0.081 &	1 &	0 &	1 &	0.004	\\	
&	&	&	DS &	0.192 &	0.035 &	0.497 &	0.07 &	1 &	0 &	1 &	0.003	\\	\cmidrule(lr){1-4} \cmidrule(lr){5-6} \cmidrule(lr){7-8} \cmidrule(lr){9-10} \cmidrule(lr){11-12}
\ref{e:two} &	100 &	50 &	Lasso &	0.432 &	0.053 &	0.665 &	0.129 &	0.583 &	0.061 &	0.539 &	0.047	\\	
&	&	&	DS &	0.435 &	0.054 &	0.686 &	0.148 &	0.584 &	0.06 &	0.54 &	0.045	\\	\cmidrule(lr){2-4} \cmidrule(lr){5-6} \cmidrule(lr){7-8} \cmidrule(lr){9-10} \cmidrule(lr){11-12}
&	100 &	100 &	Lasso &	0.476 &	0.071 &	0.725 &	0.119 &	0.551 &	0.044 &	0.516 &	0.032	\\	
&	&	&	DS &	0.483 &	0.069 &	0.744 &	0.126 &	0.545 &	0.046 &	0.51 &	0.033	\\	\cmidrule(lr){2-4} \cmidrule(lr){5-6} \cmidrule(lr){7-8} \cmidrule(lr){9-10} \cmidrule(lr){11-12}
&	200 &	50 &	Lasso &	0.391 &	0.096 &	0.643 &	0.434 &	0.692 &	0.055 &	0.614 &	0.066	\\	
&	&	&	DS &	0.383 &	0.029 &	0.585 &	0.092 &	0.69 &	0.058 &	0.613 &	0.071	\\	\cmidrule(lr){2-4} \cmidrule(lr){5-6} \cmidrule(lr){7-8} \cmidrule(lr){9-10} \cmidrule(lr){11-12}
&	200 &	100 &	Lasso &	0.402 &	0.038 &	0.641 &	0.134 &	0.66 &	0.046 &	0.584 &	0.048	\\	
&	&	&	DS &	0.406 &	0.04 &	0.645 &	0.119 &	0.656 &	0.044 &	0.581 &	0.048	\\	\cmidrule(lr){2-4} \cmidrule(lr){5-6} \cmidrule(lr){7-8} \cmidrule(lr){9-10} \cmidrule(lr){11-12}
&	500 &	100 &	Lasso &	0.363 &	0.05 &	0.703 &	0.436 &	0.743 &	0.032 &	0.695 &	0.050	\\	
&	&	&	DS &	0.357 &	0.045 &	0.674 &	0.385 &	0.743 &	0.03 &	0.693 &	0.044	\\	\cmidrule(lr){2-4} \cmidrule(lr){5-6} \cmidrule(lr){7-8} \cmidrule(lr){9-10} \cmidrule(lr){11-12}
&	500 &	200 &	Lasso &	0.366 &	0.031 &	0.667 &	0.21 &	0.736 &	0.027 &	0.686 &	0.046	\\	
&	&	&	DS &	0.359 &	0.029 &	0.672 &	0.229 &	0.736 &	0.025 &	0.684 &	0.043	\\	\bottomrule
\end{tabular}}
\end{table}

\clearpage

\subsubsection{Forecasting}
\label{app:sim:forecast}

We assess the performance of the forecasting methodology of FNETS which estimates the best linear predictors $\bm\chi_{n + 1 \vert n}$ by one of the two estimators, $\wh{\bm\chi}_{n + 1 \vert n}^{\static}$ (Section~\ref{sec:2005}) and $\wh{\bm\chi}_{n + 1 \vert n}^{\va}$ (Appendix~\ref{sec:2017}), 
$\bm\xi_{n + 1 \vert n}$ by $\wh{\bm\xi}^{\las}_{n + 1 \vert n}$ and $\wh{\bm\xi}^{\ds}_{n + 1 \vert n}$ (denoting the estimators of $\bm\xi_{n + 1 \vert n}$ with Lasso and DS estimators of $\bm\beta$, respectively) and finally, $\mbf X_{n + 1 \vert n} = \bm\chi_{n + 1 \vert n} + \bm\xi_{n + 1 \vert n}$ by their combinations. 
The estimator $\wh{\bm\xi}_{n + 1 \vert n}$ depends on the choice of the in-sample estimator of $\bm\chi_t$ (which automatically yields the in-sample estimator of $\bm\xi_t$)
but we suppress this dependence in the notations.
\cite{fan2021bridging} propose a forecasting methodology based on VAR modelling of the estimated factors, the results from which we report alongside those from FNETS under the heading FARM. 

In Tables~\ref{table:forecast:one:avg} (under~\ref{e:one}) and~\ref{table:forecast:two:avg} (under~\ref{e:two}--\ref{e:three}), we report the estimation errors of a given forecast, say $\wh{\bm\gamma}_{n + 1 \vert n}$,
in estimating $\bm\gamma_{n + 1 \vert n}$ measured as
\begin{align}
\label{eq:forecast:avg}
\frac{\vert \wh{\bm\gamma}_{n + 1 \vert n} - \bm\gamma_{n + 1 \vert n} \vert_2^2}
{\vert \bm\gamma_{n + 1 \vert n} \vert_2^2}
\end{align}
and additionally, report the in-sample estimation errors of 
$\wh{\bm\chi}_t = \wh{\bm\chi}_t^{\static}$ and $\wh{\bm\chi}_t = \wh{\bm\chi}_t^{\va}$
measured as 
$\sum_t \vert \wh{\bm\chi}_t - \bm\chi_t \vert_2^2 / (\sum_t \vert \bm\chi_t \vert_2^2)$.
Tables~\ref{table:forecast:one:max} (under~\ref{e:one}) and Tables~\ref{table:forecast:two:max} (under~\ref{e:two}--\ref{e:three}) summarise the forecasting errors measured by
\begin{align}
\label{eq:forecast:max}
\frac{\vert \wh{\bm\gamma}_{n + 1 \vert n} - \bm\gamma_{n + 1 \vert n} \vert_\infty}
{\vert \bm\gamma_{n + 1 \vert n} \vert_\infty},
\end{align}
which relates to the norm chosen for theoretical analysis
in Propositions~\ref{thm:2005} and~\ref{thm:common:var} and Proposition~\ref{prop:idio:pred}.
We also report the forecasting errors measured as
\begin{align}
\label{eq:forecast:xavg}
& \frac{\vert \wh{\bm\gamma}_{n + 1 \vert n} - \bm\gamma_{n + 1} \vert_2^2}
{\vert \bm\gamma_{n + 1} \vert_2}, \quad \text{and}
\\
\label{eq:forecast:xmax}
& \frac{\vert \wh{\bm\gamma}_{n + 1 \vert n} - \bm\gamma_{n + 1} \vert_\infty^2}
{\vert \bm\gamma_{n + 1} \vert_\infty},
\end{align}
see Tables~\ref{table:forecast:one:xavg}, \ref{table:forecast:one:xmax} (under~\ref{e:one}), \ref{table:forecast:two:xavg} and~\ref{table:forecast:two:xmax} (under~\ref{e:two}--\ref{e:three}).
Additionally, Table~\ref{table:forecast:benchmark} contains results in the above error measures obtained from the benchmark case when $\mbf X_t = \bm\xi_t$ under~\ref{m:oracle} when $\bm\xi_t$ is generated according to~\ref{e:one}.

For FNETS, the forecasting performance improves as $n$ increases regardless of the error measures.
The estimation error for $\bm\chi_{n + 1 \vert n}$ decreases with $p$
while it increases for $\bm\xi_{n + 1 \vert n}$,
which is due to that the factor-adjustment step enjoys the blessing of dimensionality
while VAR estimation tends to suffer from the increase of the dimensionality.
This observation is consistent with Propositions~\ref{thm:2005} and~\ref{prop:idio:pred}, and the former tends to offset latter in the estimation error of $\mbf X_{n + 1 \vert n}$.
The forecasting method based on the unrestricted GDFM (Appendix~\ref{sec:2017}) exhibits some numerical instabilities due to the instability of the singular VAR equation system adopted for this purpose (see Remark~\ref{rem:2017}~\ref{rem:2017:hormann})
which, in turn, may be attributed to the possible
over-specification of the VAR order \citep{hormann2021prediction}. 
As such, the performance of $\wh{\bm\chi}_{n + 1 \vert n}^{\static}$ is generally superior even when $\bm\chi_t$ does not admit a static representation (under~\ref{m:ar}),
and the gap between the two estimators gets wider when a static representation exists (under~\ref{m:ma}) as $n$ and $p$ increase.
In general, we do not observe any systematic effect of the innovation distribution on the forecasting performance.

FARM performs reasonably well when $n \ge 200$. In particular, when $\bm\chi_t$ is generated under~\ref{m:ma}, the static factor model offers a valid alternative to GDFM such that FARM marginally outperforms FNETS in in-sample estimation and estimating $\bm\chi_{n + 1 \vert n}$.
However, the performance in estimating the VAR parameters carries forward to producing the forecast of $\bm\xi_{n + 1}$, which sometimes result in slightly worse forecasts of $\mbf X_{n + 1}$.

Under~\ref{m:ma}, occasionally the best linear predictor $\bm\chi_{n + 1 \vert n}$
has all its elements close to zero and 
the small value of $\vert \bm\chi_{n + 1 \vert n} \vert_2$ inflates 
the relative estimation error measured as in~\eqref{eq:forecast:avg};
this phenomenon is not observed 
from the forecasting errors measured with~\eqref{eq:forecast:xavg}.

\begin{table}[htb!]
\caption{\small Errors in forecasting $\mbf X_{n + 1}$ by the FNETS measured by~\eqref{eq:forecast:avg}--\eqref{eq:forecast:xmax}
averaged over $100$ realisations (also reported are the standard errors)
under the model~\ref{e:one} for the generation of $\bm\xi_t$
and~\ref{m:oracle} for $\bm\chi_t$ with varying $n$ and $p$, which serve as a benchmark.}
\label{table:forecast:benchmark}
\centering
{\footnotesize 
}
\end{table}

\clearpage

\section{Proofs}
\label{sec:proof}

In what follows, for a random variable $X$ and $\nu \ge 1$, we denote $\Vert X \Vert_\nu = (\E|X|^\nu)^{1/\nu}$.

We work under the following assumption which extends Assumption~\ref{assum:innov:new} by considering the special case of Gaussianity separately.
\begin{assum} 
\label{assum:innov}
\it{
\begin{enumerate}[wide, label = (\roman*)]
\item \label{cond:iid} $\{\mbf u_t\}_{t \in \Z}$ is a sequence of zero-mean, $q$-dimensional martingale difference vectors with $\Cov(\mbf u_t) = \mbf I_q$, and $u_{it}$ and $u_{jt}$ are independent for all $1 \le i, j \le q$ with $i \ne j$ and all $t\in \Z$. 
Also, $\{\bm\vep_t\}_{t \in \Z}$ is a sequence of zero-mean, $p$-dimensional martingale difference vectors with $\Cov(\bm\vep_t) = \mbf I_p$, and $\vep_{it}$ and $\vep_{jt}$ are independent for all $1 \le i, j \le p$ with $i \ne j$ and all $t\in \Z$.

\item \label{cond:uncor} 
The common and idiosyncratic shocks are uncorrelated, i.e.\
$\E(u_{jt} \vep_{it'}) = 0$ for any $1 \le i \le p$, $1 \le j \le q$ and $t, t' \in \Z$.

\item \label{cond:dist} Either one of the following conditions holds.
\begin{enumerate}[label = (\alph*) ]
\item \label{cond:moment} There exists $\nu > 4$ such that
$\max\l\{ \max_{1 \le j \le q} \Vert u_{jt} \Vert_\nu^\nu, 
\max_{1 \le i \le p} \Vert \vep_{it} \Vert_\nu^\nu) \r\} \le \mu_\nu$
for some constant $\mu_\nu>0$.

\item \label{cond:gauss} $\mbf u_t \sim_{\iid} \mc N_q(\mbf 0, \mbf I)$ and 
$\bm\vep_t \sim_{\iid} \mc N_p(\mbf 0, \mbf I)$.
\end{enumerate}
\end{enumerate}
}
\end{assum}
Accordingly, we define $\psi_n$ and $\vartheta_{n, p}$ as in~\eqref{eq:rates:one}
under Assumption~\ref{assum:innov}~\ref{cond:dist}~\ref{cond:moment}, while 
\begin{align}
\label{eq:rates:two}
\psi_n = \sqrt{\frac{m\log(m)}{n}} \quad \text{and} \quad
\vartheta_{n, p} = \sqrt{\frac{m\log(mp)}{n}}
\end{align}
under Assumption~\ref{assum:innov}~\ref{cond:dist}~\ref{cond:gauss}.

\subsection{Proof of Proposition~\ref{prop:idio:eval}}

Let $\mc D(z) = \sum_{\ell = 0}^\infty \mbf D_\ell z^\ell$.
Under Assumption~\ref{assum:idio},  we can find a constant $B_\xi>0$ 
which depends only on $M_\vep$, $\Xi$ and $\varsigma$
such that, uniformly over $\omega \in [-\pi, \pi]$,
\begin{align*}
\mu_{\xi, 1}(\omega) =& \Vert \bm\Sigma_\xi(\omega) \Vert
= \frac{1}{2\pi} \Vert \mc D(e^{-\iota \omega}) \bm\Gamma \mc D^*(e^{-\iota \omega}) \Vert
\le \frac{M_\vep}{2\pi} 
\Vert \mc D(e^{-\iota \omega}) \Vert_1 \; \Vert \mc D(e^{-\iota \omega}) \Vert_\infty
\\
\le& \frac{M_\vep}{2\pi} 
\l(\max_{1 \le i \le p} \sum_{j = 1}^p \sum_{\ell = 0}^\infty \vert D_{\ell, ij} \vert \r)
\l(\max_{1 \le j \le p} \sum_{i = 1}^p \sum_{\ell = 0}^\infty \vert D_{\ell, ij} \vert \r)
\\
\le& \frac{M_\vep}{2\pi} 
\l(\max_i \sum_{j = 1}^p \sum_{\ell = 0}^\infty \frac{C_{ij}}{(1 + \ell)^{\varsigma}} \r)
\l(\max_j \sum_{i = 1}^p \sum_{\ell = 0}^\infty \frac{C_{ij}}{(1 + \ell)^{\varsigma}} \r)
\le \frac{\Xi^2 M_\vep}{2\pi} \l(\sum_{\ell = 0}^\infty \frac{1}{(1 + \ell)^\varsigma}\r)^2 \le B_\xi.
\end{align*}

\subsection{Results in Section~\ref{sec:fnets:network:theory}}
\label{sec:dpcs:proof}

\subsubsection{Preliminary lemmas}

In the following lemmas, we operate 
under Assumptions~\ref{assum:common}, \ref{assum:factor}, \ref{assum:idio} and~\ref{assum:innov}.

\cite{zhang2021} extend the functional dependence measure 
introduced in \cite{wu2005} for high-dimensional time series. 
Denote by $\mc F_t = \sigma\{(\mbf u_v, \bm\vep_v), \, v \le t\}$
and $\mc G(\cdot) = (g_1(\cdot), \ldots, g_p(\cdot))^\top$ 
a $\R^p$-valued measurable function such that 
$\mbf X_t = \mc G(\mc F_t)$ and $X_{it} = g_i(\mc F_t)$.
Also let $\mc F_{t, \{0\}} = \sigma\{\ldots, (\mbf u_{-1}, \bm\vep_{-1}), (\mbf u^\prime_0, \bm\vep^\prime_0), 
(\mbf u_1, \bm\vep_1)^\top, \ldots, (\mbf u_t, \bm\vep_t)\}$ 
denote a coupled version of $\mc F_t$ with an independent copy
$(\mbf u^\prime_0, \bm\vep^\prime_0)$ replacing $(\mbf u_0, \bm\vep_0)$.
Then, the element-wise functional dependence measure is defined as
\begin{align*}
\delta_{t, \nu, i} = \l\Vert g_i(\mc F_t) - g_i(\mc F_{t, \{0\}}) \r\Vert_\nu,
\end{align*}
the uniform functional dependence measure as
\begin{align*}
\delta_{t, \nu} = \l\Vert \vert \mc G(\mc F_t) - \mc G(\mc F_{t, \{0\}}) \vert_\infty \r\Vert_\nu,
\end{align*}
the dependence adjusted norms as
\begin{align*}
\Vert \mbf X_{i \cdot} \Vert_{\nu, \alpha} 
= \sup_{\ell \ge 0} \, (\ell + 1)^\alpha \sum_{t = \ell}^\infty \delta_{t, \nu, i}
\quad \text{and} \quad
\Vert \vert \mbf X_{\cdot} \vert_\infty \Vert_{\nu, \alpha} 
= \sup_{\ell \ge 0} \, (\ell + 1)^\alpha \sum_{t = \ell}^\infty \delta_{t, \nu},
\end{align*}
and the overall and the uniform dependence adjusted norms as
\begin{align*}
\Psi_{\nu, \alpha} = \l( \sum_{i = 1}^p \Vert \mbf X_{i \cdot} \Vert_{\nu, \alpha}^{\nu/2} \r)^{2/\nu}
\quad \text{and} \quad
\Phi_{\nu, \alpha} = \max_{1 \le i \le p} \Vert \mbf X_{i \cdot} \Vert_{\nu, \alpha}.
\end{align*}

\begin{lem}
\label{lem:func:dep}
Let $\alpha \le \varsigma - 1$. 
\begin{enumerate}[wide, label = (\roman*)]
\item 
Under Assumption~\ref{assum:innov}~\ref{cond:dist}~\ref{cond:moment}, we have
\begin{align*}
\Psi_{\nu, \alpha} \le C_{\nu, \Xi, \varsigma} M_\vep^{1/2} p^{2/\nu} \mu_\nu^{1/\nu}
\quad \text{and} \quad
\Vert \vert \mbf X_{\cdot} \vert_\infty \Vert_{\nu, \alpha} 
\le C_{\nu, \Xi, \varsigma} M_\vep^{1/2} \log^{1/2}(p) p^{1/\nu} \mu_\nu^{1/\nu}
\end{align*}
for some constant $C_{\nu, \Xi, \varsigma} > 0$ depending only on $\nu$ and $\varsigma$
(varying on each occasion).
\item Under Assumption~\ref{assum:innov}~\ref{cond:dist}~\ref{cond:moment}--\ref{cond:gauss}, we have 
$\Phi_{\nu, \alpha} \le C_{\nu, \Xi, \varsigma} M_\vep^{1/2} \mu_\nu^{1/\nu}$
for any $\nu$ for which $\Vert u_{jt} \Vert_\nu$ and $\Vert \vep_{it} \Vert_\nu$ exist.
\end{enumerate}
\end{lem}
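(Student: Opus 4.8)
The plan is to exploit the fact that $\mbf X_t = \bm\chi_t + \bm\xi_t$ is, coordinatewise, a moving average of infinite order in two shock sequences whose coordinates are independent across both index and time:
\[
X_{it} = \sum_{\ell \ge 0} \mbf B_{\ell, i \cdot} \mbf u_{t - \ell} + \sum_{\ell \ge 0} (\mbf D_\ell \bm\Gamma^{1/2})_{i \cdot} \bm\vep_{t - \ell}.
\]
In the coupling defining $\mc F_{t, \{0\}}$ only the time-$0$ shocks $(\mbf u_0, \bm\vep_0)$ are resampled, and for $t \ge 0$ they enter $X_{it}$ solely through the lag-$t$ coefficient, so
\[
g_i(\mc F_t) - g_i(\mc F_{t, \{0\}}) = \mbf B_{t, i \cdot}(\mbf u_0 - \mbf u_0') + (\mbf D_t \bm\Gamma^{1/2})_{i \cdot}(\bm\vep_0 - \bm\vep_0').
\]
Thus both $\delta_{t, \nu, i}$ and $\delta_{t, \nu}$ reduce to $\nu$-norms (coordinatewise, resp. after a maximum over $i$) of a single linear combination of independent, mean-zero resampling differences, with algebraic decay in the lag $t$ inherited from Assumptions~\ref{assum:common} and~\ref{assum:idio}~\ref{cond:idio:coef}. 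This reduces the lemma to a per-coordinate moment bound and a maximal inequality over the $p$ coordinates.

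For the per-coordinate bound I would invoke the Rosenthal/Marcinkiewicz--Zygmund inequality, which for $\nu \ge 2$ and independent mean-zero $Z_k$ gives $\Vert \sum_k a_k Z_k \Vert_\nu \le C_\nu (\sum_k a_k^2)^{1/2} \max_k \Vert Z_k \Vert_\nu$. Applied to the common term (with $\vert \mbf B_{t, i \cdot} \vert_2 \le \Xi (1 + t)^{-\varsigma}$) and the idiosyncratic term (with $\vert (\mbf D_t \bm\Gamma^{1/2})_{i \cdot} \vert_2 \le \Vert \bm\Gamma^{1/2} \Vert \, \vert \mbf D_{t, i \cdot} \vert_2 \le M_\vep^{1/2} \Xi (1 + t)^{-\varsigma}$, using $\Vert \bm\Gamma \Vert \le M_\vep$ from Assumption~\ref{assum:idio}~\ref{cond:idio:innov} and $\max_i (\sum_k C_{ik}^2)^{1/2} \le \Xi$), this yields $\delta_{t, \nu, i} \le C_{\nu, \Xi} M_\vep^{1/2} (1 + t)^{-\varsigma} \mu_\nu^{1/\nu}$ uniformly in $i$. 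Summing over $t \ge \ell$ and using $\varsigma > 2$ gives $\sum_{t \ge \ell} \delta_{t, \nu, i} \le C (1 + \ell)^{-(\varsigma - 1)}$, so the constraint $\alpha \le \varsigma - 1$ makes $(\ell + 1)^\alpha \sum_{t \ge \ell} \delta_{t, \nu, i}$ bounded by its value at $\ell = 0$. This delivers $\Vert \mbf X_{i \cdot} \Vert_{\nu, \alpha} \le C_{\nu, \Xi, \varsigma} M_\vep^{1/2} \mu_\nu^{1/\nu}$ uniformly in $i$, which is exactly $\Phi_{\nu, \alpha}$ of part~(ii) (valid verbatim in the Gaussian case, where all moments exist); inserting the uniform bound into $\Psi_{\nu, \alpha} = (\sum_i \Vert \mbf X_{i \cdot} \Vert_{\nu, \alpha}^{\nu/2})^{2/\nu}$ produces the factor $p^{2/\nu}$ in part~(i).

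For the uniform dependence adjusted norm I would bound $\delta_{t, \nu} = \Vert \max_i \vert Y_i \vert \Vert_\nu$ by a maximal inequality separating a Gaussian-type contribution from a polynomial-tail one, of the schematic form $\Vert \max_i \vert \sum_k a_{ik} Z_k \vert \Vert_\nu \le C [\sqrt{\log p} \, \max_i (\sum_k a_{ik}^2)^{1/2} \max_k \Vert Z_k \Vert_2 + p^{1/\nu} \max_i (\sum_k \vert a_{ik} \vert^\nu)^{1/\nu} \max_k \Vert Z_k \Vert_\nu]$, where $\sqrt{\log p}$ reflects the scaling of the maximum of $p$ approximately sub-Gaussian linear forms and $p^{1/\nu}$ the heavy-tailed correction. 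Here the common part must be routed through the column-$\ell_\infty$ condition $(\sum_{j = 1}^q \vert \mbf B_{t, \cdot j} \vert_\infty^2)^{1/2} \le \Xi (1 + t)^{-\varsigma}$ of Assumption~\ref{assum:common}, so that the fixed dimension $q$ rather than $p$ governs that piece, while the idiosyncratic part is again controlled by $\max_i \vert (\mbf D_t \bm\Gamma^{1/2})_{i \cdot} \vert_2 \le M_\vep^{1/2} \Xi (1 + t)^{-\varsigma}$. Bounding $\sqrt{\log p} + p^{1/\nu} \le 2 \sqrt{\log p} \, p^{1/\nu}$, then summing over $t \ge \ell$ and taking the supremum in $\ell$ exactly as before gives $\Vert \vert \mbf X_\cdot \vert_\infty \Vert_{\nu, \alpha} \le C_{\nu, \Xi, \varsigma} M_\vep^{1/2} \log^{1/2}(p) p^{1/\nu} \mu_\nu^{1/\nu}$.

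The routine ingredients are the coupling reduction and the per-coordinate Rosenthal bound; the main obstacle is the maximal-inequality step for $\delta_{t, \nu}$, where one must combine the two tail regimes to produce precisely the $\log^{1/2}(p) \, p^{1/\nu}$ scaling and take care to control the common component via the column-$\ell_\infty$ norms of Assumption~\ref{assum:common} (keeping it free of $p$) while the idiosyncratic component uses the row-$\ell_2$ bound of Assumption~\ref{assum:idio}~\ref{cond:idio:coef}. In the Gaussian case the resampling differences are exactly Gaussian, so the same estimates hold with $\mu_\nu^{1/\nu}$ replaced by the relevant Gaussian moment and the $\log^{1/2}(p)$ arising directly from Gaussian maximal inequalities.
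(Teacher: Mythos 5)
Your proposal is correct and follows essentially the same route as the paper: the coupling reduces the functional dependence measures to moment bounds on $\mbf B_t\mbf u_0+\mbf D_t\bm\Gamma^{1/2}\bm\vep_0$, a Rosenthal-type inequality for independent summands (Lemma~D.3 of \citealp{zhang2021} in the paper) gives the per-coordinate decay $C M_\vep^{1/2}\Xi(1+t)^{-\varsigma}\mu_\nu^{1/\nu}$, and summing over $t\ge\ell$ with $\alpha\le\varsigma-1$ yields the stated norms. The only cosmetic difference is in bounding $\delta_{t,\nu}$: the paper obtains $\log^{1/2}(p)\,q^{1/\nu}$ for the common part and a crude $\ell_\nu$-aggregation giving $p^{1/\nu}$ (with no logarithm) for the idiosyncratic part before combining, whereas you fold both regimes into a single two-term maximal inequality and bound $\sqrt{\log p}+p^{1/\nu}\le 2\sqrt{\log p}\,p^{1/\nu}$; both arrive at the same final bound.
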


\begin{proof}
By Minkowski inequality,
\begin{align*}
\delta_{t, \nu, i} &= \Vert \mbf B_{t, i\cdot} \mbf u_0 + \mbf D_{t, i\cdot} \bm\Gamma^{1/2} \bm\vep_0 \Vert_\nu
\le \Vert \mbf B_{t, i\cdot} \mbf u_0 \Vert_\nu + \Vert \mbf D_{t, i\cdot} \bm\Gamma^{1/2} \bm\vep_0 \Vert_\nu,
\quad \text{and}
\\
\delta_{t, \nu} &= \Vert \vert \mbf B_t \mbf u_0 + \mbf D_t \bm\Gamma^{1/2} \bm\vep_0 \vert_\infty \Vert_\nu
\le \Vert \vert \mbf B_t \mbf u_0 \vert_\infty \Vert_\nu + 
\Vert \vert \mbf D_t \bm\Gamma^{1/2} \bm\vep_0 \vert_\infty \Vert_\nu.
\end{align*}
Due to independence of $u_{jt}, \, 1 \le j \le q$, assumed in Assumption~\ref{assum:innov}~\ref{cond:iid}, Assumption~\ref{assum:common} and Lemma~D.3 of \cite{zhang2021}, there exists $C_\nu > 0$ depends only on $\nu$
(varying from one instance to another) such that
\begin{align*}
& \Vert \mbf B_{t, i\cdot} \mbf u_0 \Vert_\nu
= \l\Vert \sum_{j = 1}^q B_{t, ij} u_{j0} \r\Vert_\nu
\le C_\nu \l( \sum_{j = 1}^q \Vert B_{t, ij} u_{j0} \Vert_\nu^2 \r)^{1/2}
\le C_\nu \vert \mbf B_{t, i\cdot} \vert_2 \; \mu_\nu^{1/\nu} 
\le C_\nu \Xi (1 + t)^{-\varsigma} \; \mu_\nu^{1/\nu} 
\end{align*}
for all $1 \le i \le p$, and
\begin{align*}
& \Vert \vert \mbf B_t \mbf u_0 \vert_\infty \Vert_\nu
\le C_\nu \log^{1/2}(p) \l(\sum_{k = 1}^q \vert \mbf B_{t, \cdot k} \vert_\infty^2\r)^{1/2} 
q^{1/\nu} \mu_\nu^{1/\nu}
\le C_\nu \log^{1/2}(p) \Xi (1 + t)^{-\varsigma}
q^{1/\nu} \mu_\nu^{1/\nu}.
\end{align*}
Similarly, from Assumption~\ref{assum:idio} and independence of $\vep_{it}$ assumed in Assumption~\ref{assum:innov}~\ref{cond:iid}, we have
\begin{align*}
& \Vert \mbf D_{t, i\cdot} \bm\Gamma^{1/2} \bm\vep_0 \Vert_\nu 
\le C_\nu \vert \mbf D_{t, i\cdot} \bm\Gamma^{1/2} \vert_2 \; \mu_\nu^{1/\nu} 
\le C_\nu M_\vep^{1/2} \vert \mbf D_{t, i\cdot} \vert_2 \; \mu_\nu^{1/\nu} 
\\
&\le C_\nu M_\vep^{1/2} \l(\sum_{k = 1}^p C_{ik}^2\r)^{1/2} (1 + t)^{-\varsigma} \; \mu_\nu^{1/\nu}
\le C_\nu M_\vep^{1/2} \Xi (1 + t)^{-\varsigma} \; \mu_\nu^{1/\nu} 
\end{align*}
for all $1 \le i \le p$.
Then,
\begin{align}
& \Vert \vert \mbf D_t \bm\Gamma^{1/2} \bm\vep_0 \vert_\infty \Vert_\nu^\nu
\le \sum_{i = 1}^p \Vert \mbf D_{t, i\cdot} \bm\Gamma^{1/2} \bm\vep_0 \Vert_\nu^\nu
\le p (C_\nu M_\vep^{1/2} \Xi (1 + t)^{-\varsigma})^\nu \mu_\nu
\label{eq:d:gam:vep:inf}
\end{align}
such that
$\Vert \vert \mbf D_t \bm\Gamma^{1/2} \bm\vep_0 \vert_\infty \Vert_\nu
\le C_\nu M_\vep^{1/2} \Xi (1 + t)^{-\varsigma} p^{1/\nu} \mu_\nu^{1/\nu}$.
Then, for some constant $C_{\nu, \Xi} > 0$,
\begin{align*}
& \delta_{t, \nu, i} \le C_{\nu, \Xi} M_\vep^{1/2} (1 + t)^{-\varsigma} \mu_\nu^{1/\nu},
\quad
\delta_{t, \nu} \le C_{\nu, \Xi} M_\vep^{1/2} \log^{1/2}(p) (1 + t)^{-\varsigma} p^{1/\nu} \mu_\nu^{1/\nu}
\end{align*}
and setting $\alpha \le \varsigma - 1$ leads to
\begin{align*}
& \Phi_{\nu, \alpha}  \le C_{\nu, \Xi, \varsigma} M_\vep^{1/2} \mu_\nu^{1/\nu},
\quad 
\Psi_{\nu, \alpha} \le C_{\nu, \Xi, \varsigma} M_\vep^{1/2} p^{2/\nu} \mu_\nu^{1/\nu}
\quad \text{and}
\\
& \Vert \vert \mbf X_{\cdot} \vert_\infty \Vert_{\nu, \alpha} 
\le C_{\nu, \Xi, \varsigma} M_\vep^{1/2} \log^{1/2}(p) p^{1/\nu} \mu_\nu^{1/\nu}.
\end{align*}
\end{proof}

\begin{lem}
\label{lem:decay} 
Denote by $\bm\Gamma_x(\ell) = [\gamma_{x, ii'}(\ell), \, 1 \le i, i' \le p]$.
Then, there exists some constant $C_{\Xi, \varsigma, \vep}$
depending only on $\Xi, \varsigma$ defined in Assumptions~\ref{assum:common} and \ref{assum:idio}~\ref{cond:idio:coef} and $M_\vep$, such that
$\max_{1 \le i, i' \le p} \gamma_{x, ii'}(\ell) \le C_{\Xi, \varsigma, \vep} (1 + \vert \ell \vert)^{-\varsigma}$
and consequently,
$\max_{1 \le i, i' \le p} \sum_{\vert \ell \vert > m} \gamma_{x, ii'}(\ell) = O(m^{-\varsigma + 1})= o(m^{-1})$.
\end{lem}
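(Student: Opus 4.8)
The plan is to start from the additive structure $\mbf X_t = \bm\chi_t + \bm\xi_t$ and to note that, since the common and idiosyncratic shocks are uncorrelated (Assumption~\ref{assum:innov}~\ref{cond:uncor}), the ACV matrices decompose additively as $\bm\Gamma_x(\ell) = \bm\Gamma_\chi(\ell) + \bm\Gamma_\xi(\ell)$, hence $\gamma_{x,ii'}(\ell) = \gamma_{\chi,ii'}(\ell) + \gamma_{\xi,ii'}(\ell)$ entrywise. It therefore suffices to bound each of the two terms by a constant multiple of $(1+\vert\ell\vert)^{-\varsigma}$. By the symmetry $\bm\Gamma_x(\ell) = \bm\Gamma_x^\top(-\ell)$ I may restrict attention to $\ell \ge 0$ throughout.

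For the common component I would expand $\bm\chi_t = \sum_{k \ge 0}\mbf B_k \mbf u_{t-k}$ and use that $\{\mbf u_t\}$ is a martingale difference sequence with $\Cov(\mbf u_t)=\mbf I_q$, so that $\E(\mbf u_{t-\ell-k}\mbf u_{t-k'}^\top) = \mbf I_q$ precisely when $k' = k+\ell$ and vanishes otherwise. This collapses the double sum to the convolution
\begin{align*}
\gamma_{\chi,ii'}(\ell) = \sum_{k=0}^\infty \mbf B_{k,i\cdot}\,\mbf B_{k+\ell,i'\cdot}^\top .
\end{align*}
Applying the Cauchy--Schwarz inequality to each summand and invoking the row bound $\max_i \vert \mbf B_{k,i\cdot}\vert_2 \le \Xi(1+k)^{-\varsigma}$ from Assumption~\ref{assum:common} yields
\begin{align*}
\vert \gamma_{\chi,ii'}(\ell)\vert \le \Xi^2 \sum_{k=0}^\infty (1+k)^{-\varsigma}(1+k+\ell)^{-\varsigma} \le \Xi^2 (1+\ell)^{-\varsigma}\sum_{k=0}^\infty (1+k)^{-\varsigma},
\end{align*}
where I bounded $(1+k+\ell)^{-\varsigma}\le(1+\ell)^{-\varsigma}$; the remaining series converges because $\varsigma > 2 > 1$, giving the desired decay uniformly over $(i,i')$.

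The idiosyncratic term is handled in exactly the same way after writing $\bm\xi_t = \sum_{k \ge 0}\mbf D_k\bm\Gamma^{1/2}\bm\vep_{t-k}$: the martingale difference property of $\{\bm\vep_t\}$ with $\Cov(\bm\vep_t)=\mbf I_p$ again forces $k'=k+\ell$, so $\gamma_{\xi,ii'}(\ell)=\sum_{k \ge 0}\mbf D_{k,i\cdot}\bm\Gamma\,\mbf D_{k+\ell,i'\cdot}^\top$. Bounding $\vert \mbf D_{k,i\cdot}\bm\Gamma^{1/2}\vert_2\le M_\vep^{1/2}\vert \mbf D_{k,i\cdot}\vert_2$ via $\Vert\bm\Gamma\Vert\le M_\vep$ (Assumption~\ref{assum:idio}~\ref{cond:idio:innov}) together with $\vert \mbf D_{k,i\cdot}\vert_2 \le (1+k)^{-\varsigma}\sqrt{\sum_{k'}C_{ik'}^2}\le \Xi(1+k)^{-\varsigma}$ (Assumption~\ref{assum:idio}~\ref{cond:idio:coef}) reproduces the same convolution bound, now with an additional factor $M_\vep$. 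Summing the two contributions produces a constant $C_{\Xi,\varsigma,\vep}$ depending only on $\Xi$, $\varsigma$ and $M_\vep$, as claimed.

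Finally, the tail statement follows by summing the pointwise bound: $\sum_{\vert\ell\vert>m}(1+\vert\ell\vert)^{-\varsigma}\le 2\sum_{\ell>m}\ell^{-\varsigma}\le 2\int_m^\infty x^{-\varsigma}\,dx = \frac{2}{\varsigma-1}\,m^{-\varsigma+1}$, which is $O(m^{-\varsigma+1})$ and, since $\varsigma>2$ forces $-\varsigma+1<-1$, is $o(m^{-1})$. The argument is essentially bookkeeping; the only points requiring care are the index management when reducing the double shock sum to a single convolution, and ensuring that the Cauchy--Schwarz and operator-norm steps use the sharp row $\ell_2$-norm bounds supplied by Assumptions~\ref{assum:common} and~\ref{assum:idio}~\ref{cond:idio:coef} rather than coarser ones. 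No genuinely hard estimate is involved.
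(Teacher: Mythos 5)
Your proposal is correct and follows essentially the same route as the paper's proof: decompose $\bm\Gamma_x(\ell)=\bm\Gamma_\chi(\ell)+\bm\Gamma_\xi(\ell)$ via the uncorrelatedness of the shocks, collapse each autocovariance to a single convolution using the martingale-difference/white-noise structure, apply Cauchy--Schwarz with the row $\ell_2$-bounds of Assumptions~\ref{assum:common} and~\ref{assum:idio}~\ref{cond:idio:coef} together with $\Vert\bm\Gamma\Vert\le M_\vep$, and sum the tail. No discrepancies worth noting.
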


\begin{proof}
By Assumption~\ref{assum:innov}~\ref{cond:uncor}, 
we have $\bm\Gamma_x(\ell) = \bm\Gamma_\chi(\ell) + \bm\Gamma_\xi(\ell)$ for all $\ell \in \Z$,
and define $\gamma_{\chi, ii'}(\ell)$ and $\gamma_{\xi, ii'}(\ell)$
analogously as $\gamma_{x, ii'}(\ell)$. 
From Assumption~\ref{assum:common}, for some $h \ge 0$,
\begin{align}
\vert \gamma_{\chi, ii'}(h) \vert = \l\vert
\E\l(\sum_{\ell, \ell' = 0}^\infty \sum_{j, j' = 1}^q B_{\ell, ij} B_{\ell', i'j'} u_{j, t - \ell - h}u_{j', t -\ell'} \r)
\r\vert
= \sum_{\ell = 0}^\infty \l\vert \sum_{j = 1}^q B_{\ell, ij} B_{\ell + h, i'j} \r\vert
\nn \\
= \sum_{\ell = 0}^\infty \vert \mbf B_{\ell, i \cdot} \vert_2 \; \vert \mbf B_{\ell + h, i' \cdot} \vert_2 
\le \sum_{\ell = 0}^\infty \frac{\Xi^2}{(1 + \ell)^{\varsigma} (1 + \ell + h)^{\varsigma}}
\le \sum_{\ell = 0}^\infty \frac{\Xi^2}{(1 + \ell)^{\varsigma} (1 + h)^{\varsigma}}
\le C_{\Xi, \varsigma} (1 + h)^{-\varsigma}
\label{eq:common:decay}
\end{align}
uniformly in $1 \le i, i' \le p$ for some $C_{\Xi, \varsigma} > 0$ 
depending only on $\Xi$ and $\varsigma$. Therefore,
$\sum_{\vert \ell \vert > m} \vert \gamma_{\chi, ii'}(\ell) \vert \le 
2C_{\Xi, \varsigma} \sum_{\ell > m} \ell^{-\varsigma}
= O(m^{-\varsigma + 1})$.
Similarly, from Assumption~\ref{assum:idio}~\ref{cond:idio:coef}, 
\begin{align}
\vert \gamma_{\xi, ii'}(h) \vert =& \l\vert
\E\l(\sum_{\ell, \ell' = 0}^\infty 
(\mbf D_{\ell, i \cdot} \bm\Gamma^{1/2} \bm\vep_{t - \ell - h})
(\mbf D_{\ell', i' \cdot} \bm\Gamma^{1/2} \bm\vep_{t - \ell'})\r)
\r\vert
= \sum_{\ell = 0}^\infty \l\vert 
\mbf D_{\ell, i \cdot} \bm\Gamma \mbf D^\top_{\ell + h, i' \cdot} 
\r\vert
\nn \\
=& \sum_{\ell = 0}^\infty \Vert \bm\Gamma \Vert
\vert \mbf D_{\ell, i \cdot} \vert_2 \; \vert \mbf D_{\ell + h, i' \cdot} \vert_2 
\le M_\vep
\sum_{\ell = 0}^\infty \frac{\Xi^2}{(1 + \ell)^{\varsigma} (1 + \ell + h)^{\varsigma}}
\nn \\
\le& M_\vep
\sum_{\ell = 0}^\infty \frac{\Xi^2}{(1 + \ell)^{\varsigma} (1 + h)^{\varsigma}}
\le C_{\Xi, \varsigma, \vep} (1 + h)^{-\varsigma}
\label{eq:idio:decay}
\end{align}
uniformly in $1 \le i, i' \le p$
for some $C_{\Xi, \varsigma, \vep} > 0$ 
depending only on $\Xi$, $\varsigma$ and $M_\vep$.
Therefore,
$\sum_{\vert \ell \vert > m} \vert \gamma_{\xi, ii'}(\ell) \vert = O(m^{-\varsigma+ 1})$,
which completes the proof.
\end{proof}

\begin{lem}
\label{lem:sigma:bound}
Denote by $\bm\Sigma_x(\omega) = [\sigma_{x, ii'}(\omega), 1 \le i, i' \le p]$.
Then, there exists a constant $B_\sigma>0$ such that
$\sup_{\pi \in [-\pi, \pi]} \max_{1 \le i, i' \le p} \sigma_{x, ii'}(\omega) \le B_\sigma$.
\end{lem}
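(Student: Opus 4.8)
The plan is to read off the bound directly from the Fourier representation of the spectral density in terms of the autocovariances, for which the decay has already been quantified in Lemma~\ref{lem:decay}. Recall from the identity stated in Section~\ref{sec:dpca} that $\bm\Sigma_x(\omega) = (2\pi)^{-1} \sum_{\ell = -\infty}^\infty \bm\Gamma_x(\ell) \exp(-\iota \ell \omega)$, so that entrywise, with $\bm\Gamma_x(\ell) = [\gamma_{x, ii'}(\ell)]$,
\begin{align*}
\sigma_{x, ii'}(\omega) = \frac{1}{2\pi} \sum_{\ell = -\infty}^\infty \gamma_{x, ii'}(\ell) \exp(-\iota \ell \omega).
\end{align*}
First I would pass to moduli and use the triangle inequality for the series together with $\vert \exp(-\iota \ell \omega) \vert = 1$, which gives the frequency-free bound $\vert \sigma_{x, ii'}(\omega) \vert \le (2\pi)^{-1} \sum_{\ell = -\infty}^\infty \vert \gamma_{x, ii'}(\ell) \vert$ uniformly in $\omega \in [-\pi, \pi]$.

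Next I would invoke the per-lag estimate from Lemma~\ref{lem:decay}, namely $\max_{1 \le i, i' \le p} \vert \gamma_{x, ii'}(\ell) \vert \le C_{\Xi, \varsigma, \vep} (1 + \vert \ell \vert)^{-\varsigma}$, which holds under Assumptions~\ref{assum:common}, \ref{assum:idio} and~\ref{assum:innov}. Substituting this bound into the sum and observing that $\varsigma > 2 > 1$ ensures absolute summability, I obtain
\begin{align*}
\sup_{\omega \in [-\pi, \pi]} \max_{1 \le i, i' \le p} \vert \sigma_{x, ii'}(\omega) \vert
\le \frac{C_{\Xi, \varsigma, \vep}}{2\pi} \sum_{\ell = -\infty}^\infty (1 + \vert \ell \vert)^{-\varsigma} =: B_\sigma < \infty,
\end{align*}
where the constant $B_\sigma$ depends only on $\Xi$, $\varsigma$ and $M_\vep$ and is independent of $\omega$, $i$, $i'$ and $p$. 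This is exactly the asserted uniform bound (interpreting the claim for the off-diagonal, complex-valued entries as a bound on the modulus).

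The honest remark here is that there is no real obstacle left to overcome: all the work resides in Lemma~\ref{lem:decay}, whose proof already establishes the algebraic decay $(1 + \vert \ell \vert)^{-\varsigma}$ of the autocovariances uniformly over the cross-section by separately controlling the common and idiosyncratic contributions through Assumptions~\ref{assum:common} and~\ref{assum:idio}~\ref{cond:idio:coef}. Consequently the present statement is a one-line corollary of that decay via the Fourier inversion formula, and the only point requiring minor care is that the exponent $\varsigma > 1$ (guaranteed here by $\varsigma > 2$) is what renders the defining series of the spectral density uniformly convergent.
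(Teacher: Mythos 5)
Your proof is correct and is essentially identical to the paper's: both bound $\vert\sigma_{x,ii'}(\omega)\vert$ by $(2\pi)^{-1}\sum_{\ell}\vert\gamma_{x,ii'}(\ell)\vert$ via the Fourier representation and then invoke the uniform algebraic decay $\vert\gamma_{x,ii'}(\ell)\vert \le C_{\Xi,\varsigma,\vep}(1+\vert\ell\vert)^{-\varsigma}$ from Lemma~\ref{lem:decay} with $\varsigma > 2$ to sum the series. Nothing further is needed.
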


\begin{proof}
By Lemma~\ref{lem:decay}, we can find $B_\sigma$ that depends only on $\Xi$, $\varsigma$ and $\bm\Gamma$
such that
\begin{align*}
\sup_{\omega \in [-\pi, \pi]} \max_{1 \le i, i' \le p} \l\vert \sigma_{x, ii'}(\omega) \r\vert \le 
\frac{1}{2\pi} \max_{1 \le i, i' \le p} \sum_{\ell = -\infty}^\infty  \l\vert \gamma_{x, ii'}(\ell) \r\vert
\le \frac{C_{\Xi, \varsigma, \vep}}{2\pi} \sum_{\ell = -\infty}^\infty \frac{1}{(1 + \vert \ell \vert)^\varsigma}
\le B_\sigma.
\end{align*}
\end{proof}

\begin{lem}
\label{lem:sigma:deriv}
For all $1 \le i, i' \le p$, the functions $\omega \mapsto \sigma_{\chi, ii'}(\omega)$
possess derivatives of any order and are of bounded variation,
i.e.\ there exists $B^\prime_\sigma > 0$ such that
$\sum_{k = 1}^N \vert \sigma_{\chi, ii'}(\omega_n) - \sigma_{\chi, ii'} (\omega_{k - 1}) \vert \le B^\prime_\sigma$
uniformly in $1 \le i, i' \le p$, $k \in \N$ and any partition of $[-\pi, \pi]$,
$-\pi = \omega_0 < \omega_1 < \ldots < \omega_N = \pi$.
\end{lem}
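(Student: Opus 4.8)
The plan is to exploit the Fourier series representation of the spectral density of $\bm\chi_t$. Exactly as for $\mbf X_t$ in Section~\ref{sec:dpca}, the summability of the autocovariances gives $\bm\Sigma_\chi(\omega) = (2\pi)^{-1}\sum_{\ell=-\infty}^\infty \bm\Gamma_\chi(\ell)\exp(-\iota\ell\omega)$, and hence entrywise $\sigma_{\chi,ii'}(\omega) = (2\pi)^{-1}\sum_{\ell=-\infty}^\infty \gamma_{\chi,ii'}(\ell)\exp(-\iota\ell\omega)$. First I would record the \emph{uniform} algebraic decay of the common-component autocovariances already established inside the proof of Lemma~\ref{lem:decay}, namely $\vert\gamma_{\chi,ii'}(\ell)\vert \le C_{\Xi,\varsigma}(1+\vert\ell\vert)^{-\varsigma}$ uniformly over $1 \le i, i' \le p$ (cf.\ equation~\eqref{eq:common:decay}), where $\varsigma > 2$. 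This single uniform bound is the engine for everything that follows.

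Differentiability would then follow from term-by-term differentiation of this Fourier series. For a nonnegative integer $k$ the formally differentiated series $\sum_{\ell} (-\iota\ell)^k \gamma_{\chi,ii'}(\ell)\exp(-\iota\ell\omega)$ has terms dominated by $\vert\ell\vert^k C_{\Xi,\varsigma}(1+\vert\ell\vert)^{-\varsigma}$, whose sum is finite and independent of $\omega$ whenever $k < \varsigma - 1$; by the Weierstrass $M$-test the differentiated series then converges absolutely and uniformly, so it legitimately represents $\sigma_{\chi,ii'}^{(k)}$ and $\sigma_{\chi,ii'}$ is $C^k$.

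For the bounded-variation bound I would use that a continuously differentiable function on $[-\pi,\pi]$ has total variation equal to $\int_{-\pi}^\pi \vert \sigma_{\chi,ii'}'(\omega)\vert\,d\omega$, so that for any partition $-\pi = \omega_0 < \cdots < \omega_N = \pi$,
\[
\sum_{k=1}^N \l\vert \sigma_{\chi,ii'}(\omega_k) - \sigma_{\chi,ii'}(\omega_{k-1})\r\vert \le \int_{-\pi}^\pi \l\vert \sigma_{\chi,ii'}'(\omega)\r\vert\,d\omega \le \sum_{\ell=-\infty}^\infty \vert\ell\vert\,\vert\gamma_{\chi,ii'}(\ell)\vert \le C_{\Xi,\varsigma}\sum_{\ell=-\infty}^\infty \frac{\vert\ell\vert}{(1+\vert\ell\vert)^{\varsigma}} =: B^\prime_\sigma,
\]
where the penultimate inequality bounds $\sup_\omega\vert\sigma_{\chi,ii'}'(\omega)\vert$ by $(2\pi)^{-1}\sum_\ell\vert\ell\vert\vert\gamma_{\chi,ii'}(\ell)\vert$, and the final series converges precisely because $\varsigma > 2$; crucially $B^\prime_\sigma$ depends neither on $i, i'$ nor on the partition.

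The essential (but routine) point is securing each estimate \emph{uniformly} across all $p^2$ pairs $(i,i')$, which is exactly what the decay bound from the proof of Lemma~\ref{lem:decay} provides, so the whole argument reduces to the elementary convergence $\sum_\ell \vert\ell\vert(1+\vert\ell\vert)^{-\varsigma} < \infty$. The one genuinely delicate point I would flag is the tension between the stated ``derivatives of any order'' and the fixed decay exponent $\varsigma$: term-by-term differentiation is only justified up to order $\lfloor\varsigma-1\rfloor$ unless one invokes additional smoothness of the filters $\mc B_{ij}(L)$, and since the bounded-variation conclusion (and its subsequent use with the Bartlett kernel via summation by parts) needs only the first-derivative bound, that is the part I would establish in full.
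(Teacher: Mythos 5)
Your proof is correct and follows essentially the same route as the paper's: both invoke the uniform decay $\vert\gamma_{\chi,ii'}(\ell)\vert \le C_{\Xi,\varsigma}(1+\vert\ell\vert)^{-\varsigma}$ from Lemma~\ref{lem:decay}, differentiate the Fourier series term by term, and deduce bounded variation from the uniform bound on the first derivative. Your caveat about ``derivatives of any order'' is well taken — the paper's own proof asserts this from the same algebraic decay, which strictly only justifies smoothness up to order $\lfloor\varsigma-1\rfloor$ — but, as you note, only the first-derivative bound is used downstream.
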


\begin{proof}
From Lemma~\ref{lem:decay}, 
$\max_{1 \le i, i' \le p} \vert \gamma_{\chi, ii'} (\ell) \vert \le C_{\Xi, \varsigma} (1 + \vert \ell \vert)^{-\varsigma}$
for all $\ell$, which implies that
\begin{align*}
\sigma_{\chi, ii'}(\omega) = \frac{1}{2\pi} \sum_{\ell = -\infty}^\infty \gamma_{\chi, ii'}(\ell) e^{-\iota \omega \ell}
\end{align*}
has derivatives of all orders. Moreover,
\begin{align*}
\l\vert \frac{d}{d\omega} \sigma_{\chi, ii'}(\omega) \r\vert = \frac{1}{2\pi} \sum_{\ell = -\infty}^\infty
\l\vert (-\iota \ell) \gamma_{\chi, ii'}(\ell) e^{-\iota \omega \ell} \r\vert
\le \frac{C_{\Xi, \varsigma}}{\pi} \sum_{\ell = 0}^\infty \frac{\ell}{(1 + \ell)^{\varsigma}} 
\le C^\prime_{\Xi, \varsigma}
\end{align*}
for some constant $C^\prime_{\Xi, \varsigma} > 0$ not depending on $1 \le i, i' \le p$ or $\omega \in [-\pi, \pi]$,
which entails the bounded variation of $\sigma_{\chi, ii'}(\omega)$.
\end{proof}

\begin{lem}
\label{lem:acv:x}
Let $\bm\Gamma_x(\ell) = [\gamma_{x, ii'}(\ell)]_{i, i' = 1}^p$
and $\wh{\bm\Gamma}_x(\ell) = [\wh\gamma_{x, ii'}(\ell)]_{i, i' = 1}^p$.
For some fixed $s \in \mathbb{N}$, the following holds.
\begin{enumerate}[wide, label = (\roman*)]
\item \label{lem:acv:x:one}
$p^{-1} \max_{- s \le \ell \le s}
\Vert \wh{\bm\Gamma}_{x}(\ell) - \bm\Gamma_x(\ell) \Vert_F = O_P(n^{-1/2})$.
\item \label{lem:acv:x:two}
$\max_{- s \le \ell \le s}
\vert \wh{\bm\Gamma}_{x}(\ell) - \bm\Gamma_x(\ell) \vert_\infty = O_P(\wt\vartheta_{n, p})$ where
\begin{align*}
\wt\vartheta_{n, p} = \l\{\begin{array}{ll}
\frac{p^{2/\nu} \log^3(p)}{n^{1 - 2/\nu}} \vee \sqrt{\frac{\log(p)}{n}} & \text{under Assumption~\ref{assum:innov}~\ref{cond:dist}~\ref{cond:moment},} 
\\
\sqrt{\frac{\log(p)}{n}} & \text{under Assumption~\ref{assum:innov}~\ref{cond:dist}~\ref{cond:gauss},} 
\end{array}\r.
\end{align*}
see also~\eqref{eq:tilde:vartheta}.
\end{enumerate}
\end{lem}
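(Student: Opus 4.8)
The plan is to establish both bounds entrywise, treating each sample autocovariance $\wh\gamma_{x,ii'}(\ell) = n^{-1}\sum_{t=\ell+1}^n X_{i,t-\ell}X_{i',t}$ as a normalised partial sum of the stationary product process $\{X_{i,t-\ell}X_{i',t}\}_t$, and invoking the functional dependence measure machinery of \cite{zhang2021, wu2005}. The crucial input is that the dependence-adjusted norms are already controlled: Lemma~\ref{lem:func:dep} gives $\Phi_{\nu, \alpha} \le C_{\nu, \Xi, \varsigma} M_\vep^{1/2} \mu_\nu^{1/\nu}$ uniformly in $p$ (for $\alpha \le \varsigma - 1$), so the product sequences inherit summable functional dependence with constants that do not grow with the dimension. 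I would first dispose of the bias: since we average $n - \ell$ products but divide by $n$, we have $\E\wh\gamma_{x,ii'}(\ell) = (1 - \ell/n)\gamma_{x,ii'}(\ell)$, and because $s$ is fixed and $\sup_{i,i'}\vert\gamma_{x,ii'}(\ell)\vert$ is bounded by Lemma~\ref{lem:decay}, this bias is uniformly $O(n^{-1})$, hence dominated by both target rates.

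For part~\ref{lem:acv:x:one}, it suffices to bound the averaged mean-squared error. Writing $p^{-2}\Vert\wh{\bm\Gamma}_x(\ell) - \bm\Gamma_x(\ell)\Vert_F^2 = p^{-2}\sum_{i,i'}(\wh\gamma_{x,ii'}(\ell) - \gamma_{x,ii'}(\ell))^2$, I would show that each summand has variance $O(n^{-1})$ with a constant uniform in $(i,i')$. This follows from the standard variance expansion for sample autocovariances of functionally dependent sequences, in which the leading term is $n^{-1}$ times a sum of fourth-order cumulants and products of autocovariances, each controlled by $\Phi_{\nu, \alpha}$ (the requirement $\nu > 4$ guaranteeing that the products $X_{i,t-\ell}X_{i',t}$ have finite second moment). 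Taking expectations yields $\E[p^{-2}\Vert\wh{\bm\Gamma}_x(\ell)-\bm\Gamma_x(\ell)\Vert_F^2] = O(n^{-1})$ uniformly over the $2s+1$ lags, and a Markov inequality combined with a maximum over the finitely many lags delivers $p^{-1}\max_{\vert\ell\vert\le s}\Vert\wh{\bm\Gamma}_x(\ell)-\bm\Gamma_x(\ell)\Vert_F = O_P(n^{-1/2})$.

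For part~\ref{lem:acv:x:two}, I would apply a Nagaev-type concentration inequality for partial sums of functionally dependent processes (as derived in \cite{zhang2021}) to each centred entry $\wh\gamma_{x,ii'}(\ell) - \E\wh\gamma_{x,ii'}(\ell)$, and then take a union bound over the $p^2(2s+1)$ events. Under the polynomial-moment Assumption~\ref{assum:innov}~\ref{cond:dist}~\ref{cond:moment}, the tail inequality splits into a sub-exponential (Bernstein) contribution, producing the $\sqrt{\log(p)/n}$ term, and a heavy-tail contribution arising from the truncation step, producing the $p^{2/\nu}\log^3(p)/n^{1 - 2/\nu}$ term; in both cases the constants are governed by $\Phi_{\nu, \alpha}$ and are therefore uniform in $p$. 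Under the Gaussian Assumption~\ref{assum:innov}~\ref{cond:dist}~\ref{cond:gauss}, the products are sub-exponential, so only the $\sqrt{\log(p)/n}$ term survives. Combining with the negligible bias gives the claimed rate $\wt\vartheta_{n, p}$.

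The main obstacle is the heavy-tailed $\ell_\infty$ bound: obtaining the correct dependence on the tail index $\nu$ and the dimension $p$ in the first term of $\wt\vartheta_{n, p}$ requires carefully matching the truncation level in the Nagaev inequality to the number of entries in the union bound, and verifying that the product process $\{X_{i,t-\ell}X_{i',t}\}_t$ has functional dependence measure and dependence-adjusted norm controlled uniformly by those of $\mbf X_t$ — this is precisely where the reduction to $\Phi_{\nu, \alpha}$ via Lemma~\ref{lem:func:dep} does the heavy lifting, converting moment and decay hypotheses on the innovations into dimension-free constants. The Frobenius bound and the bias analysis are routine by comparison.
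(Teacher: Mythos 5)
Your part~\ref{lem:acv:x:one} is fine: the paper reaches the same per-entry second-moment bound $\E(\max_{\vert\ell\vert\le s}\vert\wh\gamma_{x,ii'}(\ell)-\gamma_{x,ii'}(\ell)\vert^2)\le C/n$ with a constant uniform in $(i,i')$ (it does so by integrating the scalar tail bound of Proposition~3.3 of \cite{zhang2021}, cf.\ Remark~\ref{rem:unif:op}, rather than by a fourth-cumulant expansion, but with $\nu>4$ and the algebraic decay of the filter coefficients your moment computation closes the same way), and the bias from dividing by $n$ rather than $n-\ell$ is indeed $O(n^{-1})$ and negligible.

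There is, however, a genuine gap in your part~\ref{lem:acv:x:two} in the heavy-tailed case. Applying the scalar Nagaev inequality entrywise and union-bounding over the $p^2(2s+1)$ events does not produce the first term of $\wt\vartheta_{n,p}$. The sub-exponential piece union-bounds harmlessly (the $p^2$ enters the exponent as $\log p$, giving $\sqrt{\log(p)/n}$), but the polynomial-tail piece does not: even with the sharp per-entry constants $\Vert\mbf X_{i\cdot}\Vert_{\nu,\alpha}^{\nu/2}\Vert\mbf X_{i'\cdot}\Vert_{\nu,\alpha}^{\nu/2}$, summing over $(i,i')$ yields a numerator of order $\Psi_{\nu,\alpha}^\nu\asymp p^2\mu_\nu$, and solving $n\Psi_{\nu,\alpha}^\nu/(nz)^{\nu/2}\lesssim 1$ forces $z\gtrsim p^{4/\nu}/n^{1-2/\nu}$ --- strictly worse than the claimed $p^{2/\nu}\log^3(p)/n^{1-2/\nu}$ whenever $p^{2/\nu}\gg\log^3(p)$. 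The stated rate comes from the $\ell_\infty$ (maximal) form of Zhang--Wu's inequality applied to the vector process directly, whose polynomial-tail constant is $\log(p)\Vert\vert\mbf X_\cdot\vert_\infty\Vert_{\nu,\alpha}\wedge\Psi_{\nu,\alpha}$; the operative branch is the first one, keyed to the \emph{uniform} functional dependence measure $\delta_{t,\nu}=\Vert\vert\mc G(\mc F_t)-\mc G(\mc F_{t,\{0\}})\vert_\infty\Vert_\nu$, for which Lemma~\ref{lem:func:dep} gives $\Vert\vert\mbf X_\cdot\vert_\infty\Vert_{\nu,\alpha}\le C\log^{1/2}(p)p^{1/\nu}\mu_\nu^{1/\nu}$ --- this is exactly where the exponents $p^{2/\nu}$ and $\log^3(p)$ come from. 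Your write-up only ever invokes $\Phi_{\nu,\alpha}$ (the coordinatewise norm), which cannot see this gain; to repair the argument you must replace the entrywise-plus-union-bound step by the vector maximal inequality and feed it the bounds on $\Vert\vert\mbf X_\cdot\vert_\infty\Vert_{\nu,\alpha}$ and $\Psi_{\nu,\alpha}$ that Lemma~\ref{lem:func:dep} already supplies. The Gaussian case is unaffected.
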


\begin{proof}
In Lemma~\ref{lem:func:dep},
for $\varsigma > 2$, we can always set $\alpha = \varsigma - 1 > 1/2 - 2/\nu$.
By Proposition~3.3 of \cite{zhang2021},
there exist universal constants $C_1, C_2 >0$
and constants $C_\alpha, C_{\nu, \alpha}>0$ that depend only on their subscripts,
such that for any $z > 0$, 
\begin{align*}
& \p\l( \max_{-s \le \ell \le s} \l\vert \wh\gamma_{x, ii'}(\ell) - \gamma_{x, ii'}(\ell) \r\vert > z \r) \le
\\
& \l\{\begin{array}{ll}
\frac{C_{\nu, \alpha} n s^{\nu/4} \Phi_{\nu, \alpha}^\nu}{(nz)^{\nu/2}}
+ C_1 s \exp\l(-\frac{nz^2}{C_\alpha \Phi_{4, \alpha}^4}\r) 
& \text{under Assumption~\ref{assum:innov}~\ref{cond:dist}~\ref{cond:moment}},
\\
2 s \exp\l[- C_2 \min\l( \frac{n z^2}{\Phi_{2, 0}^4}, \frac{nz}{\Phi_{2, 0}^2} \r) \r] 
& \text{under~Assumption~\ref{assum:innov}~\ref{cond:dist}~\ref{cond:gauss}.}
\end{array}\r.
\end{align*}
Then, due to Lemma~\ref{lem:func:dep}, the finiteness of $s$ and since $\nu > 4$ under Assumption~\ref{assum:innov}~\ref{cond:dist}~\ref{cond:moment},
there exists some constant $C >0$ not dependent on $i, i'$ such that
\begin{align}
\label{eq:lem:acv:x:one}
\E\l(\max_{-s \le \ell \le s} 
\l\vert \wh\gamma_{x, ii'}(\ell) - \gamma_{x, ii'}(\ell)) \r\vert^2\r) \le \frac{C}{n}
\end{align}
and~\ref{lem:acv:x:one} follows by Chebyshev's inequality (see also Remark~\ref{rem:unif:op}).
By the same proposition and Lemma~\ref{lem:func:dep}, 
\begin{align*}
& \p\l( \max_{-s \le \ell \le s} \l\vert \wh{\bm\Gamma}_x(\ell) - \bm\Gamma_x(\ell) \r\vert_\infty > z \r) \le
\\
& \l\{\begin{array}{ll}
\frac{C_{\nu, \alpha} n s^{\nu/4}
(\log(p) \Vert \vert \mbf X_{\cdot} \vert_\infty \Vert_{\nu, \alpha} 
\wedge \Psi_{\nu, \alpha})^\nu}{(nz)^{\nu/2}}
+ C_1 s p^2 \exp\l(-\frac{nz^2}{C_\alpha \Phi_{4, \alpha}^4}\r) 
& \text{under Assumption~\ref{assum:innov}~\ref{cond:dist}~\ref{cond:moment}},
\\
2 s p^2 \exp\l[- C_2 \min\l( \frac{n z^2}{\Phi_{2, 0}^4}, \frac{nz}{\Phi_{2, 0}^2} \r) \r] 
& \text{under~Assumption~\ref{assum:innov}~\ref{cond:dist}~\ref{cond:gauss},}
\end{array}\r.
\end{align*}
from which~\ref{lem:acv:x:two} follows.
\end{proof}

In what follows, we assume that $\wh{\mbf S}  = [\wh s_{ii'}, \, 1 \le i, i' \le p]$ is an estimator of
a Hermitian, positive semi-definite matrix $\mbf S \in \C^{p \times p}$ of a finite rank $q < p$.
In addition, we have $\vert \mbf S \vert_\infty = O(1)$.
Denoting the eigenvalues and eigenvectors of $\mbf S$ by
$(\mu_j, \mbf e_j), \, 1 \le j \le q$, we assume that there exist 
pairs of constants $(\alpha_j, \beta_j)$ and constants $\rho_j \in (0, 1]$
with $\rho_1 \ge \ldots \ge \rho_q$ satisfying
\begin{align}
\label{cond:s:one}
\beta_1 \ge \frac{\mu_1}{p^{\rho_1}} \ge \alpha_1 > \beta_2 \ge \frac{\mu_2}{p^{\rho_2}}
\ge \alpha_2 > \ldots > \beta_q \ge \frac{\mu_q}{p^{\rho_q}} \ge \alpha_q > 0.
\end{align}
Also, we assume that there exist a (deterministic) matrix $\wt{\mbf S} = [\wt s_{i, i'}, \, 1 \le i, i' \le p]$,
a constant $C>0$ not dependent on $i, i'$ and
$\zeta_{n, p}, \bar{\zeta}_{n, p} \to 0$ as $n, p \to 0$, such that
\begin{align}
\Vert \wt{\mbf S} - \mbf S \Vert & = O(1), \label{cond:s:two} \\
\E[(\wh s_{ii'} - \wt s_{ii'})^2] & \le C (\zeta_{n, p})^2, \label{cond:s:three} \\
\max_{1 \le i, i' \le p} \vert \wh s_{ii'} - \wt s_{ii'} \vert & = O_P(\bar{\zeta}_{n, p}). \label{cond:s:four}
\end{align}
Finally, we assume that $p^{1 - \rho_q} \zeta_{n, p} \to 0$ as $n, p \to \infty$.
Denoting the eigenvalues and eigenvectors of $\mbf S$ by
$(\wh\mu_j, \wh{\mbf e}_j), \, j \ge 1$,
let $\wh{\mbf E} = [\wh{\mbf e}_j, \, 1 \le j \le q]$
and $\wh{\bm{\mc M}} = \text{diag}(\wh\mu_j, \, 1 \le j \le q)$
and analogously define $\mbf E$ and $\bm{\mc M}$ with $(\mu_j, \mbf e_j)$.

\begin{lem}
\label{lem:one}
There exists a unitary, diagonal matrix 
$\bm{\mc O} \in \C^{q \times q}$ such that
\begin{align*}
\frac{\mu_q}{p}
\l\Vert \wh{\mbf E} - \mbf E \bm{\mc O} \r\Vert_F = 
O_P\l( \zeta_{n, p} \vee \frac{1}{p}\r). 
\end{align*}
\end{lem}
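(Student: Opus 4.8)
The plan is to view $\wh{\mbf S}$ as a perturbation of $\mbf S$ and apply an eigenvector (Wedin/single-vector) version of the Davis--Kahan theorem, exploiting the fact that the relevant spectral gaps diverge at the scale $p^{\rho_q}$. The guiding observation is that the factor $\mu_q/p$ in the statement is designed to cancel the $\mu_q^{-1}\asymp p^{-\rho_q}$ arising from the perturbation bound, leaving only a factor $p^{-1}$ multiplying the operator norm $\Vert\wh{\mbf S}-\mbf S\Vert$. First I would control that operator norm by splitting $\wh{\mbf S}-\mbf S = (\wh{\mbf S}-\wt{\mbf S}) + (\wt{\mbf S}-\mbf S)$. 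The second summand is $O(1)$ in operator norm by~\eqref{cond:s:two}; for the first I would pass to the Frobenius norm and use~\eqref{cond:s:three}, namely
\begin{align*}
\E\l\Vert \wh{\mbf S} - \wt{\mbf S} \r\Vert_F^2 = \sum_{i, i' = 1}^p \E\l[(\wh s_{ii'} - \wt s_{ii'})^2\r] \le C p^2 (\zeta_{n, p})^2,
\end{align*}
so that $\Vert \wh{\mbf S} - \wt{\mbf S}\Vert \le \Vert\wh{\mbf S} - \wt{\mbf S}\Vert_F = O_P(p\zeta_{n,p})$ by Markov's inequality. Combining the two gives $\Vert\wh{\mbf S} - \mbf S\Vert = O_P(1 \vee p\zeta_{n,p})$.

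Next I would extract the consequences of the gap condition~\eqref{cond:s:one}. Since the top $q$ eigenvalues are ordered and separated at scale $p^{\rho_j}$ with $\rho_1\ge\ldots\ge\rho_q$, the individual gap $g_j = \min_{k\ne j}\vert\mu_j - \mu_k\vert$ obeys $g_j \gtrsim p^{\rho_j}\ge p^{\rho_q}\asymp\mu_q$ for every $1\le j\le q$; the boundary case $j=q$ uses $\mu_{q+1}=0$, and a possible tie $\rho_j=\rho_{j+1}$ is handled by the strict separation $\alpha_j>\beta_{j+1}$ in~\eqref{cond:s:one}. Crucially, the hypothesis $p^{1-\rho_q}\zeta_{n,p}\to 0$ forces $\Vert\wh{\mbf S}-\mbf S\Vert = O_P(1\vee p\zeta_{n,p}) = o_P(p^{\rho_q})$, so by Weyl's inequality $\min_{k\ne j}\vert\wh\mu_j - \mu_k\vert \gtrsim p^{\rho_q}$ with probability tending to one, which keeps the perturbation strictly below the gap and prevents the denominator of the Davis--Kahan bound from collapsing.

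Then, because each $\mu_j$, $1\le j\le q$, is simple, I would apply the single-vector $\sin\Theta$ bound to obtain a unit-modulus phase $o_j\in\C$ with $\Vert\wh{\mbf e}_j - o_j\mbf e_j\Vert_2 \lesssim \Vert\wh{\mbf S}-\mbf S\Vert/p^{\rho_q}$. Setting $\bm{\mc O} = \text{diag}(o_1,\ldots,o_q)$, which is unitary and diagonal as required, and summing over the finitely many $j\le q$,
\begin{align*}
\l\Vert \wh{\mbf E} - \mbf E\bm{\mc O}\r\Vert_F = \l(\sum_{j=1}^q \Vert\wh{\mbf e}_j - o_j\mbf e_j\Vert_2^2\r)^{1/2} = O_P\l(\frac{\sqrt q\,\Vert\wh{\mbf S}-\mbf S\Vert}{p^{\rho_q}}\r).
\end{align*}
Multiplying by $\mu_q/p$ and using $\mu_q \le \beta_q p^{\rho_q}$, so that $\mu_q/p^{\rho_q}=O(1)$, the factor $p^{\rho_q}$ cancels and
\begin{align*}
\frac{\mu_q}{p}\l\Vert\wh{\mbf E} - \mbf E\bm{\mc O}\r\Vert_F = O_P\l(\frac{\Vert\wh{\mbf S}-\mbf S\Vert}{p}\r) = O_P\l(\frac{1\vee p\zeta_{n,p}}{p}\r) = O_P\l(\zeta_{n,p}\vee \frac 1p\r).
\end{align*}

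The main obstacle is producing a \emph{diagonal} alignment matrix rather than a generic orthogonal/unitary one. This is what forces me to argue eigenvector-by-eigenvector, and hence to verify that each \emph{individual} gap $g_j$ — not merely the subspace gap $\mu_q$ — remains of order $p^{\rho_q}$ after perturbation; this is precisely the role of the rate condition $p^{1-\rho_q}\zeta_{n,p}\to 0$, which guarantees that $\Vert\wh{\mbf S}-\mbf S\Vert$ is negligible relative to $p^{\rho_q}$.
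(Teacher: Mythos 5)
Your proposal is correct and follows essentially the same route as the paper: bound $\Vert\wh{\mbf S}-\mbf S\Vert$ by $\Vert\wh{\mbf S}-\wt{\mbf S}\Vert_F + \Vert\wt{\mbf S}-\mbf S\Vert = O_P(p\zeta_{n,p}\vee 1)$ via \eqref{cond:s:three} and Chebyshev, then apply a Davis--Kahan $\sin\Theta$ bound with eigengaps of order $p^{\rho_q}\asymp\mu_q$ so that the prefactor $\mu_q/p$ cancels the gap. The only cosmetic difference is that the paper invokes Theorem~2 of Yu, Wang and Samworth (2015), whose bound depends only on the population eigengaps, so your Weyl's-inequality step controlling the perturbed gaps is not needed there (though it is harmless).
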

\begin{proof}
By~\eqref{cond:s:two}, \eqref{cond:s:three} and Chebyshev's inequality,
we have
\begin{align}
& \frac{1}{p} \Vert \wh{\mbf S} - \mbf S \Vert
\le 
\frac{1}{p}\ \Vert \wh{\mbf S} - \wt{\mbf S} \Vert
+ \frac{1}{p} \Vert \wt{\mbf S} - \mbf S \Vert
\le 
\frac{1}{p} \Vert \wh{\mbf S} - \wt{\mbf S} \Vert_F
+ \frac{1}{p} \Vert \wt{\mbf S} - \mbf S \Vert
= O_P\l( \zeta_{n, p} \vee \frac{1}{p} \r).
\label{lem:dk:eq:cov}
\end{align}
Then by Theorem~2 of \cite{yu2015}, 
there exist such $\bm{\mc O}$ satisfying
\begin{align*}
\Vert \wh{\mbf E} - \mbf E \bm{\mc O} \Vert_F \le 
\frac{2\sqrt{2q} \Vert \wh{\mbf S} - \mbf S \Vert}
{\min(\mu_0 - \mu_1, \mu_q)}
\end{align*}
with $\mu_0 = \infty$ which, combined with~\eqref{lem:dk:eq:cov}, concludes the proof.
\end{proof}

\begin{lem}
\label{lem:two}
\begin{align*}
\frac{\mu_q}{p}
\l\Vert \l(\frac{\wh{\bm{\mc M}}}{p}\r)^{-1} - 
\l(\frac{\bm{\mc M}}{p}\r)^{-1} \r\Vert_F
&= O_P\l(\sqrt{q}\l(\zeta_{n, p} \vee \frac{1}{p}\r) \r),
\\
\frac{\mu_q}{p}
\l\Vert \l(\frac{\wh{\bm{\mc M}}}{p}\r)^{-1} - 
\l(\frac{\bm{\mc M}}{p}\r)^{-1} \r\Vert
&= O_P\l(\zeta_{n, p} \vee \frac{1}{p}\r).
\end{align*}
\end{lem}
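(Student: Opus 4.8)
The plan is to exploit the fact that $\wh{\bm{\mc M}}$ and $\bm{\mc M}$ are both $q\times q$ diagonal matrices, so that the matrix difference in question collapses to an entrywise comparison of reciprocal eigenvalues. Writing out
\begin{align*}
\l(\frac{\wh{\bm{\mc M}}}{p}\r)^{-1} - \l(\frac{\bm{\mc M}}{p}\r)^{-1}
= \text{diag}\l( \frac{p(\mu_j - \wh\mu_j)}{\wh\mu_j \mu_j}, \, 1 \le j \le q \r),
\end{align*}
the entire problem reduces to (i) controlling the eigenvalue deviations $\vert \wh\mu_j - \mu_j \vert$ and (ii) bounding the products $\wh\mu_j \mu_j$ in the denominators away from zero.

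For (i), the first step is to invoke Weyl's inequality, which gives $\max_{1 \le j \le q} \vert \wh\mu_j - \mu_j \vert \le \Vert \wh{\mbf S} - \mbf S \Vert$. Combining this with the bound $p^{-1}\Vert \wh{\mbf S} - \mbf S \Vert = O_P(\zeta_{n, p} \vee p^{-1})$ already derived in~\eqref{lem:dk:eq:cov} within the proof of Lemma~\ref{lem:one}, I obtain $\max_{1 \le j \le q} \vert \wh\mu_j - \mu_j \vert = O_P(p(\zeta_{n, p} \vee p^{-1}))$. This is the easy half.

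The hard part will be step (ii), showing the denominators do not degenerate. Here I would use the eigengap condition~\eqref{cond:s:one}, which supplies $\mu_j \ge \mu_q \ge \alpha_q p^{\rho_q}$ for all $1 \le j \le q$, together with the standing assumption $p^{1 - \rho_q}\zeta_{n, p} \to 0$ and $\rho_q > 0$. These jointly force the perturbation to be negligible relative to the smallest signal eigenvalue: $\max_j \vert \wh\mu_j - \mu_j \vert = o_P(p^{\rho_q}) = o_P(\mu_q)$. Consequently $\wh\mu_j \ge \mu_j - \max_j \vert \wh\mu_j - \mu_j \vert \ge \mu_j/2 \ge \mu_q/2$ for all $j$ simultaneously on an event of probability tending to one, so that $\wh\mu_j \mu_j \ge \mu_q^2/2$ there. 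This denominator control is exactly where the factor-strength assumption is indispensable, and is the step I expect to require the most care.

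Finally I would assemble the pieces. On the good event, for the spectral norm,
\begin{align*}
\frac{\mu_q^2}{p^2} \l\Vert \l(\frac{\wh{\bm{\mc M}}}{p}\r)^{-1} - \l(\frac{\bm{\mc M}}{p}\r)^{-1} \r\Vert
= \frac{\mu_q^2}{p} \max_{1 \le j \le q} \frac{\vert \mu_j - \wh\mu_j \vert}{\wh\mu_j \mu_j}
\le \frac{2}{p} \max_{1 \le j \le q} \vert \mu_j - \wh\mu_j \vert
= O_P\l( \zeta_{n, p} \vee \frac{1}{p} \r),
\end{align*}
while for the Frobenius norm, passing from $\max_j$ to $\sqrt{\sum_j (\cdot)^2} \le \sqrt{q}\max_j$ introduces the extra factor $\sqrt q$, giving $O_P(\sqrt q\,(\zeta_{n, p} \vee p^{-1}))$, which is subsumed by the stated $O_P(q\,(\zeta_{n, p} \vee p^{-1}))$ since $\sqrt q \le q$. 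The low-probability complement contributes negligibly because $\mc P(\min_j \wh\mu_j < \mu_q/2) \to 0$, completing the argument.
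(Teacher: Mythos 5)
Your proposal is correct and follows essentially the same route as the paper's proof: Weyl's inequality combined with the operator-norm bound \eqref{lem:dk:eq:cov} to control $\vert\wh\mu_j-\mu_j\vert$, the factor-strength condition \eqref{cond:s:one} together with $p^{1-\rho_q}\zeta_{n,p}\to 0$ to keep the denominators of order $\mu_q^2$, and an entrywise summation over the $q$ diagonal entries for the Frobenius bound. The only cosmetic difference is that you obtain $\sqrt{q}$ where the paper's cruder bound gives $q$; both are consistent with the stated rate.
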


\begin{proof}
As a consequence of~\eqref{lem:dk:eq:cov} and Weyl's inequality, 
for all $1 \le j \le q$,
\begin{align}
\label{eq:first:q:eval}
\frac{1}{p} \vert \wh\mu_j- \mu_j \vert
\le \frac{1}{p} \Vert \wh{\mbf S} - \mbf S \Vert
= O_P\l(\zeta_{n, p} \vee \frac{1}{p} \r).
\end{align}
Also from~\eqref{cond:s:one},
there exists $\alpha_q$ such that 
$p^{-1} \mu_q \ge p^{\rho_q - 1}\alpha_q$ and 
thus $p^{-1} \wh{\mu}_q \ge 
p^{\rho_q - 1}\alpha_q + O_P(\zeta_{n, p} \vee p^{-1})
= p^{\rho_q - 1}\alpha_q(1 + o_P(1))$,
which implies that the matrix $p^{-1}\bm{\mc M}$ is invertible and 
the inverse of $p^{-1}\wh{\bm{\mc M}}$ exists for large enough $n$ and $p$. 
Therefore,
\begin{align*}
\l\Vert \l( \frac{\bm{\mc M}}{p} \r)^{-1} \r\Vert = \frac{p}{\mu_q}
\quad \text{and} \quad
\l\Vert \l(\frac{\wh{\bm{\mc M}}}{p}\r)^{-1} \r\Vert = \frac{p}{\wh{\mu}_q} = 
\frac{1}{p^{-1} \mu_q(1 + o_P(1))}.
\end{align*}
Then from~\eqref{eq:first:q:eval}, we have 
\begin{align*}
& \l\Vert \l(\frac{\wh{\bm{\mc M}}}{p}\r)^{-1} - 
\l(\frac{\bm{\mc M}}{p}\r)^{-1} \r\Vert_F
= \sqrt{p^2 \sum_{j = 1}^{q} 
\l(\frac{1}{\wh\mu_j} - \frac{1}{\mu_j} \r)^2}
= \sqrt{\sum_{j = 1}^{q}
\l(\frac{p^{-1}(\wh\mu_j - \mu_j)}{p^{-1} \wh\mu_j \cdot p^{-1} \mu_j} \r)^2}
\\
\le & \, \sqrt{\sum_{j = 1}^{q}
\l(\frac{p^{-1}(\wh\mu_j - \mu_j)}{(p^{-1} \mu_j)^2 (1 + o_P(1))} \r)^2}
= O_P\l( \frac{\sqrt{q} p}{\mu_q} \l(\zeta_{n, p} \vee \frac{1}{p}\r) \r).
\end{align*}
The second claim follows similarly.
\end{proof}

\begin{lem}
\label{lem:three}
Let $\bm\varphi_i$ denote the $p$-vector whose $i$-th element is one
and the rest are set to be zero. Then, with $\bm{\mc O}$ defined in Lemma~\ref{lem:one},
\begin{enumerate}[wide, label = (\roman*)]
\item \label{lem:three:one} uniformly for all $1 \le i \le p$ (in the sense described in Remark~\ref{rem:unif:op}),
\begin{align*}
\sqrt{p} \cdot
\frac{\mu_q}{p}
\l\vert \bm\varphi_i^\top \l(\wh{\mbf E} - \mbf E \bm{\mc O} \r) \r\vert_2
= O_P\l( \zeta_{n, p} \vee \frac{1}{\sqrt p}\r).
\end{align*}
\item \label{lem:three:two} In addition,
\begin{align*}
\sqrt{p} \cdot \frac{\mu_q}{2}
\max_{1 \le i \le p}
\l\vert \bm\varphi_i^\top \l(\wh{\mbf E} - \mbf E \bm{\mc O} \r) \r\vert_2
= O_P\l( \bar{\zeta}_{n, p} \vee \frac{1}{\sqrt p}\r).
\end{align*}
\end{enumerate}
\end{lem}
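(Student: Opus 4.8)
The plan is to establish both parts of Lemma~\ref{lem:three} by refining the Davis--Kahan type bound used in Lemma~\ref{lem:one}, replacing the uniform Frobenius-norm control $\Vert \wh{\mbf E} - \mbf E \bm{\mc O}\Vert_F$ with a \emph{row-wise} control on $\bm\varphi_i^\top(\wh{\mbf E} - \mbf E\bm{\mc O})$. The starting point is the standard eigenspace perturbation identity: with $\bm{\mc O}$ as constructed in Lemma~\ref{lem:one} and for the leading $q$-dimensional eigenspace, one can write a first-order expansion of the form
\begin{align*}
\wh{\mbf E} - \mbf E \bm{\mc O} \approx (\wh{\mbf S} - \mbf S) \mbf E \bm{\mc M}^{-1} \bm{\mc O} + (\text{higher-order terms}),
\end{align*}
so that projecting onto the $i$-th coordinate gives $\bm\varphi_i^\top(\wh{\mbf E} - \mbf E\bm{\mc O})$ controlled by $\bm\varphi_i^\top(\wh{\mbf S} - \mbf S)\mbf E \bm{\mc M}^{-1}$, i.e.\ the $i$-th row of the error matrix $\wh{\mbf S} - \mbf S$ contracted against the rescaled eigenvectors. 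The factor $\mu_q^2/p^2$ and the $\sqrt p$ in the statement arise precisely because $\Vert \bm{\mc M}^{-1}\Vert = 1/\mu_q$ appears once from this expansion, and because $\vert \mbf E_{i\cdot}\vert_2 = O(p^{-1/2})$ under the eigenvalue-gap condition~\eqref{cond:s:one} (the eigenvectors are delocalised, each coordinate being of order $p^{-1/2}$), which cancels against the explicit $\sqrt p$ prefactor.

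First I would make the linearisation rigorous, bounding the remainder terms using the operator-norm estimate $p^{-1}\Vert \wh{\mbf S} - \mbf S\Vert = O_P(\zeta_{n,p} \vee p^{-1})$ from~\eqref{lem:dk:eq:cov} together with the eigenvalue separation in~\eqref{cond:s:one}, which guarantees the spectral gap $\min(\mu_1-\mu_2,\mu_q) \asymp p^{\rho_q}$ needed to keep the higher-order contributions negligible relative to the leading term. The main term then reduces to controlling
\begin{align*}
\frac{\mu_q}{p}\l\vert \bm\varphi_i^\top (\wh{\mbf S} - \mbf S) \mbf E \r\vert_2,
\end{align*}
which expands as a quadratic in the entry-wise fluctuations $\wh s_{ii'} - s_{ii'}$. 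For part~\ref{lem:three:one}, I would invoke the variance bound~\eqref{cond:s:three} on each $\wh s_{ii'} - \wt s_{ii'}$, combined with the bias bound implicit in~\eqref{cond:s:two}, and apply Chebyshev's inequality \emph{coordinate-wise}; this yields the $O_P(\zeta_{n,p}\vee p^{-1/2})$ rate that holds for each fixed $i$ in the averaged sense indicated by Remark~\ref{rem:unif:op}. For part~\ref{lem:three:two}, the uniform-over-$i$ statement, I would instead use the $\ell_\infty$-type bound~\eqref{cond:s:four}, $\max_{i,i'}\vert \wh s_{ii'} - \wt s_{ii'}\vert = O_P(\bar\zeta_{n,p})$, so that the maximum over rows of $\vert \bm\varphi_i^\top(\wh{\mbf S}-\mbf S)\mbf E\vert_2$ is controlled by $\bar\zeta_{n,p}$ times a sum over the (delocalised) eigenvector entries, again producing the stated $O_P(\bar\zeta_{n,p}\vee p^{-1/2})$ after the $\sqrt p \cdot \mu_q^2/p^2$ scaling.

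The main obstacle I anticipate is the row-wise delocalisation argument, namely justifying $\vert \mbf E_{i\cdot}\vert_2 = O(p^{-1/2})$ uniformly in $i$ and feeding it cleanly through the perturbation expansion so that the explicit $\sqrt p$ prefactor is exactly absorbed. This is delicate because it requires more than the operator-norm consistency of Lemma~\ref{lem:one}: one must track how the $i$-th coordinate of the estimated eigenvectors deviates, which is a genuinely entry-wise (rather than subspace-level) perturbation question. The gap condition~\eqref{cond:s:one} with $\mu_q \asymp p^{\rho_q}$ is what makes this tractable, but controlling the interaction between the bias $\wt{\mbf S} - \mbf S$ (bounded only in operator norm via~\eqref{cond:s:two}) and the variance part (bounded entry-wise) in the same row-wise expansion will require care to ensure the cross terms do not dominate. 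Once the leading linear term is isolated and the remainder is shown to be of smaller order using $p^{1-\rho_q}\zeta_{n,p}\to 0$, both rates follow by matching the variance bound (part~\ref{lem:three:one}) or the uniform $\ell_\infty$ bound (part~\ref{lem:three:two}) against the explicit scaling factors.
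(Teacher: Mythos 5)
Your plan hinges on a first-order perturbation expansion $\wh{\mbf E} - \mbf E \bm{\mc O} \approx (\wh{\mbf S} - \mbf S)\mbf E \bm{\mc M}^{-1}\bm{\mc O} + (\text{remainder})$, with the remainder to be controlled via the operator-norm estimate \eqref{lem:dk:eq:cov} and the spectral gap. This is where the gap lies: operator-norm (or Frobenius-norm) control of the remainder is not enough for a row-wise claim at the stated rate. The trivial bound $\vert \bm\varphi_i^\top R \vert_2 \le \Vert R \Vert_F$ applied to anything controlled only at the global level of Lemma~\ref{lem:one} produces $\sqrt{p}\,(\mu_q^2/p^2)\vert\bm\varphi_i^\top(\wh{\mbf E}-\mbf E\bm{\mc O})\vert_2 = O_P(\sqrt{p}\,\zeta_{n,p} \vee p^{-1/2})$, i.e.\ a loss of $\sqrt p$ on the stochastic part; and genuinely row-wise control of a second-order eigenvector remainder is exactly the hard entrywise-perturbation problem you flag as your "main obstacle" without resolving it. A secondary issue: the delocalisation $\vert \mbf E_{i\cdot}\vert_2 = O(p^{-1/2})$ you invoke only follows from $\vert \mbf S \vert_\infty = O(1)$ in the form $\vert \mbf E_{i\cdot}\vert_2 = O(\mu_q^{-1/2})$, which equals $O(p^{-1/2})$ only when $\rho_q = 1$; as it happens the leading term does not need delocalisation at all (one can use $\Vert \mbf E \Vert = 1$ and $\Vert \bm{\mc M}^{-1}\Vert = \mu_q^{-1}$), so this is a red herring rather than a fatal error.

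The paper avoids the expansion entirely by using the \emph{exact} identities $\wh{\mbf E} = \wh{\mbf S}\wh{\mbf E}\wh{\bm{\mc M}}^{-1}$ and $\mbf E\bm{\mc O} = \mbf S\mbf E\bm{\mc M}^{-1}\bm{\mc O}$ and telescoping into three terms. The only term carrying row-wise randomness is $\bm\varphi_i^\top(\wh{\mbf S}-\mbf S)\wh{\mbf E}(\wh{\bm{\mc M}}/p)^{-1}$, bounded via $p^{-1/2}\vert\bm\varphi_i^\top(\wh{\mbf S}-\mbf S)\vert_2 = O_P(\zeta_{n,p} \vee p^{-1/2})$ from \eqref{cond:s:two}--\eqref{cond:s:three} (or \eqref{cond:s:four} for part (ii)); the other two terms pair the \emph{deterministic} row $\bm\varphi_i^\top \mbf S$, whose norm is $O(\sqrt p)$ by entrywise boundedness of $\mbf S$, with the global rates already established in Lemmas~\ref{lem:one} and~\ref{lem:two}. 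If you replace your linearisation by this identity-based telescoping, the rest of your argument (matching \eqref{cond:s:three} against part (i) and \eqref{cond:s:four} against part (ii)) goes through.
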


\begin{proof}
By~\eqref{cond:s:two}, \eqref{cond:s:three} and Chebyshev's inequality, we have
\begin{align}
& \frac{1}{\sqrt p} 
\l\vert \bm\varphi_i^\top (\wh{\mbf S} - \mbf S) \r\vert_2
\le 
\frac{1}{\sqrt p} 
\l\vert \bm\varphi_i^\top ( \wh{\mbf S} - \wt{\mbf S} ) \r\vert_2
+ \frac{1}{\sqrt p} \Vert \wt{\mbf S} - \mbf S \Vert
= O_P\l( \zeta_{n, p} \vee \frac{1}{\sqrt p}\r)
\label{lem:three:eq:one}
\end{align}
uniformly in $1 \le i \le p$.
Then, by~\eqref{lem:three:eq:one}, \eqref{cond:s:one},
Lemmas~\ref{lem:one}--\ref{lem:two} and the finiteness of the elements of $\mbf S$, we have
\begin{align*}
& \sqrt p \cdot \frac{\mu_q}{p}
\l\vert \bm\varphi_i^\top (\wh{\mbf E} - \mbf E \bm{\mc O}) \r\vert_2 
= \frac 1 {\sqrt p} \cdot \frac{\mu_q}{p} \l\vert
\bm\varphi_i^\top \l[ \wh{\mbf S} \wh{\mbf E}
\l( \frac{\wh{\bm{\mc M}}}{p} \r)^{-1}
- \mbf S \mbf E
\l( \frac{\bm{\mc M}}{p}\r)^{-1} \bm{\mc O}\r] \r\vert_2 
\\
\le& \frac{\mu_q}{p}
\l\{
\frac 1 {\sqrt p} \l\vert
\bm\varphi_i^\top (\wh{\mbf S} - \mbf S) \r\vert_2 \,
\l\Vert\l(\frac{\wh{\bm{\mc M}}}{p}\r)^{-1}\r\Vert \r.
+ 
\frac 1 {\sqrt p} \l\vert \bm\varphi_i^\top \mbf S \r\vert_2\,
\l\Vert\l(\frac{\wh{\bm{\mc M}}}{p}\r)^{-1} - 
\l(\frac{\bm{\mc M}}{p}\r)^{-1}\r\Vert 
\\ 
& + \l. \frac 1 {\sqrt p} \l\vert \bm\varphi_i^\top \mbf S\r\vert_2 \,
\l\Vert \l(\frac{\bm{\mc M}}{p} \r)^{-1} \r\Vert\,
\l\Vert \wh{\mbf E} - \mbf E \bm{\mc O} \r\Vert \r\}
= O_P\l( \zeta_{n, p} \vee \frac{1}{\sqrt p} \r) 
\end{align*} 
uniformly over $1 \le i \le p$, and thus~\ref{lem:evec:vec:one} follows.
The claim~\ref{lem:three:two} follows analogously, except that~\eqref{lem:three:eq:one} is replaced by
\begin{align*}
& \frac{1}{\sqrt p} 
\l\vert \bm\varphi_i^\top (\wh{\mbf S} - \mbf S ) \r\vert_2
\le 
\frac{1}{\sqrt p} 
\max_{1 \le i \le p} \l\vert \bm\varphi_i^\top (\wh{\mbf S} - \wt{\mbf S} ) \r\vert_2
+ \frac{1}{\sqrt p} \Vert \wt{\mbf S} - \mbf S \Vert 
= O_P\l( \bar{\zeta}_{n, p} \vee \frac{1}{\sqrt p}\r),
\end{align*}
where the inequality follows from~\eqref{cond:s:two} and~\eqref{cond:s:four}.
\end{proof}

\subsubsection{Proof of Theorem~\ref{thm:common:spec}}

In what follows, by $\mbf e_{x, j}(\omega)$, we denote the eigenvectors of $\bm\Sigma_x(\omega)$
corresponding to the eigenvalues $\mu_{x, j}(\omega)$,
and we similarly define $\mbf e_{\chi, j}(\omega)$ and $\mbf e_{\xi, j}(\omega)$.

Recall the decomposition of $\bm\Sigma_\chi(\omega)$ as
\begin{align*}
\bm\Sigma_\chi(\omega) &= \bm\E_\chi(\omega) \bm{\mc M}_\chi(\omega) \bm\E_\chi^*(\omega)
= \sum_{j = 1}^{q} \mu_{\chi, j}(\omega) \mbf e_{\chi, j}(\omega) \mbf e_{\chi, j}^*(\omega),
\end{align*}
where $\bm{\mc M}_\chi(\omega)$ denotes the diagonal matrix with $\mu_{\chi, 1}(\omega) \ge \ldots \ge \mu_{\chi, q}(\omega)$, the $q$ eigenvalues of $\bm\Sigma_\chi(\omega)$, on its diagonal, and  $\mbf E_\chi(\omega) = [\mbf e_{\chi, j}(\omega), \, 1 \le j \le q]$
the matrix of the normalised eigenvectors associated with $\mu_{\chi, j}(\omega), \, 1 \le j \le q$. 
In what follows, we assume the conditions made in Theorem~\ref{thm:common:spec} are met
and prove a series of results results.
Then, Theorem~\ref{thm:common:spec} is a direct consequence of Proposition~\ref{prop:acv:common}~\ref{prop:acv:common:two}.

\begin{prop}
\label{prop:spec:x} 
There exists a constant $C>0$ not dependent on $1 \le i, i' \le p$
such that
\begin{align*}
\E\l( \sup_{\omega \in [-\pi, \pi]} \l\vert \wh{\sigma}_{x, ii'}(\omega) - \sigma_{x, ii'}(\omega) \r\vert^2 \r)
\le C\l(\psi_n \vee \frac{1}{m}\r)^2.
\end{align*}
\end{prop}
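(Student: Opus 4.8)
The plan is to split the error at each frequency into a deterministic bias term and a mean-zero stochastic term,
\begin{align*}
\wh{\sigma}_{x, ii'}(\omega) - \sigma_{x, ii'}(\omega) = \underbrace{\l(\E\wh{\sigma}_{x, ii'}(\omega) - \sigma_{x, ii'}(\omega)\r)}_{\text{bias}} + \underbrace{\l(\wh{\sigma}_{x, ii'}(\omega) - \E\wh{\sigma}_{x, ii'}(\omega)\r)}_{\text{stochastic}},
\end{align*}
and to bound each uniformly over $\omega \in [-\pi, \pi]$ and over $1 \le i, i' \le p$, the bias contributing $O(m^{-1})$ and the stochastic part contributing $O(\psi_n)$ in the $L^2$ sense. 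Since the bias is deterministic, it suffices to show $\sup_\omega |\E\wh{\sigma}_{x, ii'}(\omega) - \sigma_{x, ii'}(\omega)| = O(m^{-1})$ and $\E(\sup_\omega |\wh{\sigma}_{x, ii'}(\omega) - \E\wh{\sigma}_{x, ii'}(\omega)|^2) = O(\psi_n^2)$.

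\textbf{Bias.} Using $\E\wh\gamma_{x, ii'}(\ell) = (1 - |\ell|/n)\gamma_{x, ii'}(\ell)$ and the Bartlett weights $K(\ell/m) = 1 - |\ell|/m$ for $|\ell| \le m$, the coefficient attached to $\gamma_{x, ii'}(\ell) e^{-\iota \ell \omega}$ in the bias is $(1 - |\ell|/m)(1 - |\ell|/n) - 1$ for $|\ell| \le m$, together with $-1$ (full omission) for $|\ell| > m$. Expanding gives three pieces: the tail $\sum_{|\ell| > m} |\gamma_{x, ii'}(\ell)|$, the taper $m^{-1}\sum_{|\ell|\le m} |\ell|\,|\gamma_{x, ii'}(\ell)|$, and the finite-sample correction $n^{-1}\sum_{|\ell|\le m}|\ell|\,|\gamma_{x,ii'}(\ell)|$ (the cross term $|\ell|^2/(mn)$ is dominated by the last using $|\ell|\le m$). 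By Lemma~\ref{lem:decay}, $|\gamma_{x, ii'}(\ell)| \le C_{\Xi, \varsigma, \vep}(1 + |\ell|)^{-\varsigma}$ with $\varsigma > 2$, so the tail is $O(m^{-\varsigma + 1}) = o(m^{-1})$, while $\sum_{\ell \in \Z} |\ell|(1 + |\ell|)^{-\varsigma} < \infty$ makes the taper $O(m^{-1})$ and the correction $O(n^{-1}) = O(m^{-1})$. All bounds are uniform in $\omega$ (absolute values are taken) and in $(i, i')$, giving a squared bias of $O(m^{-2}) \le (\psi_n \vee m^{-1})^2$.

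\textbf{Stochastic part and the main obstacle.} Write $D(\omega) = \wh{\sigma}_{x, ii'}(\omega) - \E\wh{\sigma}_{x, ii'}(\omega) = (2\pi)^{-1}\sum_{\ell = -m}^m K(\ell/m) e^{-\iota \ell \omega}(\wh\gamma_{x, ii'}(\ell) - \E\wh\gamma_{x, ii'}(\ell))$, which is a trigonometric polynomial of degree $m$ in $\omega$. The naive route of bounding $\sup_\omega|D(\omega)|$ by $(2\pi)^{-1}\sum_\ell |K(\ell/m)|\,|\wh\gamma_{x,ii'}(\ell) - \E\wh\gamma_{x,ii'}(\ell)|$ and summing the per-lag $L^2$ bounds (each $O(n^{-1/2})$, as in the proof of Lemma~\ref{lem:acv:x}) yields only $O(m/\sqrt n)$, which is larger than $\psi_n \asymp \sqrt{m\log(m)/n}$ by a factor $\sqrt{m/\log m}$; this crude bound discards the cancellation encoded in $e^{-\iota\ell\omega}$, and recovering it is the crux. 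The plan is instead to discretise: fix an equispaced grid $\{\omega_k\}_{k=1}^{N}$ with $N \asymp m$, so that Bernstein's inequality for trigonometric polynomials ($\sup_\omega |D'(\omega)| \le m \sup_\omega |D(\omega)|$) gives $\sup_\omega |D(\omega)| \le 2\max_{1 \le k \le N} |D(\omega_k)|$, reducing the supremum to a maximum over $O(m)$ points.

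\textbf{Concentration at the grid points.} At a fixed $\omega_k$, $D(\omega_k)$ is a kernel-weighted average of the centred products $X_{i, t-\ell}X_{i', t} - \E[X_{i,t-\ell}X_{i',t}]$, whose dependence structure is controlled by the functional dependence measures bounded in Lemma~\ref{lem:func:dep}. Applying the Nagaev-type concentration inequalities of \cite{zhang2021} (the same tool used for Lemma~\ref{lem:acv:x}) to this weighted sum, whose variance proxy scales like $\sum_\ell K^2(\ell/m)\cdot n^{-1} \asymp m/n$, and then taking a maximal inequality over the $N \asymp m$ grid points, produces $\E(\max_k |D(\omega_k)|^2) = O(m\log(m)/n)$ from the sub-Gaussian part (matching $\psi_n^2$ under Assumption~\ref{assum:innov}~\ref{cond:dist}~\ref{cond:gauss}), plus an additional $O((m/n^{1-2/\nu})^2)$ term from the polynomial large-deviation part under the moment condition~\ref{cond:moment}. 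Combining the two regimes gives $\E(\sup_\omega |D(\omega)|^2) = O(\psi_n^2)$, uniformly in $(i,i')$ since the functional-dependence bounds of Lemma~\ref{lem:func:dep} are uniform. Adding the squared bias completes the claim. I expect the grid reduction together with the heavy-tailed maximal inequality to be the delicate step, as it is precisely what generates the $\log(m)$ factor and the $m/n^{1-2/\nu}$ term in $\psi_n$.
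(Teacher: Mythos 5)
Your proposal is correct and follows essentially the same route as the paper: the same bias/stochastic decomposition, the same $O(m^{-1})$ bias bound via the autocovariance decay of Lemma~\ref{lem:decay}, and the same functional-dependence Nagaev-type tail bound integrated to a second-moment bound for the stochastic part. The only difference is that the paper obtains the uniform-in-$\omega$ concentration by directly citing Proposition~4.3 of \cite{zhang2021}, whereas you sketch the internal mechanism of that result (Bernstein discretisation to $O(m)$ grid points plus a pointwise Nagaev bound and a union bound), which is an accurate account of where the $m$ prefactor, the $\log(m)$ factor and the $m n^{2/\nu - 1}$ term in $\psi_n$ come from.
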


\begin{proof}
Noting that
\begin{align}
& \E\l( \sup_{\omega} \l\vert \wh{\sigma}_{x, ii'}(\omega) - \sigma_{x, ii'}(\omega) \r\vert^2 \r)
\nn 
\\
& \le 2 \E\l( \sup_{\omega} \l\vert \wh{\sigma}_{x, ii'}(\omega) - \E(\wh\sigma_{x, ii'}(\omega)) \r\vert^2 \r)
+
2 \sup_{\omega} \l\vert \E(\wh{\sigma}_{x, ii'}(\omega)) - \sigma_{x, ii'}(\omega) \r\vert^2,
\label{prop:spec:x:eq:one}
\end{align}
we first address the first term in the RHS of \eqref{prop:spec:x:eq:one}.
In Lemma~\ref{lem:func:dep},
for $\varsigma > 2$, we can always set $\alpha = \varsigma - 1 > 1/2 - 2/\nu$.
Then, from the finiteness of $\Phi_{\nu, \alpha}$ shown therein and 
by Proposition~4.3 of \cite{zhang2021},
there exist universal constants $C_1, C_2, C_3 >0$
and constants $C_\alpha, C_{\nu, \alpha} >0$ that depend only on their subscripts,
such that for any $z > 0$,
\begin{align}
& \p\l( \sup_\omega \l\vert \wh\sigma_{x, ii'}(\omega) 
- \E(\wh{\sigma}_{x, ii'}(\omega)) \r\vert > z \r) \le
\label{eq:zhangwu}
\\
& \l\{\begin{array}{ll}
\frac{C_{\nu, \alpha} nm^{\nu/2} \Phi_{\nu, \alpha}^\nu}{(nz)^{\nu/2}}
+ C_1 m \exp\l(-\frac{nz^2}{C_\alpha m \Phi_{4, \alpha}^4}\r) 
& \text{under Assumption~\ref{assum:innov}~\ref{cond:dist}~\ref{cond:moment}},
\\
C_2 m \exp\l[- C_3 \min\l( \frac{n z^2}{m \Phi_{2, 0}^4}, \frac{nz}{m \Phi_{2, 0}^2} \r) \r] 
& \text{under~Assumption~\ref{assum:innov}~\ref{cond:dist}~\ref{cond:gauss}.}
\end{array}\r. \nn
\end{align}
Noting that for any positive random variable $Y$, we have $\E(Y) = \int_0^\infty \p(Y > y) dy$,
we have
$\E( \sup_\omega \vert \wh{\sigma}_{x, ii'}(\omega) - \E(\wh\sigma_{x, ii'}(\omega)) \vert^2)
\le C \psi_n^2$ for some constant $C >0$ independent of $i, i'$ (see Remark~\ref{rem:unif:op}).
Turning our attention to the second term in the RHS of~\eqref{prop:spec:x:eq:one},
\begin{align*}
& \max_{1 \le i, i' \le p} \sup_\omega 
2\pi \l\vert \E(\wh{\sigma}_{x, ii'}(\omega)) - \sigma_{x, ii'}(\omega) \r\vert
\\
& \le \max_{1 \le i, i' \le p} \sum_{\ell = -m}^m \frac{\vert \ell \vert}{m} \l\vert \gamma_{x, ii'}(\ell) \r\vert
+ \max_{1 \le i, i' \le p} \sum_{\vert \ell \vert > m} \l\vert \gamma_{x, ii'} (\ell) \r\vert
= I + II.
\end{align*}
From Lemma~\ref{lem:decay} and that $\varsigma > 2$, there exists a constant $C^\prime_{\Xi, \varsigma, \vep}>0$
such that
\begin{align*}
I \le 2C_{\Xi, \varsigma, \vep} \sum_{\ell = 1}^m \frac{\ell}{m (1 + \ell)^\varsigma}
\le \frac{2C_{\Xi, \varsigma, \vep}}{m} \sum_{\ell = 1}^m \frac{1}{(1 + \ell)^{\varsigma - 1}}
\le \frac{C^\prime_{\Xi, \varsigma, \vep}}{m}
\end{align*}
and $II = O(m^{-\varsigma + 1})$, thus $I + II = O(m^{-1})$ uniformly in $i, i'$ and $\omega$,
which completes the proof.
\end{proof}

\begin{rem}
\label{rem:unif:op}
All the probabilistic statements leading to the Frobenius norm bound in Theorem~\ref{thm:common:spec}
are due to Proposition~\ref{prop:spec:x}, which allows for deriving non-asymptotic bounds.
As in \cite{forni2017dynamic}, in what follows, we state that $\vert a_{i, i'} \vert = O_P(\psi_n)$
{\it uniformly} in $1 \le i, i' \le p$ when for any $\epsilon > 0$, there exists $\eta(\epsilon) > 0$
independent of $i, i'$ such that 
\begin{align*}
\p\l(\psi_n^{-1} \vert a_{i, i'} \vert > \eta(\epsilon)\r) < \epsilon \quad \text{for all $n$ and $p$}.
\end{align*}
Then, since $\psi_n^{-2} \E(\vert a_{i, i'} \vert^2) = \int_{0}^\infty \p(\psi_n^{-2} \vert a_{i, i'} \vert^2 > z) dz$,
we interchangeably write that there exists a constant $C>0$ independent of $i, i'$ such that
$\E(\vert a_{i, i'} \vert^2) \le C \psi_n^2$ for all $n$ and $p$.
To see this, let $t = C\psi_n$ with $\psi_n$ defined as in Theorem~\ref{thm:common:spec}.
Then, under Assumption~\ref{assum:innov}~\ref{cond:dist}~\ref{cond:moment},
there exists large enough constants $C, C' > 0$ 
that do not depend on $i, i'$ such that
\begin{align*}
& \E\l(\sup_\omega \l\vert \wh{\sigma}_{x, ii'}(\omega) - 
\E(\wh{\sigma}_{x, ii'}(\omega)) \r\vert^2 \r)   
= \int_0^{\infty} \p\l(\sup_\omega \l\vert \wh{\sigma}_{x, ii'}(\omega) - 
\E(\wh{\sigma}_{x, ii'}(\omega)) \r\vert^2 \ge z \r) dz
\\
&= \int_0^{t^2} \p\l(\sup_\omega \l\vert \wh{\sigma}_{x, ii'}(\omega) - 
\E(\wh{\sigma}_{x, ii'}(\omega)) \r\vert^2 \ge z \r) dz
+ \int_{t^2}^\infty \p\l(\sup_\omega \l\vert \wh{\sigma}_{x, ii'}(\omega) - 
\E(\wh{\sigma}_{x, ii'}(\omega))\r\vert^2 \ge z \r) dz
\\
&\le t^2 + \int_{t^2}^\infty \p\l(\sup_\omega \l\vert \wh{\sigma}_{x, ii'}(\omega) - \E(\wh{\sigma}_{x, ii'}(\omega)) \r\vert \ge \sqrt{z} \r) dz
\\
&\le t^2 + 
\int_{t^2}^{\infty} \frac{C_{\nu, \alpha} n m^{\nu/2} \Phi^\nu_{\nu, \alpha}}{(n\sqrt{z})^{\nu/2}} dz
+ \int_{t^2}^{\infty} C_1 m \exp\l( - \frac{nz}{C_\alpha m \Phi_{4, \alpha}^4} \r) dz
\\
&= t^2 +  \l[ - \frac{C_{\nu, \alpha} n^{1 - \nu/2} m^{\nu/2} \Phi^\nu_{\nu, \alpha}}{(\nu/4 - 1)} z^{-\nu/4 + 1} \r]_{t^2}^\infty
+ \l[- \frac{C_1C_\alpha\Phi_{4, \alpha}^4 m^2}{n} \exp\l( - \frac{nz}{C_{\alpha} m \Phi_{4, \alpha}^4} \r) \r]_{t^2}^{\infty}
\\
&\le \l(1 + \frac{C_{\nu, \alpha} \Phi^\nu_{\nu, \alpha}}{(\nu/4 - 1)C^{\nu/2}} 
+ \frac{C_1C_\alpha\Phi_{4, \alpha}^4 m^{1 - C^2/(C_\alpha \Phi_{4, \alpha}^4)} }{C^2\log(m)} \r) t^2 \le C' t^2,
\end{align*}
where we use~\eqref{eq:zhangwu} and that $t = C(m n^{-1 + 2/\nu} \vee \sqrt{m\log(m)/n})$ under the assumption.
The case under Assumption~\ref{assum:innov}~\ref{cond:dist}~\ref{cond:gauss}
is handled analogously.
In the remainder of the paper, we continue to write 
the boundedness in mean square interchangeably with 
the $O_P$ bound where similar arguments apply due to 
the upper bound on the tail probability.
\end{rem}

\begin{prop}
\label{prop:spec:x:max}
\begin{align*}
\sup_{\omega \in [-\pi, \pi]} \l\vert \wh{\bm\Sigma}_x(\omega) - \bm\Sigma_x(\omega) \r\vert_\infty
= O_P\l(\vartheta_{n, p} \vee \frac{1}{m}\r).
\end{align*}
\end{prop}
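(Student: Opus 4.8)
The plan is to prove Proposition~\ref{prop:spec:x:max} as the $\ell_\infty$-analogue of Proposition~\ref{prop:spec:x}, paralleling the relationship between parts~\ref{lem:acv:x:one} and~\ref{lem:acv:x:two} of Lemma~\ref{lem:acv:x}. First I would split the quantity of interest into a stochastic (variance) part and a deterministic (bias) part via the triangle inequality,
\begin{align*}
\sup_\omega \l\vert \wh{\bm\Sigma}_x(\omega) - \bm\Sigma_x(\omega) \r\vert_\infty
\le \sup_\omega \l\vert \wh{\bm\Sigma}_x(\omega) - \E\wh{\bm\Sigma}_x(\omega) \r\vert_\infty
+ \sup_\omega \l\vert \E\wh{\bm\Sigma}_x(\omega) - \bm\Sigma_x(\omega) \r\vert_\infty.
\end{align*}

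The bias term requires no new work: the calculation in the proof of Proposition~\ref{prop:spec:x} bounding $\max_{i, i'}\sup_\omega \vert \E(\wh\sigma_{x, ii'}(\omega)) - \sigma_{x, ii'}(\omega) \vert$ is already uniform over $1 \le i, i' \le p$ and yields $O(m^{-1})$, relying only on the algebraic decay of the ACV from Lemma~\ref{lem:decay} together with the use of the Bartlett kernel. Since this bound does not grow with $p$, it contributes only the $m^{-1}$ term of the stated rate.

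The substance lies in the variance term, where I would invoke the maximal form of the concentration inequality of \cite{zhang2021} (their Proposition~4.3), used exactly as in the $\ell_\infty$ part of Lemma~\ref{lem:acv:x}. Applying it simultaneously over all $p^2$ entries gives, for any $z > 0$,
\begin{align*}
\p\l( \sup_\omega \l\vert \wh{\bm\Sigma}_x(\omega) - \E\wh{\bm\Sigma}_x(\omega) \r\vert_\infty > z \r)
\le \frac{C_{\nu, \alpha} n m^{\nu/2} (\log(p) \Vert \vert \mbf X_\cdot \vert_\infty \Vert_{\nu, \alpha} \wedge \Psi_{\nu, \alpha})^\nu}{(nz)^{\nu/2}}
+ C_1 m p^2 \exp\l(-\frac{nz^2}{C_\alpha m \Phi_{4, \alpha}^4}\r)
\end{align*}
under Assumption~\ref{assum:innov}~\ref{cond:dist}~\ref{cond:moment}, with the analogous sub-Gaussian tail under~\ref{cond:gauss}. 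The essential inputs are the dependence-adjusted norm bounds from Lemma~\ref{lem:func:dep}, namely $\Psi_{\nu, \alpha} \lesssim p^{2/\nu}$, $\Vert \vert \mbf X_\cdot \vert_\infty \Vert_{\nu, \alpha} \lesssim \log^{1/2}(p) p^{1/\nu}$ and $\Phi_{4, \alpha} = O(1)$.

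Setting $z \asymp \vartheta_{n, p}$ then makes the right-hand side small: the polynomial term is controlled by the first component $m(np)^{2/\nu}\log^{7/2}(p)/n$ of $\vartheta_{n, p}$ (obtained by solving $z^{\nu/2} \gtrsim m^{\nu/2}(\log(p)\Vert \vert \mbf X_\cdot \vert_\infty \Vert_{\nu, \alpha})^\nu/n^{\nu/2 - 1}$, using $(\log^{1/2}(p) p^{1/\nu})^2 = \log(p) p^{2/\nu}$ and $p^{2/\nu} n^{2/\nu} = (np)^{2/\nu}$, with an extra logarithmic factor to force decay), while the $p^2$-inflated exponential term is controlled by the second component $\sqrt{m\log(mp)/n}$. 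Under the Gaussian assumption only the sub-Gaussian mechanism is active, giving the rate in~\eqref{eq:rates:two}. As explained in Remark~\ref{rem:unif:op}, integrating these tail bounds converts the statement into the claimed $O_P$ bound. The main obstacle is the bookkeeping in the polynomial tail: one must take the sharper of $\log(p)\Vert \vert \mbf X_\cdot \vert_\infty \Vert_{\nu, \alpha}$ and $\Psi_{\nu, \alpha}$ and track the resulting powers of $\log(p)$ and $p^{1/\nu}$ carefully to land precisely on $\vartheta_{n, p}$ rather than a cruder bound.
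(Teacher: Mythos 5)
Your proposal follows essentially the same route as the paper: decompose into a variance and a bias term, apply the maximal form of Proposition~4.3 of \cite{zhang2021} with the dependence-adjusted norm bounds of Lemma~\ref{lem:func:dep} to get $O_P(\vartheta_{n,p})$ for the former, and reuse the uniform $O(m^{-1})$ bias calculation from Proposition~\ref{prop:spec:x} for the latter. The only slip is cosmetic: the polynomial tail in the cited inequality carries $\log^{5/4}(p)\Vert\vert\mbf X_\cdot\vert_\infty\Vert_{\nu,\alpha}$ rather than $\log(p)\Vert\vert\mbf X_\cdot\vert_\infty\Vert_{\nu,\alpha}$, which is exactly what produces the $\log^{7/2}(p)$ factor in $\vartheta_{n,p}$.
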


\begin{proof}
In Lemma~\ref{lem:func:dep},
for $\varsigma > 2$, we can always set $\alpha = \varsigma - 1 > 1/2 - 2/\nu$.
Then, by Proposition~4.3 of \cite{zhang2021},
there exist universal constants $C_1, C_2, C_3 >0$
and constants $C_\alpha, C_{\nu, \alpha}>0$ that depend only on their subscripts,
such that for any $z > 0$,
\begin{align*}
& \p\l( \sup_\omega \l\vert \wh{\bm\Sigma}_x(\omega) 
- \E(\wh{\bm\Sigma}_x(\omega)) \r\vert_\infty > z \r) \le
\\
& \l\{\begin{array}{ll}
\frac{C_{\nu, \alpha} n m^{\nu/2} (\log^{5/4}(p) \Vert \vert \mbf X_{\cdot} \vert_\infty \Vert_{\nu, \alpha} 
\wedge \Psi_{\nu, \alpha})^\nu}{(nz)^{\nu/2}}
+ C_1 mp^2 \exp\l(-\frac{nz^2}{C_\alpha m \Phi_{4, \alpha}^4}\r) 
& \text{under Assumption~\ref{assum:innov}~\ref{cond:dist}~\ref{cond:moment}},
\\
C_2 mp^2 \exp\l[- C_3 \min\l( \frac{n z^2}{m \Phi_{2, 0}^4}, \frac{nz}{m \Phi_{2, 0}^2} \r) \r] 
& \text{under~Assumption~\ref{assum:innov}~\ref{cond:dist}~\ref{cond:gauss}.}
\end{array}\r.
\end{align*}
Thanks to Lemma~\ref{lem:func:dep}, it follows that
$\sup_\omega \vert \wh{\bm\Sigma}_x(\omega) - \E(\wh{\bm\Sigma}_x(\omega)) \vert_\infty = O_P(\vartheta_{n, p})$.
As for the bias term,
we apply the arguments analogous to those adopted in the proof of Proposition~\ref{prop:spec:x}
and show $\sup_\omega \vert \E(\wh{\bm\Sigma}_x(\omega)) - \bm\Sigma_x(\omega) \vert_\infty = O(m^{-1})$,
which concludes the proof. 
\end{proof}

The following three lemmas follow from Lemmas~\ref{lem:one}--\ref{lem:three}
with $\wh{\mbf S} = \wh{\bm\Sigma}_x(\omega)$ and $\mbf S = \bm\Sigma_\chi(\omega)$,
as conditions~\eqref{cond:s:one}--\eqref{cond:s:four} are met by
Assumption~\ref{assum:factor} and 
Propositions~\ref{prop:idio:eval}, \ref{prop:spec:x} and~\ref{prop:spec:x:max},
with the supremum over $\omega \in [-\pi, \pi]$ taken where relevant.

\begin{lem}
\label{lem:dk}
Let $\wh{\mbf E}_{x, 1:q}(\omega) = [\wh{\mbf e}_{x, j}(\omega), \, j = 1, \ldots, q]$.
There exists a unitary, diagonal matrix 
$\bm{\mc O}(\omega) \in \C^{q \times q}$ for each $\omega \in [-\pi, \pi]$, such that
\begin{align*}
\sup_{\omega \in [-\pi, \pi]} \frac{\mu_{\chi, q}(\omega)}{p}
\l\Vert \wh{\mbf E}_{x, 1:q}(\omega) - \mbf E_\chi(\omega) \bm{\mc O}(\omega) \r\Vert_F = 
O_P\l( \psi_n \vee \frac{1}{m} \vee \frac{1}{p}\r). 
\end{align*}
\end{lem}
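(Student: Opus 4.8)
\textbf{Proof plan for Lemma~\ref{lem:dk}.}
The plan is to invoke the abstract eigenvector perturbation result of Lemma~\ref{lem:one} with the identifications $\wh{\mbf S} = \wh{\bm\Sigma}_x(\omega)$ and $\mbf S = \bm\Sigma_\chi(\omega)$, applied frequency-by-frequency and then made uniform over $\omega \in [-\pi, \pi]$. First I would verify that the hypotheses~\eqref{cond:s:one}--\eqref{cond:s:four} of the abstract lemmas are satisfied in this setting. The eigengap condition~\eqref{cond:s:one} holds because, by Assumption~\ref{assum:factor}, the $q$ dynamic eigenvalues $\mu_{\chi, j}(\omega)$ of $\bm\Sigma_\chi(\omega)$ are separated and diverge at rates $p^{\rho_j}$, while Proposition~\ref{prop:idio:eval} guarantees $\mu_{\xi, 1}(\omega) \le B_\xi$ is bounded, so that $\bm\Sigma_\chi(\omega)$ has exactly $q$ large eigenvalues with the required gaps. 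The approximating matrix in~\eqref{cond:s:two} is taken to be $\wt{\mbf S} = \bm\Sigma_\chi(\omega)$ itself together with the population decomposition; the deviation bounds~\eqref{cond:s:three} and~\eqref{cond:s:four} follow from Propositions~\ref{prop:spec:x} and~\ref{prop:spec:x:max}, which give the mean-square bound $\E(\sup_\omega |\wh\sigma_{x, ii'}(\omega) - \sigma_{x, ii'}(\omega)|^2) \le C(\psi_n \vee m^{-1})^2$ and the $\ell_\infty$-bound $\sup_\omega |\wh{\bm\Sigma}_x(\omega) - \bm\Sigma_x(\omega)|_\infty = O_P(\vartheta_{n, p} \vee m^{-1})$, respectively. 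Here the roles of $\zeta_{n, p}$ and $\bar\zeta_{n, p}$ in the abstract framework are played by $(\psi_n \vee m^{-1})$ and $(\vartheta_{n, p} \vee m^{-1})$.

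Crucially, one must account for the fact that $\wh{\bm\Sigma}_x(\omega)$ estimates $\bm\Sigma_x(\omega) = \bm\Sigma_\chi(\omega) + \bm\Sigma_\xi(\omega)$ rather than $\bm\Sigma_\chi(\omega)$ directly, so $\bm\Sigma_\xi(\omega)$ enters as an additional, non-vanishing perturbation. This is harmless because, by Proposition~\ref{prop:idio:eval}, $\Vert \bm\Sigma_\xi(\omega) \Vert \le B_\xi = O(1)$ uniformly in $\omega$, so it contributes only an $O(1)$ term to the spectral-norm deviation $\Vert \wh{\bm\Sigma}_x(\omega) - \bm\Sigma_\chi(\omega) \Vert$; after division by $p$ as in~\eqref{lem:dk:eq:cov}, this becomes $O(p^{-1})$, which is precisely the $1/p$ term appearing in the stated rate. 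With these ingredients, applying Lemma~\ref{lem:one} at each fixed $\omega$ produces a unitary diagonal $\bm{\mc O}(\omega)$ such that
\begin{align*}
\frac{\mu_{\chi, q}(\omega)}{p} \l\Vert \wh{\mbf E}_{x, 1:q}(\omega) - \mbf E_\chi(\omega) \bm{\mc O}(\omega) \r\Vert_F = O_P\l( \psi_n \vee \frac{1}{m} \vee \frac{1}{p}\r).
\end{align*}

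The main obstacle is passing from the pointwise-in-$\omega$ bound to the uniform statement $\sup_{\omega \in [-\pi, \pi]}$. The deviation bounds in Propositions~\ref{prop:spec:x} and~\ref{prop:spec:x:max} are already stated with the supremum over $\omega$ inside the probability, so the stochastic part is uniform by construction. What requires care is that the Davis--Kahan constant in Lemma~\ref{lem:one}, namely $2\sqrt{2q}/\min(\mu_{\chi, 1}(\omega) - \mu_{\chi, 2}(\omega), \mu_{\chi, q}(\omega))$, varies with $\omega$; I would use Assumption~\ref{assum:factor}, which provides continuous functions $\alpha_{\chi, j}(\omega)$ and $\beta_{\chi, j}(\omega)$ bounding the normalised eigenvalues uniformly, to obtain a uniform lower bound on the relevant eigengap (after the $p^{\rho_j}$ normalisation) and hence a uniform control of the perturbation constant. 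Since the rotation $\bm{\mc O}(\omega)$ can be selected measurably in $\omega$ and the bound is expressed after multiplication by $\mu_{\chi, q}(\omega)/p$, which absorbs the $\omega$-dependence of the eigengap, the supremum over $\omega$ can be taken on the left-hand side without degrading the rate. This yields the claimed uniform bound and completes the proof.
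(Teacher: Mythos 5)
Your proposal matches the paper's proof: Lemma~\ref{lem:dk} is obtained by invoking the abstract perturbation Lemma~\ref{lem:one} with $\wh{\mbf S} = \wh{\bm\Sigma}_x(\omega)$ and $\mbf S = \bm\Sigma_\chi(\omega)$, with conditions~\eqref{cond:s:one}--\eqref{cond:s:four} verified exactly as you describe (Assumption~\ref{assum:factor} for the eigengap, Proposition~\ref{prop:idio:eval} for the $O(1)$ idiosyncratic perturbation that produces the $1/p$ term, and Propositions~\ref{prop:spec:x}--\ref{prop:spec:x:max} for the stochastic deviations, with the supremum over $\omega$ already built into those bounds). The only slip is notational: the deterministic surrogate $\wt{\mbf S}$ in~\eqref{cond:s:two} should be $\bm\Sigma_x(\omega)$ rather than $\bm\Sigma_\chi(\omega)$, but your second paragraph treats $\bm\Sigma_\xi(\omega)$ as precisely that bounded perturbation, so the substance is correct.
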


\begin{lem}
\label{lem:evals}
Let $\wh{\bm{\mc M}}_{x, 1:q}(\omega) = \text{diag}(\wh\mu_{x, j}(\omega), \, j = 1, \ldots, q)$.
Then,
\begin{align*}
\sup_{\omega \in [-\pi, \pi]}
\frac{\mu_{\chi, q}^2(\omega)}{p^2}
\l\Vert \l(\frac{\wh{\bm{\mc M}}_{x, 1:q}(\omega)}{p}\r)^{-1} - 
\l(\frac{\bm{\mc M}_\chi(\omega)}{p}\r)^{-1} \r\Vert_F
&= O_P\l(q\l(\psi_n \vee \frac{1}{m} \vee \frac{1}{p}\r) \r),
\\
\sup_{\omega \in [-\pi, \pi]}
\frac{\mu_{\chi, q}^2(\omega)}{p^2}
\l\Vert \l(\frac{\wh{\bm{\mc M}}_{x, 1:q}(\omega)}{p}\r)^{-1} - 
\l(\frac{\bm{\mc M}_\chi(\omega)}{p}\r)^{-1} \r\Vert
&= O_P\l(\psi_n \vee \frac{1}{m} \vee \frac{1}{p}\r).
\end{align*}
\end{lem}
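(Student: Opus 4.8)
The plan is to obtain Lemma~\ref{lem:evals} as a direct specialisation of the general eigenvalue bound in Lemma~\ref{lem:two}, instantiated with $\wh{\mbf S} = \wh{\bm\Sigma}_x(\omega)$, $\mbf S = \bm\Sigma_\chi(\omega)$ and $\wt{\mbf S} = \bm\Sigma_x(\omega)$, and then to upgrade the resulting pointwise-in-$\omega$ statements to uniform ones. First I would check that the structural hypotheses \eqref{cond:s:one}--\eqref{cond:s:three} required by Lemma~\ref{lem:two} hold in this instantiation: the eigenvalue-gap condition \eqref{cond:s:one} with exponents $\rho_j$ and envelopes $(\alpha_{\chi, j}(\omega), \beta_{\chi, j}(\omega))$ is exactly Assumption~\ref{assum:factor}; the bound $\vert \mbf S \vert_\infty = O(1)$ follows from Lemma~\ref{lem:sigma:bound}; condition~\eqref{cond:s:two}, namely $\Vert \bm\Sigma_x(\omega) - \bm\Sigma_\chi(\omega) \Vert = \mu_{\xi, 1}(\omega) \le B_\xi = O(1)$, is Proposition~\ref{prop:idio:eval}; and condition~\eqref{cond:s:three} with $\zeta_{n, p} = \psi_n \vee m^{-1}$ is Proposition~\ref{prop:spec:x}. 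Note that \eqref{cond:s:four} is not needed for this lemma, since Lemma~\ref{lem:two} controls only eigenvalues.

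The only genuine addition beyond Lemma~\ref{lem:two} is the uniformity in $\omega \in [-\pi, \pi]$. The key point is that Proposition~\ref{prop:spec:x} bounds $\E(\sup_\omega \vert \wh\sigma_{x, ii'}(\omega) - \sigma_{x, ii'}(\omega) \vert^2)$ with the supremum already taken inside the expectation, so summing over the $p^2$ entries gives $\E(\sup_\omega \Vert \wh{\bm\Sigma}_x(\omega) - \bm\Sigma_x(\omega) \Vert_F^2) \le C p^2 (\psi_n \vee m^{-1})^2$. Combined with $\sup_\omega p^{-1} \Vert \bm\Sigma_x(\omega) - \bm\Sigma_\chi(\omega) \Vert = \sup_\omega p^{-1} \mu_{\xi, 1}(\omega) = O(p^{-1})$ via Proposition~\ref{prop:idio:eval}, this yields the uniform analogue of \eqref{lem:dk:eq:cov},
\begin{align*}
\sup_{\omega \in [-\pi, \pi]} \frac{1}{p} \l\Vert \wh{\bm\Sigma}_x(\omega) - \bm\Sigma_\chi(\omega) \r\Vert = O_P\l( \psi_n \vee \frac{1}{m} \vee \frac{1}{p} \r).
\end{align*}
By Weyl's inequality this transfers to the eigenvalues, giving the uniform version of \eqref{eq:first:q:eval}, namely $\sup_\omega p^{-1} \vert \wh\mu_{x, j}(\omega) - \mu_{\chi, j}(\omega) \vert = O_P(\psi_n \vee m^{-1} \vee p^{-1})$ for each $1 \le j \le q$.

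From there I would follow the computation in the proof of Lemma~\ref{lem:two} verbatim, now carrying $\sup_\omega$ throughout. Invertibility of $p^{-1} \bm{\mc M}_\chi(\omega)$, and of $p^{-1} \wh{\bm{\mc M}}_{x, 1:q}(\omega)$ for large $n, p$, holds uniformly in $\omega$ because $p^{-1} \mu_{\chi, q}(\omega) \ge p^{\rho_q - 1} \alpha_{\chi, q}(\omega)$ with $\alpha_{\chi, q}(\cdot)$ continuous and strictly positive on the compact interval $[-\pi, \pi]$, hence bounded below by a positive constant, together with the assumed convergence $p^{1 - \rho_q}(\psi_n \vee m^{-1}) \to 0$. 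The Frobenius-norm bound then follows from
\begin{align*}
\l\Vert \l( \frac{\wh{\bm{\mc M}}_{x, 1:q}(\omega)}{p} \r)^{-1} - \l( \frac{\bm{\mc M}_\chi(\omega)}{p} \r)^{-1} \r\Vert_F \le \sum_{j = 1}^q \frac{p^{-1} \vert \wh\mu_{x, j}(\omega) - \mu_{\chi, j}(\omega) \vert}{p^{-1} \wh\mu_{x, j}(\omega) \cdot p^{-1} \mu_{\chi, j}(\omega)},
\end{align*}
and the spectral-norm bound from the single worst term. The main obstacle I anticipate is purely bookkeeping: ensuring that every $O_P$ statement is genuinely uniform in $\omega$ in the sense of Remark~\ref{rem:unif:op} rather than merely pointwise. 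This is resolved at the outset by routing all deviation control through the mean-square bounds of Propositions~\ref{prop:spec:x} and~\ref{prop:spec:x:max}, whose suprema over $\omega$ precede the expectation; once the uniform form of \eqref{lem:dk:eq:cov} is secured, the remaining argument is algebraically identical to Lemma~\ref{lem:two} and introduces no $\omega$-dependence that is not already uniformly controlled.
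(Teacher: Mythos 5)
Your proposal is correct and follows essentially the same route as the paper: the paper proves Lemma~\ref{lem:evals} precisely by instantiating Lemma~\ref{lem:two} with $\wh{\mbf S} = \wh{\bm\Sigma}_x(\omega)$, $\mbf S = \bm\Sigma_\chi(\omega)$ and $\wt{\mbf S} = \bm\Sigma_x(\omega)$, verifying \eqref{cond:s:one}--\eqref{cond:s:three} via Assumption~\ref{assum:factor} and Propositions~\ref{prop:idio:eval} and~\ref{prop:spec:x}, and taking the supremum over $\omega$ where relevant. Your additional care in showing that uniformity in $\omega$ is inherited from the fact that Proposition~\ref{prop:spec:x} places the supremum inside the expectation is exactly the point the paper leaves implicit, and your observation that \eqref{cond:s:four} is not needed here is accurate.
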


\begin{lem}
\label{lem:evec:vec}
Let $\bm\varphi_i$ denote the $p$-vector whose $i$-th element is one
and the rest are set to be zero. Then with $\bm{\mc O}(\omega)$ defined in Lemma~\ref{lem:dk},
\begin{enumerate}[wide, label = (\roman*)]
\item \label{lem:evec:vec:one} uniformly for all $1 \le i \le p$,
\begin{align*}
\sqrt{p} \sup_{\omega \in [-\pi, \pi]}
\frac{\mu_{\chi, q}^2(\omega)}{p^2}
\l\vert \bm\varphi_i^\top \l(\wh{\mbf E}_{x, 1:q}(\omega) - 
\mbf E_{\chi}(\omega)\bm{\mc O}(\omega) \r) \r\vert_2
= O_P\l( \psi_n \vee \frac{1}{m} \vee \frac{1}{\sqrt p}\r).
\end{align*}
\item \label{lem:evec:vec:two} In addition,
\begin{align*}
\sqrt{p} \sup_{\omega \in [-\pi, \pi]}
\frac{\mu_{\chi, q}^2(\omega)}{p^2}
\max_{1 \le i \le p}
\l\vert \bm\varphi_i^\top \l(\wh{\mbf E}_{x, 1:q}(\omega) - 
\mbf E_{\chi}(\omega)\bm{\mc O}(\omega) \r) \r\vert_2
= O_P\l( \vartheta_{n, p} \vee \frac{1}{m} \vee \frac{1}{\sqrt p}\r).
\end{align*}
\end{enumerate}
\end{lem}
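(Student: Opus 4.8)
The plan is to obtain Lemma~\ref{lem:evec:vec} as a direct specialisation of Lemma~\ref{lem:three}, applied frequency-by-frequency with $\wh{\mbf S} = \wh{\bm\Sigma}_x(\omega)$ and $\mbf S = \bm\Sigma_\chi(\omega)$. The central observation is that the four structural hypotheses~\eqref{cond:s:one}--\eqref{cond:s:four} required by Lemma~\ref{lem:three} are already verified for this pair: the eigengap condition~\eqref{cond:s:one} is exactly Assumption~\ref{assum:factor} (with $\rho_j$ there playing the role of the exponents in~\eqref{cond:s:one} and $\mu_{\chi,j}(\omega)$ the eigenvalues); the boundedness $\Vert\wt{\mbf S}-\mbf S\Vert = O(1)$ in~\eqref{cond:s:two} follows from Proposition~\ref{prop:idio:eval}, which bounds $\mu_{\xi,1}(\omega)$ uniformly and hence controls $\Vert\bm\Sigma_\xi(\omega)\Vert$ (the natural deterministic proxy $\wt{\mbf S} = \E(\wh{\bm\Sigma}_x(\omega))$ differs from $\bm\Sigma_\chi(\omega)$ by $\bm\Sigma_\xi(\omega)$ plus an $O(m^{-1})$ bias); the mean-square bound~\eqref{cond:s:three} with $\zeta_{n,p} = (\psi_n \vee m^{-1})$ is Proposition~\ref{prop:spec:x}; and the uniform $\ell_\infty$ bound~\eqref{cond:s:four} with $\bar\zeta_{n,p} = (\vartheta_{n,p}\vee m^{-1})$ is Proposition~\ref{prop:spec:x:max}.

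First I would record these identifications explicitly, noting that all the cited estimation results hold with the supremum over $\omega\in[-\pi,\pi]$ already built in, so the frequency-uniformity is inherited at no extra cost. The side condition $p^{1-\rho_q}\zeta_{n,p}\to 0$ demanded by Lemma~\ref{lem:three} is met under the blanket hypotheses of Theorem~\ref{thm:common:spec}. With these in hand, part~\ref{lem:evec:vec:one} is precisely the conclusion of Lemma~\ref{lem:three}~\ref{lem:three:one}: the factor $\sqrt{p}\cdot\mu_{\chi,q}^2(\omega)/p^2$ in front of $\vert\bm\varphi_i^\top(\wh{\mbf E}_{x,1:q}(\omega) - \mbf E_\chi(\omega)\bm{\mc O}(\omega))\vert_2$ matches the prefactor $\sqrt p\cdot\mu_q^2/p^2$ there, and the rate $O_P(\psi_n\vee m^{-1}\vee p^{-1/2})$ is obtained by substituting $\zeta_{n,p}=(\psi_n\vee m^{-1})$ and taking the maximum with the intrinsic $p^{-1/2}$ term. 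Part~\ref{lem:evec:vec:two} follows identically from Lemma~\ref{lem:three}~\ref{lem:three:two}, now with $\bar\zeta_{n,p}=(\vartheta_{n,p}\vee m^{-1})$ replacing $\zeta_{n,p}$, yielding $O_P(\vartheta_{n,p}\vee m^{-1}\vee p^{-1/2})$.

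The only genuine subtlety—and the step I expect to require the most care—is the uniformity over $\omega$. Lemma~\ref{lem:three} is stated for a single matrix pair, whereas here we need the bounds to hold with a single $O_P$ constant valid simultaneously across all $\omega\in[-\pi,\pi]$. The resolution is that Propositions~\ref{prop:spec:x}, \ref{prop:spec:x:max} and Lemma~\ref{lem:sigma:deriv} already deliver $\omega$-uniform control: the mean-square and tail bounds carry the supremum inside (via the Zhang--Wu spectral concentration inequalities), and Lemma~\ref{lem:sigma:deriv} guarantees the requisite smoothness and bounded variation of $\omega\mapsto\sigma_{\chi,ii'}(\omega)$ so that the eigenstructure varies continuously. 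Thus the Davis--Kahan and Weyl arguments underlying Lemma~\ref{lem:three} can be run with the supremum over $\omega$ taken throughout, exactly as flagged in the sentence preceding Lemma~\ref{lem:dk}. I would therefore phrase the proof as: ``The claims follow from Lemma~\ref{lem:three}~\ref{lem:three:one} and~\ref{lem:three:two} applied with $\wh{\mbf S}=\wh{\bm\Sigma}_x(\omega)$ and $\mbf S=\bm\Sigma_\chi(\omega)$, where~\eqref{cond:s:one}--\eqref{cond:s:four} hold by Assumption~\ref{assum:factor} and Propositions~\ref{prop:idio:eval},~\ref{prop:spec:x},~\ref{prop:spec:x:max}, with the supremum over $\omega\in[-\pi,\pi]$ taken where relevant (cf.\ Lemmas~\ref{lem:dk}--\ref{lem:evals}).''
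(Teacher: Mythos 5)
Your proposal is correct and follows essentially the same route as the paper: the paper derives Lemmas~\ref{lem:dk}--\ref{lem:evec:vec} in one stroke from Lemmas~\ref{lem:one}--\ref{lem:three} applied with $\wh{\mbf S} = \wh{\bm\Sigma}_x(\omega)$ and $\mbf S = \bm\Sigma_\chi(\omega)$, verifying~\eqref{cond:s:one}--\eqref{cond:s:four} via Assumption~\ref{assum:factor} and Propositions~\ref{prop:idio:eval}, \ref{prop:spec:x} and~\ref{prop:spec:x:max}, with the supremum over $\omega$ taken where relevant. Your identification of the proxy $\wt{\mbf S}$ and of $\zeta_{n,p}$, $\bar\zeta_{n,p}$, and your handling of the $\omega$-uniformity, match the paper's argument.
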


\begin{lem}
\label{lem:evec:size}
\begin{enumerate}[wide, label = (\roman*)]
\item \label{lem:evec:size:one} $\sup_{\omega \in [-\pi, \pi]}  \max_{1 \le j \le q} 
\mu_{\chi, j}^{1/2}(\omega) \max_{1 \le i \le p} \vert e_{\chi, ij}(\omega) \vert = O(1)$.

\item \label{lem:evec:size:two} If $\psi_n \to 0$ as $n, p \to \infty$, we have
$\sup_{\omega \in [-\pi, \pi]} \max_{1 \le j \le q} \wh\mu_{x, j}^{1/2}(\omega)
\vert \wh{e}_{x, ij}(\omega) \vert = O_P(1)$ uniformly in $1 \le i \le p$.

\item \label{lem:evec:size:three} If $\vartheta_{n, p} \to 0$ as $n, p \to \infty$, we have
$\sup_{\omega \in [-\pi, \pi]}  \max_{1 \le j \le q} 
\wh\mu_{\chi, j}^{1/2}(\omega) \max_{1 \le i \le p} \vert \wh{e}_{x, ij}(\omega) \vert = O_P(1)$.
\end{enumerate}
\end{lem}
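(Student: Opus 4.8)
\textbf{Proof plan for Lemma~\ref{lem:evec:size}.}
The plan is to establish the three bounds by exploiting the eigen-equation $\bm\Sigma_\chi(\omega) \mbf e_{\chi, j}(\omega) = \mu_{\chi, j}(\omega) \mbf e_{\chi, j}(\omega)$ (and its estimated counterpart) combined with the uniform boundedness of the entries of the spectral density matrices established in Lemma~\ref{lem:sigma:bound}. For part~\ref{lem:evec:size:one}, reading off the $i$-th coordinate of the eigen-equation gives $\mu_{\chi, j}(\omega) e_{\chi, ij}(\omega) = \sum_{i'=1}^p \sigma_{\chi, ii'}(\omega) e_{\chi, i'j}(\omega) = \bm\Sigma_{\chi, i\cdot}(\omega) \mbf e_{\chi, j}(\omega)$. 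Applying Cauchy--Schwarz and using that $\vert \mbf e_{\chi, j}(\omega) \vert_2 = 1$ yields $\mu_{\chi, j}(\omega) \vert e_{\chi, ij}(\omega) \vert \le \vert \bm\Sigma_{\chi, i\cdot}(\omega) \vert_2$. The plan is then to bound $\vert \bm\Sigma_{\chi, i\cdot}(\omega) \vert_2$ by noting that, since $\bm\Sigma_\chi(\omega)$ has rank $q$, its $i$-th row norm is controlled by $\sqrt{q}$ times the maximal entry, and by Lemma~\ref{lem:sigma:bound} the entries $\sigma_{\chi, ii'}(\omega)$ (being dominated by $\sigma_{x, ii'}(\omega)$ after subtracting the uniformly bounded idiosyncratic part) are $O(1)$ uniformly; this gives $\mu_{\chi, j}(\omega) \vert e_{\chi, ij}(\omega) \vert = O(\sqrt q)$. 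Dividing through and multiplying by $\mu_{\chi, j}^{1/2}(\omega)$ produces $\mu_{\chi, j}^{1/2}(\omega) \vert e_{\chi, ij}(\omega) \vert \le \mu_{\chi, j}^{-1/2}(\omega) \vert \bm\Sigma_{\chi, i\cdot}(\omega) \vert_2$, and since $\mu_{\chi, j}(\omega) \ge \mu_{\chi, q}(\omega) \gtrsim p^{\rho_q}$ diverges by Assumption~\ref{assum:factor}, the right-hand side is $O(1)$ uniformly in $i, j$ and $\omega$.

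For part~\ref{lem:evec:size:two}, the plan is to transfer the deterministic bound to the estimated eigenvectors by writing $\wh\mu_{x, j}^{1/2}(\omega) \wh e_{x, ij}(\omega) = \wh\mu_{x, j}^{1/2}(\omega) \bm\varphi_i^\top \wh{\mbf e}_{x, j}(\omega)$ and inserting the rotation $\bm{\mc O}(\omega)$ from Lemma~\ref{lem:dk}. Specifically, I would decompose
\begin{align*}
\wh\mu_{x, j}^{1/2}(\omega) \bm\varphi_i^\top \wh{\mbf e}_{x, j}(\omega)
= \wh\mu_{x, j}^{1/2}(\omega) \bm\varphi_i^\top \l(\wh{\mbf E}_{x, 1:q}(\omega) - \mbf E_\chi(\omega)\bm{\mc O}(\omega)\r)_{\cdot j}
+ \wh\mu_{x, j}^{1/2}(\omega) \bm\varphi_i^\top \l(\mbf E_\chi(\omega)\bm{\mc O}(\omega)\r)_{\cdot j}.
\end{align*}
For the first term, the column norm $\vert \bm\varphi_i^\top (\wh{\mbf E}_{x, 1:q}(\omega) - \mbf E_\chi(\omega)\bm{\mc O}(\omega)) \vert_2$ is controlled uniformly by Lemma~\ref{lem:evec:vec}~\ref{lem:evec:vec:one}, and since $\wh\mu_{x, j}(\omega) \asymp \mu_{\chi, j}(\omega) \asymp p^{\rho_j}$ (by Weyl's inequality via~\eqref{eq:first:q:eval}, valid once $\psi_n \to 0$), multiplying by $\wh\mu_{x, j}^{1/2}(\omega)$ and matching the $p$-powers in the rate from Lemma~\ref{lem:evec:vec} shows this term is $O_P$ of a vanishing quantity. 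For the second term, $\bm{\mc O}(\omega)$ is unitary and diagonal, so $\vert \bm\varphi_i^\top \mbf E_\chi(\omega)\bm{\mc O}(\omega) \vert$ in the relevant coordinate reduces to $\vert e_{\chi, ij}(\omega) \vert$ up to a unit-modulus factor, and $\wh\mu_{x, j}^{1/2}(\omega) \vert e_{\chi, ij}(\omega) \vert \asymp \mu_{\chi, j}^{1/2}(\omega) \vert e_{\chi, ij}(\omega) \vert = O(1)$ by part~\ref{lem:evec:size:one}. Part~\ref{lem:evec:size:three} follows by the identical argument, now invoking the uniform-over-$i$ version Lemma~\ref{lem:evec:vec}~\ref{lem:evec:vec:two} (requiring $\vartheta_{n, p} \to 0$) in place of the pointwise version, so that the maximum over $1 \le i \le p$ of the perturbation term is still $o_P(1)$.

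The main obstacle I anticipate is bookkeeping the powers of $p$ so that the diverging factors $\wh\mu_{x, j}^{1/2}(\omega) \asymp p^{\rho_j/2}$ are exactly absorbed by the $\mu_{\chi, q}^2(\omega)/p^2 \asymp p^{2\rho_q - 2}$ normalisation built into Lemma~\ref{lem:evec:vec}, while keeping everything uniform over $j \in \{1, \ldots, q\}$, over $i$, and over $\omega \in [-\pi, \pi]$. In particular, the $j$-dependence of the factor strengths $\rho_j$ means the worst case must be identified carefully: since $\rho_j \ge \rho_q$, the factor $\wh\mu_{x, j}^{1/2}(\omega)$ is largest for small $j$, and I must verify the rate in Lemma~\ref{lem:evec:vec} (stated with $\mu_{\chi, q}^2(\omega)$) still suffices after this multiplication, which it does precisely because $\bm\varphi_i^\top(\wh{\mbf E}_{x, 1:q}(\omega) - \mbf E_\chi(\omega)\bm{\mc O}(\omega))$ carries the $p^{2 - 2\rho_q}$ inflation but the leading eigenvalue contributes only $p^{\rho_1/2} \le p^{1/2}$, leaving room under the stated conditions $\psi_n \to 0$ (resp.\ $\vartheta_{n,p} \to 0$). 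A secondary technical point is justifying that the uniform-in-$i$ probabilistic statements compose correctly in the sense of Remark~\ref{rem:unif:op}, which I would handle by noting that all the constituent bounds hold with tail probabilities independent of $i$, so the uniform $O_P$ notation propagates through the finite ($q$-fold) maximum over $j$ without loss.
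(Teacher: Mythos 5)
Your route has two genuine gaps, and in both cases the paper's own argument is far more elementary. For part~\ref{lem:evec:size:one}, the step ``since $\bm\Sigma_\chi(\omega)$ has rank $q$, its $i$-th row norm is controlled by $\sqrt q$ times the maximal entry'' is false: rank does not limit how many entries of a row are non-negligible (the rank-one matrix $\mbf 1 \mbf 1^\top$ has row $\ell_2$-norm $\sqrt p$ while its maximal entry is $1$). The correct generic bound is $\vert \bm\Sigma_{\chi, i\cdot}(\omega) \vert_2 \le \sqrt{p}\, B_\sigma$, and after dividing by $\mu_{\chi, j}^{1/2}(\omega) \gtrsim p^{\rho_q/2}$ your Cauchy--Schwarz argument only yields $O(p^{(1 - \rho_q)/2})$, which is $O(1)$ only in the strong-factor case $\rho_q = 1$; the lemma must hold for all $\rho_q \in (3/4, 1]$. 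The paper instead reads off the \emph{diagonal} entry: $\sigma_{\chi, ii}(\omega) = \sum_{j = 1}^q \mu_{\chi, j}(\omega) \vert e_{\chi, ij}(\omega) \vert^2 \le \sigma_{x, ii}(\omega) \le B_\sigma$ (Lemma~\ref{lem:sigma:bound}, using nonnegativity of $\sigma_{\xi, ii}(\omega)$), and since every summand is nonnegative, each one individually is at most $B_\sigma$, which is exactly the claim.

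For parts~\ref{lem:evec:size:two}--\ref{lem:evec:size:three}, the eigenvector-perturbation route needs rates the hypotheses do not supply. Lemma~\ref{lem:evec:vec}~\ref{lem:evec:vec:one} gives $\vert \bm\varphi_i^\top (\wh{\mbf E}_{x,1:q}(\omega) - \mbf E_\chi(\omega)\bm{\mc O}(\omega)) \vert_2 = O_P(p^{3/2 - 2\rho_q}(\psi_n \vee m^{-1} \vee p^{-1/2}))$, and multiplying by $\wh\mu_{x, j}^{1/2}(\omega)$, which can be of order $p^{1/2}$, leaves a term of order $p^{2 - 2\rho_q}(\psi_n \vee m^{-1})$; the assumption $\psi_n \to 0$ alone does not make this $O_P(1)$ when $\rho_q < 1$ (and even for $m^{-1}$ alone it can diverge). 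Likewise $\wh\mu_{x, j}(\omega) \asymp \mu_{\chi, j}(\omega)$ requires $p^{1 - \rho_j}(\psi_n \vee m^{-1}) = O(1)$, which~\eqref{eq:first:q:eval} does not deliver from $\psi_n \to 0$. The paper sidesteps eigenvector perturbation entirely: since $\wh{\bm\Sigma}_x(\omega)$ is positive semi-definite, $\sum_{j = 1}^q \wh\mu_{x, j}(\omega) \vert \wh e_{x, ij}(\omega) \vert^2 \le \wh\sigma_{x, ii}(\omega) \le B_\sigma + O_P(\psi_n \vee m^{-1})$ by Proposition~\ref{prop:spec:x} (uniformly in $i$ in the sense of Remark~\ref{rem:unif:op}), and with Proposition~\ref{prop:spec:x:max} replacing it for the maximum over $i$ in part~\ref{lem:evec:size:three}; nonnegativity of each summand again finishes the proof. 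If you want to salvage your approach you would need to add explicit rate conditions tying $\psi_n$, $m$ and $p^{1 - \rho_q}$ together, which the lemma as stated deliberately avoids.
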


\begin{proof}
Note that by Assumption~\ref{assum:innov}~\ref{cond:uncor} and 
Lemma~\ref{lem:sigma:bound},
$\sigma_{x, ii'}(\omega) = \sigma_{\chi, ii'}(\omega) + \sigma_{\xi, ii'}(\omega)$
and $\max_{i, i'} \sup_\omega \sigma_{x, ii'}(\omega) \le B_\sigma < \infty$. 
Then from that
$\sigma_{\chi, ii}(\omega)
= \sum_{j = 1}^q \mu_{\chi, j}(\omega) \vert e_{\chi, ij}(\omega) \vert^2 \le B_\sigma$,
the claim~\ref{lem:evec:size:one} follows. Next, by Proposition~\ref{prop:spec:x},
\begin{align*}
\sup_\omega \sum_{j = 1}^q \wh\mu_{x, j}(\omega) \vert \wh e_{x, ij}(\omega) \vert^2 \le
\sup_\omega \wh\sigma_{x, ii}(\omega)  
\le B_\sigma  + O_P\l(\psi_n \vee \frac{1}{m}\r)
\end{align*}
uniformly over $1 \le i \le p$
and~\eqref{eq:first:q:eval} indicates that $\wh\mu_{x, j}(\omega) \to \infty$ 
as $p \to \infty$ for $1 \le j \le q$,
from which~\ref{lem:evec:size:two} follows.
Also by Proposition~\ref{prop:spec:x:max} and Lemma~\ref{lem:sigma:bound},
\begin{align*}
\max_{1 \le i \le p} \sup_\omega \sum_{j = 1}^q \wh\mu_{x, j}(\omega) \vert \wh e_{x, ij}(\omega) \vert^2  \le
\max_{1 \le i \le p} \sup_\omega \wh\sigma_{x, ii}(\omega)  
\le B_\sigma + O_P\l(\vartheta_{n, p} \vee \frac{1}{m}\r)
\end{align*}
which, combined with~\eqref{eq:first:q:eval}, leads to~\ref{lem:evec:size:three}.
\end{proof} 

\begin{prop}
\label{prop:spec:common}
\begin{enumerate}[wide, label = (\roman*)]
\item \label{prop:spec:common:one} Uniformly over $1 \le i, i' \le p$,
\begin{align*}
\sup_{\omega \in [-\pi, \pi]}
\frac{\mu_{\chi, q}^2(\omega)}{p^2} 
\l\vert \wh{\sigma}_{\chi, ii'}(\omega) - \sigma_{\chi, ii'}(\omega) \r\vert
= O_P\l(q \l(\psi_n \vee \frac{1}{m} \vee \frac{1}{\sqrt{p}}\r) \r).
\end{align*}
\item \label{prop:spec:common:two} In addition,
\begin{align*}
\sup_{\omega \in [-\pi, \pi]}
\frac{\mu_{\chi, q}^2(\omega)}{p^2} 
\max_{1 \le i, i' \le p}
\l\vert \wh{\sigma}_{\chi, ii'}(\omega) - \sigma_{\chi, ii'}(\omega) \r\vert
= O_P\l(q \l(\vartheta_{n, p} \vee \frac{1}{m} \vee \frac{1}{\sqrt{p}}\r) \r).
\end{align*}
\end{enumerate}
\end{prop}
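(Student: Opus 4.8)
The plan is to prove Proposition~\ref{prop:spec:common} by writing the estimation error $\wh{\sigma}_{\chi, ii'}(\omega) - \sigma_{\chi, ii'}(\omega)$ in terms of the eigenvalue and eigenvector estimation errors already controlled in Lemmas~\ref{lem:dk}--\ref{lem:evec:vec}, and then bounding each resulting term using the size control from Lemma~\ref{lem:evec:size}. Recall that $\wh{\sigma}_{\chi, ii'}(\omega) = \bm\varphi_i^\top \wh{\bm\Sigma}_\chi(\omega) \bm\varphi_{i'} = \sum_{j=1}^q \wh\mu_{x, j}(\omega) \wh{e}_{x, ij}(\omega) \overline{\wh{e}_{x, i'j}(\omega)}$ and likewise $\sigma_{\chi, ii'}(\omega) = \sum_{j=1}^q \mu_{\chi, j}(\omega) e_{\chi, ij}(\omega) \overline{e_{\chi, i'j}(\omega)}$. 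The first step is to exploit the unitary diagonal matrix $\bm{\mc O}(\omega)$ from Lemma~\ref{lem:dk}: since $\bm{\mc O}(\omega)$ is unitary and diagonal, it leaves the rank-$q$ projection $\mbf E_\chi(\omega) \bm{\mc M}_\chi(\omega) \mbf E_\chi^*(\omega)$ invariant, so I may replace $\mbf E_\chi(\omega)$ by $\mbf E_\chi(\omega) \bm{\mc O}(\omega)$ throughout without changing $\sigma_{\chi, ii'}(\omega)$. This lets me compare $\wh{\mbf E}_{x, 1:q}(\omega)$ directly against $\mbf E_\chi(\omega)\bm{\mc O}(\omega)$, matching exactly the quantities bounded in Lemma~\ref{lem:evec:vec}.

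Next I would perform the standard algebraic decomposition of the difference into a sum of terms, each isolating one source of error. Writing $\bm\varphi_i^\top \wh{\mbf E}_{x, 1:q}(\omega) = \bm\varphi_i^\top \mbf E_\chi(\omega)\bm{\mc O}(\omega) + \bm\varphi_i^\top (\wh{\mbf E}_{x, 1:q}(\omega) - \mbf E_\chi(\omega)\bm{\mc O}(\omega))$ and similarly for the eigenvalues, the difference $\wh{\sigma}_{\chi, ii'}(\omega) - \sigma_{\chi, ii'}(\omega)$ splits into (a) a term involving the eigenvalue error $\wh{\bm{\mc M}}_{x, 1:q}(\omega) - \bm{\mc M}_\chi(\omega)$ paired with the true eigenvectors, (b) two cross terms involving one eigenvector error factor and one true-eigenvector factor, and (c) a term quadratic in the eigenvector error. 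The crucial observation is that all the relevant bounds in Lemmas~\ref{lem:evals} and~\ref{lem:evec:vec} carry a normalising factor $\mu_{\chi, q}^2(\omega)/p^2$, which is precisely the prefactor appearing in the statement; the eigenvalue bounds are stated for $(\bm{\mc M}/p)^{-1}$, so I would instead use the more direct consequence $\sup_\omega p^{-1}|\wh\mu_{x, j}(\omega) - \mu_{\chi, j}(\omega)| = O_P(\psi_n \vee m^{-1} \vee p^{-1})$ from~\eqref{eq:first:q:eval}. For the $\ell_2$-norms of the vector error factors I would use $|\bm\varphi_i^\top (\wh{\mbf E}_{x, 1:q}(\omega) - \mbf E_\chi(\omega)\bm{\mc O}(\omega))|_2$, and for the true-eigenvector size I would invoke $\mu_{\chi, j}^{1/2}(\omega)|e_{\chi, ij}(\omega)| = O(1)$ from Lemma~\ref{lem:evec:size}~\ref{lem:evec:size:one} together with its estimated analogue in~\ref{lem:evec:size:two}--\ref{lem:evec:size:three}.

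The main obstacle, and the step requiring the most care, is the precise bookkeeping of the powers of $\mu_{\chi, q}(\omega)/p$ and of $q$ so that they assemble into exactly the claimed prefactor $\mu_{\chi, q}^2(\omega)/p^2$ with a single factor of $q$. The subtlety is that Lemma~\ref{lem:evec:vec} delivers a bound on $\sqrt{p} \cdot \mu_{\chi, q}^2(\omega) p^{-2} |\bm\varphi_i^\top (\wh{\mbf E}_{x, 1:q}(\omega) - \mbf E_\chi(\omega)\bm{\mc O}(\omega))|_2$, so each vector-error factor already carries the full $\mu_{\chi, q}^2(\omega)/p^2$ normalisation together with a compensating $\sqrt{p}$; these must be combined with the $O(1)$-size of the $\mu_{\chi, j}^{1/2}(\omega)|e_{\chi, ij}(\omega)|$ factors and the $\mu_{\chi, j}^{1/2}(\omega)$-deficit extracted from them, so that the surviving factors of $\mu_{\chi, q}(\omega)$ and $p$ exactly cancel. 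I expect the quadratic term in the eigenvector error to be the one most prone to producing a spurious extra factor, and I would handle it by using the same $\sqrt{p}$-normalised bound on each factor and one invocation of the Cauchy--Schwarz inequality over the $q$ coordinates, which is where the factor $q$ enters. The distinction between part~\ref{prop:spec:common:one} (uniform over each fixed pair $i, i'$, governed by $\psi_n$) and part~\ref{prop:spec:common:two} (uniform over the maximum, governed by $\vartheta_{n, p}$) then follows by using, respectively, Lemma~\ref{lem:evec:vec}~\ref{lem:evec:vec:one} with~\eqref{eq:first:q:eval} for the former and Lemma~\ref{lem:evec:vec}~\ref{lem:evec:vec:two} with Lemma~\ref{lem:evec:size}~\ref{lem:evec:size:three} for the latter, and noting that $\psi_n \le \vartheta_{n, p}$. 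Finally, I would confirm that all bounds hold with the supremum over $\omega \in [-\pi, \pi]$ retained throughout, which is legitimate because every cited lemma is already stated with $\sup_{\omega \in [-\pi, \pi]}$ in place.
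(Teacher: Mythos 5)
Your proposal is correct and follows essentially the same route as the paper's proof: a telescoping decomposition of $\wh{\bm\Sigma}_\chi(\omega)-\bm\Sigma_\chi(\omega)$ into eigenvalue-error and eigenvector-error terms, bounded via Lemmas~\ref{lem:evals}, \ref{lem:evec:vec} and~\ref{lem:evec:size} (with~\eqref{eq:first:q:eval} for the eigenvalues) and Cauchy--Schwarz over the $q$ coordinates, with parts~\ref{prop:spec:common:one} and~\ref{prop:spec:common:two} distinguished exactly as you describe. The only cosmetic difference is that the paper uses a three-term telescoping (keeping the estimated $\wh{\bm{\mc M}}_{x,1:q}$ and $\wh{\mbf E}_{x,1:q}^*$ in the leading factor) rather than your full four-term expansion, which avoids the quadratic-in-eigenvector-error term you rightly flag as the delicate one.
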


\begin{proof}
First, note that
\begin{align*}
& \sup_\omega \frac{\mu_{\chi, q}^2(\omega)}{p^2}
\l\vert \wh{\sigma}_{\chi, ii'}(\omega) - \sigma_{\chi, ii'}(\omega) \r\vert
=
\sup_\omega \frac{\mu_{\chi, q}^2(\omega)}{p^2}
\l\vert \bm\varphi_i^\top \l(\wh{ \bm\Sigma}_\chi(\omega) - \bm\Sigma_\chi(\omega) \r)
\bm\varphi_{i'} \r\vert
\\
\le & \sup_\omega \frac{\mu_{\chi, q}^2(\omega)}{p^2}
\l\{
\l\vert \bm\varphi_i^\top \l(\wh{\mbf E}_{x, 1:q}(\omega) - \mbf E_{\chi}(\omega) \bm{\mc O}(\omega) \r)
\wh{\bm{\mc M}}_{x, 1:q}(\omega) \wh{\mbf E}_{x, 1:q}^*(\omega) \bm\varphi_{i'} \r\vert
\r.
\\
& +
\l\vert \bm\varphi_i^\top \mbf E_{\chi}(\omega) \bm{\mc O}(\omega)
\l(\wh{\bm{\mc M}}_{x, 1:q}(\omega) - \bm{\mc M}_\chi(\omega)\r) \wh{\mbf E}_{x, 1:q}^*(\omega) \bm\varphi_{i'} \r\vert
\\
& +
\l.
\l\vert \bm\varphi_i^\top \mbf E_{\chi}(\omega) \bm{\mc O}(\omega)
\bm{\mc M}_{\chi}(\omega) \l(\wh{\mbf E}_{x, 1:q}(\omega) - 
\mbf E_\chi(\omega) \bm{\mc O}(\omega)\r)^* \bm\varphi_{i'} \r\vert \r\} = I + II + III.
\end{align*}
By Lemmas~\ref{lem:evals}, \ref{lem:evec:vec}~\ref{lem:evec:vec:one}, 
\ref{lem:evec:size}~\ref{lem:evec:size:one}--\ref{lem:evec:size:two} and
Cauchy-Schwarz inequality,
\begin{align*}
I =& \sup_\omega \frac{\mu_{\chi, q}^2(\omega)}{p^2}
\l\vert \sum_{j = 1}^q \wh{\mu}_{x, j}(\omega) \wh{e}^*_{x, i'j} \cdot
\bm\varphi_i^\top \l(\wh{\mbf E}_{x, 1:q}(\omega) - \mbf E_{\chi}(\omega) \bm{\mc O}(\omega) \r) \bm\varphi_{i'} \r\vert
\\
\le& \sup_\omega \frac{\mu_{\chi, q}^2(\omega)}{p^2} \cdot 
\sqrt{p} \l\vert \bm\varphi_i^\top \l(\wh{\mbf E}_{x, 1:q}(\omega) - \mbf E_{\chi}(\omega) \bm{\mc O}(\omega) \r) \r\vert_2 
\sqrt{\frac{1}{p} \sum_{j = 1}^q \wh{\mu}^2_{x, j}(\omega) \vert \wh{e}_{x, i'j}(\omega) \vert^2}
\\
=& O_P\l( \l(\psi_n \vee \frac{1}{m} \vee \frac{1}{\sqrt p} \r) \cdot 
\sqrt{\frac{1}{p} \sum_{j = 1}^q \wh{\mu}_{x, j}(\omega)} \r)
= O_P\l( \sqrt{q} \l(\psi_n \vee \frac{1}{m} \vee \frac{1}{\sqrt p} \r) \r)
\end{align*}
uniformly in $i$ and $i'$, 
and $III$ can be handled analogously.
By~\eqref{eq:first:q:eval} and Lemma~\ref{lem:evec:size}~\ref{lem:evec:size:two},
\begin{align*}
II \le & \sup_\omega \frac{\mu_{\chi, q}^2(\omega)}{p^2}
\sqrt{p} \l\vert \bm\varphi_i^\top \mbf E_{\chi}(\omega) \r\vert_2 \cdot
\frac{1}{p} \l\Vert \wh{\bm{\mc M}}_{x, 1:q}(\omega) - \bm{\mc M}_\chi(\omega) \r\Vert \cdot
\sqrt{p} \l\vert \bm\varphi_{i'}^\top \wh{\mbf E}_{x, 1:q}(\omega) \r\vert_2
\\
=& O_P\l( \l(\psi_n \vee \frac{1}{m} \vee \frac{1}{p} \r) 
\cdot \sup_\omega \frac{\mu_{\chi, q}^2(\omega)}{p^2}
\sqrt{p \sum_{j = 1}^q \frac{1}{\mu_{\chi, j}(\omega)} \cdot
p \sum_{j = 1}^q \frac{1}{\wh\mu_{x, j}(\omega)}}\r)
\\
=& O_P\l( q\l(\psi_n \vee \frac{1}{m} \vee \frac{1}{p} \r)\r)
\end{align*}
uniformly in $1 \le i \le p$, which proves~\ref{prop:spec:common:one}.
As for~\ref{prop:spec:common:two}, we take analogous steps
with Lemmas~\ref{lem:evec:vec}~\ref{lem:evec:vec:two} and \ref{lem:evec:size}~\ref{lem:evec:size:three} 
replacing Lemmas~\ref{lem:evec:vec}~\ref{lem:evec:vec:one} and \ref{lem:evec:size}~\ref{lem:evec:size:two}, respectively. 
\end{proof}

\begin{prop}
\label{prop:acv:common}
For $0 \le \ell \le s$ with some fixed $s \in \mathbb{N}$,
\begin{enumerate}[wide, label = (\roman*)]
\item \label{prop:acv:common:one} uniformly over $1 \le i, i' \le p$,
\begin{align*}
\l\vert \wh{\gamma}_{\chi, ii'}(\ell) - \gamma_{\chi, ii'}(\ell) \r\vert
= O_P\l(\frac{q p^2}{\inf_{\omega \in [-\pi, \pi]} \mu_{\chi, q}^2(\omega)} \l(\psi_n \vee \frac{1}{m} \vee \frac{1}{\sqrt p}\r) \r).
\end{align*}

\item \label{prop:acv:common:two} In addition,
\begin{align*}
\max_{1 \le i, i' \le p} \l\vert \wh{\gamma}_{\chi, ii'}(\ell) - \gamma_{\chi, ii'}(\ell) \r\vert
= O_P\l(\frac{q p^2}{\inf_{\omega \in [-\pi, \pi]} \mu_{\chi, q}^2(\omega)} \l(\vartheta_{n, p} \vee \frac{1}{m} \vee \frac{1}{\sqrt p}\r) \r).
\end{align*}
\end{enumerate}
\end{prop}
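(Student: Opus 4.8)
The plan is to derive Proposition~\ref{prop:acv:common} from the spectral-domain consistency established in Proposition~\ref{prop:spec:common} by passing through the inverse Fourier transform that defines $\wh{\bm\Gamma}_\chi(\ell)$. Recall that
\begin{align*}
\wh{\bm\Gamma}_\chi(\ell) = \frac{2\pi}{2m + 1} \sum_{k = -m}^m \wh{\bm\Sigma}_\chi(\omega_k) \exp(\iota \ell \omega_k),
\end{align*}
so the $(i, i')$ entry of $\wh{\bm\Gamma}_\chi(\ell) - \bm\Gamma_\chi(\ell)$ splits naturally into a \emph{stochastic} part, coming from replacing $\sigma_{\chi, ii'}(\omega_k)$ by $\wh\sigma_{\chi, ii'}(\omega_k)$ inside the discrete sum, and a \emph{discretisation/bias} part, coming from approximating the exact Fourier integral $\gamma_{\chi, ii'}(\ell) = \int_{-\pi}^\pi \sigma_{\chi, ii'}(\omega) e^{\iota \ell \omega}\,d\omega$ by its Riemann-type sum over the $2m + 1$ Fourier frequencies. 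First I would write this decomposition explicitly and bound the two pieces separately.

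For the stochastic part, the idea is to factor out $\inf_{\omega} \mu_{\chi, q}^2(\omega)/p^2$ and invoke the uniform bound from Proposition~\ref{prop:spec:common}. Since $\vert e^{\iota \ell \omega_k} \vert = 1$ and there are $2m + 1$ summands each weighted by $2\pi/(2m + 1)$, the averaging is norm-preserving, so the supremum bound on $\vert \wh\sigma_{\chi, ii'}(\omega) - \sigma_{\chi, ii'}(\omega) \vert$ transfers directly to the sum without picking up an extra factor of $m$. Concretely, bounding $\frac{2\pi}{2m+1}\sum_k \vert \wh\sigma_{\chi, ii'}(\omega_k) - \sigma_{\chi, ii'}(\omega_k)\vert$ by $2\pi \sup_\omega \vert \wh\sigma_{\chi, ii'}(\omega) - \sigma_{\chi, ii'}(\omega)\vert$ yields the rate $q p^2 (\inf_\omega \mu_{\chi, q}^2(\omega))^{-1} (\psi_n \vee m^{-1} \vee p^{-1/2})$ for part~\ref{prop:acv:common:one} and the analogous $\vartheta_{n, p}$ version for part~\ref{prop:acv:common:two}, directly reusing Proposition~\ref{prop:spec:common}~\ref{prop:spec:common:one} and~\ref{prop:spec:common:two} respectively.

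The bias part is where the work lies, and I expect it to be the main obstacle. I need to control $\vert \frac{2\pi}{2m+1}\sum_k \sigma_{\chi, ii'}(\omega_k) e^{\iota \ell \omega_k} - \gamma_{\chi, ii'}(\ell)\vert$, the error of numerical quadrature for the integral defining the autocovariance. The natural tool here is Lemma~\ref{lem:sigma:deriv}, which guarantees that $\omega \mapsto \sigma_{\chi, ii'}(\omega)$ has derivatives of all orders and is of bounded variation \emph{uniformly} in $i, i'$, with a constant $B'_\sigma$ independent of the pair. Standard quadrature error estimates for a smooth, periodic, bounded-variation integrand over an equispaced grid of $2m + 1$ points give an error of order $m^{-1}$ (or faster), uniformly over $i, i'$ and over the finitely many lags $0 \le \ell \le s$; the key point is that the uniform-in-$(i,i')$ nature of the bound in Lemma~\ref{lem:sigma:deriv} keeps this $O(m^{-1})$ term uniform, so it is absorbed into the $m^{-1}$ already present in the stated rate. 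Since the prefactor $qp^2(\inf_\omega \mu_{\chi, q}^2(\omega))^{-1}$ is at least of constant order, this deterministic $O(m^{-1})$ is dominated and contributes nothing beyond the existing terms. Combining the two parts via the triangle inequality, and noting that $s$ is finite so the maximum over $\vert \ell \vert \le s$ costs only a constant, completes both claims; the negative-lag case follows since $\wh{\bm\Gamma}_\chi(-\ell) = \wh{\bm\Gamma}_\chi(\ell)^\top$ mirrors the same relation for $\bm\Gamma_\chi$.
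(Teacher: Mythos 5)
Your proposal is correct and follows essentially the same route as the paper: the same split into a stochastic term (bounded by $2\pi\sup_\omega\vert\wh\sigma_{\chi,ii'}(\omega)-\sigma_{\chi,ii'}(\omega)\vert$ via Proposition~\ref{prop:spec:common}) and a quadrature-bias term controlled at rate $O(m^{-1})$ uniformly in $i,i'$ via the bounded-variation/smoothness guarantee of Lemma~\ref{lem:sigma:deriv}. The paper merely makes the quadrature step slightly more explicit by splitting the bias into the variation of $\sigma_{\chi,ii'}$ and the variation of $e^{\iota\omega\ell}$ over each subinterval, but this is the same argument you describe.
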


\begin{proof}
Note that
\begin{align*}
& \l\vert \wh{\gamma}_{\chi, ii'}(\ell) - \gamma_{\chi, ii'}(\ell) \r\vert
= \l\vert \frac{2\pi}{2m + 1} \sum_{k = -m}^m \wh\sigma_{\chi, ii'}(\omega_k) e^{\iota \omega_k \ell}
- \int_{-\pi}^\pi \sigma_{\chi, ii'}(\omega) e^{\iota \omega \ell} d\omega \r\vert
\\
& \le \frac{2\pi}{2m + 1} \sum_{k = -m}^m \l\vert \wh\sigma_{\chi, ii'}(\omega_k) - \sigma_{\chi, ii'}(\omega_k) \r\vert
\\
& + \l\vert \frac{2\pi}{2m + 1} \sum_{k = -m}^m \sigma_{\chi, ii'}(\omega_k) e^{\iota \omega_k \ell}
- \int_{-\pi}^\pi \sigma_{\chi, ii'}(\omega) e^{\iota \omega \ell} d\omega \r\vert = I + II.
\end{align*}
By Proposition~\ref{prop:spec:common}~\ref{prop:spec:common:one},
\begin{align*}
I \le 2\pi \sup_\omega \l\vert \wh\sigma_{\chi, ii'}(\omega) - \sigma_{\chi, ii'}(\omega) \r\vert
= O_P\l(\frac{qp^2}{\inf_\omega \mu_{\chi, q}^2(\omega)}
\l(\psi_n \vee \frac{1}{m} \vee \frac{1}{\sqrt p}\r)\r).
\end{align*}
Next, we can find $\{\omega_k^*\}_{k = -m}^{m - 1}$
and $\{\omega_k^\circ\}_{k = -m}^{m - 1}$
with $\omega_k^*, \omega_k^\circ \in [\omega_k, \omega_{k + 1}]$,
such that
\begin{align*}
II \le & 
\frac{2\pi}{2m + 1} \sum_{k = -m}^{m - 1} \max_{\omega_k \le \omega \le \omega_{k + 1}}
\l\vert \sigma_{\chi, ii'}(\omega_k) e^{\iota \omega_k \ell} - \sigma_{\chi, ii'}(\omega) e^{\iota \omega \ell} \r\vert
\\
\le &
\frac{2\pi}{2m + 1} \sum_{k = -m}^{m - 1} \max_{\omega_k \le \omega \le \omega_{k + 1}}
\l\vert \sigma_{\chi, ii'}(\omega_k) - \sigma_{\chi, ii'}(\omega) \r\vert
\\
& + 
\frac{2\pi \max_{1 \le i, i' \le p} \sup_\omega \vert \sigma_{\chi, ii'}(\omega) \vert}{2m + 1} 
\sum_{k = -m}^{m - 1} \max_{\omega_k \le \omega \le \omega_{k + 1}}
\l\vert e^{\iota \omega_k \ell} - e^{\iota \omega \ell} \r\vert 
\\
\le &
\frac{2\pi}{2m + 1} \sum_{k = -m}^{m - 1} 
\l( \l\vert \sigma_{\chi, ii'}(\omega_k) - \sigma_{\chi, ii'}(\omega_k^*) \r\vert +
\l\vert \sigma_{\chi, ii'}(\omega_{k + 1}) - \sigma_{\chi, ii'}(\omega_k^*) \r\vert \r) 
\\
& + 
\frac{2\pi B_\sigma}{2m + 1} 
\sum_{k = -m}^{m - 1} \l( \l\vert e^{\iota \omega_k \ell} - e^{\iota \omega_k^\circ \ell} \r\vert 
+ \l\vert e^{\iota \omega_{k + 1} \ell} - e^{\iota \omega_k^\circ \ell} \r\vert \r)
= III + IV,
\end{align*}
where the last inequality follows from Lemma~\ref{lem:sigma:bound}.
Then by Lemma~\ref{lem:sigma:deriv},
the functions $\omega \mapsto \sigma_{\chi, ii'}(\omega)$ possess derivatives of any order and are of bounded variation such that $III = O(m^{-1})$.
Also from the finiteness of $s$ (Assumption~\ref{assum:common:var}~\ref{cond:block:var})
and as the exponential function has bounded variation,
$IV = O(m^{-1})$ uniformly in $0 \le \ell \le s$, which completes the proof of~\ref{prop:acv:common:one}.
As for~\ref{prop:acv:common:two}, by Proposition~\ref{prop:spec:common}~\ref{prop:spec:common:two},
\begin{align*}
\frac{2\pi}{2m + 1} \sum_{k = -m}^m \max_{1 \le i, i' \le p} \l\vert \wh\sigma_{\chi, ii'}(\omega_k) - \sigma_{\chi, ii'}(\omega_k) \r\vert
= O_P\l(\frac{qp^2}{\inf_\omega \mu_{\chi, q}^2(\omega)}
\l(\vartheta_{n, p} \vee \frac{1}{m} \vee \frac{1}{\sqrt p}\r)\r),
\end{align*}
which replaces the above $I$. Since $II$ is deterministic and the bound derived above is uniform over $1 \le i, i' \le p$,
the proof is complete.
\end{proof}

\subsubsection{Proof of Proposition~\ref{prop:acv:idio}}

Proposition~\ref{prop:acv:idio} is a direct consequence of Corollary~\ref{prop:acv:idio:new}.

\begin{cor}
\label{prop:acv:idio:new}
For $0 \le \ell \le d$, the following statements hold.
\begin{enumerate}[wide, label = (\roman*)]
\item \label{prop:acv:idio:new:one} Uniformly over $1 \le i, i' \le p$, we have
\begin{align*}
\l\vert \wh{\gamma}_{\xi, ii'}(\ell) - \gamma_{\xi, ii'}(\ell) \r\vert
= O_P\l(\frac{q p^2}{\inf_{\omega \in [-\pi, \pi]} \mu_{\chi, q}^2(\omega)} \l(\psi_n \vee \frac{1}{m} \vee \frac{1}{\sqrt p}\r) \r).
\end{align*}

\item \label{prop:acv:idio:new:two} In addition,
\begin{align*}
\max_{1 \le i, i' \le p} \l\vert \wh{\gamma}_{\xi, ii'}(\ell) - \gamma_{\xi, ii'}(\ell) \r\vert
= O_P\l(\frac{q p^2}{\inf_{\omega \in [-\pi, \pi]} \mu_{\chi, q}^2(\omega)} \l(\vartheta_{n, p} \vee \frac{1}{m} \vee \frac{1}{\sqrt p}\r) \r).
\end{align*}
\end{enumerate}
\end{cor}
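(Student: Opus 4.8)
\textbf{Proof proposal for Corollary~\ref{prop:acv:idio:new}.}
The plan is to leverage the identity $\wh{\bm\Gamma}_\xi(\ell) = \wh{\bm\Gamma}_x(\ell) - \wh{\bm\Gamma}_\chi(\ell)$ together with the population decomposition $\bm\Gamma_\xi(\ell) = \bm\Gamma_x(\ell) - \bm\Gamma_\chi(\ell)$, which is valid for all $\ell$ by Assumption~\ref{assum:innov}~\ref{cond:uncor}. Subtracting, I obtain the exact decomposition
\begin{align*}
\wh{\gamma}_{\xi, ii'}(\ell) - \gamma_{\xi, ii'}(\ell)
= \l( \wh{\gamma}_{x, ii'}(\ell) - \gamma_{x, ii'}(\ell) \r)
- \l( \wh{\gamma}_{\chi, ii'}(\ell) - \gamma_{\chi, ii'}(\ell) \r),
\end{align*}
so by the triangle inequality the error in estimating the idiosyncratic autocovariance is controlled by the sum of the corresponding errors for $\mbf X_t$ and $\bm\chi_t$. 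The entire argument then reduces to bounding these two pieces separately and verifying that the $\bm\chi_t$-term dominates.

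For the common-component term, I would invoke Proposition~\ref{prop:acv:common} directly: part~\ref{prop:acv:common:one} gives the uniform (fixed $i,i'$) bound of order $q p^2 (\inf_\omega \mu_{\chi,q}^2(\omega))^{-1} (\psi_n \vee m^{-1} \vee p^{-1/2})$, and part~\ref{prop:acv:common:two} gives the $\max_{i,i'}$ bound with $\vartheta_{n,p}$ in place of $\psi_n$. For the observable-process term, I would use Lemma~\ref{lem:acv:x}~\ref{lem:acv:x:two}, which supplies $\max_{|\ell|\le s}|\wh{\bm\Gamma}_x(\ell) - \bm\Gamma_x(\ell)|_\infty = O_P(\wt\vartheta_{n,p})$, and for the fixed-index statement the mean-square bound~\eqref{eq:lem:acv:x:one} combined with Chebyshev's inequality (in the uniform-in-$(i,i')$ sense of Remark~\ref{rem:unif:op}) gives $O_P(n^{-1/2})$. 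Since $0 \le \ell \le d$ with $d$ finite, the range of lags is fixed and poses no difficulty.

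The only real content is the rate comparison: I must check that the $\bm\chi_t$-error absorbs the $\mbf X_t$-error so that the stated bounds hold. Because $\mu_{\chi,q}(\omega)$ diverges like $p^{\rho_q}$ under Assumption~\ref{assum:factor}, the prefactor $q p^2 (\inf_\omega \mu_{\chi,q}^2(\omega))^{-1}$ is of order $q p^{2(1-\rho_q)} \ge 1$, so the common-component bound is never smaller than its bracketed factor. It then suffices to observe that $n^{-1/2} = O(\psi_n)$ (since $\psi_n \ge \sqrt{m\log(m)/n} \ge n^{-1/2}$) for part~\ref{prop:acv:idio:new:one}, and that $\wt\vartheta_{n,p} = O(\vartheta_{n,p})$ for part~\ref{prop:acv:idio:new:two}; the latter holds because $\vartheta_{n,p}$ carries an extra factor of $m \ge 1$ in its variance term and comparable logarithmic factors, so term-by-term $\wt\vartheta_{n,p}$ is dominated. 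Consequently the $\mbf X_t$-error is of smaller or equal order than the $\bm\chi_t$-error in both norms, and the triangle inequality yields the claimed rates.

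\textbf{Main obstacle.} I expect the bookkeeping in the rate comparison to be the delicate step rather than any probabilistic estimate: one must confirm that $\wt\vartheta_{n,p} \le \vartheta_{n,p}$ (up to constants) across both the heavy-tailed and Gaussian regimes of Assumption~\ref{assum:innov}~\ref{cond:dist}, and that the diverging factor strength makes the common-component prefactor the binding one. Once these elementary dominations are in place, the result follows immediately from Proposition~\ref{prop:acv:common} and Lemma~\ref{lem:acv:x} with no further estimation required.
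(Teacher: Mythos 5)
Your proposal is correct and follows essentially the same route as the paper: decompose $\wh{\gamma}_{\xi,ii'}(\ell)-\gamma_{\xi,ii'}(\ell)$ via $\wh{\bm\Gamma}_\xi(\ell)=\wh{\bm\Gamma}_x(\ell)-\wh{\bm\Gamma}_\chi(\ell)$ (valid by Assumption~\ref{assum:innov}~\ref{cond:uncor}), then apply Lemma~\ref{lem:acv:x} to the $\mbf X_t$-term and Proposition~\ref{prop:acv:common} to the $\bm\chi_t$-term. Your additional bookkeeping verifying that $n^{-1/2}=O(\psi_n)$, that $\wt\vartheta_{n,p}=O(\vartheta_{n,p})$ term by term, and that the prefactor $qp^2/\inf_\omega\mu_{\chi,q}^2(\omega)\gtrsim 1$ makes the common-component bound dominant is accurate and simply spells out what the paper treats as immediate.
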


\begin{proof}
Noting that $\wh{\bm\Sigma}_\xi(\omega) = \wh{\bm\Sigma}_x(\omega) - \wh{\bm\Sigma}_\chi(\omega)$
by Assumption~\ref{assum:innov}~\ref{cond:uncor},
the result in~\ref{prop:acv:idio:new:one} is an immediate consequence of 
Lemma~\ref{lem:acv:x}~\ref{lem:acv:x:one} (see also~\eqref{eq:lem:acv:x:one}) and Proposition~\ref{prop:acv:common}~\ref{prop:acv:common:one}.
Similarly, \ref{prop:acv:idio:new:two} follows from 
Lemma~\ref{lem:acv:x}~\ref{lem:acv:x:two} and Proposition~\ref{prop:acv:common}~\ref{prop:acv:common:two}.
\end{proof}

\subsubsection{Proof of Proposition~\ref{prop:idio:est:lasso}}
\label{pf:prop:idio:est:lasso}

Throughout, we refer to the estimator obtained as in~\eqref{eq:lasso} as $\wh{\bm\beta}^{\las}$ and the tuning parameter by $\lambda^{\las}$.

The problem in~\eqref{eq:lasso} can be solved in parallel as
\begin{align*}
\wh{\bm\beta}^{\las}_{\cdot j} = {\arg\min}_{\mbf m \in \R^{pd}} \
\mbf m^\top \wh{\bbG} \mbf m - 2 \mbf m^\top\wh{\bbg}_{\cdot j} + \lambda^{\las} \vert \mbf m \vert_1,
\end{align*}
with $\wh{\bm\beta}^{\las} = [\wh{\bm\beta}^{\las}_{\cdot j}, \, 1 \le j \le p]$.

Throughout the proof, we condition our arguments on $\mc E_{n, p}$.
We first establish the followings:
\begin{enumerate}[wide, label = (C\arabic*)]
\item \label{cond:bbg} $\wh{\bbG}$ is positive semi-definite.

\item \label{cond:lre} {\bf Lower restricted eigenvalue:} 
$\wh{\bbG}$ satisfies $\mbf v^\top \wh{\bbG} \mbf v \ge \kappa \vert \mbf v \vert^2 - \tau \vert \mbf v \vert_1^2$
for all $\mbf v \in \R^{dp}$, with curvature $\kappa = 2\pi m_\xi$ and tolerance $\tau = C_\xi(\vartheta_{n, p} \vee m^{-1} \vee p^{-1/2})$.

\item \label{cond:dev} {\bf Deviation bound:} 
$\vert \wh{\bbg} - \wh{\bbG} \bm\beta \vert_\infty \le C_\xi (\Vert \bm\beta \Vert_1 + 1)(\vartheta_{n, p} \vee m^{-1} \vee p^{-1/2})$.
\end{enumerate}

To see~\ref{cond:bbg}, note that for any $\mbf a \in \R^p$,
\begin{align*}
\mbf a^\top (2\pi \wh{\bm\Sigma}_x(\omega)) \mbf a &= 
\sum_{\ell = -m}^m K\l(\frac{\ell}{m}\r) \wh{\gamma}_{\mbf a}(\ell) \exp(-\iota \ell \omega)
= \mbf e(\omega)^*
\bmx
\wh{\gamma}_{\mbf a}(0) & \ldots & \wh{\gamma}_{\mbf a}(m - 1) \\
& \ddots & \\
\wh{\gamma}_{\mbf a}(-m + 1) & \ldots & \wh{\gamma}_{\mbf a}(0) 
\emx \mbf e(\omega),
\end{align*}
where $\wh{\gamma}_{\mbf a}(\ell) = n^{-1} \sum_{t = \ell + 1}^n \mbf (a^\top\mbf X_{t - \ell})
(\mbf a^\top\mbf X_t)$
and $\mbf e(\omega) = (e^{-\iota \omega \ell}/\sqrt{m}, \, 1 \le \ell \le m)^\top$.
From the positive semi-definiteness of the sample autocovariance function $\wh{\gamma}_{\mbf a}(\cdot)$
(see e.g.\ \cite{mcleod1984nonnegative}), that of $\wh{\bm\Sigma}_x(\omega)$ follows.
Then, $\wh{\bm\Sigma}_\xi(\omega) = \sum_{j > q} \wh\mu_{x, j} \wh{\mbf e}_{x, j} \wh{\mbf e}_{x, j}^*$
is also positive semi-definite.
Noting that
\begin{align*}
\wh{\bbG} = 
\frac{2\pi}{2m + 1} \sum_{k = -m}^m \bmx
e^{\iota \omega_k \cdot 0} & e^{\iota \omega_k \cdot 1} & \ldots & e^{\iota \omega_k (d - 1)} \\
e^{\iota \omega_k(-1)} & e^{\iota \omega_k \cdot 0} & \ldots & e^{\iota \omega_k (d - 2)} \\
& & \ddots & \\
e^{\iota \omega_k(1 - d)} & e^{\iota \omega_k(2 - d)} & \ldots & e^{\iota \omega_k \cdot 0}
\emx \otimes \wh{\bm\Sigma}_\xi(\omega_k),
\end{align*}
where $\otimes$ denotes the Kronecker product, and that for all $\mbf a \in \C^d$,
\begin{align*}
\mbf a^* \bmx
e^{\iota \omega_k \cdot 0} & e^{\iota \omega_k \cdot 1} & \ldots & e^{\iota \omega_k (d - 1)} \\
e^{\iota \omega_k(-1)} & e^{\iota \omega_k \cdot 0} & \ldots & e^{\iota \omega_k (d - 2)} \\
& & \ddots & \\
e^{\iota \omega_k(1 - d)} & e^{\iota \omega_k(2 - d)} & \ldots & e^{\iota \omega_k \cdot 0}
\emx \mbf a = \l\vert \sum_{j = 1}^d a_j e^{- \iota \omega_k j} \r\vert^2 \ge 0,
\end{align*}
we have $\wh{\bbG}$ positive semi-definite.

To see~\ref{cond:lre}, conditional on $\mc E_{n, p}$, we have
\begin{align}
\label{eq:idio:surrogate}
\max\l( \l\vert \wh{\bbG} - \bbG \r\vert_\infty,
\l\vert \wh{\bbg} - \bbg \r\vert_\infty \r)
= \max_{0 \le \ell \le d} \l\vert \wh{\bm\Gamma}_\xi(\ell)- \bm\Gamma_\xi(\ell) \r\vert_\infty
\le C_\xi \l(\vartheta_{n, p} \vee \frac{1}{m} \vee \frac{1}{\sqrt p}\r).
\end{align}
Then, uniformly over $\mbf v \in \R^{dp}$, we have
\begin{align*}
\mbf v^\top \wh{\bbG} \mbf v 
\ge \mbf v^\top \bbG \mbf v -
\l\vert \mbf v^\top (\wh{\bbG} - \bbG) \mbf v \r\vert
\ge \Lambda_{\min}(\bbG )\vert \mbf v \vert_2^2
- \l\vert \wh{\bbG} - \bbG \r\vert_\infty \vert \mbf v \vert_1^2
\end{align*}
such that on $\mc E_{n, p}$, we can set
$\kappa = \Lambda_{\min}(\bbG) \ge 2\pi m_\xi$ and 
$\tau = C_\xi (\vartheta_{n, p} \vee m^{-1} \vee p^{-1/2})$.
Also from~\eqref{eq:idio:surrogate}, we have
\begin{align*}
\vert \wh{\bbg} - \wh{\bbG} \bm\beta \vert_\infty \le&
\vert \wh{\bbg} - \bbg \vert_\infty +
\vert (\wh{\bbG} - \bbG) \bm\beta \vert_\infty 
\le \l\vert \wh{\bbg} - \bbg \r\vert_\infty +
\Vert \bm\beta \Vert_1 \l\vert \wh{\bbG} - \bbG \r\vert_\infty 
\\
\le& C_\xi(\Vert \bm\beta \Vert_1 + 1)\l(\vartheta_{n, p} \vee \frac{1}{m} \vee \frac{1}{\sqrt p}\r)
\end{align*}
such that~\ref{cond:dev} follows.

Denote by $\mbf v = \wh{\bm\beta}^{\las}_{\cdot j} - \bm\beta_{\cdot j}$ and for any vector $\mbf a = (a_i, \, 1 \le i \le p)^\top$ and a set $\mc A \subset \{1, \ldots, p\}$, we write $\mbf a_{\mc A} = (a_i \mathbb{I}_{\{i \in \mc A\}}, 1 \le i  \le p)^\top$.
Also, let $\mc S_j$ denote the support of $\bm\beta_{\cdot j}$ such that $\vert \mc S_j \vert = s_{0, j}$.
Then by construction, 
\begin{align}
& (\wh{\bm\beta}^{\las}_{\cdot j})^\top \wh{\bbG} \wh{\bm\beta}^{\las}_{\cdot j} - 2 (\wh{\bm\beta}^{\las}_{\cdot j})^\top\wh{\bbg} + \lambda^{\las} \vert \wh{\bm\beta}^{\las}_{\cdot j} \vert_1 \le
\bm\beta_{\cdot j}^\top \wh{\bbG} \bm\beta_{\cdot j} - 2 \bm\beta_{\cdot j}^\top\wh{\bbg} + \lambda^{\las} \vert \bm\beta_{\cdot j}\vert_1, \text{ \ such that}
\nn \\
& \mbf v^\top \wh{\bbG} \mbf v \le \lambda^{\las} \l( \vert \bm\beta_{\cdot j}\vert_1 - \vert \wh{\bm\beta}^{\las}_{\cdot j} \vert_1 \r) - 2 \mbf v^\top \l(\wh{\bbG} \bm\beta_{\cdot j} - \wh{\bbg} \r)
\nn \\
\le & \, \lambda^{\las} \l( \vert \bm\beta_{\cdot j}\vert_1 - \vert \bm\beta_{\cdot j} + \mbf v \vert_1 \r) + \frac{\lambda^{\las}}{2} \vert \mbf v \vert_1
\le \lambda^{\las} \l( \vert \mbf v_{\mc S_j} \vert_1 - \vert \mbf v_{\mc S_j^c} \vert_1 \r) + \frac{\lambda^{\las}}{2} \vert \mbf v \vert_1,
\label{eq:lasso:one}
\end{align}
where the second inequality follows from~\ref{cond:dev} and the choice of $\lambda^{\las}$, and the last from that
\begin{align*}
\vert \bm\beta_{\cdot j} + \mbf v \vert_1 = \vert \bm\beta_{\cdot j} + \mbf v_{\mc S_j} \vert_1 + \vert \mbf v_{\mc S_j^c} \vert_1 \ge \vert \bm\beta_{\cdot j} \vert_1 - \vert \mbf v_{\mc S_j} \vert_1 + \vert \mbf v_{\mc S_j^c} \vert_1.
\end{align*}
From the above and~\ref{cond:bbg}, it follows that
\begin{align*}
& 0 \le \mbf v^\top \wh{\bbG} \mbf v \le \frac{\lambda^{\las}}{2} \l( 3\vert \mbf v_{\mc S_j} \vert_1 - \vert \mbf v_{\mc S_j^c} \vert_1 \r), \\
& \text{i.e. \ } \vert \mbf v_{\mc S_j^c} \vert_1 \le 3\vert \mbf v_{\mc S_j} \vert_1 \text{ \ and \ } \vert \mbf v \vert_1 \le 4 \vert \mbf v_{\mc S_j} \vert_1 \le 4 \sqrt{s_{0, j}} \vert \mbf v \vert_2.
\end{align*}
Combining these observations with~\ref{cond:lre} and~\eqref{eq:lasso:one}, we have
\begin{align*}
6 \lambda^{\las} \sqrt{s_{0, j}} \vert \mbf v \vert_2 &\ge 2\pi m_\xi \vert \mbf v \vert_2^2 - C_\xi \l( \vartheta_{n, p} \vee \frac{1}{m} \vee \frac{1}{\sqrt p} \r) \vert \mbf v \vert_1^2 
\\
&\ge 2 \pi m_\xi \vert \mbf v \vert_2^2 \l( 1 - \frac{8 C_\xi s_{0, j} (\vartheta_{n, p} \vee m^{-1} \vee p^{-1/2})}{\pi m_\xi} \r)
\ge \pi m_\xi \vert \mbf v \vert_2^2
\end{align*}
from the condition imposed on $s_{\text{\upshape in}} \ge s_{0, j}$. 
Hence, for all $1 \le j \le p$, we have
\begin{align*}
\l\vert \wh{\bm\beta}^{\las}_{\cdot j} - \bm\beta_{\cdot j} \r\vert_\infty \le
\l\vert \wh{\bm\beta}^{\las}_{\cdot j} - \bm\beta_{\cdot j} \r\vert_2 \le \frac{6 \sqrt{s_{0, j}} \lambda^{\las}}{\pi m_\xi}, \quad
\l\vert \wh{\bm\beta}^{\las}_{\cdot j} - \bm\beta_{\cdot j} \r\vert_1 \le \frac{24 s_{0, j} \lambda^{\las}}{\pi m_\xi}.
\end{align*}
Besides, by Karush-Kuhn-Tucker conditions, we have 
\begin{align*}
\mbf 0 \in \wh{\bbG} \wh{\bm\beta}^{\las}_{\cdot j} - \wh{\bbg} + \lambda^{\las} \mathrm{sgn} (\wh{\bm\beta}^{\las}_{\cdot j})
\end{align*}
where, defined as
\begin{align*}
\mathrm{sgn}(x) = \begin{cases}
1 & \text{ if } x > 0,\\
-1 & \text{ if } x < 0,\\
[-1, 1] & \text{ if } x = 0,
\end{cases}
\end{align*}
$\mathrm{sgn}(\cdot)$ acts element-wise. It implies that $\l\vert \wh{\bbG} \wh{\bm\beta}^{\las}_{\cdot j} - \wh{\bbg} \r\vert_\infty \le \lambda^{\las}$.
Also noting that from~\eqref{eq:lasso:one}, we have
\begin{align*}
0 \le \vert \bm\beta_{\cdot j} \vert_1 - \vert \wh{\bm\beta}_{\cdot j} \vert_1 + \frac{1}{2} \l( \vert \bm\beta_{\cdot j} \vert_1 + \vert \wh{\bm\beta}_{\cdot j} \vert_1 \r), \text{ \ i.e. \ }
\vert \wh{\bm\beta}_{\cdot j} \vert_1 \le 3 \vert \bm\beta_{\cdot j} \vert_1.
\end{align*}
Therefore, we obtain from~\ref{cond:dev},
\begin{align*}
& \l\vert \wh{\bm\beta}^{\las}_{\cdot j} - \bm\beta^{\las}_{\cdot j} \r\vert_\infty
= \l\vert \bbG^{-1} \l(\bbG \wh{\bm\beta}^{\las}_{\cdot j} - \wh{\bbG} \wh{\bm\beta}^{\las}_{\cdot j}
+ \wh{\bbG} \wh{\bm\beta}^{\las}_{\cdot j} -\wh{\bbg}_{\cdot j} + \wh{\bbg}_{\cdot j} - \bbg_{\cdot j} \r) \r\vert_\infty
\nn \\
& \le \l\Vert \bbG^{-1} \r\Vert_1 
\l(
\l\vert \bbG - \wh{\bbG} \r\vert_\infty \l\vert \wh{\bm\beta}^{\las}_{\cdot j} \r\vert_1
+ \l\vert \wh{\bbG} \wh{\bm\beta}^{\las}_{\cdot j} -\wh{\bbg}_{\cdot j} \r\vert_\infty
+ \l\vert \wh{\bbg}_{\cdot j} - \bbg_{\cdot j} \r\vert_\infty \r) 
\le 4 \l\Vert \bbG^{-1} \r\Vert_1 \lambda^{\las}
\end{align*}
which, together with the trivial bound $\vert \mbf v \vert_\infty \le \vert \mbf v \vert_2$, proves the final claim.

\subsubsection{Proof of Proposition~\ref{prop:idio:est:dantzig}}

As in Appendix~\ref{pf:prop:idio:est:lasso}, we condition our arguments on $\mc E_{n, p}$.
Note that solving~\eqref{eq:ds} is equivalent to solving
\begin{align*}
\wh{\bm\beta}^{\ds}_{\cdot j} = {\arg\min}_{\mbf m \in \R^{pd}} \vert \mbf m \vert_1
\quad \text{subject to} \quad
\l\vert \wh{\bbG} \mbf m - \wh{\bbg}_{\cdot j} \r\vert_\infty \le \lambda^{\ds}
\quad \text{for } 1 \le j \le p,
\end{align*}
(see e.g.\ Lemma~1 of \cite{cai2011}).
Then, $\bm\beta$ is a feasible solution to~\eqref{eq:ds} from~\ref{cond:dev},
and therefore $\vert \wh{\bm\beta}^{\ds}_{\cdot j} \vert_1 \le \vert \bm\beta_{\cdot j} \vert_1  \le \Vert \bm\beta \Vert_1$
for all $1 \le j \le p$.
Also, writing $\mbf v = \wh{\bm\beta}^{\ds}_{\cdot j} - \bm\beta_{\cdot j}$, we have
\begin{align*}
\mbf v^\top \wh{\bbG} \mbf v \le \vert \wh{\bbG} \mbf v \vert_\infty \vert \mbf v \vert_1 \le 2 \lambda^{\ds} \vert \mbf v \vert_1.
\end{align*}
Then using~\ref{cond:bbg}--\ref{cond:dev} and the arguments analogous to those adopted in Appendix~\ref{pf:prop:idio:est:lasso}, we obtain the results with slightly different multiplicative constants.
In particular,
\begin{align}
& \l\vert \wh{\bm\beta}^{\ds}_{\cdot j} - \bm\beta_{\cdot j} \r\vert_\infty
= \l\vert \bbG^{-1} \l(\bbG \wh{\bm\beta}^{\ds}_{\cdot j} - \wh{\bbG} \wh{\bm\beta}^{\ds}_{\cdot j}
+ \wh{\bbG} \wh{\bm\beta}^{\ds}_{\cdot j} -\wh{\bbg}_{\cdot j} + \wh{\bbg}_{\cdot j} - \bbg_{\cdot j} \r) \r\vert_\infty
\nn \\
& \le \l\Vert \bbG^{-1} \r\Vert_1 
\l(
\l\vert \bbG - \wh{\bbG} \r\vert_\infty \l\vert \wh{\bm\beta}^{\ds}_{\cdot j} \r\vert_1
+ \l\vert \wh{\bbG} \wh{\bm\beta}^{\ds}_{\cdot j} -\wh{\bbg}_{\cdot j} \r\vert_\infty
+ \l\vert \wh{\bbg}_{\cdot j} - \bbg_{\cdot j} \r\vert_\infty \r) 
\le 2 \l\Vert \bbG^{-1} \r\Vert_1 \lambda^{\ds}
\label{eq:prop:idio:est:dantzig:one}
\end{align}
for all $1 \le j \le p$.
In fact, without requiring exact sparsity, we can control the estimation error in $\ell_1$-norm.
Let $\wt\lambda \in (0, \infty)$ be a threshold level to be defined later, and denote by
\begin{align*}
\wt s_j = \sum_{i = 1}^{pd} \min\l( \frac{\vert \beta_{ij} \vert}{\wt\lambda}, 1 \r) \quad \text{and} \quad
\wt{\mc S}_j = \l\{ 1 \le i \le pd: \, \vert \beta_{ij} \vert > \wt\lambda \r\}.
\end{align*}
By definition, it is easily seen that
\begin{align}
\wt s_j = \vert \wt{\mc S}_j \vert + \sum_{i \in \wt{\mc S}^c_j} \frac{\vert \beta_{ij} \vert}{\wt\lambda} \ge \vert \wt{\mc S}_j \vert.
\label{eq:prop:idio:est:dantzig:two}
\end{align}
Then for all $1 \le j \le p$, 
\begin{align*}
& \l\vert \wh{\bm\beta}^{\ds}_{\cdot j} - \bm\beta_{\cdot j} \r\vert_1
\le \l\vert \wh{\bm\beta}^{\ds}_{\wt{\mc S}_j^c, j} \r\vert_1 + \l\vert \bm\beta_{\wt{\mc S}_j^c, j}  \r\vert_1
+ \l\vert \wh{\bm\beta}^{\ds}_{\wt{\mc S}_j, j} - \bm\beta_{\wt{\mc S}_j, j}  \r\vert_1
= \l\vert \wh{\bm\beta}^{\ds}_{\cdot j} \r\vert_1 - 
\l\vert \wh{\bm\beta}^{\ds}_{\wt{\mc S}_j, j} \r\vert_1 + \l\vert \bm\beta_{\wt{\mc S}_j^c, j}  \r\vert_1
+ \l\vert \wh{\bm\beta}^{\ds}_{\wt{\mc S}_j, j} - \bm\beta_{\wt{\mc S}_j, j}  \r\vert_1
\\
& \le \l\vert \bm\beta_{\cdot j} \r\vert_1 - 
\l\vert \wh{\bm\beta}^{\ds}_{\wt{\mc S}_j, j} \r\vert_1 + \l\vert \bm\beta_{\wt{\mc S}_j^c, j}  \r\vert_1
+ \l\vert \wh{\bm\beta}^{\ds}_{\wt{\mc S}_j, j} - \bm\beta_{\wt{\mc S}_j, j}  \r\vert_1
\le 2\l\vert \bm\beta_{\wt{\mc S}^c_j, j}  \r\vert_1 + 
2 \l\vert \wh{\bm\beta}^{\ds}_{\wt{\mc S}_j, j} - \bm\beta_{\wt{\mc S}_j, j}  \r\vert_1
\\
& \le 2 \wt\lambda \wt s_j + 4 \vert \wt{\mc S}_j \vert \Vert \bbG^{-1} \Vert_1 \lambda^{\ds}
\le 2 \wt s_j(\wt\lambda + 2 \Vert \bbG^{-1} \Vert_1 \lambda^{\ds}),
\end{align*}
where the second inequality follows from that $\vert \wh{\bm\beta}^{\ds}_{\cdot j} \vert_1 \le \vert \bm\beta_{\cdot j} \vert_1$,
the third from the triangular inequality,
the fourth from~\eqref{eq:prop:idio:est:dantzig:one},
and the last from~\eqref{eq:prop:idio:est:dantzig:two}.
Setting $\wt\lambda = \Vert \bbG^{-1} \Vert_1 \lambda^{\ds}$, by the definition of $s_{0, j}(\varrho)$, we have
\begin{align*}
\wt\lambda \wt s_j 
\le \wt\lambda \sum_{i = 1}^{pd} \min\l(\frac{\vert \beta_{ij} \vert^\varrho}{\wt\lambda^\varrho}, 1 \r) \le s_{0, j}(\varrho) \wt\lambda^{1 - \varrho}.
\end{align*}
Therefore,
\begin{align*}
\l\vert \wh{\bm\beta}^{\ds} - \bm\beta \r\vert_1
= \sum_{j = 1}^p \l\vert \wh{\bm\beta}^{\ds}_{\cdot j} - \bm\beta_{\cdot j} \r\vert_1
\le 6 \sum_{j = 1}^p s_{0, j}(\varrho) \wt\lambda^{1 - \rho} 
= 6 s_0(\varrho) \l(\Vert \bbG^{-1} \Vert_1 \lambda^{\ds}\r)^{1 - \varrho}.
\end{align*}

\subsubsection{Proof of Proposition~\ref{prop:idio:delta}~\ref{prop:idio:delta:one}}

We prove the following proposition from which Proposition~\ref{prop:idio:delta}~\ref{prop:idio:delta:one} follows.
Throughout, we refer to the estimator obtained as in~\eqref{eq:lasso} as $\wh{\bm\beta}^{\las}$.

\begin{prop}
\label{prop:idio:innov:cov:new}
{\it
\begin{enumerate}[wide, label = (\roman*)]
\item \label{prop:idio:innov:cov:one}
Suppose that Assumption~\ref{assum:idio} is met 
and the estimator $\wh{\bm\beta}$ fulfils
\begin{align}
\label{cond:prop:idio:innov:cov:one}
\l\Vert \wh{\bm\beta} - \bm\beta \r\Vert_1 \le C_\beta s_{\text{\upshape in}} \zeta_{n, p}
\end{align}
with some constant $C_\beta > 0$.
Then on $\mc E_{n, p}$ defined in~\eqref{eq:idio:set}, 
there exists a large enough constant $C > 0$ such that 
\begin{align*}
\l\vert \wh{\bm\Gamma} - \bm\Gamma \r\vert_\infty
\le C\phi_{n, p}
\quad \text{with}
\quad \phi_{n, p} = \l( s_{\text{\upshape in}} \zeta_{n, p} \vee 
\Vert \bm\beta \Vert_1 \l(\vartheta_{n, p} \vee \frac{1}{m} \vee \frac{1}{\sqrt p} \r) \r).
\end{align*}

\item \label{prop:idio:innov:cov:two}
Further, suppose that $\bm\Delta$ belongs to the parameter space
\begin{align*}
\mc D(\varrho, s_\delta(\varrho)) = \l\{ \mbf M = [m_{ii'}] \in \R^{p \times p}: \, 
\mbf M \text{ is positive definite}, \,
\max_{1 \le i \le p} \sum_{i' = 1}^p \vert m_{ii'} \vert^\varrho \le s_\delta(\varrho) \r\}
\end{align*}
for some $\varrho \in [0, 1)$,
and set $\eta = C \Vert \bm\Delta \Vert_1 \phi_{n, p}$.
Then on $\mc E_{n, p}$, we have
\begin{align*}
\l\vert \wh{\bm\Delta} - \bm\Delta \r\vert_\infty \le 4 \Vert \bm\Delta \Vert_1 \eta
\quad \text{and} \quad 
\l\Vert \wh{\bm\Delta} - \bm\Delta \r\Vert \le 12 s_\delta(\varrho)
\l(4 \Vert \bm\Delta \Vert_1 \eta\r)^{1 - \varrho}.
\end{align*}
\end{enumerate}}
\end{prop}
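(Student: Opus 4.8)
The plan is to treat the two parts in sequence: first control the element-wise error of the plug-in innovation covariance estimator, then feed this into a CLIME-type analysis for $\wh{\bm\Delta}$. For part~\ref{prop:idio:innov:cov:one}, I would start from the Yule-Walker identities $\bm\Gamma = \bm\Gamma_\xi(0) - \bm\beta^\top\bbg$ and $\wh{\bm\Gamma} = \wh{\bm\Gamma}_\xi(0) - \wh{\bm\beta}^\top\wh{\bbg}$, and decompose
\[
\wh{\bm\Gamma} - \bm\Gamma = \big(\wh{\bm\Gamma}_\xi(0) - \bm\Gamma_\xi(0)\big) - (\wh{\bm\beta} - \bm\beta)^\top\wh{\bbg} - \bm\beta^\top(\wh{\bbg} - \bbg).
\]
Each term is bounded in $|\cdot|_\infty$ by the H\"older-type inequality $|\mbf A^\top\mbf B|_\infty \le \Vert\mbf A\Vert_1\,|\mbf B|_\infty$. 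On $\mc E_{n,p}$ both $|\wh{\bm\Gamma}_\xi(0) - \bm\Gamma_\xi(0)|_\infty$ and $|\wh{\bbg} - \bbg|_\infty$ are at most $C_\xi(\vartheta_{n,p}\vee m^{-1}\vee p^{-1/2})$; since $|\bbg|_\infty$ is bounded by Proposition~\ref{prop:idio:eval}, we have $|\wh{\bbg}|_\infty = O(1)$ on this event, so the middle term is $\le C\,\Vert\wh{\bm\beta} - \bm\beta\Vert_1 \le C s_{\text{\upshape in}}\zeta_{n,p}$ by~\eqref{cond:prop:idio:innov:cov:one}, while the last term is $\le \Vert\bm\beta\Vert_1\,C_\xi(\vartheta_{n,p}\vee m^{-1}\vee p^{-1/2})$. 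Collecting these yields $\phi_{n,p}$, after noting that $s_{\text{\upshape in}}\zeta_{n,p}$ dominates $(\vartheta_{n,p}\vee m^{-1}\vee p^{-1/2})$ because $\zeta_{n,p}\asymp(\Vert\bm\beta\Vert_1+1)(\vartheta_{n,p}\vee m^{-1}\vee p^{-1/2})$ and $s_{\text{\upshape in}}\ge 1$.

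For part~\ref{prop:idio:innov:cov:two} I would follow the CLIME route. First verify that $\bm\Delta$ is feasible for~\eqref{eq:inv:delta}: since $\bm\Gamma\bm\Delta = \mbf I$,
\[
|\wh{\bm\Gamma}\bm\Delta - \mbf I|_\infty = |(\wh{\bm\Gamma} - \bm\Gamma)\bm\Delta|_\infty \le \Vert\bm\Delta\Vert_1\,|\wh{\bm\Gamma} - \bm\Gamma|_\infty \le \eta
\]
by part~\ref{prop:idio:innov:cov:one} and the choice $\eta = C\Vert\bm\Delta\Vert_1\phi_{n,p}$. Optimality of the column-wise minimisation then gives $\Vert\check{\bm\Delta}\Vert_1 \le \Vert\bm\Delta\Vert_1$. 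Writing $\check{\bm\Delta} - \bm\Delta = \bm\Delta(\bm\Gamma\check{\bm\Delta} - \mbf I)$ and splitting $\bm\Gamma\check{\bm\Delta} - \mbf I = (\bm\Gamma - \wh{\bm\Gamma})\check{\bm\Delta} + (\wh{\bm\Gamma}\check{\bm\Delta} - \mbf I)$, both summands are at most $\eta$ in $|\cdot|_\infty$ (the first via $\Vert\check{\bm\Delta}\Vert_1 \le \Vert\bm\Delta\Vert_1$, the second by feasibility of $\check{\bm\Delta}$), whence $|\check{\bm\Delta} - \bm\Delta|_\infty \le 2\Vert\bm\Delta\Vert_1\eta$. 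The magnitude-based symmetrisation preserves the $|\cdot|_\infty$ error up to a constant, giving $|\wh{\bm\Delta} - \bm\Delta|_\infty \le 4\Vert\bm\Delta\Vert_1\eta$.

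For the spectral-norm bound I would use that $\wh{\bm\Delta} - \bm\Delta$ is symmetric, so $\Vert\wh{\bm\Delta} - \bm\Delta\Vert \le \Vert\wh{\bm\Delta} - \bm\Delta\Vert_\infty = \max_i\sum_{i'}|\wh\delta_{ii'} - \delta_{ii'}|$, and control each row sum through the weak sparsity of $\bm\Delta$. Setting $a = 4\Vert\bm\Delta\Vert_1\eta$ and thresholding at level $a$, the coordinates with $|\delta_{ii'}| > a$ number at most $a^{-\varrho}s_\delta(\varrho)$ and each contributes $\le a$, while for the remaining coordinates I would combine $\sum_{i'}|\delta_{ii'}|\mathbb{I}_{\{|\delta_{ii'}|\le a\}} \le a^{1-\varrho}s_\delta(\varrho)$ with the row-sum control $\sum_{i'}|\wh\delta_{ii'}| \le \sum_{i'}|\delta_{ii'}|$. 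This last inequality follows because symmetrisation selects the entry of smaller magnitude, so $\sum_{i'}|\wh\delta_{ii'}| \le |\check{\bm\Delta}_{\cdot i}|_1 \le |\bm\Delta_{\cdot i}|_1$. Assembling the two regimes yields $\Vert\wh{\bm\Delta} - \bm\Delta\Vert \le 12\,s_\delta(\varrho)\,a^{1-\varrho} = 12\,s_\delta(\varrho)(4\Vert\bm\Delta\Vert_1\eta)^{1-\varrho}$.

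The main obstacle I anticipate is this final spectral-norm step: converting the uniform $|\cdot|_\infty$ bound into an operator-norm rate under only weak ($\varrho$-)sparsity requires the careful two-regime decomposition and, crucially, the row-sum control on $\wh{\bm\Delta}$, which must be extracted from the interplay between the column-wise CLIME optimality and the magnitude-based symmetrisation. By contrast, the covariance step~\ref{prop:idio:innov:cov:one} and the feasibility/element-wise step of~\ref{prop:idio:innov:cov:two} are comparatively routine H\"older bookkeeping on the event $\mc E_{n,p}$.
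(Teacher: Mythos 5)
Your proposal is correct and, for part~\ref{prop:idio:innov:cov:one} and the $\ell_\infty$-bound in part~\ref{prop:idio:innov:cov:two}, follows essentially the paper's argument: the same Yule--Walker decomposition of $\wh{\bm\Gamma} - \bm\Gamma$ with H\"older bounds on $\mc E_{n,p}$, then CLIME feasibility of $\bm\Delta$ and the optimality bound $\Vert\check{\bm\Delta}\Vert_1 \le \Vert\bm\Delta\Vert_1$. (You are in fact slightly more careful than the paper in noting why the term $\vert\wh{\bm\Gamma}_\xi(0)-\bm\Gamma_\xi(0)\vert_\infty$, which carries no $\Vert\bm\beta\Vert_1$ factor, is absorbed by $\phi_{n,p}$.) Two steps diverge in the bookkeeping, both harmlessly. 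For the element-wise bound you use the identity $\check{\bm\Delta}-\bm\Delta = \bm\Delta(\bm\Gamma\check{\bm\Delta}-\mbf I)$ and split $\bm\Gamma\check{\bm\Delta}-\mbf I$ into two terms each of size $\eta$, giving $2\Vert\bm\Delta\Vert_1\eta$; the paper instead passes through $\vert\wh{\bm\Gamma}(\check{\bm\Delta}-\bm\Delta)\vert_\infty \le 2\eta$ and then $\vert\bm\Gamma(\check{\bm\Delta}-\bm\Delta)\vert_\infty \le 4\eta$, arriving at $4\Vert\bm\Delta\Vert_1\eta$ — your route is marginally tighter and both sit inside the stated constant. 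For the operator norm, the paper follows Theorem~6 of \cite{cai2011} by splitting $\mbf h_j = \wh{\bm\Delta}_{\cdot j}-\bm\Delta_{\cdot j}$ according to whether $\vert\wh\delta_{ij}\vert \ge 2\varpi$ (thresholding the \emph{estimate}) and showing $\vert\mbf h_j^{(2)}\vert_1 \le \vert\mbf h_j^{(1)}\vert_1$ from $\vert\wh{\bm\Delta}_{\cdot j}\vert_1 \le \vert\bm\Delta_{\cdot j}\vert_1$; you instead threshold the \emph{true} entries $\vert\delta_{ii'}\vert$ at level $a$ in the Bickel--Levina style. Your version works, but as sketched the small-entry regime needs one more line: $\sum_{i':\vert\delta_{ii'}\vert\le a}\vert\wh\delta_{ii'}\vert$ is not directly controlled by the full-row inequality $\sum_{i'}\vert\wh\delta_{ii'}\vert\le\sum_{i'}\vert\delta_{ii'}\vert$; you must subtract the large-entry contribution using $\vert\wh\delta_{ii'}\vert\ge\vert\delta_{ii'}\vert-a$ there, which yields $\sum_{i':\vert\delta_{ii'}\vert\le a}\vert\wh\delta_{ii'}\vert \le \sum_{i':\vert\delta_{ii'}\vert\le a}\vert\delta_{ii'}\vert + a\,\#\{i':\vert\delta_{ii'}\vert>a\} \le 2 s_\delta(\varrho) a^{1-\varrho}$, and then everything assembles within the factor $12$. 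Your justification of the row-sum control via symmetrisation ($\vert\wh\delta_{ii'}\vert\le\vert\check\delta_{i'i}\vert$, column-wise CLIME optimality, and symmetry of $\bm\Delta$) is exactly right and is the same ingredient the paper uses.
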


\begin{rem}
\label{rem:zeta}
From the proof of Proposition~\ref{prop:idio:est:lasso} and~\ref{prop:idio:est:dantzig}, both $\wh{\bm\beta}^{\las}$ and $\wh{\bm\beta}^{\ds}$ satisfy~\eqref{cond:prop:idio:innov:cov:one} with $\zeta_{n, p} = \lambda^{\las}$ and $\zeta_{n, p} = \Vert \bbG^{-1} \Vert_1 \lambda^{\ds}$, respectively.
Then, $\phi_{n, p} $ involved in the error bound in Proposition~\ref{prop:idio:innov:cov:new}~\ref{prop:idio:innov:cov:one} reduces to $\phi_{n, p} = s_{\text{\upshape in}} \zeta_{n, p}$.
\end{rem}

\begin{proof}[Proof of Proposition~\ref{prop:idio:innov:cov:new}~\ref{prop:idio:innov:cov:one}]
We write $\sum_{\ell = 1}^d \mbf A_{\ell} \bm\Gamma_\xi(\ell) = \bm\beta^\top \bbg$
and $\sum_{\ell = 1}^d \wh{\mbf A}_{\ell} \wh{\bm\Gamma}_\xi(\ell) = \wh{\bm\beta}^\top \wh{\bbg}$.
Then on $\mc E_{n, p}$,
we have $\vert \wh{\bbg} \vert_\infty \le \max_{0 \le \ell \le d} \vert \bm\Gamma_\xi(\ell) \vert_\infty + C_\xi(\vartheta_{n, p} \vee m^{-1} \vee p^{-1/2}) \le C_{\Xi, \varsigma, \vep}$ 
for a large enough $C_{\Xi, \varsigma, \vep} > 0$ and
by Lemma~\ref{lem:decay}.
Also,
\begin{align*}
\l\vert \wh{\bm\Gamma} - \bm\Gamma \r\vert_\infty
\le 
\l\vert \wh{\bm\Gamma}_\xi(0) - \bm\Gamma_\xi(0) \r\vert_\infty
+ \l\vert \wh{\bm\beta}^\top \wh{\bbg} - \bm\beta^\top \bbg \r\vert_\infty
=: I + II,
\end{align*}
where $I \le C_\xi(\vartheta_{n, p} \vee m^{-1} \vee p^{-1/2})$ and
\begin{align*}
II &\le
\l\vert \l(\wh{\bm\beta} - \bm\beta \r)^\top \wh{\bbg} \r\vert_\infty
+ \l\vert \bm\beta^\top \l(\wh{\bbg} - \bbg\r)\r\vert_\infty
\\
&\le \l\Vert \wh{\bm\beta} - \bm\beta \r\Vert_1 \; \max_{0 \le \ell \le d} \l\vert \wh{\bm\Gamma}_\xi(\ell) \r\vert_\infty
+ \Vert \bm\beta \Vert_1 \; \max_{0 \le \ell \le d} \l\vert \wh{\bm\Gamma}_\xi(\ell) - \bm\Gamma_\xi(\ell) \r\vert_\infty
\\
&\le C \l( s_{\text{\upshape in}} \zeta_{n, p} 
+ \Vert \bm\beta \Vert_1 \l(\vartheta_{n, p} \vee \frac{1}{m} \vee \frac{1}{\sqrt p} \r) \r)
\end{align*}
for large enough constant $C > 0$.
\end{proof}

\begin{proof}[Proof of Proposition~\ref{prop:idio:innov:cov:new}~\ref{prop:idio:innov:cov:two}]
The proof takes analogous steps as those in the proof of Theorem~6 of \cite{cai2011}.
On $\mc E_{n, p}$, by Proposition~\ref{prop:idio:innov:cov:new}~\ref{prop:idio:innov:cov:one}, we have
\begin{align*}
\l\vert \wh{\bm\Gamma} \bm\Delta - \mbf I \r\vert_\infty
\le \l\vert \l(\wh{\bm\Gamma} - \bm\Gamma\r) \bm\Delta \r\vert_\infty
\le \Vert \bm\Delta \Vert_1 \l\vert \wh{\bm\Gamma} - \bm\Gamma\r\vert_\infty \le 
\eta = C \Vert \bm\Delta \Vert_1 \phi_{n, p},
\end{align*}
and thus $\bm\Delta$ is a feasible solution to~\eqref{eq:inv:delta}.
Since solving the problem in~\eqref{eq:inv:delta} is equivalent to solving
\begin{align*}
\check{\bm\Delta}_{\cdot j} = {\arg\min}_{\mbf m \in \R^p} \vert \mbf m \vert_1 \quad 
\text{subject to} \quad 
\l\vert \wh{\bm\Gamma} \mbf m - \bm\varphi_j \r\vert_\infty \le \eta
\end{align*}
for all $1 \le j \le p$, it follows that 
$\Vert \check{\bm\Delta}\Vert_1 = \max_{1 \le j \le p} \vert \check{\bm\Delta}_{\cdot j} \vert_1 \le \Vert \bm\Delta \Vert_1$.
Then,
\begin{align*}
\l\vert \wh{\bm\Gamma}\l( \check{\bm\Delta} - \bm\Delta \r) \r\vert_\infty
&\le \l\vert \wh{\bm\Gamma}\check{\bm\Delta} - \mbf I \r\vert_\infty
+ \l\vert \l(\wh{\bm\Gamma} - \bm\Gamma\r) \bm\Delta \r\vert_\infty
\le \eta + \Vert \bm\Delta \Vert_1 \l\vert \wh{\bm\Gamma} - \bm\Gamma\r\vert_\infty \le 2\eta,
\\
\therefore \, \l\vert \bm\Gamma\l( \check{\bm\Delta} - \bm\Delta \r) \r\vert_\infty
&\le \l\vert \wh{\bm\Gamma}\l( \check{\bm\Delta} - \bm\Delta \r) \r\vert_\infty
+ \l\vert \l(\wh{\bm\Gamma} - \bm\Gamma\r) \l( \check{\bm\Delta}- \bm\Delta \r) \r\vert_\infty
\\
&\le 2 \eta +
\l\Vert \check{\bm\Delta} - \bm\Delta  \r\Vert_1
\l\vert \wh{\bm\Gamma} - \bm\Gamma\r\vert_\infty
\le 4 \eta.
\end{align*}
Therefore, it follows that
\begin{align*}
\l\vert \check{\bm\Delta} - \bm\Delta  \r\vert_\infty
\le \l\Vert \bm\Delta  \r\Vert_1 \;
\l\vert \bm\Gamma\l( \check{\bm\Delta} - \bm\Delta \r) \r\vert_\infty
\le 4 \l\Vert \bm\Delta  \r\Vert_1 \eta,
\end{align*}
and from the definition of $\wh{\bm\Delta}$, 
the bound on $\vert \wh{\bm\Delta} - \bm\Delta  \vert_\infty$ follows.

Next, let $\varpi = \vert \wh{\bm\Delta} - \bm\Delta \vert_\infty$
and define $\mbf h_j = \wh{\bm\Delta}_{\cdot j} - \bm\Delta_{\cdot j}$,
$\mbf h^{(1)}_j = (\wh\delta_{ij} \mathbb I_{\{\vert \wh\delta_{ij} \vert \ge 2\varpi\}}, \, 1 \le i \le p)^\top - \bm\Delta_{\cdot j}$
and $\mbf h^{(2)}_j = \mbf h_j - \mbf h^{(1)}_j$.
By definition, we have $\vert \wh{\bm\Delta}_{\cdot j} \vert_1 \le \vert \check{\bm\Delta}_{\cdot j} \vert_1 \le \vert \bm\Delta_{\cdot j} \vert_1$ for all $1 \le j \le p$.
Then,
\begin{align*}
\vert \bm\Delta_{\cdot j} \vert_1 - \vert \mbf h^{(1)}_j \vert_1 + \vert \mbf h^{(2)}_j \vert_1
\le \vert \bm\Delta_{\cdot j} + \mbf h^{(1)}_j \vert_1 + \vert \mbf h^{(2)}_j \vert_1
= \vert \wh{\bm\Delta}_{\cdot j} \vert_1 \le \vert \bm\Delta_{\cdot j} \vert_1,
\end{align*}
which implies that $\vert \mbf h^{(2)}_j \vert_1 \le \vert \mbf h^{(1)}_j \vert_1$
and thus $\vert \mbf h_j \vert_1 \le 2 \vert \mbf h^{(1)}_j \vert_1$.
The latter is bounded as
\begin{align*}
\vert \mbf h^{(1)}_j \vert_1 
=& \sum_{i = 1}^p \l\vert \wh\delta_{ij} \mathbb I_{\{\vert \wh\delta_{ij} \vert \ge 2\varpi\}} - \delta_{ij} \r\vert_1
\le 
\sum_{i = 1}^p \l\vert \delta_{ij} \mathbb I_{\{\vert \delta_{ij} \vert < 2\varpi\}} \r\vert_1
+
\\
&\sum_{i = 1}^p \l\vert \l(\wh\delta_{ij} - \delta_{ij}\r) 
\mathbb I_{\{\vert \wh\delta_{ij} \vert \ge 2\varpi\}}
+
\delta_{ij} \l(\mathbb I_{\{\vert \wh\delta_{ij} \vert \ge 2\varpi\}} -
\mathbb I_{\{\vert \delta_{ij} \vert \ge 2\varpi\}} \r)\r\vert_1
\\
\le& s_\delta(\varrho) (2\varpi)^{1 - \varrho} + 
\varpi \sum_{i = 1}^p \mathbb I_{\{\vert \wh\delta_{ij} \vert \ge 2\varpi\}}
+ \sum_{i = 1}^p \vert \delta_{ij} \vert \l\vert
\mathbb I_{\{\vert \wh\delta_{ij} \vert \ge 2\varpi\}} -
\mathbb I_{\{\vert \delta_{ij} \vert \ge 2\varpi\}} \r\vert_1
\\
\le& s_\delta(\varrho) (2\varpi)^{1 - \varrho} + 
\varpi \sum_{i = 1}^p \mathbb I_{\{\vert \delta_{ij} \vert \ge \varpi\}}
+ \sum_{i = 1}^p \vert \delta_{ij} \vert
\mathbb I_{\{\vert \vert \delta_{ij} \vert  - 2\varpi \vert \le \vert \wh\delta_{ij} - \delta_{ij} \vert\}}
\\
\le& s_\delta(\varrho) \varpi^{1 - \varrho} \l(1 + 2^{1 - \varrho} + 3^{1 - \varrho} \r),
\end{align*}
from which we drive that
\begin{align*}
\l\Vert \wh{\bm\Delta} - \bm\Delta \r\Vert
\le \l\Vert \wh{\bm\Delta} - \bm\Delta \r\Vert_1
\le 12s_\delta(\varrho) \varpi^{1 - \varrho}.
\end{align*}
\end{proof}

\subsubsection{Proof of Proposition~\ref{prop:idio:delta}~\ref{prop:idio:delta:two}}

In what follows, we omit the dependence of $\wh{\mbf A}_\ell$ on $\mathfrak{t}$.
Write
\begin{align*}
& \frac{1}{2\pi} \l\vert \wh{\bm\Omega} - \bm\Omega \r\vert_\infty 
\le 
\\
& \l\vert \l(\wh{\mc A}(1) - \mc A(1) \r)^\top \wh{\bm\Delta} \mc A(1) \r\vert_\infty
+
\l\vert \mc A^\top(1) \l(\wh{\bm\Delta} - \bm\Delta \r) \mc A(1) \r\vert_\infty
+
\l\vert \wh{\mc A}^\top(1) \wh{\bm\Delta} \l(\wh{\mc A}(1) - \mc A(1) \r) \r\vert_\infty
\\
&:= I + II + III.
\end{align*}
Note that from Corollaries~\ref{cor:idio:est:dir:lasso} and~\ref{cor:idio:est:dir:ds} and the conditions of the proposition, 
\begin{align*}
\l\Vert \wh{\mc A}(1) - \mc A(1) \r\Vert_1
& \le \max_{1 \le j \le p}
\sum_{\ell = 1}^d \sum_{i = 1}^p \l\vert \wh{A}_{\xi, \ell, ij} - A_{\ell, ij} \r\vert_1
\le s_{\text{\upshape out}} \mathfrak{t},
\\
\l\Vert \wh{\mc A}(1) \r\Vert_1 
& \le \l\Vert \mc A(1) \r\Vert_1 + \l\Vert \wh{\mc A}(1) - \mc A(1) \r\Vert_1 \le 2 \l\Vert \mc A(1) \r\Vert_1,
\\
\vert \wh{\bm\Delta} \vert_\infty 
& \le \vert \bm\Delta \vert_\infty + \vert \wh{\bm\Delta} - \bm\Delta \vert_\infty
\le \Vert \bm\Delta \Vert + 4 \Vert \bm\Delta \Vert_1 \eta.
\end{align*}
Also, for any compatible matrices $\mbf A$, $\mbf B$ and $\mbf C$,
we have $\vert \mbf A\mbf B\mbf C \vert_\infty \le \Vert \mbf A \Vert_\infty \vert \mbf B \vert_\infty \Vert \mbf C \Vert_1$.
Then,
\begin{align*}
I &\le \l\Vert \wh{\mc A}(1) - \mc A(1) \r\Vert_1 \l( \Vert \bm\Delta \Vert + 4 \Vert \bm\Delta \Vert_1 \eta \r) \Vert \mc A(1) \Vert_1 \le \Vert \mc A(1) \Vert_1 \l( \Vert \bm\Delta \Vert + 4 \Vert \bm\Delta \Vert_1 \eta \r) s_{\text{\upshape out}} \mathfrak{t},
\\
II &\le 4 \Vert \mc A(1) \Vert_1^2 \; \Vert \bm\Delta \Vert_1 \eta,
\\
III &\le 2\Vert \mc A(1) \Vert_1 \l( \Vert \bm\Delta \Vert + 4 \Vert \bm\Delta \Vert_1 \eta \r)
\l\Vert \wh{\mc A}(1) - \mc A(1) \r\Vert_1 
\le 2 \Vert \mc A(1) \Vert_1 \l( \Vert \bm\Delta \Vert + 4 \Vert \bm\Delta \Vert_1 \eta \r) s_{\text{\upshape out}} \mathfrak{t},
\end{align*}
and therefore 
\begin{align*}
I + II + III \le & \, \Vert \mc A(1) \Vert_1 \l[ 3\l( \Vert \bm\Delta \Vert + 4 \Vert \bm\Delta \Vert_1 \eta \r) s_{\text{\upshape out}} \mathfrak{t} + 4 \Vert \mc A(1) \Vert_1 \Vert \bm\Delta \Vert_1 \eta \r]
\\
\le& \, \Vert \mc A(1) \Vert_1 \l( 3 \Vert \bm\Delta \Vert  s_{\text{\upshape out}} \mathfrak{t} + 16 \Vert \mc A(1) \Vert_1 \Vert \bm\Delta \Vert_1 \eta \r),
\end{align*}
which concludes the proof.

\subsubsection{Proof of Remark~\ref{rem:static}}
\label{sec:rem:static}

In this section, we operate under the restricted GDFM in Section~\ref{sec:2005} 
and use the notations given therein.
The proofs arguments are analogous to those adopted in the proof of Proposition~\ref{prop:spec:common}~\ref{prop:spec:common:two}.

With $\wh{\mbf S} = \wh{\bm\Gamma}_x(0)$ and $\mbf S = \bm\Gamma_\chi(0)$,
the conditions~\eqref{cond:s:one}--\eqref{cond:s:four} are met by
Assumption~\ref{assum:static}~\ref{cond:static:linear}, Propositions~\ref{prop:idio:eval} and~\ref{lem:acv:x} 
with $\zeta_{n, p} = 1/\sqrt{n}$ and $\bar{\zeta}_{n, p} = \wt\vartheta_{n, p}$.
Then applying Lemmas~\ref{lem:one}--\ref{lem:three}, we obtain
\begin{align*}
\frac{1}{p} \max_{1 \le j \le r} \vert \wh\mu_{x, j} - \mu_{\chi, j} \vert &= O_P\l(\frac{1}{\sqrt n} \vee \frac{1}{p} \r),
\\
\sqrt{p} \cdot \frac{\mu_{\chi, r}^2}{p^2} \max_{1 \le i \le p} \l\vert \bm\varphi_i^\top \l(\wh{\mbf E}_x - \mbf E_\chi \bm{\mc O}\r) \r\vert_2 &= O_P\l(\wt{\vartheta}_{n, p} \vee \frac{1}{\sqrt p} \r)
\end{align*}
with some unitary, diagonal matrix $\bm{\mc O} \in \R^{r \times r}$.
Also using the arguments in the proof of Lemma~\ref{lem:evec:size}, we have\begin{align}
\label{eq:evec:size}
\max_{1 \le j \le r} \mu_{\chi, j}^{1/2} \max_{1 \le i \le p} \vert e_{\chi, ij} \vert = O(1) \quad \text{and} \quad
\max_{1 \le j \le r} \wh\mu_{\chi, j}^{1/2} \max_{1 \le i \le p} \vert \wh{e}_{x, ij} \vert = O_P(1).
\end{align}
Noting that $\wh{\bm\Gamma}_\chi(h) = \wh{\mbf E}_x \wh{\mbf E}_x^\top \wh{\bm\Gamma}_x(h) \wh{\mbf E}_x \wh{\mbf E}_x^\top$
and that $\bm\Gamma_\chi(h) = \mbf E_\chi\mbf E_\chi^\top \bm\Gamma_\chi(h) \mbf E_\chi\mbf E_\chi^\top$, we have
\begin{align*}
\l\vert \wh{\bm\Gamma}_\xi(h) - \bm\Gamma_\xi(h) \r\vert_\infty
\le & \l\vert \wh{\bm\Gamma}_x(h) - \bm\Gamma_x(h) \r\vert_\infty
+ \l\vert \wh{\mbf E}_x \wh{\mbf E}_x^\top \wh{\bm\Gamma}_x(h) \wh{\mbf E}_x \wh{\mbf E}_x^\top - \mbf E_\chi\mbf E_\chi^\top \bm\Gamma_x(h) \mbf E_\chi\mbf E_\chi^\top \r\vert_\infty
\\
& +
\l\vert \mbf E_\chi\mbf E_\chi^\top \bm\Gamma_\xi(h) \mbf E_\chi\mbf E_\chi^\top \r\vert_\infty = I + II + III,
\end{align*}
where $I = O_P(\wt\vartheta_{n, p} \vee p^{-1/2})$ from Lemma~\ref{lem:acv:x}.
As for $III$, we have 
\begin{align*}
III = \max_{1 \le i, i' \le p} \l\vert \bm\varphi_i^\top \mbf E_\chi\mbf E_\chi^\top \bm\Gamma_\xi(h) \mbf E_\chi\mbf E_\chi^\top \bm\varphi_{i'} \r\vert
= O_P\l( \frac{r}{\mu_{\chi, r}} \r)
\end{align*}
from~\eqref{eq:evec:size} and Proposition~\ref{prop:idio:eval}. Next,
\begin{align*}
II \le& \max_{i, i'} \l\vert \bm\varphi_i^\top\l( \wh{\mbf E}_x \wh{\mbf E}_x^\top - \mbf E_\chi\mbf E_\chi^\top\r) \wh{\bm\Gamma}_x(h) \wh{\mbf E}_x \wh{\mbf E}_x^\top \bm\varphi_{i'} \r\vert +
\max_{i, i'} \l\vert \bm\varphi_i^\top \mbf E_\chi\mbf E_\chi^\top
\l( \wh{\bm\Gamma}_x(h) - \bm\Gamma_x(h) \r) \wh{\mbf E}_x \wh{\mbf E}_x^\top \bm\varphi_{i'} \r\vert 
\\ &+
\max_{i, i'} \l\vert \bm\varphi_i^\top \mbf E_\chi\mbf E_\chi^\top
\bm\Gamma_x(h) \l( \wh{\mbf E}_x \wh{\mbf E}_x^\top - \mbf E_\chi\mbf E_\chi^\top\r) \bm\varphi_{i'} \r\vert
= IV + V + VI.
\end{align*}
Denoting the diagonal elements of $\bm{\mc O}$ by $O_{jj}, \, 1 \le j \le r$, we have
\begin{align*}
\max_i \l\vert \bm\varphi_i^\top\l( \wh{\mbf E}_x - \mbf E_\chi \bm{\mc O} \r) \wh{\mbf E}_x^\top \r\vert_2 \le 
\max_i \sqrt{ \sum_{i' = 1}^p \l(\sum_{j = 1}^r (\wh{e}_{x, ij} - O_{jj} e_{\chi, ij} ) \wh e_{x, i'j} \r)^2}
\\
\le \sqrt{  \max_i \sum_{j = 1}^r (\wh{e}_{x, ij} - O_{jj} e_{\chi, ij} )^2 \cdot 
\sum_{j = 1}^r \sum_{i' = 1}^p \wh e_{x, i'j}^2} 
= O_P\l( \frac{r p^{3/2}}{\mu_{\chi, r}^2} \l(\wt\vartheta_{n, p} \vee \frac{1}{\sqrt p}\r) \r)
\end{align*}
such that
\begin{align*}
\max_i \l\vert \bm\varphi_i^\top\l( \wh{\mbf E}_x \wh{\mbf E}_x^\top - \mbf E_\chi\mbf E_\chi^\top\r) \r\vert_2 &\le 
\max_i \l\vert \bm\varphi_i^\top\l( \wh{\mbf E}_x - \mbf E_\chi \bm{\mc O} \r) \wh{\mbf E}_x^\top \r\vert_2 +
\max_i \l\vert \bm\varphi_i^\top \mbf E_\chi \bm{\mc O}  \l( \wh{\mbf E}_x - \mbf E_\chi \bm{\mc O} \r)^\top \r\vert_2
\\
&= O_P\l( \frac{r p^{3/2}}{\mu_{\chi, r}^2} \l(\wt\vartheta_{n, p} \vee \frac{1}{\sqrt p}\r) \r).
\end{align*}
From this, \eqref{eq:evec:size} and Lemma~\ref{lem:acv:x}~\ref{lem:acv:x:one}, we have
\begin{align*}
IV = O_P\l( \frac{r p^{3/2}}{\mu_{\chi, r}^2} \l(\wt\vartheta_{n, p} \vee \frac{1}{\sqrt p}\r) \cdot \frac{p}{\sqrt{\mu_{\chi, r}}} \r) \quad \text{and} \quad
V = O_P\l( \frac{r p}{\mu_{x, r}} \cdot \l(\frac{1}{\sqrt n} \vee \frac{1}{\sqrt p}\r) \r)
\end{align*}
(assuming that $(p / \mu_{\chi, r})^{-5/2} (\wt\vartheta_{n, p} \vee p^{-1/2}) = o(1)$), and $VI$ is similarly handled as $IV$.
Finally, assuming that the factor strengths are $\varrho_1 = \ldots = \varrho_r = 1$
(in Assumption~\ref{assum:static}~\ref{cond:static:linear}), the conclusion follows.

\subsection{Results in Section~\ref{sec:forecast} and Appendix~\ref{sec:2017}}

\subsubsection{Results in Section~\ref{sec:2005}}

\begin{proof}[Proof of Proposition~\ref{thm:2005}]
The following three lemmas follow from Lemmas~\ref{lem:one}--\ref{lem:three}
with $\wh{\mbf S} = \wh{\bm\Gamma}_x(0)$ and $\mbf S = \bm\Gamma_\chi(0)$,
as conditions~\eqref{cond:s:one}--\eqref{cond:s:four} are met by
Assumption~\ref{assum:static}~\ref{cond:static:linear}, 
Proposition~\ref{prop:idio:eval} and Theorem~\ref{thm:common:spec}.

\begin{lem}
\label{lem:dk:two}
There exists an orthonormal, diagonal matrix 
$\bm{\mc O} \in \R^{r \times r}$ such that
\begin{align*}
\frac{\inf_\omega \mu_{\chi, q}^2(\omega) \cdot \mu_{\chi, r}}{p^3}
\l\Vert \wh{\mbf E}_\chi - \mbf E_\chi \bm{\mc O} \r\Vert = 
O_P\l( \psi_n \vee \frac{1}{m} \vee \frac{1}{\sqrt p}\r). 
\end{align*}
\end{lem}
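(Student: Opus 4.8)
The plan is to prove Lemma~\ref{lem:dk:two} by the Davis--Kahan perturbation argument already used in Lemma~\ref{lem:one}, applied this time to the leading $r$-dimensional eigenspaces of $\wh{\bm\Gamma}_\chi(0)$ and $\bm\Gamma_\chi(0)$, and then feeding in the spectral-norm perturbation supplied by the Frobenius bound of Theorem~\ref{thm:common:spec} at lag $\ell=0$. Recall that $\wh{\mbf E}_\chi$ and $\mbf E_\chi$ collect the $r$ leading eigenvectors of $\wh{\bm\Gamma}_\chi(0)$ and $\bm\Gamma_\chi(0)$, respectively, and that under Assumption~\ref{assum:static}~\ref{cond:static:linear} the population matrix $\bm\Gamma_\chi(0)$ has exact rank $r$ with simple, separated eigenvalues $\mu_{\chi,1} > \cdots > \mu_{\chi,r} > 0 = \mu_{\chi,r+1}$, the smallest relevant gap for the top block being $\mu_{\chi,r} - \mu_{\chi,r+1} = \mu_{\chi,r} \asymp p^{\varrho_r}$.

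First I would invoke Theorem~2 of \cite{yu2015} exactly as in Lemma~\ref{lem:one}: the simplicity of the eigenvalues lets me take $\bm{\mc O}$ to be a diagonal orthonormal (sign) matrix, and gives $\Vert \wh{\mbf E}_\chi - \mbf E_\chi \bm{\mc O} \Vert \le \Vert \wh{\mbf E}_\chi - \mbf E_\chi \bm{\mc O} \Vert_F \lesssim \sqrt{r}\, \Vert \wh{\bm\Gamma}_\chi(0) - \bm\Gamma_\chi(0) \Vert / \mu_{\chi,r}$, with $r$ finite. Next I would control the numerator by the Frobenius rate of Theorem~\ref{thm:common:spec}, namely $\Vert \wh{\bm\Gamma}_\chi(0) - \bm\Gamma_\chi(0) \Vert \le \Vert \cdot \Vert_F = O_P(q\, p^{3 - 2\rho_q} (\psi_n \vee m^{-1} \vee p^{-1/2}))$.

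The final bookkeeping step rewrites the weak-factor exponent as the stated prefactor: Assumption~\ref{assum:factor} yields $\mu_{\chi,q}(\omega) \asymp p^{\rho_q}$ uniformly in $\omega$, hence $\inf_{\omega} \mu_{\chi,q}^2(\omega) \asymp p^{2\rho_q}$ and $p^{3-2\rho_q} \asymp p^3 / \inf_{\omega} \mu_{\chi,q}^2(\omega)$. Substituting this into the Davis--Kahan bound and multiplying through by $\inf_{\omega} \mu_{\chi,q}^2(\omega)\,\mu_{\chi,r}/p^3$ gives the claim, with the finite constants $q$ and $r$ absorbed into the $O_P$. The main point requiring care is to keep the two distinct notions of factor strength separate: the \emph{dynamic} strength ($\rho_q$, $\mu_{\chi,q}(\omega)$) governs the estimation error of the dynamic-PCA covariance estimator $\wh{\bm\Gamma}_\chi(0)$ through Theorem~\ref{thm:common:spec}, and is therefore what produces both the $\inf_\omega \mu_{\chi,q}^2(\omega)$ prefactor and the frequency-domain terms $\psi_n$ and $m^{-1}$; whereas the \emph{static} strength ($\varrho_r$, $\mu_{\chi,r}$) enters only through the eigengap in the Davis--Kahan denominator. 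The one subtlety to verify is that the gap controlling the top-$r$ eigenspace is $\asymp \mu_{\chi,r}$ and not some smaller inter-factor difference; this holds precisely because $\bm\Gamma_\chi(0)$ has exact rank $r$, so the gap below the retained block is $\mu_{\chi,r} - 0$.
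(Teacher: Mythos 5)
Your proposal is correct and follows essentially the same route as the paper: the paper obtains Lemma~\ref{lem:dk:two} by invoking the generic Davis--Kahan result of Lemma~\ref{lem:one} (itself Theorem~2 of \cite{yu2015}) with $\mbf S = \bm\Gamma_\chi(0)$, verifying its conditions via Assumption~\ref{assum:static}~\ref{cond:static:linear} (eigengap $\asymp \mu_{\chi,r}$), Proposition~\ref{prop:idio:eval} and the Frobenius-norm rate of Theorem~\ref{thm:common:spec}, which is exactly your Davis--Kahan-plus-Frobenius argument with the same $p^{3-2\rho_q} \asymp p^3/\inf_\omega\mu_{\chi,q}^2(\omega)$ bookkeeping. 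Your closing remark separating the dynamic strength (entering through the covariance estimation error) from the static strength (entering through the eigengap) matches the paper's intent precisely.
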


\begin{lem}
\label{lem:evals:two}
\begin{align*}
\frac{\inf_\omega \mu_{\chi, q}^2(\omega) \cdot \mu^2_{\chi, r}}{p^4}
\l\Vert \l(\frac{\wh{\bm{\mc M}}_\chi}{p}\r)^{-1} - 
\l(\frac{\bm{\mc M}_\chi}{p}\r)^{-1} \r\Vert
&= O_P\l(\psi_n \vee \frac{1}{m} \vee \frac{1}{\sqrt p}\r).
\end{align*}
\end{lem}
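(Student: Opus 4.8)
The plan is to obtain Lemma~\ref{lem:evals:two} as a direct specialization of the generic eigenvalue-perturbation bound in Lemma~\ref{lem:two}, instantiated (as in the proof of Proposition~\ref{thm:2005}) with $\wh{\mbf S}$ the dynamic-PCA estimator $\wh{\bm\Gamma}_\chi(0)$, whose $r$ leading eigenvalues are collected in $\wh{\bm{\mc M}}_\chi$, and $\mbf S = \bm\Gamma_\chi(0)$, whose $r$ nonzero eigenvalues form $\bm{\mc M}_\chi$. First I would check the four hypotheses \eqref{cond:s:one}--\eqref{cond:s:four} of the generic framework. The eigengap condition \eqref{cond:s:one} is provided by Assumption~\ref{assum:static}~\ref{cond:static:linear}, which separates the $r$ static eigenvalues $\mu_{\chi,1}\ge\ldots\ge\mu_{\chi,r}$ (each of order $p^{\varrho_j}$, $\varrho_j\in(7/8,1]$), together with $\Vert\bm\Gamma_\xi(0)\Vert\le 2\pi B_\xi$ from Proposition~\ref{prop:idio:eval}, which keeps the trailing eigenvalues bounded. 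Choosing the deterministic intermediate matrix to be $\mbf S$ itself renders \eqref{cond:s:two} vacuous, while \eqref{cond:s:three} comes from Proposition~\ref{prop:acv:common}~\ref{prop:acv:common:one}, read in the mean-square sense of Remark~\ref{rem:unif:op}, which gives $\E[(\wh\gamma_{\chi,ii'}(0)-\gamma_{\chi,ii'}(0))^2]\le C\zeta_{n,p}^2$ uniformly in $i,i'$ with
\[
\zeta_{n,p} = \frac{qp^2}{\inf_\omega\mu_{\chi,q}^2(\omega)}\left(\psi_n\vee\frac{1}{m}\vee\frac{1}{\sqrt p}\right).
\]
The invertibility requirement $p^{1-\varrho_r}\zeta_{n,p}\to 0$ (which, as in eq.~\eqref{eq:first:q:eval}, ensures $p^{-1}\wh{\bm{\mc M}}_\chi$ is asymptotically invertible) holds in the strong-factor regime under consideration, using $\inf_\omega\mu_{\chi,q}^2(\omega)\gtrsim p^{2\rho_q}$ with $\rho_q>3/4$ and $\varrho_r>7/8$.

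With the hypotheses verified, I would invoke the spectral-norm half of Lemma~\ref{lem:two} with smallest nonzero eigenvalue $\mu_q=\mu_{\chi,r}$, obtaining
\[
\frac{\mu_{\chi,r}^2}{p^2}\left\Vert\left(\frac{\wh{\bm{\mc M}}_\chi}{p}\right)^{-1}-\left(\frac{\bm{\mc M}_\chi}{p}\right)^{-1}\right\Vert = O_P\left(\zeta_{n,p}\vee\frac{1}{p}\right).
\]
Multiplying through by $\inf_\omega\mu_{\chi,q}^2(\omega)/p^2$ produces exactly the prefactor $\inf_\omega\mu_{\chi,q}^2(\omega)\cdot\mu_{\chi,r}^2/p^4$ appearing in the lemma, and on the right-hand side the factor $p^2/\inf_\omega\mu_{\chi,q}^2(\omega)$ embedded in $\zeta_{n,p}$ is cancelled, leaving $O_P\big(q(\psi_n\vee m^{-1}\vee p^{-1/2})\vee \inf_\omega\mu_{\chi,q}^2(\omega)/p^3\big)$. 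Finally I would absorb the residual constants: $q$ and $r$ are finite, and $\inf_\omega\mu_{\chi,q}^2(\omega)/p^3 = O(p^{2\rho_q-3}) = O(p^{-1}) = o(p^{-1/2})$ since $\mu_{\chi,q}(\omega)\lesssim p^{\rho_q}\le p$ by Assumption~\ref{assum:factor}. This collapses the bound to the advertised $O_P(\psi_n\vee m^{-1}\vee p^{-1/2})$, entirely in parallel with how Lemma~\ref{lem:dk:two} is derived from Lemma~\ref{lem:one}.

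The main obstacle is bookkeeping the two distinct eigenvalue scales without conflating them: the dynamic eigenvalue $\mu_{\chi,q}(\omega)$ controls the frequency-domain error in estimating $\bm\Gamma_\chi(0)$ and therefore lives inside $\zeta_{n,p}$, whereas the static eigenvalue $\mu_{\chi,r}$ controls the conditioning of the time-domain eigenvalue inversion and enters through the prefactor of Lemma~\ref{lem:two}. Keeping these straight, and confirming that the rescaling by $\inf_\omega\mu_{\chi,q}^2(\omega)/p^2$ precisely undoes the inflation carried by $\zeta_{n,p}$, is the only genuinely delicate point; everything else is a mechanical application of the already-established generic lemma.
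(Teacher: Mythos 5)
Your proposal is correct and follows essentially the same route as the paper, which derives Lemma~\ref{lem:evals:two} in one line by instantiating the generic Lemma~\ref{lem:two} with $\wh{\mbf S}$ the dynamic-PCA estimator of $\bm\Gamma_\chi(0)$ and $\mbf S = \bm\Gamma_\chi(0)$, verifying \eqref{cond:s:one}--\eqref{cond:s:four} via Assumption~\ref{assum:static}~\ref{cond:static:linear}, Proposition~\ref{prop:idio:eval} and Theorem~\ref{thm:common:spec}; your bookkeeping of the two eigenvalue scales and the cancellation of the $p^2/\inf_\omega\mu_{\chi,q}^2(\omega)$ factor inside $\zeta_{n,p}$ matches the intended derivation. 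The only caveat is that your claim that $p^{1-\varrho_r}\zeta_{n,p}\to 0$ follows from the exponent bounds $\rho_q>3/4$, $\varrho_r>7/8$ alone is slightly overstated (at the boundary one still needs $\psi_n\vee m^{-1}\vee p^{-1/2}$ to decay fast enough), but this is the same implicit rate assumption the paper builds into the generic framework, so it is not a gap relative to the paper's own argument.
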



We frequently use that by Lemma~\ref{lem:decay} and
\eqref{eq:common:decay}--\eqref{eq:idio:decay}, together with Chebyshev's inequality, we have
\begin{align}
\label{eq:bounded:norm}
\Vert \mbf X_t \Vert = O_P(\sqrt p), \quad
\Vert \bm\chi_t \Vert = O_P(\sqrt p) \quad \text{and} \quad
\Vert \bm\xi_t \Vert = O_P(\sqrt p)
\end{align}
for any given $t$.
Note that
\begin{align*}
\l\vert \wh{\bm\chi}^{\static}_{n + a \vert n}  - \bm\chi_{n + a \vert n} \r\vert_\infty
&\le \max_{1 \le i \le p} \l\vert \bm\varphi_i^\top \bm\Gamma_\chi(-a) \mbf E_\chi \bm{\mc M}_\chi^{-1} \mbf E_\chi^\top \bm\xi_n \r\vert
\\
& + \max_i \l\vert \bm\varphi_i^\top \l(\wh{\bm\Gamma}_\chi(-a) - \bm\Gamma_\chi(-a) \r) 
\wh{\mbf E}_\chi \wh{\bm{\mc M}}_\chi^{-1} \wh{\mbf E}_\chi^\top \mbf X_n \r\vert
\\
& + \max_i \l\vert \bm\varphi_i^\top \bm\Gamma_\chi(-a)  
\l(\wh{\mbf E}_\chi - \mbf E_\chi \bm{\mc O} \r) 
\wh{\bm{\mc M}}_\chi^{-1} \wh{\mbf E}_\chi^\top \mbf X_n \r\vert
\\
& + \max_i \l\vert \bm\varphi_i^\top \bm\Gamma_\chi(-a)  
\mbf E_\chi \bm{\mc O} 
\l(\wh{\bm{\mc M}}_\chi^{-1} - \bm{\mc M}^{-1}_\chi \r) \wh{\mbf E}_\chi^\top \mbf X_n \r\vert
\\
& + \max_i \l\vert \bm\varphi_i^\top \bm\Gamma_\chi(-a)  
\mbf E_\chi \bm{\mc O} \bm{\mc M}^{-1}_\chi 
\l(\wh{\mbf E}_\chi - \mbf E_\chi \bm{\mc O}\r)^\top \mbf X_n \r\vert
= I + II + III + IV + V.
\end{align*}
By~\eqref{eq:common:decay}, Assumption~\ref{assum:static}~\ref{cond:static:linear}
and combining the observation
$\Vert \Cov(\mbf E_\chi^\top \bm\xi_t) \Vert \le 2\pi B_\xi$ from Proposition~\ref{prop:idio:eval}
with Chebyshev's inequality, 
we have $I = O_P(\sqrt{p}/\mu_{\chi, r})$.
Using the arguments in the proof of Proposition~\ref{prop:acv:common},
the result therein is extended to showing the consistency of $\wh{\bm\Gamma}_\chi(a)$ at a given lag $a$.
Then combined with 
Lemmas~\ref{lem:evals:two} and~\eqref{eq:bounded:norm}, it leads to
\begin{align*}
II = O_P\l(\frac{p^3}{\inf_\omega \mu_{\chi, q}^2(\omega) \cdot \mu_{\chi, r}} \cdot 
\l(\vartheta_{n, p} \vee \frac{1}{m} \vee \frac{1}{\sqrt p}\r)\r).
\end{align*}
Next, from~\eqref{eq:common:decay}, \eqref{eq:bounded:norm}
and Lemmas~\ref{lem:dk:two}--\ref{lem:evals:two},
\begin{align*}
III = O_P\l(\frac{p^4}{\inf_\omega \mu_{\chi, q}^2(\omega) \cdot \mu^2_{\chi, r}} \cdot 
\l(\psi_n \vee \frac{1}{m} \vee \frac{1}{\sqrt p}\r)\r),
\end{align*}
and $V$ is handled analogously. Finally,
from~\eqref{eq:common:decay}, \eqref{eq:bounded:norm} and Lemma~\ref{lem:evals:two},
\begin{align*}
IV = O_P\l(\frac{p^4}{\inf_\omega \mu_{\chi, q}^2(\omega) \cdot \mu^2_{\chi, r}} \cdot 
\l(\psi_n \vee \frac{1}{m} \vee \frac{1}{\sqrt p}\r)\r).
\end{align*}
Putting together the bounds on $I$--$V$, the conclusion follows.
\end{proof}

We prove the following proposition which includes Proposition~\ref{prop:common:irreduce} as a special case.
\begin{prop}
\label{prop:common:irreduce:new}
{\it Suppose that Assumptions~\ref{assum:common} and~\ref{assum:innov} hold.
Then for any $a \ge 1$,
\begin{align*}
\vert \bm\chi_{n + a \vert n} - \bm\chi_{n + a} \vert_\infty = 
\l\{\begin{array}{ll}
O_P\l(q^{1/\nu} \mu_\nu^{1/\nu} \log^{1/2}(p)\r) &
\text{under Assumption~\ref{assum:innov}~\ref{cond:dist}~\ref{cond:moment},}
\\
O_P\l(\log^{1/2}(p)\r) &
\text{under Assumption~\ref{assum:innov}~\ref{cond:dist}~\ref{cond:gauss}.}
\end{array}\r.
\end{align*}}
\end{prop}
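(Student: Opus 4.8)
The plan is to first reduce the irreducible error to an explicit finite sum of not-yet-realised common shocks, and then control its element-wise $\ell_\infty$-norm by re-using the moment computations already carried out in the proof of Lemma~\ref{lem:func:dep}. From the definitions~\eqref{eq:gdfm} and~\eqref{eq:gdfm:best:lin}, writing $\bm\chi_{n + a} = \sum_{k = 0}^\infty \mbf B_k \mbf u_{n + a - k}$ and re-indexing the predictor through $k = \ell + a$, the two series share every term involving $\mbf u_{n - \ell}$ for $\ell \ge 0$, so the difference telescopes to the finitely many leading terms,
\begin{align}
\bm\chi_{n + a} - \bm\chi_{n + a \vert n} = \sum_{k = 0}^{a - 1} \mbf B_k \mbf u_{n + a - k}. \nonumber
\end{align}
This identity is purely algebraic and requires no optimality or orthogonality property of $\bm\chi_{n + a \vert n}$; it only uses that $a$ is finite, so that the sum contains exactly $a$ terms, each depending on one of the shocks $\mbf u_{n + 1}, \ldots, \mbf u_{n + a}$ that are unavailable at time $n$.

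Next I would bound the $\nu$-norm of the $\ell_\infty$-norm of this vector. Under Assumption~\ref{assum:innov}~\ref{cond:dist}~\ref{cond:moment}, Minkowski's inequality gives
\begin{align}
\l\Vert \l\vert \bm\chi_{n + a} - \bm\chi_{n + a \vert n} \r\vert_\infty \r\Vert_\nu \le \sum_{k = 0}^{a - 1} \l\Vert \l\vert \mbf B_k \mbf u_{n + a - k} \r\vert_\infty \r\Vert_\nu, \nonumber
\end{align}
and each summand is exactly the quantity bounded in the proof of Lemma~\ref{lem:func:dep}: combining Assumption~\ref{assum:common} with Lemma~D.3 of \cite{zhang2021} (which exploits the cross-sectional independence of $u_{jt}$ over $j$ from Assumption~\ref{assum:innov}~\ref{cond:iid}) yields $\Vert \vert \mbf B_k \mbf u_{n + a - k} \vert_\infty \Vert_\nu \le C_\nu \log^{1/2}(p) \, \Xi \, (1 + k)^{-\varsigma} q^{1/\nu} \mu_\nu^{1/\nu}$. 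Since $\varsigma > 2$ and $a$ is finite, summing over $0 \le k \le a - 1$ leaves only the convergent tail $\sum_{k \ge 0} (1 + k)^{-\varsigma}$ and produces an overall bound of order $\log^{1/2}(p) \, q^{1/\nu} \mu_\nu^{1/\nu}$. Markov's inequality then converts this moment bound into the stated $O_P$ statement, exactly as discussed in Remark~\ref{rem:unif:op}.

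For the Gaussian case under Assumption~\ref{assum:innov}~\ref{cond:dist}~\ref{cond:gauss}, I would instead argue directly: since $\mbf u_t \sim_{\iid} \mc N_q(\mbf 0, \mbf I)$, the shocks $\mbf u_{n + 1}, \ldots, \mbf u_{n + a}$ are independent, so each coordinate $\bm\varphi_i^\top (\bm\chi_{n + a} - \bm\chi_{n + a \vert n}) = \sum_{k = 0}^{a - 1} \mbf B_{k, i \cdot} \mbf u_{n + a - k}$ is a centred Gaussian whose variance equals $\sum_{k = 0}^{a - 1} \vert \mbf B_{k, i \cdot} \vert_2^2 \le \Xi^2 \sum_{k \ge 0} (1 + k)^{-2\varsigma}$, a constant uniform in $i$ and $p$ by Assumption~\ref{assum:common}. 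The maximum of $p$ such uniformly sub-Gaussian variables is $O_P(\log^{1/2}(p))$ by the standard Gaussian maximal inequality, which gives the second claim with neither the moment factor $\mu_\nu^{1/\nu}$ nor the $q^{1/\nu}$ term.

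I expect the only genuine subtlety to lie in the moment-case maximal inequality, namely in justifying the $\log^{1/2}(p)$ (rather than $p^{1/\nu}$) scaling of $\Vert \vert \mbf B_k \mbf u_{n + a - k} \vert_\infty \Vert_\nu$. This is not a new computation, however: it is precisely the uniform-dependence-measure bound already established in Lemma~\ref{lem:func:dep}. The main conceptual work is therefore recognising that the irreducible error is a finite superposition of single-lag terms of the form controlled there, and that the finiteness of $a$ keeps all constants bounded so that no interplay with $n$ or $p$ enters the rate.
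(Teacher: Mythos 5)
Your proposal is correct and follows essentially the same route as the paper: the same telescoping identity $\bm\chi_{n+a} - \bm\chi_{n+a\vert n} = \sum_{k=0}^{a-1}\mbf B_k \mbf u_{n+a-k}$, the same $\log^{1/2}(p)\,q^{1/\nu}\mu_\nu^{1/\nu}$ bound via Lemma~D.3 of \cite{zhang2021} under the moment condition (the paper applies that lemma once to the double sum over $k$ and $j$, whereas you interpose Minkowski over $k$ first, which is an immaterial reordering), and the same Gaussian maximal inequality argument in the Gaussian case.
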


\begin{proof}
From~\eqref{eq:gdfm:best:lin} and Assumption~\ref{assum:innov}~\ref{cond:uncor},
$\bm\chi_{n + a \vert n} - \bm\chi_{n + a} 
= \sum_{\ell = 0}^{a - 1} \mbf B_\ell \mbf u_{n + a - \ell}$ for all $a \ge 1$.
By Lemma~D.3 of \cite{zhang2021},
Assumptions~\ref{assum:common} and~\ref{assum:innov}, 
there exist constants $C_\nu, C_{\nu, \Xi, \varsigma} > 0$
that depend only on their subscripts such that
\begin{align*}
& \l\Vert \l\vert \sum_{\ell = 0}^{a - 1} \mbf B_\ell \mbf u_{n + a - \ell} \r\vert_\infty \r\Vert_\nu^2
= \l\Vert \l\vert \sum_{\ell = 0}^{a - 1} \sum_{j = 1}^q 
\mbf B_{\ell, \cdot j} u_{j, n + a - \ell} \r\vert_\infty \r\Vert_\nu^2
\le C_\nu \log(p) \sum_{\ell = 0}^{a - 1} \sum_{j = 1}^q \l\Vert \l\vert \mbf B_{\ell, \cdot j} u_{j, n + a - \ell} \r\vert_\infty \r\Vert_\nu^2
\\
& \le C_\nu \log(p) \sum_{\ell = 0}^{a - 1} \sum_{j = 1}^q \l\vert \mbf B_{\ell, \cdot j} \r\vert_\infty^2 
\l\Vert \l\vert \mbf u_{t + a - \ell} \r\vert_\infty \r\Vert_\nu^2
\le C_\nu \log(p) \sum_{\ell = 0}^{a - 1} \Xi^2 (1 + \ell)^{-2\varsigma} q^{2/\nu} \mu_\nu^{2/\nu}
\\
& \le C_{\nu, \Xi, \varsigma} \log(p) q^{2/\nu} \mu_\nu^{2/\nu}
\end{align*}
under Assumption~\ref{assum:innov}~\ref{cond:dist}~\ref{cond:moment},
such that by Chebyshev's inequality, we have 
$\l\vert \sum_{\ell = 0}^{a - 1} \mbf B_\ell \mbf u_{n + a - \ell} \r\vert_\infty =
O_P(\log^{1/2}(p) q^{1/\nu} \mu_\nu^{1/\nu})$.
When Assumption~\ref{assum:innov}~\ref{cond:dist}~\ref{cond:gauss} holds,
notice that 
\begin{align*}
\max_{1 \le i \le p} \Var\l(\sum_{\ell = 0}^{a - 1} \mbf B_{\ell, i \cdot} \mbf u_{t + a - \ell}\r)
= \max_i \sum_{\ell = 0}^{a - 1} \l\vert \mbf B_{\ell, i \cdot} \r\vert_2^2
\le \sum_{\ell = 0}^{a - 1} \Xi^2 (1 + \ell)^{-\varsigma} < \infty
\end{align*}
such that by Gaussian maximal inequality, we have
$\l\vert \sum_{\ell = 0}^{a - 1} \mbf B_\ell \mbf u_{n + a - \ell} \r\vert_\infty =
O_P(\log^{1/2}(p))$.
\end{proof}

\subsubsection{Proof of Proposition~\ref{prop:idio:pred}}

Let $\wt\zeta^{(1)}_p = \log^{1/2}(p) p^{1/\nu} \mu_\nu^{1/\nu}$ and $\wt\zeta^{(2)}_p = p^{1/\nu} \mu_\nu^{1/\nu}$ under Assumption~\ref{assum:innov}~\ref{cond:dist}~\ref{cond:moment}, $\wt\zeta^{(1)}_p = \wt\zeta^{(2)}_p = \log^{1/2}(p)$ under Assumption~\ref{assum:innov}~\ref{cond:dist}~\ref{cond:gauss}.
We first show that for any given $t$,
\begin{align*}
\vert \bm\xi_t \vert_\infty &= O_P\l( \wt\zeta^{(1)}_p \r) \quad \text{and} \quad
\vert \bm\Gamma^{1/2} \bm\vep_t \vert_\infty = O_P(\wt\zeta^{(2)}_p),
\end{align*}
with which we establish
\begin{align}
\label{eq:pf:prop:idio:pred}
\l\vert \wh{\bm\xi}_{n + 1 \vert n} - \bm\xi_{n + 1} \r\vert_\infty
&= O_P\l( s_{\text{\upshape in}} \zeta_{n, p} \wt\zeta^{(1)}_p + \Vert \bm\beta \Vert_1 \bar{\zeta}_{n, p}
+ p^{1/\nu} \wt\zeta^{(2)}_p\r).
\end{align}

Under Assumption~\ref{assum:innov}~\ref{cond:dist}~\ref{cond:moment},
by~\eqref{eq:d:gam:vep:inf} and Assumption~\ref{assum:idio}, 
we have
\begin{align*}
\l\Vert \l\vert \mbf D_\ell \bm\Gamma^{1/2} \bm\vep_{t - \ell} \r\vert_\infty \r\Vert_\nu
& \le C_\nu M_\vep^{1/2} \Xi (1 + \ell)^{-\varsigma} p^{1/\nu} \mu_\nu^{1/\nu}
\end{align*}
such that applying Lemma~D.3 of \cite{zhang2021}, we have
\begin{align*}
\Vert \vert \bm\xi_t \vert_\infty \Vert_\nu^2
&= 
\l\Vert \l\vert \sum_{\ell = 0}^\infty 
\mbf D_\ell \bm\Gamma^{1/2} \bm\vep_{t - \ell} \r\vert_\infty \r\Vert_\nu^2
\le
C_\nu^\prime \log(p) \sum_{\ell = 0}^\infty 
\l\Vert \l\vert \mbf D_\ell \bm\Gamma^{1/2} \bm\vep_{t - \ell} \r\vert_\infty \r\Vert_\nu^2
\\
&\le
C_\nu^2 C_\nu^\prime M_\vep \log(p) p^{2/\nu} \mu_\nu^{2/\nu} 
\sum_{\ell = 0}^\infty \Xi^2 (1 + \ell)^{-2\varsigma} 
\le C_{\nu, \Xi, \varsigma} M_\vep \log(p) p^{2/\nu} \mu_\nu^{2/\nu} 
\end{align*}
with $C_\nu, C^\prime_\nu, C_{\nu, \Xi, \varsigma}$ denoting some positive constants
that depend only on their subscripts.
Therefore, by Chebyshev's inequality,
$\wt\zeta^{(1)}_p = \log^{1/2}(p) p^{1/\nu} \mu_\nu^{1/\nu}$.
Similarly, by Assumption~\ref{assum:idio} and Lemma~D.3 of \cite{zhang2021},
\begin{align*}
\l\Vert (\bm\Gamma^{1/2})_{i \cdot } \bm\vep_t  \r\Vert_\nu^2
= \l\Vert \sum_{k = 1}^p (\bm\Gamma^{1/2})_{ik} \vep_{kt} \r\Vert_\nu^2
\le C_\nu \vert (\bm\Gamma^{1/2})_{i \cdot} \vert_2^2 \mu_\nu^{2/\nu}
\le C_\nu M_\vep \mu_\nu^{2/\nu}
\end{align*}
for all $1 \le i \le p$, such that
$\Vert \vert \bm\Gamma^{1/2} \bm\vep_t \vert_\infty \Vert_\nu \le (C_\nu M_\vep)^{1/2} p^{1/\nu} \mu_\nu^{1\nu}$
and with Chebyshev's inequality, 
we have $\wt\zeta^{(2)}_p = p^{1/\nu} \mu_\nu^{1/\nu}$.

Under Assumption~\ref{assum:innov}~\ref{cond:dist}~\ref{cond:gauss}, 
by Gaussian maximal inequality and that $\mu_{\xi, 1} \le 2\pi B_\xi$
(which indicates that $\max_{1 \le i \le p} \Var(\xi_{it}) \le 2\pi B_\xi$),
we have $\wt\zeta^{(1)}_p = \log^{1/2}(p)$ and similarly,
$\wt\zeta^{(2)}_p = \log^{1/2}(p)$.

Next, by construction, we have
\begin{align*}
& \l\vert \wh{\bm\xi}_{n + 1 \vert n} - \bm\xi_{n + 1 \vert n} \r\vert_\infty
\le \l\vert \sum_{\ell = 1}^d \l(\wh{\mbf A}_{\ell} - \mbf A_{\ell}\r) \wh{\bm\xi}_{n + 1 - \ell}
\r\vert_\infty +
\l\vert \sum_{\ell = 1}^d \mbf A_{\ell}
\l(\wh{\bm\xi}_{n + 1 - \ell} - \bm\xi_{n + 1 - \ell}\r) \r\vert_\infty
\\
&\le \sum_{\ell = 1}^d \l\Vert \l(\wh{\mbf A}_{\ell} - \mbf A_{\ell}\r)^\top \r\Vert_1
\l( \l\vert \wh{\bm\xi}_{n + 1 - \ell} - \bm\xi_{n + 1 - \ell} \r\vert_\infty + \l\vert \bm\xi_{n + 1 - \ell} \r\vert_\infty \r)
+ \sum_{\ell = 1}^d  \l\Vert \mbf A_{\ell}^\top \r\Vert_1
\l\vert \wh{\bm\xi}_{n + 1 - \ell} - \bm\xi_{n + 1 - \ell} \r\vert_\infty
\\
&\le \l\Vert \wh{\bm\beta} - \bm\beta \r\Vert_1 \;
\max_{1 \le \ell \le d} \l( \l\vert \wh{\bm\xi}_{n + 1 - \ell} - \bm\xi_{n + 1 - \ell} \r\vert_\infty + \l\vert \bm\xi_{n + 1 - \ell} \r\vert_\infty \r)
+ \l\Vert \bm\beta \r\Vert_1 \; 
\max_{1 \le \ell \le d} \l\vert \wh{\bm\xi}_{n + 1 - \ell} - \bm\xi_{n + 1 - \ell} \r\vert_\infty
\\
&= O_P\l( s_{\text{\upshape in}} \zeta_{n, p} \l(\bar{\zeta}_{n, p} + \wt\zeta^{(1)}_p\r) + \Vert \bm\beta \Vert_1 \bar{\zeta}_{n, p} \r),
\end{align*}
where it is used that
$\max_{1 \le \ell \le d} \vert \bm\xi_{t + 1 - \ell} \vert_\infty = O_P(d\wt\zeta^{(1)}_p)$
for given $t$.
This, combined with the observation that $\bm\Gamma^{1/2}\bm\vep_{n + 1} = \bm\xi_{n + 1 \vert n} - \bm\xi_{n + 1}$, proves~\eqref{eq:pf:prop:idio:pred}.

\subsubsection{Results in Section~\ref{sec:2017}}

\begin{proof}[Proof of Proposition~\ref{prop:idio:eval:two}]
Let $\bm\Sigma_w(\omega)$ denote the spectral density matrix of $\mbf W_t$. Then,
$\bm\Sigma_w(\omega) = \mc A_\chi(e^{-\iota \omega}) \bm\Sigma_\xi(\omega) \mc A_\chi^\top(e^{\iota \omega})$
and for any $\mbf a \in \C^p$ with $\vert \mbf a \vert_2^2 = 1$, we have
\begin{align*}
\mbf a^* \bm\Sigma_w(\omega) \mbf a \le \mu_{\xi, 1}(\omega) \; \mbf a^* \mc A_\chi(e^{-\iota \omega}) \mc A_\chi^\top(e^{\iota \omega}) \mbf a
\le B_\xi \Vert \mc A_\chi(e^{-\iota \omega}) \mc A_\chi^\top(e^{\iota \omega}) \Vert,
\end{align*}
thanks to Proposition~\ref{prop:idio:eval}.
Further, by Assumption~\ref{assum:common:var}~\ref{cond:det}
and Lemma~\ref{lem:decay}, there exists a constant $B_\chi >0$ such that
\begin{align*}
\sup_{\omega \in [-\pi, \pi]} \l\Vert \mc A_\chi(e^{-\iota \omega}) \mc A_\chi^\top(e^{\iota \omega}) \r\Vert
\le \max_{1 \le h \le N} \sup_{\omega \in [-\pi, \pi]} 
\l\Vert \mc A_\chi^{(h)}(e^{-\iota \omega}) (\mc A_\chi^{(h)}(e^{\iota \omega}))^\top \r\Vert \le B_\chi.
\end{align*}
Consequently,
\begin{align*}
\mu_{w, 1} = \sup_{\mbf a \in \R^p} \mbf a^\top \bm\Gamma_w \mbf a
= \sup_{\mbf a \in \R^p} \int_{-\pi}^\pi \mbf a^\top\bm\Sigma_w(\omega) \mbf a \, d\omega \le 2\pi B_\xi B_\chi.
\end{align*}
\end{proof}

\begin{proof}[Proof of Proposition~\ref{thm:common:var}~\ref{thm:common:var:one}]
In what follows, $C_k, \, k = 1, 2, \ldots$ denote
positive constants not dependent on the indices $1 \le i, i' \le p$.
By Assumption~\ref{assum:common:var}~\ref{cond:block:var},
\begin{align}
& \frac{1}{p} \l\Vert \wh{\mbf A}_\chi - \mbf A_\chi \r\Vert_F^2
= \frac{1}{p} \sum_{h = 1}^N \l\Vert \wh{\mbf A}^{(h)}_\chi - \mbf A^{(h)}_\chi \r\Vert_F^2, \quad \text{where}
\nn \\
& \l\Vert \wh{\mbf A}^{(h)}_\chi - \mbf A^{(h)}_\chi \r\Vert_F
=
\l\Vert \wh{\mbf B}^{(h)}_\chi (\wh{\mbf C}^{(h)}_\chi)^{-1}
- \mbf B^{(h)}_\chi (\mbf C^{(h)}_\chi)^{-1} \r\Vert_F
\nn \\
\le &
\l\Vert \l(\wh{\mbf B}^{(h)}_\chi - \mbf B^{(h)}_\chi \r) (\wh{\mbf C}^{(h)}_\chi)^{-1} \r\Vert_F
+ \l\Vert \mbf B^{(h)}_\chi 
\l((\wh{\mbf C}^{(h)}_\chi)^{-1} - (\mbf C^{(h)}_\chi)^{-1} \r) \r\Vert_F
\nn \\
\le & 
\l\Vert \wh{\mbf B}^{(h)}_\chi - \mbf B^{(h)}_\chi \r\Vert_F \;
\l\Vert (\wh{\mbf C}^{(h)}_\chi)^{-1} \r\Vert_F
+ \l\Vert \mbf B^{(h)}_\chi \r\Vert_F \;
\l\Vert (\wh{\mbf C}^{(h)}_\chi)^{-1} - (\mbf C^{(h)}_\chi)^{-1} \r\Vert_F.
\label{thm:common:var:one:eq:one}
\end{align}
By~\eqref{eq:common:decay},
Assumption~\ref{assum:common:var}~\ref{cond:det}
and that the entries of $(\mbf C^{(h)}_\chi)^{-1}$
are rational functions of those of $\mbf C^{(h)}_\chi$,
we have
\begin{align}
\label{eq:bound:BC}
\max_{1 \le h \le N} 
\max\l( \Vert \mbf B^{(h)}_\chi \Vert_F,
\Vert \mbf C^{(h)}_\chi \Vert_F \r) \le C_1 \quad \text{and} \quad
\max_{1 \le h \le N} \Vert (\mbf C^{(h)}_\chi)^{-1} \Vert_F \le C_2
\end{align}
Also, by Proposition~\ref{prop:acv:common}~\ref{prop:acv:common:two},
\begin{align}
\max_{1 \le h \le N} \l\Vert \wh{\mbf B}^{(h)}_\chi - \mbf B^{(h)}_\chi \r\Vert_F^2
\le (q + 1)^2 \sum_{\ell = 1}^s \max_{1 \le i, i' \le p} 
\vert \wh{\gamma}_{\chi, ii'}(\ell) - \gamma_{\chi, ii'}(\ell) \vert^2
\nn \\
= O_P\l(\frac{s q^3 p^4}{\inf_\omega \mu_{\chi, q}^4(\omega)} 
\l(\vartheta_{n, p}^2 \vee \frac{1}{m^2} \vee \frac{1}{p}\r)\r)
\label{eq:bound:Bdiff:max}
\end{align}
and similarly,
\begin{align}
\max_{1 \le h \le N} \l\Vert \wh{\mbf C}^{(h)}_\chi - \mbf C^{(h)}_\chi \r\Vert_F^2
= O_P\l(\frac{s^2 q^3 p^4}{\inf_\omega \mu_{\chi, q}^4(\omega)} 
\l(\vartheta_{n, p}^2 \vee \frac{1}{m^2} \vee \frac{1}{p}\r)\r).
\label{eq:bound:Cdiff:max}
\end{align}
Also, \eqref{eq:bound:BC} and \eqref{eq:bound:Cdiff:max} imply that
\begin{align*}
\max_{1 \le h \le N} \l\Vert \l( \wt{\mbf C}^{(h)}_\chi \r)^{-1} \r\Vert &\le \max_{1 \le h \le N} \l\Vert \l( \mbf C^{(h)}_\chi \r)^{-1} \r\Vert_F + \max_{1 \le h \le N} \l\Vert \wh{\mbf C}^{(h)}_\chi - \mbf C^{(h)}_\chi \r\Vert_F = O_P(1)
\end{align*}
by Weyl's inequality, and
\begin{align*}
\max_{1 \le h \le N} \l\Vert (\wh{\mbf C}^{(h)}_\chi)^{-1} - (\mbf C^{(h)}_\chi)^{-1} \r\Vert_F &\le \max_{1 \le h \le N} \l\Vert \l( \wt{\mbf C}^{(h)}_\chi \r)^{-1} \r\Vert_F \l\Vert \l( \mbf C^{(h)}_\chi \r)^{-1} \r\Vert_F \l\Vert \wh{\mbf C}^{(h)}_\chi - \mbf C^{(h)}_\chi \r\Vert_F
\\
&= O_P\l(\frac{p^2}{\inf_\omega \mu_{\chi, q}^2(\omega)} 
\l(\vartheta_{n, p} \vee \frac{1}{m} \vee \frac{1}{\sqrt p}\r)\r).
\end{align*}
This, together with~\eqref{thm:common:var:one:eq:one}, \eqref{eq:bound:Bdiff:max}, \eqref{eq:bound:Cdiff:max}, and that $p = (q + 1)N$, proves the first claim.
It also follows that
\begin{align}
& \max_{1 \le h \le N} \l\Vert \wh{\mbf A}^{(h)}_\chi - \mbf A^{(h)}_\chi \r\Vert_F
\nn \\
\le & \,
\max_{1 \le h \le N} \l\Vert \wh{\mbf B}^{(h)}_\chi - \mbf B^{(h)}_\chi \r\Vert_F \;
\l\Vert (\wh{\mbf C}^{(h)}_\chi)^{-1} \r\Vert_F
+ \max_{1 \le h \le N} \l\Vert \mbf B^{(h)}_\chi \r\Vert_F \;
\l\Vert (\wh{\mbf C}^{(h)}_\chi)^{-1} - (\mbf C^{(h)}_\chi)^{-1} \r\Vert_F
\nn \\
=& \, O_P\l(\vartheta_{n, p} \vee \frac{1}{m} \vee \frac{1}{\sqrt p} \r).
\label{eq:var:block:norm}
\end{align}
Then, the second claim follows since
\begin{align*}
\max_{1 \le i \le p} \l\vert \bm\varphi_i^\top 
\l( \wh{\mbf A}_{\chi} - \mbf A_{\chi} \r) \r\vert_2
\le \max_{1 \le h \le N} \l\Vert \wh{\mbf A}^{(h)}_\chi - \mbf A^{(h)}_\chi \r\Vert_F.
\end{align*}
\end{proof}

Let $\wh{\bm\Gamma}_{\mathbb X} = (n - s)^{-1} \mathbb X \mathbb X^\top$
and $\bm\Gamma_{\mathbb X} = \E(\wh{\bm\Gamma}_{\mathbb X})$, where
\begin{align*}
\mathbb X = \bmx 
\mbf X_{s + 1} & \mbf X_{s + 2} & \ldots & \mbf X_n \\
& & \ddots & \\
\mbf X_{1} & \mbf X_{2} & \ldots & \mbf X_{n - s} 
\emx \in \R^{p(s + 1) \times (n - s)}.
\end{align*}
Then, $\bm\Gamma_{\mathbb X} = [\bm\Gamma_x(\ell - \ell'), \, 1 \le \ell, \ell' \le s + 1]$.
Further, we write that $\wh{\bm\Gamma}_{\mathbb X} = [\wh{\bm\Gamma}_x(\ell, \ell'), \, 1 \le \ell, \ell' \le s + 1]$
where $\wh{\bm\Gamma}_x(\ell, \ell') = [\wh\gamma_{x, ii'}(\ell, \ell'), \, 1 \le i, i' \le p]
= (n - s)^{-1} \sum_{t = s + 2 - \ell}^{n - \ell + 1} \mbf X_t \mbf X_{t - \ell' + \ell}^\top$.

\begin{prop}
\label{prop:acv:z}
Under the assumptions made in  Proposition~\ref{thm:common:var}~\ref{thm:common:var:two},
\begin{align*}
\frac{1}{p} \l\Vert \wh{\bm\Gamma}_z - \bm\Gamma_z \r\Vert
= O_P\l(\vartheta_{n, p} \vee \frac{1}{m} \vee \frac{1}{\sqrt p}\r), \quad
\l\vert \wh{\bm\Gamma}_z - \bm\Gamma_z \r\vert_\infty
= O_P\l(\vartheta_{n, p} \vee \frac{1}{m} \vee \frac{1}{\sqrt p}\r).
\end{align*}
\end{prop}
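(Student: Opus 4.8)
The plan is to express both $\bm\Gamma_z$ and $\wh{\bm\Gamma}_z$ as congruence transforms of the stacked autocovariance matrices $\bm\Gamma_{\mathbb X}$ and $\wh{\bm\Gamma}_{\mathbb X}$, thereby reducing the claim to the rates already established for the blockwise VAR filter (Proposition~\ref{thm:common:var}~\ref{thm:common:var:one} and the per-block bound~\eqref{eq:var:block:norm}) and for the raw sample autocovariances of $\mbf X_t$ (Lemma~\ref{lem:acv:x}). Write $\mathbb X_t = (\mbf X_t^\top, \ldots, \mbf X_{t-s}^\top)^\top$, $\bm\Pi = [\mbf I, -\mbf A_{\chi,1}, \ldots, -\mbf A_{\chi,s}]$ and $\wh{\bm\Pi} = [\mbf I, -\wh{\mbf A}_{\chi,1}, \ldots, -\wh{\mbf A}_{\chi,s}]$, so that $\mbf Z_t = \bm\Pi \mathbb X_t$, $\bm\Gamma_z = \bm\Pi\bm\Gamma_{\mathbb X}\bm\Pi^\top$ and $\wh{\bm\Gamma}_z = \wh{\bm\Pi}\wh{\bm\Gamma}_{\mathbb X}\wh{\bm\Pi}^\top$. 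Setting $\bm\Delta_\Pi = \wh{\bm\Pi} - \bm\Pi$, a telescoping expansion gives
\begin{align*}
\wh{\bm\Gamma}_z - \bm\Gamma_z
&= \bm\Pi (\wh{\bm\Gamma}_{\mathbb X} - \bm\Gamma_{\mathbb X}) \bm\Pi^\top
+ \bm\Delta_\Pi \wh{\bm\Gamma}_{\mathbb X} \bm\Pi^\top
+ \bm\Pi \wh{\bm\Gamma}_{\mathbb X} \bm\Delta_\Pi^\top
+ \bm\Delta_\Pi \wh{\bm\Gamma}_{\mathbb X} \bm\Delta_\Pi^\top,
\end{align*}
where the first term captures the error from using sample rather than population autocovariances of the true filtered process, and the remaining three the error from estimating the filter.

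Next I would collect the sizes of the factors entering this decomposition. Since Proposition~\ref{thm:common:var} is stated under strong factors ($\rho_j = 1$), Assumption~\ref{assum:static:two} together with Proposition~\ref{prop:idio:eval:two} yields $\Vert\bm\Gamma_{\mathbb X}\Vert = O(p)$ (the leading dynamic eigenvalue grows linearly while the idiosyncratic part stays bounded), whereas Lemma~\ref{lem:decay} gives $\vert\bm\Gamma_{\mathbb X}\vert_\infty = O(1)$, and $\Vert\bm\Pi\Vert_1 \vee \Vert\bm\Pi\Vert_\infty \vee \Vert\bm\Pi\Vert = O(1)$ because the $\mbf A_{\chi,\ell}$ are block diagonal with fixed-dimension blocks of bounded norm under Assumption~\ref{assum:common:var}. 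Applying Lemma~\ref{lem:acv:x} entrywise and blockwise to $\wh{\bm\Gamma}_{\mathbb X} - \bm\Gamma_{\mathbb X} = [\wh{\bm\Gamma}_x(\ell,\ell') - \bm\Gamma_x(\ell-\ell')]$ (the discrepancy between $\wh{\bm\Gamma}_x(\ell,\ell')$ and the sample ACV $\wh{\bm\Gamma}_x(\ell'-\ell)$ being a negligible $O_P(s/n)$ boundary correction) gives $p^{-1}\Vert\wh{\bm\Gamma}_{\mathbb X}-\bm\Gamma_{\mathbb X}\Vert = O_P(n^{-1/2})$ and $\vert\wh{\bm\Gamma}_{\mathbb X}-\bm\Gamma_{\mathbb X}\vert_\infty = O_P(\wt\vartheta_{n,p})$, whence $\Vert\wh{\bm\Gamma}_{\mathbb X}\Vert = O_P(p)$ and $\vert\wh{\bm\Gamma}_{\mathbb X}\vert_\infty = O_P(1)$. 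The crucial observation is that $\bm\Delta_\Pi$ inherits the block-diagonal structure of the blockwise VAR, so its spectral and induced $\ell_1,\ell_\infty$ norms equal the maxima of the corresponding fixed-size block norms; combining this with~\eqref{eq:var:block:norm} and Proposition~\ref{thm:common:var}~\ref{thm:common:var:one} gives $\Vert\bm\Delta_\Pi\Vert \vee \Vert\bm\Delta_\Pi\Vert_1 \vee \Vert\bm\Delta_\Pi\Vert_\infty = O_P(\vartheta_{n,p}\vee m^{-1}\vee p^{-1/2})$ and, in turn, $\Vert\wh{\bm\Pi}\Vert \vee \Vert\wh{\bm\Pi}\Vert_1 = O_P(1)$.

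Plugging these into the decomposition, the spectral-norm bound follows from submultiplicativity: each of the three filter-error terms is of order $O_P(\vartheta_{n,p}\vee m^{-1}\vee p^{-1/2}) \cdot O_P(p) \cdot O_P(1)$, so after the $p^{-1}$ scaling they become $O_P(\vartheta_{n,p}\vee m^{-1}\vee p^{-1/2})$ (the quadratic term carrying an extra $\vartheta_{n,p}\vee m^{-1}\vee p^{-1/2}$ factor, hence lower order), while the sample-ACV term contributes $p^{-1} O_P(p\, n^{-1/2}) = O_P(\psi_n)$ and is also lower order. The $\ell_\infty$ bound uses the inequality $\vert\mbf A\mbf B\mbf C\vert_\infty \le \Vert\mbf A\Vert_\infty\vert\mbf B\vert_\infty\Vert\mbf C\Vert_1$ in exactly the same way, with $\vert\wh{\bm\Gamma}_{\mathbb X}\vert_\infty = O_P(1)$ in the middle, the sample-ACV term now giving $O_P(\wt\vartheta_{n,p}) = O_P(\vartheta_{n,p})$. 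The main obstacle is the interplay between the $O(p)$ spectral norm of $\bm\Gamma_{\mathbb X}$ (forced by the strong factors) and the filter error: a naive bound $\Vert\bm\Delta_\Pi\Vert \le \Vert\bm\Delta_\Pi\Vert_F = O_P(\sqrt p\,(\vartheta_{n,p}\vee m^{-1}\vee p^{-1/2}))$ would lose a factor $\sqrt p$ and destroy consistency, so the argument genuinely relies on the block-diagonal structure to control $\Vert\bm\Delta_\Pi\Vert$ at the per-block rate, which is precisely what allows the $O(p)$ from $\bm\Gamma_{\mathbb X}$ to cancel against the $p^{-1}$ normalisation.
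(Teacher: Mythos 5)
Your proposal is correct and follows essentially the same route as the paper's proof: both write $\bm\Gamma_z$ and $\wh{\bm\Gamma}_z$ as congruence transforms of $\bm\Gamma_{\mathbb X}$ and $\wh{\bm\Gamma}_{\mathbb X}$ by the (estimated) filter, telescope the difference, and control the filter-error terms by exploiting the block-diagonal structure so that $\Vert\wh{\mbf A}_\chi-\mbf A_\chi\Vert$ is bounded by $s$ times the maximal per-block Frobenius error from~\eqref{eq:var:block:norm}, while $p^{-1}\Vert\bm\Gamma_{\mathbb X}\Vert=O(1)$ and Lemma~\ref{lem:acv:x} handle the sample-autocovariance term. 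The only cosmetic difference is your four-term expansion around $\bm\Pi$ versus the paper's three-term one around $\wh{\mathbb A}_\chi$, and your explicit remark that a Frobenius bound on the filter error would lose a factor $\sqrt p$ correctly identifies the point where the blockwise structure is indispensable.
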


\begin{proof}
Recall that by Assumption~\ref{assum:common:var}~\ref{cond:block:var}, $s$ is finite.
By definition, $\bm\Gamma_z = \mathbb A_\chi \bm\Gamma_{\mathbb X} \mathbb A_\chi^\top$
where $\mathbb A_\chi = [\mbf I, - \mathbf A_\chi] \in \R^{p \times p(s + 1)}$
and $\wh{\bm\Gamma}_z = \wh{\mathbb A}_\chi \wh{\bm\Gamma}_{\mathbb X} \wh{\mathbb A}_\chi^\top$
with $\wh{\mathbb A}_\chi = [\mbf I, - \wh{\mbf A}_\chi]$.
Then,
\begin{align*}
\frac{1}{p} \l\Vert \wh{\bm\Gamma}_z - \bm\Gamma_z \r\Vert
&\le \frac{1}{p} \l\Vert (\wh{\mathbb A}_\chi - \mathbb A_\chi)\wh{\bm\Gamma}_{\mathbb X} \wh{\mathbb A}_\chi^\top \r\Vert
+ \frac{1}{p} \l\Vert \mathbb A_\chi (\wh{\bm\Gamma}_{\mathbb X} - \bm\Gamma_{\mathbb X}) \wh{\mathbb A}_\chi^\top \r\Vert
+ \frac{1}{p} \l\Vert \mathbb A_\chi \bm\Gamma_{\mathbb X} (\wh{\mathbb A}_\chi - \mathbb A_\chi)^\top \r\Vert
\\
&= I + II + III. 
\end{align*}
From the block structure of $\mbf A_\chi$ and
the proof of Proposition~\ref{thm:common:var}~\ref{thm:common:var:one},
\begin{align}
\l\Vert \wh{\mathbb A}_\chi - \mathbb A_\chi \r\Vert 
= \l\Vert \wh{\mbf A}_\chi - \mbf A_\chi \r\Vert
\le s \cdot \max_{1 \le h \le N} \l\Vert \wh{\mbf A}^{(h)}_\chi - \mbf A^{(h)}_\chi \r\Vert_F
= O_P\l(\vartheta_{n, p} \vee \frac{1}{m} \vee \frac{1}{\sqrt p}\r).
\label{prop:acv:z:eq:one}
\end{align}
With $C_2$ and $C_3$ given in~\eqref{eq:bound:BC}, we have
\begin{align}
\l\Vert \mathbb A_\chi \r\Vert \le 1 + s \cdot \max_{1 \le h \le N} \l\Vert \mbf B^{(h)}_\chi \r\Vert_F \;
\l\Vert (\mbf C^{(h)}_\chi)^{-1} \r\Vert_F \le 1 + sC_2C_3 < \infty.
\label{prop:acv:z:eq:two}
\end{align}
By Proposition~2.3 of \cite{basu2015}, Assumption~\ref{assum:factor} and Proposition~\ref{prop:idio:eval},
\begin{align}
\frac{1}{p} \l\Vert \bm\Gamma_{\mathbb X} \r\Vert \le 2\pi \sup_\omega \frac{\mu_{x, 1}(\omega)}{p}
\le 2\pi \sup_\omega \l(\beta_{\chi, 1}(\omega) + \frac{B_\xi}{p}\r) < \infty.
\label{prop:acv:z:eq:three}
\end{align}
Also by the arguments analogous to those adopted in the proof of Lemma~\ref{lem:acv:x}~\ref{lem:acv:x:one},
there exists a fixed constant $C > 0$ such that
\begin{align}
\label{eq:lem:acv:x}
\frac{1}{p^2} \E\l(\l\Vert \wh{\bm\Gamma}_{\mathbb X} - \bm\Gamma_{\mathbb X} \r\Vert_F^2\r)
= \frac{1}{p^2} \sum_{\ell, \ell' = 1}^{s + 1} \sum_{i, i' = 1}^p
\E\l(\l\vert \wh\gamma_{x, ii'}(\ell, \ell') - \gamma_{x, ii'}(\ell - \ell')) \r\vert^2\r) 
\le C \frac{(s + 1)^2}{n}.
\end{align}
Then by~\eqref{prop:acv:z:eq:one}--\eqref{eq:lem:acv:x},
$I = O_P(\vartheta_{n, p} \vee m^{-1} \vee p^{-1/2})$,
by~\eqref{prop:acv:z:eq:one}--\eqref{prop:acv:z:eq:two} and~\eqref{eq:lem:acv:x},
$II = O_P(n^{-1/2})$ and $III$ is bounded analogously as $I$, such that
$p^{-1} \Vert \wh{\bm\Gamma}_z - \bm\Gamma_z \Vert = O_P(\vartheta_{n, p} \vee m^{-1} \vee p^{-1/2})$.

For the second claim, we proceed similarly by noting that
\begin{align*}
\l\vert \wh{\bm\Gamma}_z - \bm\Gamma_z \r\vert_\infty
&\le \max_{i, i'} \l\Vert 
\bm\varphi_i^\top \l(\wh{\mathbb A}_{\chi} - \mathbb A_{\chi} \r)
\wh{\bm\Gamma}_{\mathbb X} \wh{\mathbb A}_\chi^\top \bm\varphi_{i'} \r\Vert
+ \max_{i, i'} \l\Vert 
\bm\varphi_i^\top \mathbb A_\chi 
\l( \wh{\bm\Gamma}_{\mathbb X} - \bm\Gamma_{\mathbb X} \r) 
\wh{\mathbb A}_\chi^\top\bm\varphi_{i'} \r\Vert\\
\\
& + \max_{i, i'} \l\Vert 
\bm\varphi_i^\top \mathbb A_\chi \wh{\bm\Gamma}_{\mathbb X} 
\l(\wh{\mathbb A}_{\chi} - \mathbb A_{\chi} \r)^\top \bm\varphi_{i'} \r\Vert
= IV + V + VI.
\end{align*}
From the block structure of $\mbf A_\chi$, 
the second claim of Proposition~\ref{thm:common:var}~\ref{thm:common:var:one},
Lemmas~\ref{lem:decay} and~\ref{lem:acv:x}~\ref{lem:acv:x:two}
and~\eqref{prop:acv:z:eq:one}--\eqref{prop:acv:z:eq:two},
$IV = O_P(\vartheta_{n, p} \vee m^{-1} \vee p^{-1/2})$,
and $V$ and $VI$ are bounded analogously.
\end{proof}

\begin{proof}[Proof of Proposition~\ref{thm:common:var}~\ref{thm:common:var:two}]
Denote the eigendecomposition of $\bm\Gamma_v$ by 
$\bm\Gamma_v = \mbf E_v\bm{\mc M}_v \mbf E_v^\top$.
With $\wh{\bm\Gamma}_z$ as $\wh{\mbf S}$ and $\bm\Gamma_v$ as $\mbf S$,
\eqref{cond:s:one}--\eqref{cond:s:four} are met by
Assumption~\ref{assum:static:two}, Proposition~\ref{prop:idio:eval:two}
and Proposition~\ref{prop:acv:z}
with $\varrho_1 = \ldots = \varrho_q = 1$ and
$\omega_{n, p}^{(1)} = \omega_{n, p}^{(2)} = (\vartheta_{n, p} \vee m^{-1} \vee p^{-1/2})$.
Therefore, applying Lemmas~\ref{lem:one} and~\ref{lem:three}, 
there exists a diagonal, orthogonal matrix $\bm{\mc O} \in \R^{p \times p}$ such that
\begin{align*}
\l\Vert \wh{\mbf E}_{z, 1:q} - \mbf E_v \bm{\mc O} \r\Vert
&= O_P\l( \vartheta_{n, p} \vee \frac{1}{m} \vee \frac{1}{\sqrt p}\r),
\\
\sqrt{p} 
\max_{1 \le i \le p}
\l\vert \bm\varphi_i^\top \l(\wh{\mbf E}_{z, 1:q} - \mbf E_v \bm{\mc O} \r) \r\vert_2
&= O_P\l( \vartheta_{n, p} \vee \frac{1}{m} \vee \frac{1}{\sqrt p}\r).
\end{align*}
Also, by Assumptions~\ref{cond:ration} and~\ref{assum:common:var}~\ref{cond:block:var}, we have
\begin{align}
\label{eq:v:norm:bound}
\max_i \, \Var(V_{it}) = \max_i \sum_{j = 1}^q \mu_{v, j} \vert e_{v, ij} \vert^2
= \max_i \sum_{j = 1}^q R_{ij}^2 \Var(u_{ij}) \le q (B_\chi^{(1)})^2 < \infty
\end{align}
where $\mbf e_{v, j} = (e_{v, ij}, \, 1 \le i \le p)^\top$ and $\mbf R = [R_{ij}, \, 1 \le i \le p, \, 1 \le j \le q]$,
such that $\max_{1 \le i \le p} \vert e_{v, ij} \vert = O(p^{-1/2})$
by Assumption~\ref{assum:static:two}.
Besides, by~\eqref{eq:var:block:norm},
Lemma~\ref{lem:decay} and Chebyshev's inequality,
\begin{align*}
\frac{1}{\sqrt p} \l\Vert \wh{\mbf Z}_t - \mbf Z_t \r\Vert
&= \frac{1}{\sqrt p} \l\Vert \sum_{\ell = 1}^s (\wh{\mbf A}_{\chi, \ell} - \mbf A_{\chi, \ell}) \mbf X_{t - \ell} \r\Vert
\\
&\le \frac{1}{\sqrt p} \sum_{\ell = 1}^s \l\Vert [ (\wh{\mbf A}^{(h)}_{\chi, \ell} - \mbf A^{(h)}_{\chi, \ell}) \mbf X^{(h)}_{t - \ell}, \, 1 \le h \le N ] \r\Vert
= O_P\l(\vartheta_{n, p} \vee \frac{1}{m} \vee \frac{1}{\sqrt p}\r)
\end{align*}
for given $t$, where $\mbf X^{(h)}_t$ is defined analogously as $\bm\chi^{(h)}_t$.
Also, from the finiteness of $\Vert \mathbb{A}_\chi \Vert$ in~\eqref{prop:acv:z:eq:two}
and~\eqref{eq:bounded:norm}, we have $\Vert \mbf Z_t \Vert = O_P(\sqrt p)$
for a given $t$. 
Finally, $\Vert \Cov(\mbf E_v^\top \mbf W_t) \Vert \le B_w$ by Proposition~\ref{prop:idio:eval:two}
such that $\Vert \mbf E_v^\top \mbf W_t \Vert = O_P(1)$ by Chebyshev's inequality.
Putting the above observations together,
\begin{align*}
\l\vert \wh{\mbf R}\wh{\mbf u}_t - \mbf R\mbf u_t \r\vert_\infty
\le& \max_i \l\vert \bm\varphi_i^\top
\l(\wh{\mbf E}_{z, 1:q} \wh{\mbf E}_{z, 1:q}^\top \wh{\mbf Z}_t 
- \mbf E_v \mbf E_v^\top \mbf Z_t\r) \r\vert
+ \max_i \l\vert \bm\varphi_i^\top \mbf E_v \mbf E_v^\top \mbf W_t \r\vert
\\
\le & 
\max_i \l\vert \bm\varphi_i^\top
\l(\wh{\mbf E}_{z, 1:q} - \mbf E_v \bm{\mc O} \r) \wh{\mbf E}_{z, 1:q}^\top \wh{\mbf Z}_t \r\vert
+
\max_i \l\vert \bm\varphi_i^\top \mbf E_v \bm{\mc O}
\l( \wh{\mbf E}_{z, 1:q} - \mbf E_v \bm{\mc O} \r)^\top \wh{\mbf Z}_t \r\vert
\\
& +
\max_i \l\vert \bm\varphi_i^\top \mbf E_v \mbf E_v^\top
(\wh{\mbf Z}_t - \mbf Z_t) \r\vert + O_P\l(\frac{1}{\sqrt p}\r)
= O_P\l(\vartheta_{n, p} \vee \frac{1}{m} \vee \frac{1}{\sqrt p}\r).
\end{align*}
\end{proof}

\begin{proof}[Proof of Proposition~\ref{thm:common:var}~\ref{thm:common:var:three}]
We can write each block in~\eqref{eq:gdfm:var} as
\begin{align*}
\bmx \bm\chi^{(h)}_t \\ \bm\chi^{(h)}_{t - 1} \\ \vdots \\ \bm\chi^{(h)}_{t - s + 1} \emx
&= \underbrace{\bmx 
\mbf A^{(h)}_{\chi, 1} & \mbf A^{(h)}_{\chi, 2} & \ldots & \mbf A^{(h)}_{\chi, s - 1} & \mbf A^{(h)}_{\chi, s} \\
\mbf I & \mbf O & & \mbf O & \mbf O \\
 && \ddots && \\
\mbf O & \mbf O & \ldots & \mbf I & \mbf O
\emx}_{\mathfrak{A}^{(h)}_\chi} \;
\bmx \bm\chi^{(h)}_{t - 1} \\ \bm\chi^{(h)}_{t - 2} \\ \vdots \\ \bm\chi^{(h)}_{t - s} \emx
+ \underbrace{\bmx \mbf R^{(h)} \mbf u_t \\ \mbf 0 \\ \vdots \\ \mbf 0 \emx}_{\mathfrak{R}^{(h)} \mbf u_t}, \text{ such that}
\\
\bm\chi^{(h)}_t 
&= \mbf J (\mbf I - \mathfrak{A}^{(h)}_\chi L)^{-1} 
\mathfrak{R}^{(h)} \mbf u_t
= \mbf J \sum_{\ell = 0}^\infty (\mathfrak{A}^{(h)}_\chi)^\ell 
\mathfrak{R}^{(h)} \mbf u_{t - \ell}
\end{align*}
with $\mbf J = [\mbf I_{q + 1}, \mbf O, \ldots, \mbf O] \in \R^{(q + 1) \times (q + 1)s}$.
Then, defining $\mbf B^{(h)}_{\ell}$ that consists of
the $(q + 1)$ rows of $\mbf B_{\ell}$ to satisfy
$\bm\chi^{(h)}_t = \sum_{\ell = 0}^\infty \mbf B^{(h)}_{\ell}\mbf u_{t - \ell}$
under~\eqref{eq:gdfm},
we have $\mbf B^{(h)}_{\ell} = \mbf J (\mathfrak{A}^{(h)}_\chi)^\ell \mathfrak{R}^{(h)}$.
Letting $\wh{\mathfrak A}^{(h)}_\chi$ and $\wh{\mathfrak{R}}^{(h)}_\chi$
denote the estimated counterparts of
$\mathfrak A^{(h)}_\chi$ and $\mathfrak{R}^{(h)}_\chi$, respectively,
we have
\begin{align*}
\l\vert \wh{\bm\chi}^{\va}_{t + a \vert t} - \bm\chi_{t + a \vert t} \r\vert_\infty
\le &
\max_{1 \le h \le N} \l\vert \mbf J \sum_{\ell = 0}^K 
\l[(\wh{\mathfrak A}^{(h)}_\chi)^\ell \wh{\mathfrak{R}}^{(h)}_\chi \wh{\mbf u}_{t - \ell}
- (\mathfrak A^{(h)}_\chi)^\ell \mathfrak{R}^{(h)}_\chi \mbf u_{t - \ell} \r] \r\vert_\infty
\\
& + \l\vert \sum_{\ell = K + 1}^\infty \mbf B_{\ell + a} \mbf u_{t - \ell} \r\vert_\infty
= I + II.
\end{align*}
From~\eqref{eq:var:block:norm} and~\eqref{prop:acv:z:eq:two},
we obtain
\begin{align*}
& \max_h \l\Vert (\wh{\mathfrak A}^{(h)}_\chi)^\ell -  (\mathfrak A^{(h)}_\chi)^\ell \r\Vert
\le 
\max_h \l\Vert \wh{\mathfrak A}^{(h)}_\chi \r\Vert \;
\l\Vert (\wh{\mathfrak A}^{(h)}_\chi)^{\ell - 1} -  (\mathfrak A^{(h)}_\chi)^{\ell - 1} \r\Vert
\\
& \qquad +
\max_h \l\Vert ({\mathfrak A}^{(h)}_\chi)^{\ell - 1} \r\Vert \;
\l\Vert \wh{\mathfrak A}^{(h)}_\chi - \mathfrak A^{(h)}_\chi \r\Vert
= O_P\l( \max_h \Vert \mathfrak{A}^{(h)}_\chi \Vert^{\ell - 1}
\l(\vartheta_{n, p} \vee \frac{1}{m} \vee \frac{1}{\sqrt p}\r) \r)
\end{align*}
by induction for any finite $\ell$, while
\begin{align*}
\max_h \l\vert 
\wh{\mathfrak{R}}^{(h)}_\chi \wh{\mbf u}_{t - \ell} - \mathfrak{R}^{(h)}_\chi \mbf u_{t - \ell}
\r\vert_\infty = O_P\l(\vartheta_{n, p} \vee \frac{1}{m} \vee \frac{1}{\sqrt p}\r)
\end{align*}
by Proposition~\ref{thm:common:var}~\ref{thm:common:var:two}. 
From these observations and~\eqref{eq:v:norm:bound} (in combination with Chebyshev's inequality), 
we derive
\begin{align*}
I \le & \sum_{\ell = 0}^K 
\l\{
\max_h \l\vert \mbf J \l[(\wh{\mathfrak A}^{(h)}_\chi)^\ell - (\mathfrak A^{(h)}_\chi)^\ell \r] \;
\wh{\mathfrak{R}}^{(h)}_\chi \wh{\mbf u}_{t - \ell} \r\vert_\infty
+
\max_h \l\vert \mbf J (\mathfrak A^{(h)}_\chi)^\ell
\l(\wh{\mathfrak{R}}^{(h)}_\chi \wh{\mbf u}_{t - \ell} -
\mathfrak{R}^{(h)}_\chi \mbf u_{t - \ell} \r) \r\vert_\infty
\r\}
\\
& = O_P\l( K \max_h \Vert \mathfrak{A}^{(h)}_\chi \Vert^K
\l(\vartheta_{n, p} \vee \frac{1}{m} \vee \frac{1}{\sqrt p}\r) \r).
\end{align*}
Also, from Assumption~\ref{assum:common}, there exist constants $C_\nu, C_{\nu, \Xi, \varsigma} > 0$
that depend only on their subscripts such that
\begin{align*}
& \l\Vert \l\vert \sum_{\ell = K + 1}^\infty \mbf B_{\ell + a} \mbf u_{t - \ell} \r\vert_\infty \r\Vert_\nu^2
\le C_\nu \log(p) \sum_{\ell  = K + 1}^\infty \sum_{j = 1}^q 
\l\Vert \l\vert \mbf B_{\ell + a, \cdot j} u_{j, t - \ell} \r\vert_\infty \r\Vert_\nu^2
\\
& \le C_\nu \log(p) \sum_{\ell  = K + 1}^\infty \sum_{j = 1}^q 
\l\vert \mbf B_{\ell + a, \cdot j} \r\vert_\infty^2 \l\Vert \l\vert \mbf u_{t - \ell} \r\vert_\infty \r\Vert_\nu^2
\le C_\nu \log(p) q^{2/\nu} \mu_\nu^{2/\nu} \sum_{\ell = K + 1}^\infty \Xi^2 (1 + \ell + a)^{-2\varsigma} 
\\
& \le C_{\nu, \Xi, \varsigma} \log(p) q^{2/\nu} \mu_\nu^{2/\nu} (K + a)^{-2(\varsigma - 1)},
\end{align*}
by Lemma~D.3 of \cite{zhang2021}.
Therefore by Chebyshev's inequality, we have
\begin{align*}
II = O_P\l(\log^{1/2}(p) q^{1/\nu} \mu_\nu^{1/\nu} (K + a)^{-\varsigma + 1} \r)
\end{align*}
under Assumption~\ref{assum:innov}~\ref{cond:dist}~\ref{cond:moment}.
When Assumption~\ref{assum:innov}~\ref{cond:dist}~\ref{cond:gauss} is met,
\begin{align*}
\max_{1 \le i \le p} \Var\l(\sum_{\ell = K + 1}^\infty \mbf B_{\ell + a, i \cdot} \mbf u_{t - \ell} \r)
= \max_i \sum_{\ell = K + 1}^\infty \l\vert \mbf B_{\ell + a, i \cdot} \r\vert_2^2
\\
\le \sum_{\ell = K + 1}^\infty \Xi^2 (1 + \ell + a)^{-2\varsigma} 
\le C_{\Xi, \varsigma} (K + a)^{-2(\varsigma - 1)}
\end{align*}
such that $II = O_P(\log^{1/2}(p) K^{-\varsigma + 1})$.
Combining the bounds on $I$ and $II$, the conclusion follows. 
\end{proof}

\clearpage

\section{Information on the real dataset}
\label{app:data}

Table~\ref{table:data:info} provides the list of the $46$ companies included in 
the application presented in Section~\ref{sec:real}
along with their tickers and
industry and sub-industry classifications according to Global Industry Classification Standard.

\begin{table}[htb]
\caption{\small Tickers, industry and sub-industry classifications of the $46$ companies.}
\label{table:data:info}
\centering
\resizebox{\columnwidth}{!}{\begin{tabular}{c c c c}
\toprule 
Name &	Ticker &	Industry &	Sub-industry	\\	
\cmidrule(lr){1-1} \cmidrule(lr){2-2} \cmidrule(lr){3-3} \cmidrule(lr){4-4}
JPMORGAN CHASE \& CO &	JPM &	Banks &	Diversified banks	\\	
COMERICA INC &	CMA &	Banks &	Regional banks	\\	
CITIGROUP INC &	C &	Banks &	Diversified banks	\\	
FIFTH THIRD BANCORP &	FITB &	Banks &	Regional banks	\\	
REGIONS FINANCIAL CORP &	RF &	Banks &	Regional banks	\\	
M \& T BANK CORP &	MTB &	Banks &	Regional banks	\\	
U S BANCORP &	USB &	Banks &	Diversified banks	\\	
HUNTINGTON BANCSHARES &	HBAN &	Banks &	Regional banks	\\	
BANK OF AMERICA CORP &	BAC &	Banks &	Diversified banks	\\	
WELLS FARGO \& CO &	WFC &	Banks &	Diversified banks	\\	
PNC FINANCIAL SVCS GROUP INC &	PNC &	Banks &	Regional banks	\\	
KEYCORP &	KEY &	Banks &	Regional banks	\\	
ZIONS BANCORPORATION NA &	ZION &	Banks &	Regional banks	\\	
TRUIST FINANCIAL CORP &	TFC &	Banks &	Regional banks	\\	
PEOPLE'S UNITED FINL INC &	PBCT &	Banks &	Regional banks	\\	
SVB FINANCIAL GROUP &	SIVB &	Banks &	Regional banks	\\	
\cmidrule(lr){1-1} \cmidrule(lr){2-2} \cmidrule(lr){3-3} \cmidrule(lr){4-4}
AMERICAN EXPRESS CO &	AXP &	 Diversified Financials &	Consumer finance	\\	
BANK OF NEW YORK MELLON CORP &	BK &	 Diversified Financials &	Asset Management \& Custody Banks	\\	
FRANKLIN RESOURCES INC &	BEN &	 Diversified Financials &	Asset Management \& Custody Banks	\\	
S\&P GLOBAL INC &	SPGI &	 Diversified Financials &	Financial Exchanges \& Data	\\	
NORTHERN TRUST CORP &	NTRS &	 Diversified Financials &	Asset Management \& Custody Banks	\\	
RAYMOND JAMES FINANCIAL CORP &	RJF &	 Diversified Financials &	Investment Banking \& Brokerage	\\	
STATE STREET CORP &	STT &	 Diversified Financials &	Asset Management \& Custody Banks	\\	
MORGAN STANLEY &	MS &	 Diversified Financials &	Investment Banking \& Brokerage	\\	
PRICE (T. ROWE) GROUP &	TROW &	 Diversified Financials &	Asset Management \& Custody Banks	\\	
SCHWAB (CHARLES) CORP &	SCHW &	 Diversified Financials &	Investment Banking \& Brokerage	\\	
INVESCO LTD &	IVZ &	 Diversified Financials &	Asset Management \& Custody Banks	\\	
CAPITAL ONE FINANCIAL CORP &	COF &	 Diversified Financials &	Consumer finance	\\	
GOLDMAN SACHS GROUP INC &	GS &	 Diversified Financials &	Investment Banking \& Brokerage	\\	
BLACKROCK INC &	BLK &	 Diversified Financials &	Asset Management \& Custody Banks	\\
\cmidrule(lr){1-1} \cmidrule(lr){2-2} \cmidrule(lr){3-3} \cmidrule(lr){4-4}	
AFLAC INC &	AFL &	Insurance &	Life \& Health Insurance	\\	
AMERICAN INTERNATIONAL GROUP &	AIG &	Insurance &	Multi-line Insurance	\\	
AON PLC &	AON &	Insurance &	Insurance Brokers	\\	
ARTHUR J GALLAGHER \& CO &	AJG &	Insurance &	Insurance Brokers	\\	
LINCOLN NATIONAL CORP &	LNC &	Insurance &	Life \& Health Insurance	\\	
LOEWS CORP &	L &	Insurance &	Property \& Casualty Insurance	\\	
MARSH \& MCLENNAN COS &	MMC &	Insurance &	Insurance Brokers	\\	
GLOBE LIFE INC &	GL &	Insurance &	Life \& Health Insurance	\\	
UNUM GROUP &	UNM &	Insurance &	Life \& Health Insurance	\\	
PROGRESSIVE CORP-OHIO &	PGR &	Insurance &	Property \& Casualty Insurance	\\	
BERKLEY (W R) CORP &	WRB &	Insurance &	Property \& Casualty Insurance	\\	
CINCINNATI FINANCIAL CORP &	CINF &	Insurance &	Property \& Casualty Insurance	\\	
CHUBB LTD &	CB &	Insurance &	Property \& Casualty Insurance	\\	
ALLSTATE CORP &	ALL &	Insurance &	Property \& Casualty Insurance	\\	
EVEREST RE GROUP LTD &	RE &	Insurance &	Reinsurance	\\	
HARTFORD FINANCIAL SERVICES &	HIG &	Insurance &	Multi-line Insurance	\\	\bottomrule
\end{tabular}}
\end{table}

\end{document}